\long\def\symbolfootnote[#1]#2{\begingroup%
\def\thefootnote{\fnsymbol{footnote}}\footnote[#1]{#2}\endgroup}
\newcommand{\R}{\ensuremath\mathbb{R}}
\newcommand{\mb}{\mathbf}
\newcommand{\Expect}[1]{\mbox{}{\mathbb{E}}\left[#1\right]}
\newcommand{\Trace }[1]{\mbox{}{\mathrm{Tr}}\left(#1\right)}
\newcommand{\FNorm }[1]{\mbox{}\|#1\|_\mathrm{F}  }
\newcommand{\FNormS}[1]{\mbox{}\|#1\|_\mathrm{F}^2}
\newcommand{\TNorm }[1]{\mbox{}\|#1\|_2  }
\newcommand{\TNormS}[1]{\mbox{}\|#1\|_2^2}
\newcommand{\pinv}[1]{ {#1}^\dagger}
\newtheorem{theorem}{Theorem}
\newtheorem{lemma}[theorem]{Lemma}
\newtheorem{definition}[theorem]{Definition}
\newtheorem{corollary}[theorem]{Corollary}
\newtheorem{remark}[theorem]{Remark}
\newenvironment{proof}{\begin{trivlist} \item {\bf Proof:~~}}
  {\qed\end{trivlist}}
\newcommand{\transp}{^{\textsc{T}}}
\newcommand{\trace}{\text{\rm Tr}}
\newcommand{\mat}[1]{{\ensuremath{\bm{\mathrm{#1}}}}}
\renewcommand{\vec}[1]{\ensuremath{\bm{#1}}}
\newenvironment{proofof}[1]{\begin{trivlist} \item {\bf Proof
#1:~~}}
  {\qed\end{trivlist}}
\def\rank{\hbox{\rm rank}}
\def\b{{\mathbf b}}
\def\e{{\mathbf e}}
\def\q{{\mathbf q}}
\def\p{{\mathbf p}}
\def\s{{\mathbf s}}
\def\u{{\mathbf u}}
\def\ve{{\mathbf v}}
\def\d{{\mathbf \delta}}
\def\matA{\mat{A}}
\def\matB{\mat{B}}
\def\matC{\mat{C}}
\def\matD{\mat{D}}
\def\matE{\mat{E}}
\def\matF{\mat{F}}
\def\matG{\mat{G}}
\def\matH{\mat{H}}
\def\matI{\mat{I}}
\def\matJ{\mat{J}}
\def\matK{\mat{K}}
\def\matL{\mat{L}}
\def\matM{\mat{M}}
\def\matP{\mat{P}}
\def\matQ{\mat{Q}}
\def\matR{\mat{R}}
\def\matS{\mat{S}}
\def\matT{\mat{T}}
\def\matU{\mat{U}}
\def\matV{\mat{V}}
\def\matW{\mat{W}}
\def\matX{\mat{X}}
\def\matY{\mat{Y}}
\def\matZ{\mat{Z}}
\def\matOmega{\mat{\Omega}}
\def\matSig{\mat{\Sigma}}
\def\matOmega{\mat{\Omega}}
\def\matSig{\mat{\Sigma}}
\def\w{{\mathbf{w}}}
\newcommand\remove[1]{}
\def\nnz{{ \rm nnz }}
\def\math#1{$#1$}
\def\frac#1#2{{#1\over #2}}
\def\eqan#1{\begin{eqnarray*}
#1
\end{eqnarray*}}
\def\qed{\hfill\rule{2mm}{2mm}}
\def\cl#1{{\cal #1}}
\def\argmin{\mathop{\hbox{argmin}}\limits}
\def\g{{\mathbf g}}
\def\h{{\mathbf h}}
\def\x{{\mathbf x}}
\def\y{{\mathbf y}}
\def\z{{\mathbf z}}
\def\a{{\mathbf a}}
\def\b{{\mathbf b}}
\def\norm#1{{\left\|#1\right\|}}
\newcommand{\normF}[1]{{\| #1 \|}_F}
\newcommand{\eps}{\varepsilon}
\newcommand{\poly}{{\mathrm{poly}}}
\newcommand{\polylog}{{\mathrm{polylog}}}
\newcommand{\Exp}{\mathop{\mathrm E}\displaylimits}
\newtheorem{property}{Property}
\newtheorem{fact}{Fact}
\DeclareMathOperator{\tr}{\mathtt{tr}} 
\title{Sketching as a Tool for Numerical Linear Algebra
\footnote{Version appearing as a monograph in NOW Publishers
``Foundations and Trends in Theoretical Computer Science'' series,
Vol 10, Issue 1--2, 2014, pp 1--157}}
\author{David P. Woodruff\\
IBM Research Almaden\\
dpwoodru@us.ibm.com
}
\begin{document}




\maketitle


\begin{abstract}
This survey highlights the recent advances in algorithms for
numerical linear algebra that have come from the technique of
linear sketching, whereby given a matrix, one first compresses it to
a much smaller matrix by multiplying it by a (usually) random 
matrix with certain
properties. Much of the expensive computation can then be performed on
the smaller matrix, thereby accelerating the solution for the original
problem. 
In this survey we consider least squares as well as robust regression problems,
low rank approximation, and graph sparsification. We also discuss a number
of variants of these problems. Finally, we discuss the limitations of 
sketching methods.
\end{abstract}
%
%
\newpage
\tableofcontents
\newpage
\section{Introduction}
To give the reader a flavor of results in this survey, let us
first consider the classical linear regression problem.
In a special case of this problem one attempts to ``fit'' a line through a set
of given points as best as possible. 

For example, the familiar
Ohm's law states that the voltage $V$ is equal to the resistance
$R$ times the electrical current $I$, or $V = R \cdot I$. Suppose one is
given a set of $n$ example volate-current pairs $(v_j, i_j)$
but does not know the underlying resistance. In this case one is
attempting to find the unknown slope of a line through the origin
which best fits these examples, where best fits can take on a
variety of different meanings. 

More formally, in the standard setting there is one {\it measured variable}
$b$, in the above example this would be the voltage, and a set of
$d$ {\it predictor variables} $a_1, \ldots, a_d$. In the above example $d = 1$
and the single predictor variable is the electrical current. 
Further, it is assumed
that the variables are linearly related up to a noise variable, that is
$b = x_0 + a_1 x_1 + \cdots + a_d x_d + \gamma$, where $x_0, x_1, \ldots, x_d$
are the coefficients of a hyperplane we are trying to learn 
(which does not go through the origin
if $x_0 \neq 0$), and $\gamma$ is a random variable which may be adversarially
chosen, or may come from a distribution which we may have limited or no 
information about. The $x_i$ are also known as the {\it model parameters}. By
introducing an additional predictor variable $a_0$ which is fixed to $1$, 
we can in fact assume that the unknown hyperplane goes through the origin,
that is, it is an unknown subspace of codimension $1$. We will thus assume
that $b = a_1 x_1 + \cdots + a_d x_d + \gamma$ and ignore the affine component
throughout.

In an experiment one is often given $n$ observations, or $n$
$(d+1)$-tuples $(a_{i,1}, \ldots, a_{i,d}, b_i)$, for $i = 1, 2, \ldots, n$. It
is more convenient now to think of the problem in matrix form, where
one is given an $n \times d$ matrix $\matA$ whose rows are the values of the
predictor variables in the $d$ examples, together with an $n \times 1$
column vector $\b$ whose entries are the corresponding observations, and
the goal is to output the coefficient vector $\x$ so that $\matA \x$ and $\b$
are close in whatever the desired sense of closeness may mean. Notice
that as one ranges over all $\x \in \mathbb{R}^d$, $\matA \x$ ranges over
all linear combinations of the $d$ columns of $\matA$, and therefore defines
a $d$-dimensional subspace of $\mathbb{R}^n$, which we refer to as the column
space of $\matA$. Therefore the regression
problem is equivalent to finding the vector $\x$ for which $\matA \x$ is the closest
point in the column space of $\matA$ to the observation vector $\b$. 

Much of the focus of this survey will be on the over-constrained case, in which
the number $n$ of examples is much larger than the number $d$ of predictor
variables. Note that in this case there are more constraints than unknowns,
and there need not exist a solution $\x$ to the equation $\matA \x = \b$. 

Regarding the measure of fit, or closeness of $\matA \x$ to $\b$, one of the most
common is the least squares method, which seeks to find the closest point
in Euclidean distance, i.e., 
$$\textrm{argmin}_{\x} \|\matA \x-\b\|_2 = \sum_{i=1}^n (b_i - \langle \matA_{i,*}, \x \rangle )^2,$$
 where $\matA_{i,*}$ denotes the $i$-th row of $\matA$, and $b_i$ the $i$-th
entry of the vector $\b$. This error measure
has a clear geometric interpretation, 
as the optimal $\x$ satisfies that $\matA \x$ is the standard Euclidean
projection of $\b$ onto the column space of $\matA$. Because of this, it is possible
to write the solution for this problem in a closed form. 
That is,
necessarily one has $\matA^T \matA \x^* = \matA^T \b$ for the optimal solution $\x^*$ by 
considering the gradient at a point $\x$, and observing that in order for it
to be $0$, that is for $\x$ to be a minimum, the above equation has to hold. The
equation $\matA^T \matA \x^* = \matA^T \b$ is known as the {\it normal equation}, which
captures that the line connecting $\matA \x^*$ to $\b$ should be perpendicular to the
columns spanned by $\matA$. If
the columns of $\matA$ are linearly independent, $\matA^T \matA$ is a full rank $d \times d$
matrix and the solution is therefore given by $\x^* = (\matA^T \matA)^{-1} \matA^T \b$. Otherwise,
there are multiple solutions and a solution $\x^*$ of minimum Euclidean norm
is given by $\x^* = \matA^{\dagger} \b$, where $\matA^{\dagger}$ is the Moore-Penrose pseudoinverse
of $\matA$. Recall that if $\matA = \matU \mat\Sigma \matV^T$ is the singular value decomposition (SVD)
of $\matA$, where $\matU$ is $n \times d$ with orthonormal columns, $\mat\Sigma$ is a diagonal
$d \times d$ matrix with non-negative non-increasing diagonal entries, and $\matV^T$
is a $d \times d$ matrix with orthonormal rows, then the Moore-Penrose pseudoinverse of $\matA$
is the $d \times n$ matrix $\matV \mat\Sigma^{\dagger} \matU^T$, where $\mat\Sigma^{\dagger}$ is a $d \times
d$ diagonal matrix with $\mat\Sigma^{\dagger}_{i,i} = 1/\mat\Sigma_{i,i}$ if $\mat\Sigma_{i,i} > 0$,
and is $0$ otherwise. 

The least squares measure of closeness, although popular, is somewhat arbitrary and 
there may be better choices depending on the application at hand. Another popular
choice is the method of least absolute deviation, or $\ell_1$-regression. Here
the goal is to instead find $\x^*$ so as to minimize 
$$\|\matA \x- \b\|_1 = \sum_{i=1}^n |\b_i - \langle \matA_{i,*}, \x \rangle |.$$ 
This measure is known 
to be less sensitive to outliers
than the least squares measure. The reason for this is that one squares
the value $\b_i - \langle \matA_{i,*}, \x \rangle$ in the least squares cost function, while one
only takes its absolute value in the least absolute deviation cost function. Thus,
if $\b_i$ is significantly larger (or smaller) than $\langle \matA_{i,*}, \x \rangle$ for the 
$i$-th observation, due, e.g., to large measurement noise on that observation, this
requires the sought hyperplane $\x$ to be closer to the $i$-th observation 
when using the least squares cost function than when using the least absolute deviation
cost function. While there is no closed-form solution for least absolute deviation
regression, one can solve the problem up to machine precision in polynomial time
by casting it as a linear programming problem and using a generic linear programming
algorithm.

The problem with the above solutions is that on massive data sets, they are
often too slow to be of practical value. Using n\"aive matrix multiplication,
solving the normal equations for least squares would take at least $n \cdot d^2$
time. For least absolute deviation regression, when casting the problem as a linear
program one needs to introduce $O(n)$ variables (these are needed to enforce the absolute
value constraints) and $O(n)$ constraints, and generic solvers would take 
$\poly(n)$ time for an polynomial in $n$ which is at least cubic. 
While these solutions are polynomial time, they are prohibitive for 
large values of $n$. 

The starting point of this survey is a beautiful work by Tam\'{a}s Sarl\'os \cite{S06} which
observed that one could use {\it sketching techniques} to improve upon the
above time complexities, if one is willing to settle for a randomized approximation
algorithm. Here, one relaxes the problem to finding a vector $\x$ so that
$\|\matA \x- \b\|_p \leq (1+\eps)\|\matA \x^*- \b\|_p$, where $\x^*$ is the optimal hyperplane, 
with respect to the $p$-norm, for $p$ either $1$ or $2$ as in the discussion above.
Moreover, one allows the algorithm to fail with some small probability $\delta$, 
which can be amplified by independent repetition and taking the best hyperplane found.

While sketching techniques will be described in great detail in the following sections,
we give a glimpse of what is to come below. Let $r \ll n$, 
and suppose one chooses a
$r \times n$ random matrix $\matS$ from a certain distribution on matrices
to be specified. 
Consider the following algorithm for least squares regression: 
\begin{enumerate}
\item Sample a random matrix $\matS$.
\item Compute $\matS \cdot \matA$ and $\matS \cdot \b$.
\item Output the exact solution $x$ to the regression problem $\min_{\x} \|(\matS \matA)\x-(\mat S\b)\|_2$. 
\end{enumerate}
Let us highlight some key features of this algorithm. First, notice that it is a 
{\it black box} reduction, in the sense that after computing $\matS \cdot \matA$
and $\matS \cdot \b$, we then solve a smaller instance of least squares regression,
replacing the original number $n$ of observations with the smaller value of $r$. For
$r$ sufficiently small, we can then afford to carry out step 3, e.g., by computing
and solving the normal equations as described above. 

The most glaring omission from the above algorithm is which random familes of matrices
$\matS$ will make this procedure work, and for what values of $r$. Perhaps one of the
simplest arguments is the following. Suppose $r = \Theta(d/\eps^2)$ and $\matS$
is a $r \times n$ matrix of i.i.d. normal random variables with mean 
zero and variance $1/r$, denoted $N(0,1/r)$. Let $\matU$ be an $n \times (d+1)$
matrix with orthonormal columns for which the column space of $\matU$ is equal to
the column space of $[\matA, \b]$, that is, the space spanned by the columns of $\matA$
together with the vector $\b$. 

Consider the product $\matS \cdot \matU$. By $2$-stability of the normal
distribution, i.e., if $\matA \sim N(0, \sigma_1^2)$ and $\matB \sim N(0, \sigma_2^2)$,
then $\matA+\matB \sim N(0, \sigma_1^2 + \sigma_2^2)$, each of the entries of $\matS \cdot \matU$
is distributed as $N(0, 1/r)$ (recall that the column norms of $\matU$ are equal to $1$). 
The entries in different rows of $\matS \cdot \matU$ are also independent
since the rows of $\matS$ are independent. The entries in a row
are also independent by rotational invarance of the normal distribution, that is,
if $\g \sim N(0, \matI_n/r)$ is an $n$-dimensional vector of normal random variables
and $\matU_{*,1}, \ldots, \matU_{*,d}$ are orthogonal vectors, then $\langle \g, \matU_{*,1} \rangle,
\langle \g, \matU_{*,2} \rangle, \ldots, \langle \g, \matU_{*,d+1} \rangle$ are independent. 
Here $\matI_n$ is the $n \times n$ identity matrix
(to see this, by rotational invariance, these $d+1$ random variables are equal in
distribution to $\langle \g, \e_1 \rangle, \langle \g, \e_2 \rangle, \ldots,
\langle \g, \e_{d+1} \rangle$, where $\e_1, \ldots, \e_{d+1}$ are the standard unit vectors,
from which independence follows since the coordinates of $g$ are independent). 

It follows that $\matS \cdot \matU$ is an $r \times (d+1)$ matrix of i.i.d. $N(0, 1/r)$
random variables. For $r = \Theta(d/\eps^2)$, it is well-known that with
probability $1-\exp(-d)$, all the singular
values of $\matS \cdot \matU$ lie in the interval 
$[1-\eps, 1+\eps]$. This can be shown by arguing that for any fixed vector $\x$, 
$\|\matS \cdot \matU \x\|_2^2 = (1 \pm \eps)\|\x\|_2^2$ with probability $1-\exp(-d)$, 
since, by rotational invariance of the normal distribution, 
$\matS \cdot \matU \x$ is a vector of $r$
i.i.d. $N(0, \|x\|_2^2)$ random variables, and so one can apply a tail bound for
$\|\matS \cdot \matU \x\|_2^2$, which itself is a $\chi^2$-random variable with $r$ degrees
of freedom. The fact that all singular values of $\matS \cdot \matU$ lie in 
$[1-\eps, 1+\eps]$ then follows by placing a sufficiently fine net on the unit sphere
and applying a union bound to all net points; 
see, e.g., Theorem 2.1 of \cite{RV10} for further details. 

Hence, for all vectors
$\y$, $\|\matS \matU \y\|_2 = (1 \pm \eps)\|\matU \y\|_2$. But now consider the regression
problem $\min_{\x} \|(\matS \matA)\x-(\matS \b)\|_2 = \min_{\x} \|\matS(\matA \x- \b)\|_2$. For each vector $x$,
$\matA \x- \b$ is in the column space of $\matU$, and therefore by the previous paragraph,
$\|\matS(\matA \x-\b)\|_2 = (1 \pm \eps)\|\matA \x-\b\|_2$. It follows that by solving the regression
problem $\min_x \|(\matS \matA)\x-(\matS \b)\|_2$, we obtain a $(1+\eps)$-approximation to the original
regression problem with probability $1-\exp(-d)$. 

The above technique of replacing $\matA$ by $\matS \cdot \matA$ is known as a sketching technique
and $\matS \cdot \matA$ is referred to as a (linear) sketch of $\matA$. While the above
is perhaps the simplest instantiation of sketching, notice that it does not in
fact give us a faster solution to the least squares regression problem. This is because,
while solving the regression problem $\min_{\x} \|(\matS \matA)\x-(\matS \b)\|_2$ can now be done n\"aively
in only $O(rd^2)$ time, which no longer depends on the large dimension $n$, the
problem is that $\matS$ is a dense matrix and computing $\matS \cdot \matA$ may now be too slow,
taking $\Theta(nrd)$ time. 

Thus, the bottleneck in the above algorithm is the time for matrix-matrix multiplication. 
Tam\'{a}s Sarl\'{o}s observed \cite{S06} that one can in fact
choose $\matS$ to come from a much more structured random family of matrices, called
fast Johnson-Lindenstrauss transforms \cite{AC06}. These led to roughly $O(nd \log d) + \poly(d/\eps)$
time algorithms for the least squares regression problem. Recently, Clarkson
and Woodruff \cite{CW13} improved upon the time complexity of this algorithm to obtain 
{\it optimal} algorithms for approximate least squares
regression, obtaining $O(\nnz(\matA)) + \poly(d/\eps)$ time, where $\nnz(\matA)$ denotes
the number of non-zero entries of the matrix $\matA$. We call such algorithms input-sparsity
algorithms, as they exploit the number of non-zero entries of $\matA$. 
The $\poly(d/\eps)$ factors
were subsequently optimized in a number of papers \cite{MM13,JH13,bn13}, leading to optimal algorithms
even when $\nnz(\matA)$ is not too much larger than $d$. 

In parallel, work was done on reducing the dependence on $\eps$ in these algorithms from polynomial to polylogarithmic. This started with work of Rokhlin and Tygert \cite{RT08} (see also the Blendenpik algorithm \cite{amt10}), and combined with the recent input sparsity algorithms give a running time of $O(\nnz(\matA)\log(1/\eps)) + \poly(d)$ for least squares regression \cite{CW13}. This is significant for high precision applications of least squares regression, for example, for solving an equation of the form $\matA^T \matA \x = \matA^T \b$. Such equations frequently arise in interior point methods for linear programming, as well as iteratively reweighted least squares regression, which is a subroutine for many important problems, such as logistic regression; see
\cite{M03Compare} for a survey of such techniques for logistic regression. In these examples $\matA$ is often formed from the Hessian of a Newton step in an iteration. It is clear that such an equation is just a regression problem in disguise (in the form of the normal equations), and the (exact) solution of argmin$_{\x} \|\matA \x - \b\|_2$ provides such a solution. By using high precision approximate regression one can speed up the iterations in such algorithms. 

Besides least squares regression, related sketching techniques have also been 
instrumental in providing better robust $\ell_1$-regression, low rank approximation, 
and graph sparsifiers, as well as a number of variants of these problems. 
We will cover these applications each in more detail. 
\\\\
{\bf Roadmap:} In the next section we will
discuss least squares regression in full detail, which includes applications to
constrained and structured regression. In Section 3, we will then 
discuss $\ell_p$-regression, including least absolute deviation regression. In Section 4
we will dicuss low rank approximation, while in Section 5, we will discuss graph sparsification. 
In Section 6, we will discuss the limitations of sketching techniques. In Section 7, 
we will conclude and briefly discuss a number of other directions in this area.

\section{Subspace Embeddings and Least Squares Regression}
We start with the classical least squares regression problem, which is the following. We are
given an $n \times d$ matrix $\matA$, which is typically overconstrained, that is, $n \gg d$, together
with an $n \times 1$ vector $\b$, and we would like to find an $\x \in \mathbb{R}^d$ 
which minimizes $\|\matA \x- \b\|_2$. Since the
problem is overconstrained, there need not exist a vector $\x$ for which $\matA \x= \b$. We
relax the problem and instead allow for outputting a vector $\x'$ for which with probability $.99$, 
$$\|\matA\x'-\b\|_2 \leq (1+\eps) \|\matA \x-\b\|_2.$$
We are interested in fast solutions to this problem, which we present in \S\ref{sec:regression}. 

{\bf Section Overview:} In \S\ref{sec:se} we introduce the notion of an $\ell_2$-subspace embedding, which is crucial to many of the applications in this book. In this section we will focus on its application to least squares regression. We show several different randomized constructions which vary in the time it takes to construct and apply them, as well as the dimension which they embed into. These constructions turn out to be oblivious to the data set. In \S\ref{sec:mm} we introduce a primitive called matrix product, which is a primitive for performing approximate matrix multiplication. Using this primitive we will show how to construct an $\ell_2$-subspace embedding. The primitive will also play a role in \S\ref{sec:regression} in solving regression with a linear in $1/\eps$ dependence on the accuracy parameter $\eps$, as well as in \S\ref{sec:frobenius} on low rank matrix approximation. In \S\ref{sec:whp} we present a trick which takes any constant probability of success subspace embedding and shows how to obtain a high probability success subspace embedding. Thus, to some extent our earlier treatment of constant subspace embeddings is justified. In \S\ref{sec:leverage} we present a completely different way of achieving a subspace embedding, which is non-oblivious and is obtained by sampling rows of a matrix proportional to their so-called leverage scores. In \S\ref{sec:regression} we present a black box application of subspace embeddings to the least squares regression problem.  In \S\ref{sec:machinePrecision} we show how to use subspace embeddings in a different way to solve least squares regression, leading to an algorithm with only a logarithmic dependence on the error parameter $1/\eps$. This method, while it has a much better dependence on $1/\eps$, does require multiple passes over the data unlike the method of \S\ref{sec:regression}. Finally, in \S\ref{sec:polynomialFit} we show that for regression instances which possess additional structure, such as those that arise in polynomial fitting problems, one can apply subspace embeddings even faster than via generic ways presented before. 

\subsection{Subspace embeddings}\label{sec:se}
We start with the basic notion of an $\ell_2$-subspace embedding for the column space of an $n \times d$ matrix $\matA$. 
As we will see, this will be a powerful hammer for solving least squares regression. 
Throughout, for non-negative real numbers $a$ and $b$, we use the notation $a = (1 \pm \eps)b$ if
$a \in [(1-\eps)b, (1+\eps)b]$. 
\begin{definition}\label{def:subspace}
A $(1 \pm \eps)$ $\ell_2$-subspace embedding for the column space of an 
$n \times d$ matrix $\matA$ 
is a matrix $\matS$ for which for all $\x \in \mathbb{R}^d$
$$\|\matS \matA \x\|_2^2 = (1 \pm \eps)\|\matA \x\|_2^2.$$
\end{definition}
We will often abuse notation and say that $\matS$ is an $\ell_2$-subspace embedding for $\matA$ itself, even though it should be understood from the definition that this property does not depend on a particular basis for the representation of the column space of $\matA$. 

Notice that if $\matS$ is a $(1 \pm \eps)$ $\ell_2$-subspace embedding for $\matA$, then it is also a
$(1 \pm \eps)$ $\ell_2$-subspace embedding for $\matU$, where $\matU$ is an orthonormal basis for the column
space of $\matA$. This is because the sets 
$\{\matA \x \mid \x \in \mathbb{R}^d\}$ and $\{\matU \y \mid \y \in \mathbb{R}^t\}$
are equal, where $t$ is the rank of $\matA$. 
Hence, we could without loss of generality assume that $\matA$ has orthonormal columns. With 
this interpretation, the requirement of Definition \ref{def:subspace} becomes
$$\|\matS \matU \y\|_2^2 = (1 \pm \eps) \|\matU \y\|_2^2 = (1 \pm \eps)\|\y\|_2^2,$$
where the final equality holds since $\matU$ has orthonormal columns. If this requirement is satisfied
for unit vectors $\y$, then it is satisfied for all vectors $\y$ by scaling (since $\matS$ is a linear
map), so the requirement
of Definition \ref{def:subspace} can be further simplified to
\begin{eqnarray}\label{eqn:operatorDiff}
\|\matI_{d} - \matU^T \matS^T \matS \matU\|_2 \leq \eps,
\end{eqnarray}
that is, the operator norm $\sup_{y \textrm{ such that }\|y\|_2 = 1} \|\matI_d - \matU^T\matS^T\matS \matU\|_2$, 
should be at most $\eps$. Here, $\matI_{d}$ is the $d \times d$ identity matrix. 

There are various goals of subspace embeddings. Two of the main goals are finding a matrix $\matS$ with a small
number of rows. Another goal is to be able to compute $\matS \cdot \matA$ quickly, as this is often a bottleneck
in applications. 

There are a number of ways of constructing $\ell_2$-subspace embeddings which achieve various tradeoffs. 
One particularly useful form of an $\ell_2$-subspace embedding is an {\it oblivious} $\ell_2$-subspace 
embedding.

\begin{definition}\label{def:oblSubspace}
Suppose $\Pi$ is a distribution on $r \times n$ matrices $\matS$, where $r$
is a function of $n, d, \eps$, and $\delta$. 
Suppose that with probability at least $1-\delta$, for any fixed $n \times d$ matrix $\matA$,
a matrix $\matS$ drawn from 
distribution $\Pi$ has the property
that $\matS$ is a 
$(1\pm \eps)$ $\ell_2$-subspace embedding for $\matA$. Then
we call $\mat\Pi$ an $(\eps, \delta)$ oblivious $\ell_2$-subspace embedding. 
\end{definition}

Definition \ref{def:oblSubspace} 
will be used in applications throughout this book, and sometimes for convenience
we will drop the word oblivious. 

We do want to note that there are other ways of constructing subspace embeddings
though, such as through sampling the rows of $\matA$ via a certain distribution and reweighting them. 
This is called Leverage Score Sampling \cite{DMM06a,DMM061,DMM062,DMMS07},
which will be discussed later in the section. This also turns out to have a number of applications, for example
to CUR decompositions of a matrix discussed in \S\ref{sec:CUR}. Note that this way of constructing subspace
embeddings is desirable in that it gives an actual ``representative'' subset of rows of $\matA$ which form
a subspace embedding - this is often called a {\it coreset}. Such representations can sometimes lead to better
data interpretability, as well as preserving sparsity. While we do discuss this kind of sampling to some
extent, our main focus will be on sketching. The reader is encouraged to look at the survey by Mahoney
for more details on sampling-based approaches \cite{m11}. See also \cite{lmp13} and \cite{clmmps14} for 
state of the art subspace embeddings based on this approach. 

Returning to Definition \ref{def:oblSubspace}, the first usage of this in the numerical linear 
algebra community,
to the best of our knowledge, was done by S\'arlos, who proposed using Fast Johnson Lindenstrauss transforms
to provide subspace embeddings. We follow the exposition in Sarl\'{o}s for this \cite{S06}.

\begin{definition}
A random matrix $\matS \in \mathbb{R}^{k \times n}$ forms a Johnson-Lindenstrauss transform with parameters 
$\eps, \delta, f$, or JLT$(\eps, \delta, f)$ for short, if with probability at least $1-\delta$,
for any $f$-element subset $V \subset \mathbb{R}^n$, for all $\ve, \ve' \in V$ it holds that
$|\langle \matS \ve, \matS \ve' \rangle - \langle \ve, \ve' \rangle| \leq \eps \|\ve\|_2 \|\ve'\|_2$. 

Note when $\ve = \ve'$ 
we obtain the usual statement that $\|\matS \ve\|_2^2 = (1 \pm \eps)\|\ve\|_2^2$. It turns out that if we scale all $\ve, \ve' \in V$ so that they are unit vectors, we could alternatively require 
$\|\matS \ve\|_2^2 = (1 \pm \eps)\|\ve\|_2^2$ and $\|\matS (\ve + \ve')\|_2^2 = (1 \pm \eps) \|\ve + \ve'\|_2^2$ for all $\ve, \ve' \in V$. That is, the requirement of the definition could be based on norms rather than inner products. To see that this
impies the statement above, we have 
\begin{eqnarray*}
\langle \matS \ve, \matS \ve' \rangle
& = & (\|\matS(\ve + \ve')\|_2^2 - \|\matS \ve\|_2^2 - \|\matS \ve'\|_2^2)/2\\
& = & ((1\pm \eps)\|\ve + \ve'\|_2^2 - (1 \pm \eps)\|\ve\|_2^2 - (1 \pm \eps)\|\ve'\|_2^2)\\
& = & \langle \ve, \ve' \rangle \pm O(\eps), 
\end{eqnarray*}
which implies all inner products are preserved up to $\eps$ by rescaling $\eps$ by a constant. 
\end{definition}

There are many constructions of Johnson-Lindenstrauss transforms, possibly the simplest is given by
the following theorem. 

\begin{theorem}\label{thm:normals} (see e.g., \cite{IM98})
Let $0 < \eps, \delta < 1$ and 
$\matS = \frac{1}{\sqrt{k}} \matR \in \mathbb{R}^{k \times n}$ where the entries
$\matR_{i,j}$ of $\matR$ are independent standard normal random variables. 
Then if $k = \Omega(\eps^{-2} \log (f/\delta))$,
then $\matS$ is a JLT($\eps, \delta, f)$. 
\end{theorem}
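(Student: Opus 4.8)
The plan is to establish the Johnson--Lindenstrauss property by first proving a one-dimensional concentration bound and then extending it to all pairs in $V$ via a union bound. Concretely, fix a unit vector $\ve \in \mathbb{R}^n$ and consider $\matS\ve = \frac{1}{\sqrt{k}}\matR\ve$. Since each row of $\matR$ is a vector of i.i.d.\ standard normals, the $i$-th coordinate of $\matR\ve$ is $\langle \matR_{i,*}, \ve\rangle$, which is distributed as $N(0,\|\ve\|_2^2) = N(0,1)$ by rotational invariance of the Gaussian, and the $k$ coordinates are independent. Hence $\|\matS\ve\|_2^2 = \frac{1}{k}\sum_{i=1}^k g_i^2$ where the $g_i$ are i.i.d.\ $N(0,1)$, so $k\|\matS\ve\|_2^2$ is a $\chi^2$ random variable with $k$ degrees of freedom, with mean $k$. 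In particular $\Exp[\|\matS\ve\|_2^2] = 1 = \|\ve\|_2^2$, so the estimator is unbiased; the content is the tail bound.

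Next I would prove the two-sided tail bound $\Probab{\,|\,\|\matS\ve\|_2^2 - 1\,| > \eps\,} \le 2\exp(-c\eps^2 k)$ for an absolute constant $c>0$ and $0<\eps<1$. This is the standard $\chi^2$ concentration inequality, obtained by a Chernoff-type argument: compute the moment generating function $\Exp[e^{t(g_i^2-1)}] = e^{-t}(1-2t)^{-1/2}$ for $t < 1/2$, multiply over the $k$ independent coordinates, optimize the resulting exponent $-tk - \frac{k}{2}\log(1-2t)$ in $t$, and Taylor-expand to see that for small $\eps$ the exponent is $-\Theta(\eps^2 k)$ on both the upper and lower tails. (One can also just cite a standard reference such as \cite{RV10} or Laurent--Massart.) Thus for a single unit vector $\ve$, $\|\matS\ve\|_2^2 = (1\pm\eps)\|\ve\|_2^2$ fails with probability at most $2\exp(-c\eps^2 k)$.

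Finally I would apply the union bound over the relevant set of vectors. By the alternative inner-product-free characterization noted right before the theorem statement, it suffices to control $\|\matS\ve\|_2^2$ for each $\ve \in V$ (after rescaling to unit norm) and $\|\matS(\ve+\ve')\|_2^2$ for each of the at most $\binom{f}{2}$ pairs $\ve,\ve' \in V$ — for these, $(\ve+\ve')/\|\ve+\ve'\|_2$ is again a fixed unit vector, so the single-vector bound applies verbatim. That gives at most $f + \binom{f}{2} = O(f^2)$ bad events, each of probability at most $2\exp(-c\eps^2 k)$, so the overall failure probability is at most $O(f^2)\exp(-c\eps^2 k)$. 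Choosing $k = \Omega(\eps^{-2}\log(f/\delta))$ with a large enough hidden constant makes this at most $\delta$. Then, as shown in the computation preceding the theorem, norm preservation on $V$ and on pairwise sums implies $|\langle \matS\ve,\matS\ve'\rangle - \langle\ve,\ve'\rangle| \le \eps\|\ve\|_2\|\ve'\|_2$ for all $\ve,\ve'\in V$ (up to rescaling $\eps$ by a constant, which is absorbed into the $\Omega(\cdot)$), completing the proof.

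The main obstacle is the $\chi^2$ tail bound itself — getting the exponent to come out as $-\Theta(\eps^2 k)$ rather than, say, $-\Theta(\eps k)$ requires care with the Taylor expansion of $\log(1-2t)$ near the optimal $t \approx \eps/2$, and one must handle the upper and lower tails separately since the MGF is only finite for $t<1/2$. Everything else — the rotational-invariance reduction to i.i.d.\ Gaussians, and the union bound — is routine.
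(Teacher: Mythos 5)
Your proposal is correct and follows essentially the same route the paper takes (via Lemma~\ref{lem:jl}): reduce to a single unit vector by rotational invariance of the Gaussian so that $k\|\matS\ve\|_2^2$ is $\chi^2$ with $k$ degrees of freedom, apply the standard $\chi^2$ concentration bound (the paper cites Laurent--Massart, Fact~\ref{fact:c2}, rather than re-deriving it from the MGF as you sketch), union-bound over the $O(f^2)$ vectors $\ve$ and $\ve+\ve'$, and then invoke the polarization identity argument given just before the theorem to pass from norm preservation to inner-product preservation.
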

We will see a proof of Theorem \ref{thm:normals} in Lemma \ref{lem:jl}. 

We show how Theorem \ref{thm:normals} can be used to provide an $\ell_2$-subspace embedding. To do so, we need
the concept of an $\eps$-net. Let 
$\mathcal{S} = \{\y \in \mathbb{R}^n \mid \y = \matA \x \textrm{ for some } \x \in \mathbb{R}^d \textrm{ and } \|\y\|_2 = 1\}$.
We seek a finite subset of
$\mathcal{S}$, denoted $\mathcal{N}$ so that if 
\begin{eqnarray}\label{eqn:netProperty}
\langle \matS \w, \matS \w' \rangle = \langle \w, \w' \rangle \pm \eps \textrm{ for all } \w, \w' \in \mathcal{N},
\end{eqnarray}
then $\|\matS \y\|_2 = (1 \pm \eps)\|\y\|_2$ for all $\y \in \mathcal{S}$. 

By an argument of \cite{AHK06,fo05,m07}, it suffices to choose $\mathcal{N}$ 
so that for all $\y \in \mathcal{S}$, 
there exists a vector $\w \in \mathcal{N}$ for which $\|\y-\w\|_2 \leq 1/2$. We will refer to $\mathcal{N}$
as a $(1/2)$-net for $\mathcal{S}$. 

To see that $\mathcal{N}$ suffices, if $\y$ is a unit
vector, then we can write 
\begin{eqnarray}\label{eqn:sequence}
\y = \y^0 + \y^1 + \y^2 + \cdots,
\end{eqnarray}
where $\|\y^i\| \leq \frac{1}{2^i}$ and $\y^i$ is a scalar multiple of a vector in $\mathcal{N}$. This is
because we can write $\y = \y^0 + (\y-\y^0)$ where $\y^0 \in \mathcal{N}$ and 
$\|\y-\y^0\| \leq 1/2$ by the definition of $\mathcal{N}$. 
Then, $\y-\y^0 = \y^1 + ((\y-\y^0)-\y^1)$ where $\y^1 \in \mathcal{N}$ and 
$$\|\y-\y^0 - \y^1\|_2 \leq \frac{\|\y-\y^0\|_2}{2} \leq \frac{1}{4}.$$
The expansion in (\ref{eqn:sequence}) then follows by induction. But then, 
\begin{eqnarray*}
\|\matS \y\|_2^2 & = & \|\matS(\y^0 + \y^1 + \y^2 + \cdots)\|_2^2\\
& = & \sum_{0 \leq i < j < \infty} \|\matS \y^i\|_2^2 + 2 \langle \matS \y^i, \matS \y^j \rangle\\
& = & \left (\sum_{0 \leq i < j < \infty} \|\y^i\|_2^2 + 2 \langle \y^i, \y^j \rangle \right ) 
\pm 2\eps \left (\sum_{0 \leq i \leq j < \infty} \|\y^i\|_2 \|\y^j\|_2 \right )\\
& = & 1 \pm O(\eps),
\end{eqnarray*}
where the first equality follows by (\ref{eqn:sequence}), the second equality follows by expanding the
square, the third equality follows from (\ref{eqn:netProperty}), and the fourth equality is what we want
(after rescaling $\eps$ by a constant factor). 

We show the existence of a small $(1/2)$-net $\mathcal{N}$ via a standard argument.  
\begin{lemma}\label{lem:epsL2net}
For any $0 < \gamma < 1$, 
there exists a $\gamma$-net $\mathcal{N}$ of $\mathcal{S}$ for which $|\mathcal{N}| \leq (1 + 4/\gamma)^d$. 
\end{lemma}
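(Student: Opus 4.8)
The plan is to use a standard volume-comparison (greedy packing) argument. First I would build a maximal set $\mathcal{N} \subseteq \mathcal{S}$ whose points are pairwise at distance greater than $\gamma$ from one another; such a maximal set exists by a greedy construction (repeatedly add any point of $\mathcal{S}$ at distance $> \gamma$ from all previously chosen points, stopping when no such point remains). By maximality, every $\y \in \mathcal{S}$ is within distance $\gamma$ of some point of $\mathcal{N}$, so $\mathcal{N}$ is a $\gamma$-net. It remains to bound $|\mathcal{N}|$.

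The key observation is that $\mathcal{S}$ lives inside a $t$-dimensional subspace (the column space of $\matA$, where $t = \rank(\matA) \le d$), and $\mathcal{S}$ is the unit sphere of that subspace. Since the points of $\mathcal{N}$ are $\gamma$-separated, the balls of radius $\gamma/2$ centered at the points of $\mathcal{N}$ are pairwise disjoint. Moreover each such ball is contained in the ball of radius $1 + \gamma/2$ about the origin (intersected with the $t$-dimensional subspace), because each center has norm $1$. I would then compare volumes within the $t$-dimensional subspace: disjointness gives
\begin{eqnarray*}
|\mathcal{N}| \cdot \mathrm{vol}_t\!\left(B(0,\gamma/2)\right) \le \mathrm{vol}_t\!\left(B(0, 1+\gamma/2)\right),
\end{eqnarray*}
and since the volume of a $t$-dimensional Euclidean ball scales as the $t$-th power of its radius, the volume ratio is $\left((1+\gamma/2)/(\gamma/2)\right)^t = (1 + 2/\gamma)^t$. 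Hence $|\mathcal{N}| \le (1 + 2/\gamma)^t \le (1 + 2/\gamma)^d \le (1 + 4/\gamma)^d$, which is the claimed bound (in fact slightly stronger).

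There is no real obstacle here; the only point requiring a little care is to make sure all volume computations are carried out inside the $t$-dimensional subspace spanned by the columns of $\matA$ rather than in $\mathbb{R}^n$ — otherwise the balls have infinite-dimensional "thinness" and the comparison fails. Also one should note the degenerate case $t = 0$ (i.e.\ $\matA = 0$), where $\mathcal{S}$ is empty and the bound holds trivially. Everything else is the routine packing-number estimate.
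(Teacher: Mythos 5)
Your proposal is correct and takes essentially the same route as the paper: construct a maximal $\gamma$-separated subset of $\mathcal{S}$ and bound its size by a volume-packing argument inside the $t$-dimensional column space of $\matA$. The only differences are cosmetic — the paper works with the abstract unit sphere $\mathcal{S}^{t-1}\subset\mathbb{R}^t$ and transfers via the isometry $\matU$, and it chooses a $\gamma/2$-separated set (hence balls of radius $\gamma/4$, giving $(1+4/\gamma)^t$), whereas you pack at separation $\gamma$ directly inside the subspace and obtain the slightly tighter bound $(1+2/\gamma)^t$, which of course still satisfies the stated $(1+4/\gamma)^d$.
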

\begin{proof}
For $t = \textrm{rank}(\matA) \leq d$, we can equivalently express $\mathcal{S}$ as 
$$\mathcal{S} = \{\y \in \mathbb{R}^n \mid \y = \matU \x \textrm{ for some } \x \in \mathbb{R}^t \textrm{ and } \|\y\|_2 = 1\},$$
where $\matU$ has orthonormal columns and the same column space as $\matA$. 

We choose a $\gamma/2$-net $\mathcal{N}'$ of the unit sphere $\mathcal{S}^{t-1}$, where the $\gamma/2$ net has
size $(1+4/\gamma)^t$. The intuition for this choice is that $\matU$ provides an isometry when operating on $\mathcal{S}^{t-1}$,
and so a net for $\mathcal{S}^{t-1}$ will give us a net for the image of $\mathcal{S}^{t-1}$ under $\matU$. 

This can
be done by choosing a maximal set $\mathcal{N}'$ of points on $\mathcal{S}^{t-1}$ so that no two points are within
distance $\gamma/2$ from each other. It follows that the balls of radius $\gamma/4$ centered at these points are
disjoint, but on the other hand they are all contained in the ball of radius $1+\gamma/4$ centered at the origin. 
The volume of the latter ball is a factor $(1+\gamma/4)^t/(\gamma/4)^t$ larger than the smaller balls, which
implies $|\mathcal{N}'| \leq (1+4/\gamma)^t$. See, e.g., \cite{m02} for more details.

%
%
Define 
$\mathcal{N} = \{\y \in \mathbb{R}^n \mid \y = \matU \x \textrm{ for some } \x \in \mathcal{N}'\}.$
Since the columns of $\matU$ are orthonormal, if there were a point $\matU \x \in \mathcal{S}$ for which there
were no point $\y \in \mathcal{N}$ with $\|\y-\matU \x\|_2 \leq \gamma$, 
then $\x$ would be a point in $\mathcal{S}^{k-1}$
for which there is no point $\z \in \mathcal{N}'$ with $\|\x-\z\|_2 \leq \gamma$, a contradiction. 
\end{proof}

It follows by setting $V = \mathcal{N}$ and $f = 9^d$ in Theorem \ref{thm:normals}, 
we can then apply Lemma \ref{lem:epsL2net} and (\ref{eqn:netProperty}) to obtain the following theorem. 
Note that the net size
does not depend on $\eps$, since we just need a $1/2$-net for the argument, even though the theorem holds for
general $\eps$. 
\begin{theorem}\label{thm:gaussianSE}
Let $0 < \eps, \delta < 1$ and $\matS = \frac{1}{\sqrt{k}} \matR \in \mathbb{R}^{k \times n}$ where the entries
$\matR_{i,j}$ of $\matR$ are independent standard normal random variables. Then if $k = \Theta((d+ \log(1/\delta)) \eps^{-2})$,
then for any fixed $n \times d$ matrix $\matA$, with probability $1-\delta$, $\matS$ is a $(1 \pm \eps)$
$\ell_2$-subspace embedding for $\matA$, that is, simultaneously for all $\x \in \mathbb{R}^d$,
$\|\matS \matA \x\|_2 = (1 \pm \eps)\|\matA \x\|_2$. Here $C > 0$ is an absolute constant. 
\end{theorem}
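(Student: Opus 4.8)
The plan is to package together the three ingredients already assembled in this subsection: the Gaussian JLT guarantee (Theorem~\ref{thm:normals}), the small-net bound (Lemma~\ref{lem:epsL2net}), and the chaining argument surrounding (\ref{eqn:netProperty})--(\ref{eqn:sequence}). By the discussion following Definition~\ref{def:subspace}, the subspace-embedding property is scale-invariant and basis-independent, so it suffices to show $\|\matS \y\|_2 = (1 \pm \eps)\|\y\|_2$ simultaneously for every unit vector $\y$ in the column space of $\matA$, i.e. for all $\y \in \mathcal{S}$.

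First I would fix the net. Apply Lemma~\ref{lem:epsL2net} with $\gamma = 1/2$ to obtain a $(1/2)$-net $\mathcal{N}$ of $\mathcal{S}$ with $|\mathcal{N}| \le (1 + 4/\gamma)^d = 9^d =: f$. Then invoke Theorem~\ref{thm:normals} with this value of $f$ and with accuracy parameter $c\eps$ for a sufficiently small absolute constant $c > 0$. Since $\log(f/\delta) = d \log 9 + \log(1/\delta) = \Theta(d + \log(1/\delta))$, the stated choice $k = \Theta\big((d + \log(1/\delta))\eps^{-2}\big)$ satisfies the hypothesis $k = \Omega\big((c\eps)^{-2}\log(f/\delta)\big)$ of Theorem~\ref{thm:normals} ($c^{-2}$ being an absolute constant), so with probability at least $1-\delta$ the matrix $\matS$ is a $\mathrm{JLT}(c\eps, \delta, f)$ on $V = \mathcal{N}$. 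Applied to the net vectors (which are unit vectors), this gives exactly the hypothesis (\ref{eqn:netProperty}), namely $\langle \matS \w, \matS \w' \rangle = \langle \w, \w' \rangle \pm c\eps$ for all $\w, \w' \in \mathcal{N}$.

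Conditioned on that event, I would run the chaining computation already displayed before Lemma~\ref{lem:epsL2net}: each unit $\y \in \mathcal{S}$ decomposes as $\y = \y^0 + \y^1 + \cdots$ with $\|\y^i\|_2 \le 2^{-i}$ and each $\y^i$ a scalar multiple of a net vector, as in (\ref{eqn:sequence}); expanding $\|\matS \y\|_2^2$, substituting (\ref{eqn:netProperty}) into each term, and bounding the resulting geometric sum $\sum_{0 \le i \le j < \infty} \|\y^i\|_2 \|\y^j\|_2 \le \big(\sum_{i} 2^{-i}\big)^2 = 4$ yields $\|\matS \y\|_2^2 = 1 \pm O(c\eps)$, hence $\|\matS \y\|_2 = 1 \pm O(c\eps)$ after taking a square root. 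Choosing $c$ to absorb the resulting absolute constant into $\eps$ gives $\|\matS \y\|_2 = (1 \pm \eps)\|\y\|_2$ for all $\y \in \mathcal{S}$, and therefore $\|\matS \matA \x\|_2 = (1 \pm \eps)\|\matA \x\|_2$ for all $\x \in \mathbb{R}^d$, which is the claim.

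There is no genuine obstacle here — the theorem is a repackaging of earlier results — so the one point requiring care is the constant accounting: verifying that a single absolute constant $c$ can simultaneously absorb the $\pm c\eps$ slack from the JLT, the factor $4$ from the geometric series, and the passage from a $(1 \pm O(\eps))$ bound on $\|\matS \y\|_2^2$ to one on $\|\matS \y\|_2$, all without interacting with the hidden constants in $k = \Theta\big((d + \log(1/\delta))\eps^{-2}\big)$. I would also note in passing that taking $f = 9^d$ is (more than) enough, since $|\mathcal{N}| \le (1 + 4/\gamma)^t = 5^t \le 9^d$ for every $t = \rank(\matA) \le d$, which is the reason the net size may be taken independent of $\eps$.
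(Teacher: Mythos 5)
Your proof is correct and follows exactly the route the paper takes: a $(1/2)$-net of size $9^d$ from Lemma~\ref{lem:epsL2net}, the Gaussian JLT guarantee of Theorem~\ref{thm:normals} applied with $V = \mathcal{N}$ and $f = 9^d$, and the chaining decomposition (\ref{eqn:sequence}) to pass from the net to all of $\mathcal{S}$. (The only blemish is the arithmetic in your closing aside, where $(1+4/\gamma)^t$ with $\gamma = 1/2$ equals $9^t$ rather than $5^t$; this affects nothing.)
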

It turns out, as we will see in Section \ref{chap:lb}, that Theorem \ref{thm:gaussianSE} provides the
optimal number of rows of $\matS$ up to a constant factor, namely $\Theta(k \eps^{-2})$. This is true of
any oblivious $(1 \pm \eps)$ $\ell_2$-subspace embedding, even those achieving only a constant probability
of providing an $\ell_2$-subspace embedding of $\matA$ with constant probability. 

After Theorem \ref{thm:normals} was discovered, there were a number of followups. For instance, it was
shown by Achlioptas that one can replace $\matR$ in Theorem \ref{thm:normals} with a matrix of i.i.d. sign random variables \cite{Ach03}, 
that is, each entry is independently set to $1$ or $-1$ with probability $1/2$. Further, Achlioptas 
showed that one can change the distribution so that for the same value of $k$, one can set each entry
in $\matR$ independently to be $1$ with probability $1/6$, $-1$ with probability $1/6$, and $0$ 
with probability $2/3$.
The latter is important since it results in a sparse matrix $\matS$, for which one can then compute $\matS \cdot \x$ for
a vector $\x \in \mathbb{R}^n$ more quickly. A breakthrough was made by Dasgupta, Kumar, and S\'arlos \cite{dks10}
who showed that it suffices for each column of $\matS$ to have only $\eps^{-1} \poly(\log (f/\delta))$ non-zero entries
per column. Note that if the $\poly(\log f/\delta)$ term is much smaller than $\eps^{-1}$, this is a significant
improvement over the $\Omega(\eps^{-2} \log(f/\delta))$ number of non-zero entries per column achieved by previous
schemes. The $\eps^{-1} \poly(\log(f/\delta))$ sparsity was later optimized by Kane and Nelson \cite{kn14}, who got
$O(\eps^{-1} \log(f/\delta))$ non-zero entries per column. The latter was shown to be almost tight by
Nelson and Nguy$\tilde{\hat{\mbox{e}}}$n \cite{nn13lb}, who showed that $\Omega(\eps^{-1} \log(f/\delta)/\log(1/\varepsilon))$ column sparsity is required. 

In short, the above line of work shows that it is possible to apply a JLT$(\eps, \delta, f)$  matrix $\matS$
to a vector $\x$ in $O(\nnz(\x) \cdot \eps^{-1} \log(f/\delta))$ time, where $\nnz(\x)$ denotes the number
of non-zero entries of the vector $\x$. This results in a significant speedup over Theorem \ref{thm:normals}
when $\eps$ is small. It also leads to improvements in Theorem \ref{thm:gaussianSE}, though regarding
$\ell_2$-subspace embeddings, one can do better as discussed below.

A somewhat different line of work also came about in trying to speed up the basic construction
in Theorem \ref{thm:normals}, and this is due to Ailon and Chazelle \cite{AC06}. Instead of trying to achieve
a sparse matrix $\matS$, they tried to achieve an $\matS$ which could be quickly applied to a vector $\x$. 
The underlying
intuition here is that for a vector $\x \in \mathbb{R}^n$ 
whose $\ell_2$ mass is spread roughly uniformly across its $n$ coordinates, sampling a small number of
its coordinates uniformly at random and rescaling results in a good estimate of the $\ell_2$-norm of $\x$. 
However, if $\x$ does not have this property, e.g., it is sparse, then sampling is a very poor way to estimate
the $\ell_2$-norm of $\x$, as typically most samples will be $0$. By the uncertainty principle, though, 
if $\x$ is sparse, then $\matF \x$ cannot be too sparse, 
where $\mat F$ is the Fourier transform. This is also true for the
Hadamard transform $\matH \x$, and for any bounded orthonormal system (i.e., an orthonormal matrix
whose entry of maximum magnitude is bounded by $O(1/\sqrt{n})$). Indeed, from results in signal processing due
to Donoho and Stark \cite{DS89}, 
if $\matA = [\matI_n \matB]^T$ is a $2n \times n$ matrix such that $\matB$ has orthonormal rows and columns,
and for any distinct rows $\matB_{i*}, \matB_{j*}$ we have $|\matB_{i*}, \matB_{j*}| \leq M$, then for any
$\x \in \mathbb{R}^n$, it holds that $\|\x\|_0 + \|\matB \x\|_0 \geq 1/M$. See, e.g., \cite{i07}, 
for algorithmic applications of this uncertainty principle. 

Unfortunately $\matH \x$ can still be sparse enough that a small number of samples will not work, 
so the intuition
is to re-randomize $\matH \x$ by applying a cheap rotation - namely, computing $\matH \matD \x$ 
for a diagonal matrix $\matD$
with i.i.d. entries $\matD_{i,i}$ in which $\matD_{i,i} = 1$ with probability $1/2$, 
and $\matD_{i,i} = -1$ with probability $1/2$.
If $\matP$ is an $k \times n$ matrix which implements coordinate sampling, 
then $\matP \cdot \matH \cdot \matD \x$ now provides the desired
Johnson-Lindenstrauss transform. Since $\matD$ is a diagonal matrix, 
$\matD \x$ can be computed in $O(n)$ time. The Hadamard
matrix $\matH$ can be applied to an $n$-dimensional vector in $O(n \log n)$ time. Finally, $\matP$ can be applied to
an $n$-dimensional vector in $O(k)$ time. Hence, $\matP \cdot \matH \cdot \matD$ 
can be applied to a vector in $O(n \log n)$
time and to an $n \times d$ matrix in $O(n d \log n)$ time. 
We call this the {\em Fast Johnson Lindenstrauss Transform}. We note that this is not
quite the same as the construction given by Ailon and Chazelle in \cite{AC06}, who form $\matP$
slightly differently to obtain a better dependence on $1/\eps$ in the final dimension. 

The Fast Johnson Lindenstrauss Transform 
is significantly faster than the above $O(\nnz(\x) \cdot \eps^{-1} \log(f/\delta))$ time for many
reasonable settings of the parameters, e.g., 
in a number of numerical linear algebra applications in which $1/\delta$ can be exponentially
large in $d$. Indeed, the Fast Johnson Lindenstrauss Transform was first used by S\'arlos to obtain the first speedups
for regression and low rank matrix approximation with relative error. S\'arlos used a version of the Fast Johnson
Lindenstrauss Transform due to \cite{AC06}. 
We will use a slightly different version called the {\em Subsampled
Randomized Hadamard Transform}, or SRHT for short. Later we will see a significantly faster transform for sparse
matrices. 

\begin{theorem}\label{thm:srht}(Subsampled Randomized Hadamard Transform 
\cite{AC06,S06,DMM06a,DMMS07,Tro11,DMMW12,ldfu13})
Let $\matS = \frac{1}{\sqrt{kn}} \matP \matH_n \matD$, where $\matD$ 
is an $n \times n$ diagonal matrix with i.i.d. diagonal
entries $\matD_{i,i}$ in which $\matD_{i,i} = 1$ with probability $1/2$, and $\matD_{i,i} = -1$ 
with probability $1/2$. 
$\matH_n$ refers to the Hadamard matrix of size $n$, which we assume is a power of $2$. Here, the $(i,j)$-th
entry of $\matH_n$ is given by $(-1)^{\langle i,j \rangle}/\sqrt{n}$, 
where $\langle i,j \rangle = \sum_{z=1}^{\log n} i_z \cdot j_z$,
and where $(i_{\log n}, \ldots, i_1)$ and $(j_{\log n}, \ldots, j_1)$ are the binary representations of $i$ and
$j$ respectively. The $r \times n$ matrix $\matP$ samples $r$ coordinates of an $n$-dimensional vector uniformly at random, 
where 
$$r = \Omega (\eps^{-2} (\log d)(\sqrt{d}+\sqrt{\log n})^2) .$$ 
Then with probability at least $.99$, for any fixed $n \times d$ matrix $\matU$ with orthonormal columns,
$$\|\matI_{d} - \matU^T\mat\Pi^T \mat\Pi \matU\|_2 \leq \eps.$$ 
Further, for any vector $\x \in \mathbb{R}^n$, $\matS \x$ can be computed in $O(n \log r)$ time. 
\end{theorem}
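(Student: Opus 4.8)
The plan is to factor $\matS$ (denoted $\mat\Pi$ in the statement) as the product of a \emph{mixing} operator $\matH_n\matD$ and a \emph{subsampling} operator $\matP$, to analyze the two pieces separately, and to combine them through a matrix concentration inequality. Since the normalized Hadamard matrix $\matH_n$ is orthogonal and $\matD$ is a diagonal $\pm 1$ matrix, the product $\matW := \matH_n\matD\matU$ again has orthonormal columns, so $\matW^T\matW=\matI_d$. Write $w_i^T$ for the $i$-th row of $\matW$, and fix the scalar normalization so that $\matP$ selects $r$ row indices $T$ uniformly at random (with replacement) and rescales by $\sqrt{n/r}$; then $\matU^T\matS^T\matS\matU=\tfrac{n}{r}\sum_{i\in T} w_i w_i^T$, a random matrix whose expectation is $\matW^T\matW=\matI_d$. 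Thus the target $\TNorm{\matI_d-\matU^T\matS^T\matS\matU}\le\eps$ is exactly the statement that uniform row sampling preserves $\matW^T\matW$ in operator norm, which will hold once we know that every row of $\matW$ has small norm.

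The first main step is a \emph{flattening} lemma: with probability at least $1-\delta/2$ over $\matD$, $\max_{i\in[n]}\TNormS{w_i}\le\tfrac1n\bigl(\sqrt d+\sqrt{8\log(n/\delta)}\bigr)^2$. The point is that $w_i^T=\sum_j \matD_{j,j}(\matH_n)_{i,j}\matU_{j,*}=\bm{\sigma}^T\matB_i$ is a Rademacher combination of the rows of $\matB_i:=\mathrm{diag}((\matH_n)_{i,*})\,\matU$, where $\bm\sigma=\mathrm{diag}(\matD)$, so $\TNormS{w_i}=\bm\sigma^T\matM_i\bm\sigma$ with $\matM_i:=\matB_i\matB_i^T$. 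Because $(\matH_n)_{i,j}^2=1/n$ for every $j$, we get $\matB_i^T\matB_i=\tfrac1n\matU^T\matU=\tfrac1n\matI_d$, so $\matM_i$ has all nonzero eigenvalues equal to $1/n$; hence $\mathrm{Tr}(\matM_i)=d/n$, $\TNorm{\matM_i}=1/n$, and $\FNormS{\matM_i}=d/n^2$. Applying the Hanson--Wright inequality (equivalently, a Bernstein bound for Rademacher chaos) to the quadratic form $\bm\sigma^T\matM_i\bm\sigma$ and union-bounding over $i\in[n]$ gives the claim. I expect this to be the main obstacle: a crude union bound over the $nd$ entries of $\matW$ only yields $\max_i\TNormS{w_i}\lesssim d\log(nd)/n$, and it is the extraction of the sharper $(\sqrt d+\sqrt{\log n})^2$ form --- using that $\matB_i$ is $1/\sqrt n$ times a matrix with orthonormal columns --- that makes the stated value of $r$ achievable.

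The second main step is the matrix concentration argument. Condition on the good event for $\matD$, so that every $w_iw_i^T$ has operator norm at most $L:=\tfrac1n(\sqrt d+\sqrt{8\log(n/\delta)})^2$. Then $\matU^T\matS^T\matS\matU=\sum_{t=1}^r\matX_t$ is a sum of $r$ i.i.d.\ PSD random matrices with $\E{\matX_t}=\tfrac1r\matI_d$ and $\TNorm{\matX_t}\le\tfrac{n}{r}L=\tfrac1r(\sqrt d+\sqrt{8\log(n/\delta)})^2$. The matrix Chernoff/Bernstein inequality for sums of bounded independent PSD matrices then yields $\TNorm{\matI_d-\sum_t\matX_t}\le\eps$ with probability at least $1-\delta/2$ provided $r=\Omega\!\bigl(\eps^{-2}(\sqrt d+\sqrt{\log n})^2\log(d/\delta)\bigr)$; for the constant $\delta=.01$ of the statement this is exactly the claimed bound on $r$. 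A union bound over the two failure events completes the proof of the embedding property. (Alternatively, one could replace this step by applying the $(1/2)$-net construction of Lemma~\ref{lem:epsL2net} to $\matW$ together with Theorem~\ref{thm:normals}-style tail bounds, at the cost of a somewhat worse polynomial dependence on $d$.)

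For the running time, $\matD\x$ costs $O(n)$; although a full Walsh--Hadamard transform of $\matD\x$ would cost $O(n\log n)$, only the $r$ coordinates selected by $\matP$ are needed, and these can be produced by a pruned (``partial'') Hadamard transform in $O(n\log r)$ time, followed by $O(r)$ scalar rescalings. This gives the stated $O(n\log r)$ bound.
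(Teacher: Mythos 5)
Your proof is correct. Note that the paper itself does not prove Theorem \ref{thm:srht} --- it explicitly defers to the cited references --- so there is no in-paper argument to compare against; your two-step structure (a flattening lemma showing $\max_i \|\e_i^T \matH_n \matD \matU\|_2^2 \leq \frac{1}{n}(\sqrt{d}+O(\sqrt{\log(n/\delta)}))^2$ via concentration of a Rademacher quadratic form, followed by a matrix Chernoff bound for uniform sampling of the rows of the flattened orthonormal matrix) is exactly the standard argument from the cited reference of Tropp, and your bookkeeping of $\mathrm{Tr}(\matM_i)$, $\TNorm{\matM_i}$, and $\FNormS{\matM_i}$ is right. Two minor points: the statement leaves the sampling model of $\matP$ ambiguous, and you silently choose sampling with replacement, which is the convenient case for i.i.d.\ matrix Chernoff (sampling without replacement also works but needs the corresponding variant of the inequality); and the statement's normalization $\frac{1}{\sqrt{kn}}$ versus your $\sqrt{n/r}$ rescaling is a typo in the source that you resolve in the only sensible way.
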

We will not present the proof of Theorem \ref{thm:srht}, instead relying upon the above intuition. The 
proof of Theorem \ref{thm:srht} can be found in the references listed above. 

Using Theorem \ref{thm:srht}, it is possible to compute an oblivious $\ell_2$-subspace embedding of a matrix $\matA$
in $O(n d \log(d(\log n)/\eps))$ time (see Definition 2.2 and Theorem 2.1 of \cite{AL08} for details on obtaining
this time complexity, which is a slight improvement to the $O(nd \log n)$ time mentioned above), which up to the logarithmic factor, is optimal in the matrix dimensions
of $\matA \in \mathbb{R}^{n \times d}$. One could therefore ask if this is the end of the road for subspace embeddings. 
Note that applying Theorem \ref{thm:gaussianSE} to create an oblivious $\ell_2$-subspace embedding $\matS$, 
or also using its optimizations discussed in the paragraphs following Theorem \ref{thm:gaussianSE} due to
Kane and Nelson \cite{kn14}, would
require time at least $O(\nnz(\matA) d \eps^{-1})$, since the number of non-zero entries per column of $\matS$
would be $\Theta(\eps^{-1} \log(f)) = \Theta(\eps^{-1} d)$, 
since the $f$ of Theorem \ref{thm:normals} would need to be set
to equal $\exp(d)$ to apply a net argument. 

It turns out that many matrices $\matA \in \mathbb{R}^{n \times d}$ are sparse, 
that is, the number of non-zero entries, $\nnz(\matA)$, may be much smaller than $n \cdot d$. One could therefore
hope to obtain an oblivious $\ell_2$-subspace embedding $\matS$ in which $\matS \cdot \matA$ can be computed 
in $O(\nnz(\matA))$
time and which the number of rows of $\matS$ is small. 

At first glance this may seem unlikely, since as mentioned
above, it is known that any Johnson Lindenstrauss Transform requires 
$\Omega(\frac{\eps^{-1} \log(f/\delta)}{\log(1/\varepsilon)})$  non-zero entries per column. Moreover, the size of any $C$-net
for constant $C$ is at least $2^{\Omega(d)}$, and therefore applying the arguments above we see that the ``$f$'' in 
the lower bound needs to be $\Omega(d)$. Alternatively, we could try to use an SRHT-based approach, but 
it is unknown how to adapt such approaches to exploit the sparsity of the matrix $\matA$. 

Nevertheless, in work with Clarkson \cite{CW13} we show that it is indeed possible to achieve $O(\nnz(\matA))$ time to
compute $\matS \cdot \matA$ for an oblivious $(1 \pm \eps)$ $\ell_2$ 
subspace embedding $\matS$ with only an $r = \poly(d/\eps)$
number of rows. The key to bypassing the lower bound mentioned above is that $\matS$ will {\it not} be a Johnson
Lindenstrauss Transform; instead it will only work for a set of $f = 2^{\Omega(d)}$ specially chosen points rather
than an arbitrary set of $f$ points. It turns out if we choose $2^{\Omega(d)}$ points from a $d$-dimensional subspace,
then the above lower bound of $\Omega(\eps^{-1} \log(f/\delta)/\log(1/\varepsilon))$ non-zero entries per column
does not apply; that is, this set of $f$ points is far from realizing the worst-case for the lower bound. 

In fact $\matS$ is nothing other than the {\sf CountSkech} matrix from the data stream literature 
\cite{ccf04,tz12}.
Namely, $\matS$ is constructed via the following procedure: for each of the $n$ columns $\matS_{*i}$, 
we first independently choose a 
uniformly random row
$h(i) \in \{1, 2, \ldots, r\}$. Then, we choose a uniformly random element of $\{-1, 1\}$, denoted $\sigma(i)$. We set $\matS_{h(i), i} = \sigma(i)$ and set $\matS_{j,i} = 0$ for all $j \neq i$. 
Thus, $\matS$ has only a single non-zero entry per column. For example, suppose $\matS \in \mathbb{R}^{4 \times 5}$. Then an instance of $\matS$ could
be:
\[ \left( \begin{array}{ccccc}
0 & 0 & -1 & 1 & 0\\
1 & 0 & 0 & 0 & 0\\
0 & 0 & 0 & 0 & 1\\
0 & -1 & 0 & 0 & 0\end{array} \right)\]
We refer to such an $\matS$ as a {\em sparse embedding matrix}. Note that since $\matS$ has only a single non-zero
entry per column, one can compute $\matS \cdot \matA$ for a matrix $A$ in $O(\nnz(\matA))$ time. 
\begin{theorem}\label{thm:cw}(\cite{CW13})
For $\matS$ a sparse embedding matrix with a total of $r = O(d^2/\eps^2 \poly(\log(d/\eps)))$ rows, for any fixed $n \times d$
matrix $\matA$, with probability $.99$, $\matS$ is a $(1 \pm \eps)$ $\ell_2$-subspace embedding for $\matA$. Further,
$\matS \cdot \matA$ can be computed in $O(\nnz(\matA))$ time. 
\end{theorem}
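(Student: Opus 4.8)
The plan is to verify the operator-norm reformulation \eqref{eqn:operatorDiff}: fixing an orthonormal basis $\matU \in \mathbb{R}^{n\times d}$ for the column space of $\matA$ (which we may do, since the subspace embedding property is basis-independent), it suffices to show that with probability at least $.99$ we have $\|\matI_d - \matU^T\matS^T\matS\matU\|_2 \leq \eps$. Because $\matS$ is extremely sparse, there is no hope of controlling this operator norm by a net argument as cheaply as in Theorem \ref{thm:gaussianSE}; instead I would use the crude bound $\|\matM\|_2 \leq \|\matM\|_F$ and control the \emph{Frobenius} norm of the error matrix $\matG := \matU^T\matS^T\matS\matU - \matI_d$ in expectation, then invoke Markov's inequality. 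Concretely, it is enough to prove $\mathbb{E}[\|\matG\|_F^2] \leq \eps^2/100$; then $r = \Theta(d^2/\eps^2)$ rows suffice, comfortably within the claimed bound.

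The heart of the argument is a second-moment computation. Writing $h:[n]\to[r]$ and $\sigma:[n]\to\{-1,1\}$ for the (fully independent) hash and sign functions defining $\matS$, one has $(\matU^T\matS^T\matS\matU)_{\ell,\ell'} = \sum_{i,i':\,h(i)=h(i')}\sigma(i)\sigma(i')\matU_{i,\ell}\matU_{i',\ell'}$; the diagonal terms $i=i'$ contribute exactly $(\matU^T\matU)_{\ell,\ell'} = \delta_{\ell,\ell'}$, so $\matG_{\ell,\ell'} = \sum_{i\neq i':\,h(i)=h(i')}\sigma(i)\sigma(i')\matU_{i,\ell}\matU_{i',\ell'}$. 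I would compute $\mathbb{E}[\matG_{\ell,\ell'}^2]$ by first taking expectation over the signs: in the expansion over ordered pairs $(i,i')$ and $(k,k')$ the only surviving terms are those with $\{i,i'\}=\{k,k'\}$ as sets, and for each the expectation over $h$ of the corresponding collision indicator is $1/r$. Summing the two surviving families of terms over $\ell,\ell'$ yields
\[
\mathbb{E}[\|\matG\|_F^2] \;=\; \frac{1}{r}\Big(\sum_{i\neq i'}\|\matU_{i,*}\|_2^2\,\|\matU_{i',*}\|_2^2 \;+\; \sum_{i\neq i'}\langle \matU_{i,*},\matU_{i',*}\rangle^2\Big).
\]
The first sum is at most $\big(\sum_i\|\matU_{i,*}\|_2^2\big)^2 = \|\matU\|_F^4 = d^2$, and the second is at most $\|\matU\matU^T\|_F^2 = \trace\big((\matU\matU^T)^2\big) = \trace(\matU^T\matU) = d$, using $\matU^T\matU=\matI_d$. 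Hence $\mathbb{E}[\|\matG\|_F^2] \leq (d^2+d)/r \leq 2d^2/r$, and choosing $r \geq 200\,d^2/\eps^2$ gives $\mathbb{E}[\|\matG\|_F^2]\leq \eps^2/100$; Markov's inequality then yields $\|\matI_d-\matU^T\matS^T\matS\matU\|_2 \leq \|\matG\|_F \leq \eps$ with probability $.99$, which is the subspace embedding property.

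The running-time claim is immediate from the structure of $\matS$: it has exactly one nonzero entry per column, so each nonzero entry $\matA_{i,j}$ of $\matA$ contributes the single term $\sigma(i)\matA_{i,j}$ to $(\matS\matA)_{h(i),j}$, and the full product $\matS\cdot\matA$ is formed in $O(\nnz(\matA))$ time. I expect the only real work to be the bookkeeping in the second-moment calculation — in particular, correctly identifying which index patterns survive the Rademacher expectation and recognizing the cross-term sum as $\trace((\matU\matU^T)^2)=d$. The conceptual point worth flagging is that bounding the operator norm by the Frobenius norm is precisely what forces the quadratic dependence $r = \Theta(d^2/\eps^2)$; improving it genuinely requires a higher-moment or matrix-concentration argument rather than this one-line norm comparison.
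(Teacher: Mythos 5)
Your proof is correct, and in fact it establishes the stronger, polylog-free bound of Theorem~\ref{thm:mmnn} ($r = O(d^2/\eps^2)$), of which the stated Theorem~\ref{thm:cw} is a weaker consequence. The paper never proves Theorem~\ref{thm:cw} in full: it only sketches the original \cite{CW13} argument, namely that the heavy coordinates of $\matU\x$ (those with $(\matU\x)_i^2 \geq 1/d$) all lie, by Cauchy--Schwarz and $\sum_i\|\matU_{i*}\|_2^2 = d$, in a fixed set $T$ of size at most $d^2$ depending only on $\matU$, so one can condition on $T$ being perfectly hashed by $\matS$ and invoke the sparse-JL analysis of Dasgupta--Kumar--S\'arlos \cite{dks10} on the remaining light mass; this is exactly what produces the $\poly\log(d/\eps)$ factor. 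The full proof the survey does give is the route of Nguyen to Theorem~\ref{thm:mmnn}: first show $\matS$ satisfies the $(\eps,\delta,2)$-JL moment property (Theorem~\ref{thm:tz}, a second-moment computation for a single unit vector), feed that into the abstract approximate matrix product bound (Theorem~\ref{thm:jlamp}) with $\matA=\matB=\matU$ and accuracy $\eps/d$, and finally pass from Frobenius to operator norm exactly as you do. Your computation is that same estimate unrolled by hand: you evaluate $\mathbb{E}\|\matU^T\matS^T\matS\matU - \matI_d\|_F^2$ directly, correctly identify the two index-pairings $\{i,i'\}=\{k,k'\}$ that survive the Rademacher expectation, and use the key identity $\sum_{i,i'}\langle\matU_{i*},\matU_{i'*}\rangle^2 = \trace\bigl((\matU\matU^T)^2\bigr) = \trace(\matU^T\matU) = d$ together with $\bigl(\sum_i\|\matU_{i*}\|_2^2\bigr)^2 = d^2$. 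The paper's extra layer of abstraction buys reuse --- the JL-moment property and Theorem~\ref{thm:jlamp} reappear, e.g., in the $O(d^2/\eps)$ regression bound of Theorem~\ref{thm:1eps} and in Lemma~\ref{lem:sketching} --- whereas your direct calculation is shorter if only this theorem is needed. One small observation: you assumed $h$ and $\sigma$ fully independent, but your calculation only ever touches four indices at a time (two in the hash collision indicators), so 2-wise independent $h$ and 4-wise independent $\sigma$ suffice, exactly the hypotheses recorded in Theorem~\ref{thm:mmnn}.
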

Although the number of rows of $\matS$ is larger than the $d/\eps^2$ using Theorem \ref{thm:gaussianSE}, 
typically $n \gg d$, e.g., in overconstrained regression problems, and so one can reduce $\matS \cdot \matA$
to a matrix containing $O(d/\eps^2)$ rows by composing it with a matrix $\matS'$ sampled using Theorem \ref{thm:normals}
or Theorem \ref{thm:srht},
computing $\matS' \matS \matA$ in time $O(\nnz(A)) + \poly(d/\eps)$, and so provided $\poly(d/\eps) < \nnz(\matA)$, 
this gives an overall $O(\nnz(\matA))$ time algorithm for obtaining an oblivious $(1 \pm \eps)$ $\ell_2$-subspace embedding
with the optimal $O(d/\eps^2)$ number of rows. Note here we can assume that 
$\nnz(A) \geq n$, as otherwise we can delete the rows of all zeros in $\matA$. 

The key intuition behind Theorem \ref{thm:cw}, given in \cite{CW13} 
why a sparse embedding matrix provides a subspace embedding, is 
that $\matS$ need not preserve the norms
of an arbitrary subset of $2^{O(d)}$ vectors in $\mathbb{R}^n$, but rather it need only preserve those norms
of a subset of $2^{O(d)}$ vectors in $\mathbb{R}^n$ which {\it all sit in a $d$-dimensional subspace} of 
$\mathbb{R}^n$. Such a subset of $2^{O(d)}$ vectors is significantly different from an arbitrary such set; 
indeed, the property used in \cite{CW13} which invented this was the following. 
If $\matU \in \mathbb{R}^{n \times d}$ is a matrix
with orthonormal columns with the same column space as $\matA$, then as one ranges over all unit $\x \in \mathbb{R}^d$, 
$\matU \x$ ranges over all unit vectors in the column space of $\matA$. Note though that for any coordinate $i$, by the
Cauchy-Schwarz inequality, 
\begin{eqnarray}\label{eqn:csIneq}
(\matU \x)_i^2 \leq \|\matU_{i*}\|_2^2.
\end{eqnarray}
As $\sum_{i = 1}^n \|\matU_{i*}\|_2^2 = d$, since $\matU$ has orthonormal columns, there is a subset 
$T$ of $[n]$ of size at most $d^2$ for which if $(\matU\x)_i^2 \geq 1/d$, then $i \in T$. Notice that $T$ does not
depend on $\x$, but rather is just equal to those rows $\matU_{i*}$ for which $\|\matU_{i*}\|_2^2 \geq 1/d$. Hence,  
(\ref{eqn:csIneq}) implies that as one ranges over all unit
vectors $\matU \x$, the coordinates of $\matU\x$ that are larger than $1/d$, if any, must lie in this relatively
small set $T$. This
is in sharp contrast to an arbitrary set of $2^{O(d)}$ unit vectors, for which every coordinate could be larger than 
$1/d$ for at least one vector in the collection. It turns out that if $\matU \x$ has no heavy coordinates, then a
sparse subspace embedding does have the Johnson-Lindenstrauss property, 
as shown by Dasgupta, Kumar, and S\'arlos \cite{dks10}.
Hence, provided the set of coordinates of $T$ is perfectly hashed by $\matS$, one can handle the remaining coordinates
by the analysis of \cite{dks10}. 

While the above proof technique has proven useful in generating $\ell_p$ subspace embeddings for other $\ell_p$-norms
(as we will see in Section \ref{chap:robust} for the $\ell_1$-norm), 
and also applies more generally to sets of $2^{O(d)}$ vectors with a fixed small number of heavy coordinates, 
it turns out for $\ell_2$ one can simplify and sharpen the argument by using 
more direct linear-algebraic methods.
In particular, via a simpler second moment calculation, Theorem \ref{thm:cw} 
was improved in \cite{MM13,JH13} to the following.
\begin{theorem}\label{thm:mmnn}\cite{MM13,JH13}
For any $0 < \delta < 1$, and for $\matS$ a sparse embedding matrix with $r = O(d^2/(\delta \eps^2))$ rows, 
then with probability $1-\delta$, for any fixed $n \times d$
matrix $\matA$, $\matS$ is a $(1 \pm \eps)$ $\ell_2$-subspace embedding for $\matA$. 
The matrix product 
$\matS \cdot \matA$ can be computed in $O(\nnz(\matA))$ time. Further, all of this holds if the hash function
$h$ defining $\matS$ is only pairwise independent, and the sign function $\sigma$ defining $\matS$ is only
$4$-wise independent. 
\end{theorem}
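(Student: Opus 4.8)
The plan is to reduce the statement to a single second-moment estimate on the Frobenius norm of the error matrix, exactly along the route recorded in \eqref{eqn:operatorDiff}. First I would fix an $n \times d$ matrix $\matU$ with orthonormal columns and the same column space as $\matA$; by the reduction preceding \eqref{eqn:operatorDiff}, it suffices to show that with probability at least $1-\delta$ we have $\TNorm{\matI_d - \matU^T \matS^T \matS \matU} \le \eps$. Since the operator norm is dominated by the Frobenius norm, it is enough to prove $\Expect{\FNormS{\matI_d - \matU^T \matS^T \matS \matU}} \le \delta \eps^2$ and then apply Markov's inequality. Choosing $r = 2 d^2/(\delta \eps^2) = O(d^2/(\delta \eps^2))$ will make the bound work, and the $O(\nnz(\matA))$ running time for $\matS \cdot \matA$ is immediate since $\matS$ has exactly one nonzero entry per column.

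Next I would open up the error matrix entrywise. Writing $\matS_{\ell i} = \sigma(i)\,\mathbf{1}[h(i) = \ell]$, one gets $(\matS^T \matS)_{i i'} = \sigma(i) \sigma(i') \,\mathbf{1}[h(i) = h(i')]$, which equals $1$ whenever $i = i'$; hence the diagonal ($i = i'$) contribution to $\matU^T \matS^T \matS \matU$ is exactly $\matU^T \matU = \matI_d$, and so $(\matI_d - \matU^T \matS^T \matS \matU)_{jk} = - \sum_{i \ne i'} \matU_{ij} \matU_{i'k}\, \sigma(i) \sigma(i')\, \mathbf{1}[h(i) = h(i')]$. Squaring and taking expectations produces a sum over quadruples $(i, i', l, l')$ with $i \ne i'$ and $l \ne l'$ of terms $\matU_{ij} \matU_{i'k} \matU_{lj} \matU_{l'k}\, \Expect{\sigma(i) \sigma(i') \sigma(l) \sigma(l')\, \mathbf{1}[h(i)=h(i')]\,\mathbf{1}[h(l)=h(l')]}$, and this is where the limited-independence hypotheses enter.

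I would then use the two independence assumptions precisely where needed. Four-wise independence of $\sigma$ forces the sign expectation to vanish unless the multiset $\{i, i'\}$ equals $\{l, l'\}$; for the surviving terms the two indicator functions coincide, $\mathbf{1}[h(i)=h(i')]^2 = \mathbf{1}[h(i)=h(i')]$, and pairwise independence of $h$ gives $\Expect{\mathbf{1}[h(i)=h(i')]} = 1/r$. The surviving quadruples split into the case $(l,l') = (i,i')$, which after summing over $j,k$ contributes $\tfrac1r \sum_{i \ne i'} \|\matU_{i*}\|_2^2 \|\matU_{i'*}\|_2^2 \le \tfrac1r (\sum_i \|\matU_{i*}\|_2^2)^2 = d^2/r$, and the case $(l,l') = (i',i)$, which contributes $\tfrac1r \sum_{i \ne i'} \langle \matU_{i*}, \matU_{i'*}\rangle^2 \le \tfrac1r \sum_{i,i'} \langle \matU_{i*}, \matU_{i'*}\rangle^2 = \tfrac1r \FNormS{\matU \matU^T} = d/r$, using $\sum_i \|\matU_{i*}\|_2^2 = d$ and that $\matU \matU^T$ is a rank-$d$ orthogonal projection. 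Hence $\Expect{\FNormS{\matI_d - \matU^T \matS^T \matS \matU}} \le (d^2 + d)/r \le 2 d^2/r$, and the choice of $r$ together with Markov's inequality completes the proof.

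The main obstacle — and essentially the only real content — is the bookkeeping in the third step: one must enumerate exactly which quadruples survive under $4$-wise independence of $\sigma$, observe that the cross term $\sum_{i \ne i'} \langle \matU_{i*}, \matU_{i'*}\rangle^2$ is nonnegative and bounded by $\FNormS{\matU \matU^T} = d$, and check carefully that pairwise (rather than fully) independent $h$ already yields the collision probability $1/r$ in every term that matters. Everything else — the reduction to $\matU$, the passage from operator norm to Frobenius norm, and the final Markov step — is routine.
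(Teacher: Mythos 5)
Your proposal is correct, and it takes a genuinely different route from the paper's. The paper deliberately factors the argument through two reusable modules: Theorem~\ref{thm:tz} establishes the $(\eps,\delta,2)$-JL moment property of the sparse embedding (a single-vector fourth-moment bound, $\Expect{(\|\matS\x\|_2^2-1)^2}\le\eps^2\delta$), and Theorem~\ref{thm:jlamp} bootstraps that property up to the approximate-matrix-product guarantee via Minkowski's inequality and the polarization identity $\langle\matS\x,\matS\y\rangle=\tfrac12(\|\matS\x\|_2^2+\|\matS\y\|_2^2-\|\matS(\x-\y)\|_2^2)$; the subspace-embedding theorem is then obtained by applying matrix product with $\matA=\matB=\matU$ and accuracy $\eps/d$. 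You instead bypass both intermediate abstractions and directly expand $\Expect{\FNormS{\matI_d-\matU^T\matS^T\matS\matU}}$ entrywise, using $4$-wise independence of $\sigma$ to kill all quadruples except $(l,l')\in\{(i,i'),(i',i)\}$ and pairwise independence of $h$ to get the collision probability $1/r$, arriving at the clean bound $(d^2+d)/r$ before a single Markov step. Your route is shorter and makes the exact role of the two limited-independence hypotheses transparent, and it gives a slightly sharper constant; what the paper's route buys is generality: the JL moment property and Theorem~\ref{thm:jlamp} are reused elsewhere (e.g., in Theorem~\ref{thm:1eps} to get the $O(d^2/\eps)$ bound for regression and in Lemma~\ref{lem:sketching} for low-rank approximation, including for $\ell>2$), and Theorem~\ref{thm:tz} already states precisely the single-vector fact your expansion is implicitly proving. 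One small bookkeeping remark: if $\rank(\matA)=t<d$ your $\matU$ has only $t$ columns, so the bound is actually $(t^2+t)/r$, which only helps; this matches the paper's convention in the discussion around \eqref{eqn:operatorDiff}.
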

The proofs of Theorem \ref{thm:mmnn} given in \cite{MM13,JH13} work by bounding, for even integers $\ell \geq 2$,
\begin{eqnarray*}
\Pr[\|\matI_d-\matU^T\matS^T\matS\matU\|_2 \geq \eps]
& = & \Pr[\|\matI_d -\matU^T\matS^T\matS\matU\|_2^{\ell} \geq \eps^{\ell}]\\
& \leq & \eps^{-{\ell}}{\bf E}[\|\matI_d-\matU^T\matS^T\matS\matU\|_2^{\ell}]\\
& \leq & \eps^{-{\ell}} {\bf E}[\tr((\matI_d-\matU^T\matS^T\matS\matU)^{\ell})],
\end{eqnarray*}
which is a standard way of bounding operator norms of random matrices, see, e.g., \cite{by93}. 
In the bound above, Markov's inequality is used in the first inequality, while the second inequality
uses that the eigenvalues of $(\matI_d - \matU^T\matS^T\matS\matU)^{\ell}$ are non-negative for even integers
$\ell$, one of those eigenvalues is $\|\matI_d-\matU^T\matS^T\matS\matU\|_2^{\ell}$, and the trace is the sum
of the eigenvalues. 
This is also the technique
used in the proof of Theorem \ref{thm:nn} below (we do not present the proof of this), 
though there a larger value of $\ell$ is used while
for Theorem \ref{thm:mmnn} we will see that it suffices to consider $\ell = 2$. 

Rather than proving Theorem \ref{thm:mmnn} directly, we will give a 
alternative proof of it observed by Nguy$\tilde{\hat{\mbox{e}}}$n \cite{n13}
in the next section, showing how it is a consequence of a primitive called approximate matrix multiplication that
had been previously studied, and for which is useful for other applications we consider. 
Before doing so, though, we mention that it is possible to achieve fewer than $O(d^2/\eps^2)$ rows for constant
probability subspace embeddings 
if one is willing to increase the running time of applying the subspace embedding from 
$O(\nnz(\matA))$ to $O(\nnz(\matA)/\eps)$. 
This was shown by Nelson
and Nguy$\tilde{\hat{\mbox{e}}}$n \cite{JH13}. 
They show that for any $\gamma > 0$, one can achieve $d^{1+\gamma}/\eps^2 \poly(1/\gamma)$ dimensions
by increasing the number of non-zero entries in $\matS$ to $\poly(1/\gamma)/\eps$. They also show that 
by increasing the number of non-zero entries in $\matS$ to $\polylog(d)/\eps$, 
one can achieve $d/\eps^2 \polylog(d)$ dimensions. These results also generalize to failure probability $\delta$,
and are summarized by the following theorem. 

\begin{theorem}\label{thm:nn}\cite{JH13}
There are distributions on matrices $S$ with the following properties:

(1) For any fixed $\gamma > 0$ and any fixed $n \times d$ matrix $\matA$, 
$\matS$ is a $(1 \pm \eps)$ oblivious $\ell_2$-subspace embedding for $\matA$
with $r = d^{1+\gamma}/\eps^2$ rows and error probability $1/\poly(d)$. Further, $\matS \cdot \matA$ can be computed
in $O(\nnz(\matA) \poly(1/\gamma)/\eps )$ time. 

(2) There is a $(1 \pm \eps)$ oblivious $\ell_2$-subspace embedding for $\matA$ 
with $r = d \cdot \polylog(d/(\eps \delta))/\eps^2$ rows and error probability $\delta$. 
Further, $\matS \cdot \matA$ can be computed
in $O(\nnz(\matA) \polylog(d/(\eps \delta)))/\eps)$ time. 
\end{theorem}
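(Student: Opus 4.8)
By the remark following Definition~\ref{def:subspace} it suffices to produce a distribution over $r\times n$ matrices $\matS$, each column of which has exactly $s$ non-zero entries, so that for every fixed $\matU\in\mathbb{R}^{n\times d}$ with orthonormal columns, $\|\matI_d-\matU^T\matS^T\matS\matU\|_2\le\eps$ with the stated probability; the natural candidate is the ``OSNAP'' generalization of the sparse embedding matrix, in which the $r$ rows are split into $s$ blocks of $r/s$ rows and each column places one entry, a random $\pm 1/\sqrt{s}$, in a uniformly random row of each block. Set $\matE=\matU^T\matS^T\matS\matU-\matI_d$. Since $\sum_{k=1}^r\matS_{kj}^2=1$ for every column $j$, the column-diagonal part of $\matU^T\matS^T\matS\matU$ is exactly $\matU^T\matU=\matI_d$, so
\[
\matE_{ab}=\sum_{k=1}^r\ \sum_{j\ne j'}\matS_{kj}\matS_{kj'}\,\matU_{ja}\matU_{j'b}.
\]
Exactly as in the discussion preceding Theorem~\ref{thm:mmnn}, for any even integer $\ell$ we have $\Pr[\|\matE\|_2\ge\eps]\le\eps^{-\ell}\Exp[\tr(\matE^\ell)]$, and the plan is to take $\ell=\Theta(\log d)$ for part~(1) and $\ell=\Theta(\log(d/\delta))$ for part~(2) --- this is the ``larger $\ell$'' alluded to after Theorem~\ref{thm:mmnn}, whereas $\ell=2$ alone only recovers Theorems~\ref{thm:cw}--\ref{thm:mmnn} --- and to show $\Exp[\tr(\matE^\ell)]\le(\eps/e)^\ell$ for the stated choices of $r$ and $s$.

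Next I would expand $\tr(\matE^\ell)=\sum_{i_1,\dots,i_\ell\in[d]}\matE_{i_1i_2}\matE_{i_2i_3}\cdots\matE_{i_\ell i_1}$ and substitute the formula above for each factor, obtaining a sum of monomials, each indexed by a cyclic sequence of $\ell$ row indices in $[r]$ and $2\ell$ column indices in $[n]$ (with $j_t\ne j_t'$), and each equal to a product of $2\ell$ entries of $\matS$ times $2\ell$ entries of $\matU$. Taking expectation over the signs and the hashes: a monomial survives only if every distinct cell $(k,j)$ occurring in it appears an even number of times (otherwise its independent sign annihilates the expectation); each surviving $\matS$-entry contributes magnitude $1/\sqrt{s}$; and the residual expectation over the hash factorizes over blocks and is controlled by the (limited) independence of the hash functions, contributing a factor that decays geometrically in the number of ``forced'' row collisions.

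The heart of the argument is a combinatorial classification of the surviving monomials: encode each by a multigraph whose vertices are the equivalence classes among the row indices and among the column indices and whose edges record the cell incidences; the closed-walk structure of $\matE^\ell$ together with the even-multiplicity constraint sharply restricts the admissible graphs. For each isomorphism class, with a prescribed number of vertices, edges, and connected components, one bounds (i) the number of monomials realizing it (a counting problem over $[d]$, $[n]$, $[r]$), (ii) the aggregate power of $1/s$ from the signs, (iii) the power of $1/r$ from the hash collisions, and (iv) the aggregate $\matU$-mass, using only $\matU_{ja}^2\le\|\matU_{j*}\|_2^2$, $\sum_j\matU_{ja}\matU_{jb}=[a=b]$, and $\sum_j\|\matU_{j*}\|_2^2=d$. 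Summing these estimates over all (boundedly many) graph classes yields $\Exp[\tr(\matE^\ell)]\le\poly(d)\cdot(C\beta)^\ell$, where $\beta$ is a sum of a constant number of terms built from ratios among $d$, $r$, $s$, and $\ell$; choosing $r=\Theta(d^{1+\gamma}/\eps^2)$ together with $s$ of order $\poly(1/\gamma)/\eps$ drives $\beta$ below $\eps/(Ce)$, and since $\ell=\Theta(\log d)$ is large enough to absorb the $\poly(d)$ prefactor this gives part~(1); balancing the same terms with $r=\Theta(d\,\polylog(d/(\eps\delta))/\eps^2)$, $s$ of order $\polylog(d/(\eps\delta))/\eps$, and $\ell=\Theta(\log(d/\delta))$ gives failure probability $\delta$, which is part~(2). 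The running-time claims are then immediate: $\matS$ has $s$ non-zeros per column, so $\matS\cdot\matA$ visits each non-zero of $\matA$ exactly $s$ times, costing $O(s\cdot\nnz(\matA))$, i.e.\ $O(\nnz(\matA)\,\poly(1/\gamma)/\eps)$ in case~(1) and $O(\nnz(\matA)\,\polylog(d/(\eps\delta))/\eps)$ in case~(2).

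The main obstacle is precisely steps (iii)--(iv) of the classification: ruling out (or proving sufficiently rare) the ``dense'' graph shapes --- many edges supported on few vertices --- which would otherwise make $\Exp[\tr(\matE^\ell)]$ explode, and doing so while tracking the dependence on $r$ and $s$ tightly and simultaneously, so as to separate the two parameter regimes cleanly and to pin down how much independence of the hash and sign functions is actually needed once $\ell$ grows with $d$ and $\delta$. This delicate graph-enumeration bookkeeping is exactly why the proof is deferred to \cite{JH13}, and it is genuinely heavier than the $\ell=2$ moment computation that underlies Theorem~\ref{thm:mmnn}.
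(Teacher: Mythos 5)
Your plan matches both the paper's own hint (the remark after Theorem~\ref{thm:mmnn} that Theorem~\ref{thm:nn} uses the same $\Exp[\tr(\matE^\ell)]$ bound with a larger even $\ell$) and the actual strategy of \cite{JH13}: the OSNAP construction with $s$ nonzeros per column, the observation that column normalization kills the diagonal of $\matU^T\matS^T\matS\matU-\matI_d$ exactly, the trace-moment bound with $\ell=\Theta(\log d)$ resp.\ $\Theta(\log(d/\delta))$, and the multigraph classification of surviving monomials. Note that the paper itself does not give a proof of this theorem — it explicitly defers to \cite{JH13} — so there is no in-paper argument to compare against; your sketch correctly identifies the graph-enumeration bookkeeping as the step that would need to be carried out, and your running-time derivation from the column sparsity $s$ is complete and correct.
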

We note that for certain applications, such as least squares regression, one can
still achieve a $(1+\eps)$-approximation in $O(\nnz(\matA))$ time by applying Theorem \ref{thm:nn}
with the value of $\eps$ in Theorem \ref{thm:nn} set to a fixed constant since the application only
requires a $(1 \pm O(1))$-subspace embedding in order to achieve a $(1+\eps)$-approximation; see 
Theorem \ref{thm:1eps} for further details on this. It is also conjectured in \cite{JH13} that
$r$ can be as small as $O((d+\log(1/\delta))/\eps^2)$ with a time for computing $\matS \cdot \matA$
of $O(\nnz(\matA) \log(d/\delta)/\eps)$, though at the time of this writing the polylogarithmic
factors in Theorem \ref{thm:nn} are somewhat far from achieving this conjecture. 

There has been further work on this by Bourgain and Nelson \cite{bn13}, 
who showed among other things that 
if the columns of $\matU$ form an orthonormal basis for the column space of $\matA$, and
if the coherence $\max_{i \in [n]}\|\matU_{i*}\|_2^2 \leq 1/\polylog(d)$, then a sparse embedding
matrix provides a $(1 \pm \eps)$ $\ell_2$-subspace embedding for $\matA$. Here the column sparsity
remains $1$ given the incoherence assumption, just as in Theorem \ref{thm:mmnn}. The authors also
provide results for unions of subspaces. 

We note that one can also achieve $1-\delta$ success
probability 
bounds in which the sparsity and dimension depend on $O(\log 1/\delta)$ using 
these constructions \cite{CW13,MM13,JH13}. For our
applications it will usually not be necessary, as one can often instead repeat the entire procedure $O(\log 1/\delta)$
times and take the best solution found, such as in regression or low rank matrix approximation. 
We also state a different
way of finding an $\ell_2$-subspace embedding with high success probability in \S\ref{sec:whp}.

\subsection{Matrix multiplication}\label{sec:mm}
In this section we study the approximate matrix product problem. 

\begin{definition}\label{def:matrixProduct}
Let $0 < \eps < 1$ be a given approximation parameter. 
In the {\em Matrix Product} Problem matrices $\matA$ and $\matB$ are given, 
where $\matA$ and $\matB$ each have $n$ rows and a 
total of $c$ columns.  
The goal is to output a 
matrix $\matC$ so that 
\[
\normF{\matA^T \matB - \matC} \leq \varepsilon \normF{\matA} \normF{\matB}.
\]
\end{definition}
There are other versions of approximate matrix product, such as those that replace the Frobenius norms
above with operator norms \cite{Zou10,malik11,CEMMP14,CNW14}. 
Some of these works look at
bounds in terms of the so-called stable rank of $\matA$ and $\matB$, which
provides a continuous relaxation of the rank. 
For our application we will focus on the version of the
problem given in Definition \ref{def:matrixProduct}. 

The idea for solving this problem is to compute $\matA^T \matS^T$ and $\matS \matB$ 
for a sketching matrix $\matS$. We will choose $\matS$ so that
$${\bf E}[\matA^T \matS^T \matS \matB] = \matA^T \matB,$$
and we could hope that the variance of this estimator is small, namely,
we could hope that the standard deviation of the estimator is $O(\varepsilon \normF{\matA} \normF{\matB})$. 
To figure out which matrices $\matS$ are appropriate for this,
we use the following theorem of Kane and Nelson \cite{kn14}. This is a more general result of the 
analogous result for sign matrices of Clarkson and the author \cite{CW09}, and a slight strengthening 
of a result of Sarl\'{o}s \cite{S06}. 

Before giving the theorem, we need a definition.

\begin{definition}\cite{kn14}\label{def:moment}
A distribution $\mathcal{D}$ on matrices $\matS \in \mathbb{R}^{k \times d}$ has the 
$(\eps, \delta, \ell)$-JL moment property if for all $\x \in \mathbb{R}^d$ with $\|\x\|_2 = 1$,
$${\bf E}_{\matS \sim \mathcal{D}} |\|\matS \x\|_2^2-1|^{\ell} \leq \eps^{\ell} \cdot \delta.$$
\end{definition}

We prove the following theorem for a general value of $\ell$, since as mentioned it is used in some
subspace embedding proofs including the ones of Theorem \ref{thm:nn}. However, in this section we will only
need the case in which $\ell = 2$. 

\begin{theorem}\label{thm:jlamp}\cite{kn14}
For $\eps, \delta \in (0,1/2)$, let $\mathcal{D}$ be a distribution over matrices with $d$ columns
that satisfies the $(\eps, \delta, \ell)$-JL moment property for some $\ell \geq 2$. Then for $\matA, \matB$
matrices each with $d$ rows,
$$\Pr_{\matS \sim \mathcal{D}} \left [\FNorm{\matA^T \matS^T \matS \matB - \matA^T \matB} 
> 3 \eps \FNorm{\matA} \FNorm{\matB} \right ] \leq \delta.$$
\end{theorem}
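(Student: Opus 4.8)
The plan is to reduce the Frobenius-norm guarantee on $\matA^T\matS^T\matS\matB - \matA^T\matB$ to a collection of scalar estimates, one for each entry of the product, and then apply the $(\eps,\delta,\ell)$-JL moment property entrywise. Write $\matA = [\a_1,\ldots,\a_m]$ and $\matB = [\b_1,\ldots,\b_p]$ in terms of columns, so the $(i,j)$ entry of $\matA^T\matS^T\matS\matB - \matA^T\matB$ is $\langle \matS\a_i, \matS\b_j\rangle - \langle \a_i, \b_j\rangle$. Thus $\FNormS{\matA^T\matS^T\matS\matB - \matA^T\matB} = \sum_{i,j}(\langle \matS\a_i,\matS\b_j\rangle - \langle\a_i,\b_j\rangle)^2$. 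The strategy is to control $\Exp$ of this sum, which is $\sum_{i,j}\Exp(\langle\matS\a_i,\matS\b_j\rangle - \langle\a_i,\b_j\rangle)^2$, and then apply Markov's inequality to the random variable $\FNormS{\cdot}$ (recalling $\ell = 2$ is the relevant case, so the moment property gives a second-moment/variance bound).

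First I would establish the key per-pair bound: for any vectors $\u, \ve \in \mathbb{R}^d$,
$$\Exp\big(\langle\matS\u,\matS\ve\rangle - \langle\u,\ve\rangle\big)^2 \leq \eps^2\,\delta\,\TNormS{\u}\,\TNormS{\ve}.$$
To see this, use the polarization identity $\langle\matS\u,\matS\ve\rangle = \tfrac14(\TNormS{\matS(\u+\ve)} - \TNormS{\matS(\u-\ve)})$ and similarly $\langle\u,\ve\rangle = \tfrac14(\TNormS{\u+\ve} - \TNormS{\u-\ve})$, so the error is $\tfrac14$ times the difference of two quantities of the form $\TNormS{\matS\w} - \TNormS{\w}$. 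Applying $(a-b)^2 \leq 2a^2 + 2b^2$ and the JL moment property to each of $\w = \u+\ve$ and $\w = \u-\ve$ (after normalizing, since the moment property as stated is for unit vectors and $\Exp|\TNormS{\matS\w} - \TNormS{\w}|^2 \leq \eps^2\delta\,\TNormF{\w}$ by homogeneity), I get a bound of the form $C\eps^2\delta(\TNormF{\u+\ve} + \TNormF{\u-\ve})$. Expanding via the parallelogram law, $\TNormF{\u+\ve} + \TNormF{\u-\ve} \leq 2(\TNormS{\u} + \TNormS{\ve})^2$, and the cross terms can be absorbed; the honest constant bookkeeping here produces the factor that eventually gives "$3\eps$" in the statement (the $3$ is the slack that makes the constants work out cleanly).

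Then I would sum over all pairs $(i,j)$: $\Exp\FNormS{\matA^T\matS^T\matS\matB - \matA^T\matB} \leq C'\eps^2\delta \sum_{i,j}\TNormS{\a_i}\TNormS{\b_j} = C'\eps^2\delta\big(\sum_i\TNormS{\a_i}\big)\big(\sum_j\TNormS{\b_j}\big) = C'\eps^2\delta\,\FNormS{\matA}\,\FNormS{\matB}$, using that $\sum_i\TNormS{\a_i} = \FNormS{\matA}$. Finally, Markov's inequality applied to the nonnegative random variable $\FNormS{\matA^T\matS^T\matS\matB - \matA^T\matB}$ gives $\Pr[\FNorm{\matA^T\matS^T\matS\matB - \matA^T\matB} > 3\eps\FNorm{\matA}\FNorm{\matB}] = \Pr[\FNormS{\cdot} > 9\eps^2\FNormF{\matA}\FNormF{\matB}] \leq \frac{C'\eps^2\delta\,\FNormS{\matA}\FNormS{\matB}}{9\eps^2\FNormS{\matA}\FNormS{\matB}} = \frac{C'\delta}{9} \leq \delta$ once the constant is checked to satisfy $C' \leq 9$. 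I expect the only real obstacle to be the constant-tracking in the polarization step — making sure the $2$-stability of the moment bound under $\u\pm\ve$ combines with the parallelogram identity to land at exactly the claimed $3\eps$ threshold rather than something larger; everything else is routine linear algebra and a single application of Markov.
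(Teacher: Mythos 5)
Your overall strategy --- polarization to turn inner-product errors into norm errors, an entrywise moment bound, summation over column pairs, and Markov --- is exactly the paper's strategy, so the plan is sound. But there are two genuine gaps in the execution.

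First, the derivation you sketch does not actually yield the per-pair bound
$\mathbb{E}\bigl(\langle\matS\u,\matS\ve\rangle-\langle\u,\ve\rangle\bigr)^2\leq C\eps^2\delta\,\TNormS{\u}\TNormS{\ve}$
that your summation step requires. Applying the JL moment property to $\u\pm\ve$ and then the parallelogram law lands you at a bound proportional to $\eps^2\delta\,(\TNormS{\u}+\TNormS{\ve})^2$, and this is \emph{not} $O(\TNormS{\u}\TNormS{\ve})$ in general (take $\ve$ tiny); moreover $\sum_{i,j}(\TNormS{\a_i}+\TNormS{\b_j})^2$ does not factor as $\FNormS{\matA}\FNormS{\matB}$ --- it contains terms like $p\sum_i\TNormF{\a_i}$, where $p$ is the number of columns of $\matB$. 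The missing step is the one the paper calls ``by linearity'': the error $\langle\matS\u,\matS\ve\rangle-\langle\u,\ve\rangle$ is bilinear in $(\u,\ve)$, so you must first normalize $\u$ and $\ve$ to unit vectors, prove the bound there (where $(\TNormS{\u}+\TNormS{\ve})^2=4$), and then rescale to get the product form. Once you do this your constants work out with room to spare: the four-term polarization gives $C\leq 2$ for unit vectors, and Markov at threshold $9\eps^2\FNormS{\matA}\FNormS{\matB}$ only needs $C\leq 9$. (The paper uses the three-term identity $\langle\matS\x,\matS\y\rangle=\tfrac12(\TNormS{\matS\x}+\TNormS{\matS\y}-\TNormS{\matS(\x-\y)})$ for unit vectors and gets the per-pair constant $3$, which is where the ``$3\eps$'' in the statement comes from.)

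Second, the theorem is stated for arbitrary $\ell\geq 2$, and your argument only covers $\ell=2$. The paper runs the same scheme for general $\ell$ by working with the $L_\ell$ and $L_{\ell/2}$ norms of random scalars (Minkowski's inequality replaces linearity of expectation in the summation step) and applying Markov to the $\ell$-th moment of $\FNorm{\matA^T\matS^T\matS\matB-\matA^T\matB}$; that is precisely what produces the clean bound $\leq\delta$. A second-moment argument cannot recover this for $\ell>2$: the $(\eps,\delta,\ell)$-JL moment property implies, via the power-mean inequality, only $\mathbb{E}\,|\TNormS{\matS\x}-1|^2\leq\eps^2\delta^{2/\ell}$, so your route would terminate with failure probability $\delta^{2/\ell}>\delta$. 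Since the theorem is invoked elsewhere with $\ell>2$ (e.g.\ in the arguments behind Theorem \ref{thm:nn}), this restriction is not cosmetic.
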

\begin{proof}
We proceed as in the proof of \cite{kn14}. 
For $\x, \y \in \mathbb{R}^d$, we have
$$\frac{\langle \matS\x, \matS\y \rangle}{\|\x\|_2 \|\y\|_2} = \frac{\|\matS\x\|_2^2 + \|\matS\y\|_2^2 - \|\matS(\x-\y)\|_2^2}{2}.$$
For a random scalar $X$, let $\|X\|_p = ({\bf E}|X|^p)^{1/p}$. 
We will sometimes consider $X = \|\matT\|_F$ for a random matrix $\matT$, 
in which case $X$ is a random scalar
and the somewhat cumbersome notation $\|\|\matT\|_F\|_p$ indicates $({\bf E}[\|\matT\|_F^p])^{1/p}$. 

Minkowski's inequality asserts that the triangle
inequality holds for this definition, namely, that $\|\matX+\matY\|_p \leq \|\matX\|_p + \|\matY\|_p$, and as the
other properties of a norm are easy to verify, it follows that $\|.\|_p$ is a norm. Using that it is a norm, we have
for unit vectors $\x$, $\y$, that $\|\langle \matS \x, \matS \y \rangle - \langle \x, \y \rangle\|_{\ell}$ is equal to
\begin{eqnarray*}
& = & \frac{1}{2} \cdot \|(\|\matS \x\|_2^2 - 1 ) + (\|\matS \y\|_2^2-1)
 - (\|\matS(\x-\y)\|_2^2 - \|\x-\y\|_2^2)\|_{\ell}\\
& \leq & \frac{1}{2} \cdot \left (\|\|\matS \x\|_2^2-1\|_{\ell} + \|\|\matS\y\|_2^2-1\|_{\ell}
+ \|\|\matS(\x-\y)\|_2^2 - \|\x-\y\|_2^2\|_{\ell} \right )\\
& \leq & \frac{1}{2} \cdot \left (\eps \cdot \delta^{1/\ell} + \eps \cdot \delta^{1/\ell} + \|\x-\y\|_2^2 \cdot \eps \cdot \delta^{1/\ell} \right )\\
& \leq & 3\eps \cdot \delta^{1/\ell}.
\end{eqnarray*}
By linearity, this implies for arbitrary vectors $\x$ and $\y$ that
$\frac{\|\langle \matS \x, \matS \y \rangle - \langle \x, \y \rangle \|_{\ell}}{\|\x|_2 \|\y\|_2} \leq 3 \eps \cdot \delta^{1/\ell}$. 

Suppose $\matA$ has $n$ columns and $\matB$ has $m$ columns. 
Let the columns of $\matA$ be $\matA_1, \ldots, \matA_n$ and the columns of $\matB$ be $\matB_1, \ldots, \matB_n$. 
Define the random variable 
$$X_{i,j} = \frac{1}{\|A_i\|_2 \|B_j\|_2} \cdot \left (\langle \matS \matA_i, \matS \matB_j \rangle
- \langle \matA_i, \matB_j \rangle \right ).$$
Then, $\FNormS{\matA^T \matS^T \matS \matB - \matA^T \matB} = \sum_{i=1}^n \sum_{j=1}^m \|\matA_i\|_2^2 \cdot \|\matB_j\|_2^2 \cdot X_{i,j}^2$. Again using Minkowski's inequality and that $\ell/2 \geq 1$,
\begin{eqnarray*}
\|\FNormS{\matA^T\matS^T \matS \matB - \matA^T\matB}\|_{\ell/2} & = & 
\|\sum_{i=1}^n \sum_{j=1}^m \|\matA_i\|_2^2 \cdot \|\matB_j\|_2^2 \cdot X_{i,j}^2\|_{\ell/2}\\
& \leq & \sum_{i=1}^n \sum_{j=1}^m \|\matA_i\|_2^2 \cdot \|\matB_j\|_2^2 \cdot \|X_{i,j}^2\|_{\ell/2}\\
& = & \sum_{i=1}^n \sum_{j=1}^m \|\matA_i\|_2^2 \cdot \|\matB_j\|_2^2 \cdot \|X_{i,j}\|_{\ell}^2\\
& \leq & (3\eps \delta^{1/\ell})^2 \cdot \left (\sum_{i=1}^n \sum_{j=1}^m \|\matA_i\|_2^2 \cdot \|\matB_j\|_2^2 \right )\\
& = & (3\eps \delta^{1/\ell})^2 \cdot \FNormS{\matA} \FNormS{\matB}.
\end{eqnarray*}
Using that ${\bf E} \FNorm{\matA^T \matS^T \matS \matB - \matA^T \matB}^{\ell} = \|\FNormS{\matA^T\matS^T \matS \matB - \matA^T \matB\|} \|_{\ell/2}^{\ell/2}$, together with Markov's inequality, we have
\begin{eqnarray*}
\Pr \left [\FNorm{\matA^T\matS^T\matS\matB - \matA^T \matB} > 3\eps \FNorm{\matA} \FNorm{\matB} \right ]
& \leq & \left (\frac{1}{3\eps \FNorm{\matA} \FNorm{\matB}} \right )^{\ell}\\
& \cdot & \ {\bf E} \FNorm{\matA^T\matS^T\matS\matB - \matA^T \matB}^{\ell}\\
& \leq & \delta.
\end{eqnarray*}
\end{proof}
We now show that sparse embeddings matrices satisfy the $(\eps, \delta, 2)$-JL-moment property. This was originally shown
by Thorup and Zhang \cite{tz12}.
\begin{theorem}\label{thm:tz}
Let $\matS$ be a sparse embedding matrix, as defined in \S\ref{sec:se}, with at least $2/(\eps^2 \delta)$ rows. Then 
$\matS$ satisfies the $(\eps, \delta, 2)$-JL moment property. Further, this holds if the hash function $h$ defining the 
sparse embedding matrix is only $2$-wise independent and the sign function $\sigma$ is $4$-wise independent. 
\end{theorem}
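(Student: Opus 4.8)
The plan is to expand $\|\matS\x\|_2^2$ explicitly in terms of the hash function $h$ and the sign function $\sigma$ that define the sparse embedding matrix, and then carry out an elementary second moment computation. Fix a unit vector $\x\in\mathbb{R}^n$. Since column $i$ of $\matS$ has its unique nonzero entry equal to $\sigma(i)$ in row $h(i)$, we have $\|\matS\x\|_2^2 = \sum_{r=1}^{k}\bigl(\sum_{i:h(i)=r}\sigma(i)x_i\bigr)^2 = \sum_{i}x_i^2 + \sum_{i\neq j}\sigma(i)\sigma(j)x_ix_j\,\mathbf{1}[h(i)=h(j)]$. Using $\|\x\|_2=1$ the first sum is $1$, so $\|\matS\x\|_2^2 - 1 = Z$, where $Z := \sum_{i\neq j}\sigma(i)\sigma(j)x_ix_j\,\mathbf{1}[h(i)=h(j)]$ (the sum over ordered pairs). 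It then suffices to prove ${\bf E}[Z^2] \le \eps^2\delta$.

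First I would record that ${\bf E}[Z]=0$: by pairwise independence and zero mean of $\sigma$, ${\bf E}[\sigma(i)\sigma(j)]=0$ for $i\neq j$, so the estimator $\|\matS\x\|_2^2$ is unbiased, as expected. Next, expand $Z^2 = \sum_{i\neq j}\sum_{i'\neq j'}\sigma(i)\sigma(j)\sigma(i')\sigma(j')\,x_ix_jx_{i'}x_{j'}\,\mathbf{1}[h(i)=h(j)]\,\mathbf{1}[h(i')=h(j')]$. The key step is that, by $4$-wise independence of $\sigma$ and $\sigma(\cdot)\in\{-1,1\}$, the factor ${\bf E}[\sigma(i)\sigma(j)\sigma(i')\sigma(j')]$ vanishes unless every index in the multiset $\{i,j,i',j'\}$ occurs an even number of times; given $i\neq j$ and $i'\neq j'$ this forces $\{i',j'\}=\{i,j\}$ (two matchings: $(i',j')=(i,j)$ or $(j,i)$), and in either case the $\sigma$-expectation is $1$ and the squared indicator collapses. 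Hence ${\bf E}_\sigma[Z^2] = 2\sum_{i\neq j}x_i^2x_j^2\,\mathbf{1}[h(i)=h(j)]$.

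Finally I would take the expectation over $h$, using that $h$ is pairwise independent with each $h(i)$ uniform on $\{1,\dots,k\}$, so $\Pr[h(i)=h(j)]=1/k$ for $i\neq j$. Therefore ${\bf E}[Z^2] = \tfrac{2}{k}\sum_{i\neq j}x_i^2x_j^2 = \tfrac{2}{k}\bigl(1-\|\x\|_4^4\bigr) \le \tfrac{2}{k}\bigl(\sum_i x_i^2\bigr)^2 = \tfrac{2}{k}$. Since $\matS$ has $k\ge 2/(\eps^2\delta)$ rows, this is at most $\eps^2\delta$, which is precisely the $(\eps,\delta,2)$-JL moment property; note that only $2$-wise independence of $h$ and $4$-wise independence of $\sigma$ were used, giving the ``further'' clause for free.

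I do not expect a genuine obstacle: the only place that needs care is the combinatorial bookkeeping of which quadruples $(i,j,i',j')$ survive the $\sigma$-expectation under limited independence, and the bookkeeping point that the ``diagonal'' terms $i=j$ (and $i'=j'$) have already been subsumed into the deterministic identity $\sum_i x_i^2 = 1$ and therefore must not be reintroduced when squaring $Z$.
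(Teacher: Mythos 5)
Your proof is correct and follows essentially the same route as the paper's: an elementary second-moment calculation that exploits $4$-wise independence of $\sigma$ to kill all quadruples of indices except the two matchings $\{i',j'\}=\{i,j\}$, and pairwise independence of $h$ to get the collision probability $1/k$. The only difference is organizational — you subtract the deterministic diagonal $\sum_i x_i^2=1$ before squaring, so you compute ${\bf E}[Z^2]$ directly rather than computing ${\bf E}\|\matS\x\|_2^4$ and ${\bf E}\|\matS\x\|_2^2$ separately and combining, which makes two of the paper's four surviving index-coincidence cases disappear but changes nothing essential.
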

\begin{proof}
As per Definition \ref{def:moment}, we need to show for any unit vector $\x \in \mathbb{R}^d$,
\begin{eqnarray}\label{eqn:toshow}
{\bf E}_{\matS} [(\|\matS\x\|_2^2-1)^2] = {\bf E}_{\matS}[\|\matS\x\|_2^4] - 2{\bf E}_{\matS}[\|\matS\x\|_2^2] + 1 \leq \eps^2 \delta.
\end{eqnarray}
For a sparse embedding matrix $\matS$, we let $h:[d] \rightarrow [r]$ be a random $2$-wise independent 
hash function indicating for each column
$j \in [d]$, which row in $\matS$ contains the non-zero entry. Further, we let $\sigma:[d] \rightarrow \{-1,1\}$ be a $4$-wise
independent function, independent of $h$, indicating whether
the non-zero entry in the $j$-th column is $1$ or $-1$. For an event $\mathcal{E}$, let $\delta(\mathcal{E})$ be an indicator variable
which is $1$ if $\mathcal{E}$ occurs, and is $0$ otherwise. Then,
\begin{eqnarray*}
{\bf E}[\|\matS\x\|_2^2] & = & \sum_{i \in [r]} {\bf E} \left [\left (\sum_{j \in [d]} \delta(h(j) = i) \x_j \sigma(j) \right )^2 \right ]\\
& = & \sum_{i \in [r]} \sum_{j, j' \in [d]} \x_j \x_{j'} {\bf E} \left [\delta(h(j) = i)\delta(h(j') = i) \right ] {\bf E} 
\left [\sigma(j) \sigma(j') \right ]\\
& = & \sum_{i \in [r]} \sum_{j \in [d]} \frac{\x_j^2}{r}\\
& = & \|\x\|_2^2\\
& = & 1,
\end{eqnarray*}
where the second equality uses that $h$ and $\sigma$ are independent, while the third equality uses that ${\bf E} [\sigma(j) \sigma(j')] = 1$ if $j = j'$, and otherwise
is equal to $0$. 

We also have,
\begin{eqnarray*}
{\bf E}[\|\matS\x\|_2^4] & = & {\bf E} \left [ \left (\sum_{i \in [r]} \left (\sum_{j \in [d]} \delta(h(j) = i) \x_j \sigma(j) \right )^2 \right )^2 \right ]\\
& = & \sum_{i,i' \in [r]} \sum_{j_1,j_2, j'_1, j'_2  \in [d]} \x_{j_1}\x_{j_2}\x_{j_1'}\x_{j_2'}\\
&& \cdot \ {\bf E} \left [\delta(h(j_1) = i)\delta(h(j_2)=i)\delta(h(j'_1) = i')\delta(h(j'_2) = i') \right ]\\
&& \cdot \ {\bf E} \left [\sigma(j_1) \sigma(j_2) \sigma(j_1')\sigma(j_2') \right ]\\
& = & \sum_{i \in [r]} \sum_{j \in [d]} \frac{\x_j^4}{r} + \sum_{i, i' \in [r]} \sum_{j_1 \neq j_1' \in [d]} \frac{\x_{j_1}^2 \x_{j_1'}^2}{r^2}\\
&& + 2\sum_{i \in [r]} \sum_{j_1 \neq j_2 \in [d]} \frac{\x_{j_1}^2 \x_{j'_1}^2}{r^2}\\
& = & \sum_{j, j' \in [d]} \x_j^2 \x_{j'}^2 + \frac{2}{r}\sum_{j_1 \neq j_2 \in [d]} \x_{j_1}^2 \x_{j'_1}^2\\
& \leq & \|\x\|_2^4 + \frac{2}{r} \|\x\|_2^4\\
& \leq & 1 + \frac{2}{r},
\end{eqnarray*}
where the second equality uses the independence of $h$ and $\sigma$, and the third equality uses that since $\sigma$ is $4$-wise independent, in order for 
${\bf E} \left [\sigma(j_1) \sigma(j_2) \sigma(j_1')\sigma(j_2') \right ]$ not to vanish, it must be that either 
\begin{enumerate}
\item $j_1 = j_2 = j_1' = j_2'$ or
\item $j_1 = j_2$ and $j_1' = j_2'$ but $j_1 \neq j_1'$ or
\item $j_1 = j_1'$ and $j_2 = j_2'$ but $j_1 \neq j_2$ or
\item $j_1 = j_2'$ and $j_1' = j_2$ but $j_1 \neq j_2$.
\end{enumerate}
Note that in the last two cases, for ${\bf E} \left [\delta(h(j_1) = i)\delta(h(j_2)=i)\delta(h(j'_1) = i')\delta(h(j'_2) = i') \right ]$ not to vanish,
we must have $i = i'$. The fourth equality and first inequality are based on regrouping the summations, and the sixth inequality uses that $\|\x\|_2 = 1$. 

Plugging our bounds on $\|\matS \x\|_2^4$ and $\|\matS \x\|_2^2$ into (\ref{eqn:toshow}), the theorem follows. 
\end{proof}
We now present a proof that sparse embedding matrices provide subspace embeddings, as mentioned in \S\ref{sec:se},
as given by Nguy$\tilde{\hat{\mbox{e}}}$n \cite{n13}.  
\begin{proofof}{of Theorem \ref{thm:mmnn}}
By Theorem \ref{thm:tz}, we have that $\matS$ satisfies the $(\eps, \delta, 2)$-JL moment property. We can thus apply Theorem \ref{thm:jlamp}. 

To prove Theorem \ref{thm:mmnn}, 
recall that 
if $\matU$ is an orthonormal basis for the column space of $\matA$ and $\|\matS\y\|_2 = (1 \pm \eps)\|\y\|_2$ for all $\y$ in
the column space of $\matU$, then $\|\matS\y\|_2 = (1 \pm \eps)\|\y\|_2$ for all $\y$ in the column space of $\matA$, since the column
spaces of $\matA$ and $\matU$ are the same. 

We apply Theorem \ref{thm:jlamp} to $\matS$ with the $\matA$ and $\matB$ of that theorem equal to $\matU$, 
and the $\eps$ of that theorem equal to $\eps/d$. Since $\matU^T \matU = \matI_d$ and $\FNormS{\matU} = d$, we have,
$$\Pr_{\matS \sim \mathcal{D}} \left [\FNorm{\matU^T \matS^T \matS \matU - \matI_d} > 3 \eps \right ] \leq \delta,$$
which implies that  
\begin{eqnarray*}
\Pr_{\matS \sim \mathcal{D}} \left [\|\matU^T \matS^T \matS\matU -\matI_d\|_2 > 3 \eps \right]
\leq \Pr_{\matS \sim \mathcal{D}} \left [\FNorm{\matU^T \matS^T \matS\matU -\matI_d} > 3 \eps \right]
\leq \delta.
\end{eqnarray*}
Recall that 
the statement that $\x^T(\matU^T \matS^T \matS\matU -\matI_d)\x \leq 3 \eps$ for all unit $\x \in \mathbb{R}^d$ is
equivalent to the statement that $\|\matS\matU\x\|_2^2 = 1 \pm 3\eps$ for all unit $\x \in \mathbb{R}^d$, that is,
$\matS$ is a $(1 \pm 3\eps)$ $\ell_2$-subspace embedding. The proof follows by rescaling $\eps$ by $3$.  
\end{proofof}

\subsection{High probability}\label{sec:whp}
The dependence of Theorem \ref{thm:mmnn} on the error probability $\delta$ is linear, which is not completely
desirable. One can use Theorem \ref{thm:nn} to achieve a logarithmic dependence, but then the running time
would be at least $\nnz(\matA)\polylog(d/(\eps \delta))/\eps$ and the number of non-zeros per column of $\matS$ would
be at least $\polylog(d/(\eps\delta))/\eps$. Here we describe an alternative way based on \cite{BKLW14} 
which takes $O(\nnz(\matA)\log(1/\delta))$
time, and preserves the number of non-zero entries per column of $\matS$ to be $1$. It is, however, a
non-oblivious embedding. 

In \cite{BKLW14}, an approach (Algorithm~\ref{alg:success} below) to boost the success probability 
by computing $t = O(\log (1/\delta))$
independent sparse oblivious subspace embeddings $\matS_j\matA$ is proposed, $j = 1, 2, \ldots, t$, 
each with only constant success probability,
and then running a cross validation procedure to find one which succeeds with probability $1-1/\delta$.
More precisely, we compute the SVD of all embedded matrices $\matS_j\matA = \matU_j \matD_j \matV_j^\top$,
and find a $j \in [t]$ such that for at least half of the indices $j' \neq j$,
all singular values of $\matD_j \matV_j^\top \matV_{j'} \matD_{j'}^\top$ are in $[1 \pm O(\varepsilon)]$.

The reason why such an embedding $\matS_j \matA$ succeeds with high probability is as follows.
Any two successful embeddings $\matS_j \matA$ and $\matS_{j'} \matA$, by definition, satisfy that
$\| \matS_j \matA \x \|^2_2  = (1 \pm O(\varepsilon)) \| \matS_{j'} \matA \x \|_2^2$ for all $x$,
which we show is equivalent to passing the test on the singular values.
Since with probability at least $1-\delta$, a $9/10$ fraction of the embeddings are successful,
it follows that the one we choose is successful with probability $1-\delta$. One can thus show
the following theorem.

\begin{theorem}\label{thm:success}(\cite{BKLW14})
Algorithm~\ref{alg:success} outputs a subspace embedding with probability at least $1-\delta$.
In expectation step 3 is only run a constant number of times.
\end{theorem}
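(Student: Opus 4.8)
The plan is to formalize the two claims of Theorem~\ref{thm:success} separately: (i) correctness — the selected embedding is a $(1\pm O(\eps))$ $\ell_2$-subspace embedding for $\matA$ with probability at least $1-\delta$; and (ii) efficiency — the expected number of iterations of the cross-validation step is $O(1)$. The key linear-algebraic fact underlying everything is a ``transitivity'' lemma: if $\matS_j\matA$ and $\matS_{j'}\matA$ each have the property that their column spaces are ``aligned'' with that of $\matA$ up to $(1\pm\eps)$ distortion, then composing the two distortions gives a $(1\pm O(\eps))$ relation between $\|\matS_j\matA\x\|_2$ and $\|\matS_{j'}\matA\x\|_2$ for all $\x$, and conversely. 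I would first make precise the equivalence between the singular-value test on $\matD_j\matV_j^\top\matV_{j'}\matD_{j'}^\top$ and the statement $\|\matS_j\matA\x\|_2^2 = (1\pm O(\eps))\|\matS_{j'}\matA\x\|_2^2$ for all $\x$. Writing $\matS_j\matA = \matU_j\matD_j\matV_j^\top$, a vector of the form $\matS_j\matA\x$ has squared norm $\|\matD_j\matV_j^\top\x\|_2^2$ (since $\matU_j$ has orthonormal columns); setting $\y = \matV_j^\top\x$ ranges over $\R^{\rank}$, and one checks that $\|\matD_{j'}\matV_{j'}^\top\x\|_2^2 / \|\matD_j\matV_j^\top\x\|_2^2$ lies in $[1\pm O(\eps)]$ for all $\x$ exactly when all singular values of $\matD_j\matV_j^\top(\matD_{j'}\matV_{j'}^\top)^\dagger$—equivalently of $\matD_j\matV_j^\top\matV_{j'}\matD_{j'}^\top$ after the appropriate normalization—lie in $[1\pm O(\eps)]$. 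This is a routine change of variables and I would not belabor it.

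Next I would set up the probabilistic core. Call $\matS_j$ \emph{good} if it is a $(1\pm\eps)$ $\ell_2$-subspace embedding for $\matA$. By Theorem~\ref{thm:mmnn} (used with a constant failure probability, say $1/10$, absorbed into the number of rows), each $\matS_j$ is independently good with probability at least $9/10$. By a Chernoff bound, with probability at least $1-\delta$ — which dictates the choice $t = O(\log(1/\delta))$ — at least a $4/5$ fraction (certainly more than half) of the $t$ embeddings are good. Condition on this event $\mathcal{G}$. Now observe: if $\matS_j$ and $\matS_{j'}$ are both good, then for all $\x$, $\|\matS_j\matA\x\|_2^2 = (1\pm\eps)\|\matA\x\|_2^2 = (1\pm O(\eps))\|\matS_{j'}\matA\x\|_2^2$, so the pair $(j,j')$ passes the singular-value test. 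Hence any good $j$ agrees (in the test sense) with every other good index, i.e.\ with at least a $4/5 - 1/t \ge 1/2$ fraction of indices $j'\neq j$, so a good $j$ is always accepted by the algorithm. Conversely, I claim any $j$ that the algorithm accepts must itself be good: if $j$ passes the test against at least half the other indices, then since at least a $4/5$ fraction are good, by inclusion-exclusion $j$ passes the test against \emph{some} good index $j'$; then $\|\matS_j\matA\x\|_2^2 = (1\pm O(\eps))\|\matS_{j'}\matA\x\|_2^2 = (1\pm O(\eps))\|\matA\x\|_2^2$ for all $\x$, so $\matS_j$ is a $(1\pm O(\eps))$ subspace embedding. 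Thus on $\mathcal{G}$ the algorithm outputs a valid embedding, which proves (i) after rescaling $\eps$.

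For (ii), the expected-iterations claim, I would note that the cross-validation step, when it scans candidate indices $j$ in order (or in random order), succeeds as soon as it hits a good index; conditioned on $\mathcal{G}$ at least $4/5$ of the indices are good, so the expected number of candidates examined before acceptance is at most $5/4 = O(1)$, and even unconditionally the contribution of the failure event is negligible given the $1-\delta$ bound. (If the algorithm as written recomputes all $\binom{t}{2}$ pairwise tests, then ``step 3 is run a constant number of times'' should be read as: the number of sketches that must be re-drawn/re-embedded before a certifiably good one is found is $O(1)$ in expectation; the argument is the same.) The main obstacle — really the only non-bookkeeping point — is nailing down the equivalence in the first paragraph cleanly, in particular handling the rank-deficient case and the asymmetry of the roles of $j$ and $j'$ in $\matD_j\matV_j^\top\matV_{j'}\matD_{j'}^\top$ so that ``all singular values in $[1\pm O(\eps)]$'' is genuinely symmetric and genuinely equivalent to the two-sided norm comparison; once that is in hand, the rest is a Chernoff bound plus the transitivity observation.
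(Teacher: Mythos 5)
Your proposal is correct and follows essentially the same route the paper sketches: the pairwise singular-value test is equivalent to the mutual norm comparison $\|\matS_j\matA\x\|_2^2=(1\pm O(\eps))\|\matS_{j'}\matA\x\|_2^2$, a Chernoff bound over $t=O(\log(1/\delta))$ independent constant-probability embeddings guarantees a majority of good indices, mutual agreement of good pairs plus the ``accepted implies agrees with some good index'' pigeonhole step gives correctness, and the constant density of good indices gives the $O(1)$ expected iteration count. You also correctly flagged the one notational wrinkle — the test matrix should be normalized as $\matD_j\matV_j^\top(\matD_{j'}\matV_{j'}^\top)^{\dagger}=\matD_j\matV_j^\top\matV_{j'}\matD_{j'}^{-1}$ rather than the literal $\matD_j\matV_j^\top\matV_{j'}\matD_{j'}^\top$ — so nothing further is needed.
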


\begin{algorithm}[t]
\caption{Boosting success probability of embedding}
\label{alg:success}
Input: $\matA \in \R^{n\times d}$, parameters $\varepsilon, \delta$
\begin{enumerate}
\item Construct $t=O(\log \frac{1}{\delta})$ independent constant success probability sparse subspace embeddings $\matS_j \matA$
with accuracy $\varepsilon/6$.
\item Compute SVD $\matS_j\matA = \matU_j \matD_j \matV_j^\top$ for $j\in [t]$.
\item For $j \in [t]$
\begin{enumerate}
\item Check if for at least half $j' \neq j$, $$\sigma_i(\matD_j \matV_j^\top \matV_{j'} \matD_{j'}^\top) \in [1\pm \varepsilon/2],\forall i.$$
\item If so, output $\matS_j \matA$.
\end{enumerate}
\end{enumerate}
\end{algorithm}

\subsection{Leverage scores}\label{sec:leverage}
We now introduce the concept of leverage scores, which provide alternative subspace embeddings based on sampling
a small number of rows of $\matA$. We will see that they play a crucial role in various applications in this book,
e.g., CUR matrix decompositions and spectral sparsification. Here we use the parameter $k$ instead of $d$ for the dimension
of the subspace, as this will match our use in applications. For an excellent survey on leverage scores, we refer
the reader to \cite{m11}. 

\begin{definition}(Leverage Score Sampling)\label{def:lss}
Let $\matZ \in \mathbb{R}^{n \times k}$ have orthonormal columns, and let $p_i = \ell_i^2/k$, 
where $\ell_i^2 = \|\e_i^T \matZ\|_2^2$
is the $i$-th leverage score of $\matZ$. Note that $(p_1, \ldots, p_n)$ is a distribution. 
Let $\beta > 0$ be a parameter, and suppose we have any distribution $q = (q_1, \ldots, q_n)$ for which for
all $i \in [n]$, $q_i \geq \beta p_i$.

Let $s$ be a parameter. Construct an $n \times s$ 
sampling matrix $\mat\Omega$ and an $s \times s$ rescaling matrix $\matD$ as follows. 
Initially, $\mat\Omega = \mat0^{n \times s}$
and $\matD = \mat0^{s \times s}$. For each column $j$ of $\mat\Omega, \matD$, 
independently, and with replacement, pick a row index
$i \in [n]$ with probability $q_i$, and set $\mat\Omega_{i,j} = 1$ and 
$\matD_{jj} = 1/\sqrt{q_i s}$. We denote this procedure
{\textsc RandSampling}$(\matZ, s, q)$. 
\end{definition}
Note that the matrices $\mat\Omega$ and $\matD$ in the {\textsc RandSampling}$(\matZ, s, q)$ procedure 
can be computed in $O(nk + n + s \log s)$ time. 

Definition \ref{def:lss} introduces the concept of the {\it leverage scores} $\ell_i^2 = \|\e_i^T \matZ\|_2^2$ of a matrix 
$\matZ$ with orthonormal columns. For an $n \times k$ matrix $\matA$ whose columns need not be orthonormal, we can still
define its leverage scores $\ell_i^2$ as $\|e_i^T \matZ\|_2^2$, where $\matZ$ is an $n \times r$ matrix with orthonormal
columns having the same column space of $\matA$, where $r$ is the rank of $\matA$. Although there are many choices
$\matZ$ of orthonormal bases for the column space of $\matA$, it turns out that they all give rise to the same values
$\ell_i^2$. Indeed, if $\matZ'$ were another $n \times r$ matrix with orthonormal colums having the same column space
of $\matA$, then $\matZ' = \matZ \matR$ for an $r \times r$ invertible matrix $\matR$. But since $\matZ'$ and $\matZ$
have orthonormal columns, $\matR$ must be orthonormal. Indeed, for every vector $\x$ we have
$$\|\x\|_2 = \|\matZ'\x\|_2 = \|\matZ \matR \x\|_2 = \|\matR \x\|_2.$$
Hence $$\|\e_i^T\matZ'\|_2^2 = \|\e_i^T \matZ \matR\|_2^2 = \|\e_i^t \matZ\|_2^2,$$
so the definition of the leverage scores does not depend on a particular choice of orthonormal basis for the column space
of $\matA$.

Another useful property, though we shall not need it, is that the leverage scores $\ell_i^2$ are at most $1$. This follows
from the fact that any row $\ve$ of $\matZ$ must have squared norm at most $1$, as otherwise 
$$\|\matZ \cdot \frac{\ve}{\|\ve\|_2}\|_2^2 = \frac{1}{\|\ve \|_2^2} \cdot \|\matZ \ve \|_2^2 > 1,$$
contradicting that $\|\matZ \x\|_2 = \|\x\|_2$ for all $\x$ since $\matZ$ has orthonormal columns. 

The following shows that $\matD^T \mat\Omega^T \matZ$ 
is a subspace embedding of the column space of $\matZ$, for $s$ large enough.
To the best of our knowledge, theorems of this form first appeared in \cite{DMM06c,DMM06d}. Here we give a simple
proof along the lines in \cite{Mag10}.   
\begin{theorem}\label{thm:lssPerf}(see, e.g., similar proofs in \cite{Mag10})
Suppose $\matZ \in \mathbb{R}^{n \times k}$ has orthonormal columns. Suppose $s > 144k \ln(2k/\delta)/(\beta \eps^2)$ 
and $\mat\Omega$ and $\matD$
are constructed from the {\textsc RandSampling}$(\matZ,s,q)$ procedure. Then with probability
at least $1-\delta$, simultaneously for all $i$, 
$$1 - \eps \leq \sigma_i^2 (\matD^T \mat\Omega^T \matZ) \leq 1 + \eps,$$
or equivalently,
$$1-\eps \leq \sigma_i^2 (\matZ^T \mat\Omega \matD) \leq 1+\eps.$$
\end{theorem}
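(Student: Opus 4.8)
The plan is to apply a matrix Chernoff / Bernstein bound to the sum of independent rank-one matrices that arises from the sampling procedure. First I would observe that
\[
\matZ^T \mat\Omega \matD \matD^T \mat\Omega^T \matZ = \sum_{j=1}^s \frac{1}{q_{i_j} s} (\matZ_{i_j,*})^T \matZ_{i_j,*},
\]
where $i_j$ is the row index chosen in the $j$-th sampling step. Writing $\matX_j = \frac{1}{q_{i_j} s} (\matZ_{i_j,*})^T \matZ_{i_j,*}$, these are i.i.d.\ symmetric PSD $k \times k$ matrices, and a direct computation using $p_i = \|\matZ_{i,*}\|_2^2/k$ together with $\sum_i (\matZ_{i,*})^T \matZ_{i,*} = \matZ^T \matZ = \matI_k$ shows $\Exp[\matX_j] = \frac{1}{s}\matI_k$, hence $\Exp[\sum_j \matX_j] = \matI_k$. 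The goal $1-\eps \le \sigma_i^2(\matD^T\mat\Omega^T\matZ) \le 1+\eps$ for all $i$ is exactly $\|\sum_j \matX_j - \matI_k\|_2 \le \eps$, i.e.\ a spectral concentration statement for a sum of independent bounded PSD matrices around its mean.

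The key quantitative inputs are a per-term spectral bound and a variance bound. For the norm bound, each $\matX_j$ has rank one with
\[
\|\matX_j\|_2 = \frac{\|\matZ_{i_j,*}\|_2^2}{q_{i_j} s} \le \frac{\|\matZ_{i_j,*}\|_2^2}{\beta p_{i_j} s} = \frac{k}{\beta s},
\]
using $q_i \ge \beta p_i$ and $p_i = \|\matZ_{i,*}\|_2^2/k$; so every summand is bounded in operator norm by $k/(\beta s)$, and the same bound controls the variance proxy $\|\sum_j \Exp[\matX_j^2]\|_2$ up to the factor $1/s$. Then I would invoke a matrix Chernoff bound (e.g.\ the version in Tropp's work, or the scalar-to-matrix argument sketched earlier in the excerpt for the Gaussian case) which says that for i.i.d.\ PSD summands with $\Exp[\sum_j \matX_j] = \matI_k$ and $\|\matX_j\|_2 \le R$ almost surely, $\Pr[\|\sum_j \matX_j - \matI_k\|_2 > \eps] \le 2k \exp(-c\eps^2/R)$ for a constant $c$. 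Plugging $R = k/(\beta s)$ and demanding $2k\exp(-c\eps^2 \beta s/k) \le \delta$ gives $s = \Omega(k \ln(2k/\delta)/(\beta\eps^2))$, matching the stated threshold (the constant $144$ coming from tracking the precise constant $c$ in the matrix Chernoff inequality and the $\eps^2$ versus $\eps^2/3$-type slack). The ``equivalently'' clause is immediate since $\matD^T\mat\Omega^T\matZ$ and its transpose $\matZ^T\mat\Omega\matD$ have the same nonzero singular values.

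I expect the main obstacle to be simply invoking the right concentration inequality cleanly: the excerpt does not state a matrix Chernoff bound as a black box, so I would either cite it explicitly (as the theorem statement's ``see, e.g., similar proofs in \cite{Mag10}'' invites) or reprove the needed tail bound via the trace-moment / $\tr(\cdot^\ell)$ technique already used in the proof of Theorem~\ref{thm:mmnn}, bounding $\Exp[\tr((\sum_j \matX_j - \matI_k)^\ell)]$ for even $\ell$ and optimizing $\ell$. The routine but slightly delicate parts are (i) verifying $\Exp[\matX_j] = \frac1s\matI_k$ exactly, which hinges on the identity $\sum_i q_i \cdot \frac{1}{q_i s}(\matZ_{i,*})^T\matZ_{i,*} = \frac1s \matZ^T\matZ = \frac1s\matI_k$ and so actually holds for \emph{any} sampling distribution $q$ with full support, not just leverage-score-based ones; and (ii) keeping track of constants so the final bound on $s$ comes out as claimed. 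Everything else — the reduction to $\|\sum_j\matX_j - \matI_k\|_2 \le \eps$ and the singular-value reformulation — is bookkeeping.
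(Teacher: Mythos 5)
Your proposal is correct and takes essentially the same approach as the paper: write $\matZ^T\mat\Omega\matD\matD^T\mat\Omega^T\matZ$ as a sum of $s$ i.i.d.\ rank-one terms, verify the expectation is $\matI_k$ (which, as you note, only needs $q$ to have full support), bound the per-term operator norm by $k/(\beta s)$ using $q_i \ge \beta p_i$, and invoke a matrix Chernoff/Bernstein concentration bound. The paper parametrizes slightly differently — it centers each summand as $\matX_i = \matI_k - \matU_i^T\matU_i/q_i$ with $\Exp[\matX_i]=0$, then applies a matrix Bernstein inequality (stated as Fact~\ref{fact:chernoff}) to $\matW = \frac{1}{s}\sum_i\matX_i$, explicitly computing the variance proxy $\|\Exp[\matX^T\matX]\|_2 \le k/\beta - 1$ rather than loosely controlling it as you do — but this is the same argument with the mean pulled inside versus outside each term, and both yield the stated $s = \Theta(k\ln(k/\delta)/(\beta\eps^2))$ threshold.
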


\begin{proof}
We will use the following matrix Chernoff bound for a sum of random matrices, 
which is a non-commutative Bernstein bound.

\begin{fact}(Matrix Chernoff)\label{fact:chernoff}
Let $\matX_1, \ldots, \matX_s$ be independent copies of a symmetric random matrix $\matX \in \mathbb{R}^{k \times k}$ 
with ${\bf E}[\matX] = 0$, $\|\matX\|_2 \leq \gamma$, and $\|{\bf E} \matX^T \matX\|_2 \leq s^2$. 
Let $\matW = \frac{1}{s} \sum_{i=1}^s \matX_i$. Then for any $\eps > 0$,
$$\Pr[\|\matW\|_2 > \eps] \leq 2k \exp(-s\eps^2/(2s^2 + 2\gamma \eps/3)).$$ 
\end{fact}
Let $\matU_i \in \mathbb{R}^{1 \times k}$ be the $i$-th sampled row of $\matZ$ by the 
{\textsc RandSampling}$(\matZ, s)$ procedure.
Let
$\z_j$ denote the $j$-th row of $\matZ$. 
Let $\matX_i = \matI_{k} - \matU_i^T \matU_i / q_i$. Then the $\matX_i$ 
are independent copies of a matrix random variable, and  
$${\bf E}[\matX_i] = \matI_{k} - \sum_{j=1}^n q_j \z_j^T \z_j /q_j 
= \matI_{k} - \matZ^T \matZ = \mat0_{k \times k}.$$
For any $j$, $\z_j^T \z_j/q_j$ is a rank-$1$ matrix with 
operator norm bounded by $\|\z_j\|_2^2/q_j \leq k/\beta$. Hence,
\begin{eqnarray}\label{eqn:operator}
\|\matX_i\|_2 \leq \|\matI_{k}\|_2 + \|\matU_i^T \matU_i / q_i\|_2 \leq 1 + \frac{k}{\beta}.
\end{eqnarray}
We also have
\begin{eqnarray}\label{eqn:varBern}
{\bf E}[\matX^T \matX] & = & \matI_{k} - 2 {\bf E}[\matU_i^T \matU_i/q_i] 
+ {\bf E}[\matU_i^T \matU_i \matU_i^T \matU_i/q_i^2] \notag \\
& = & \sum_{j=1}^n \z_j^T \z_j \z_j^T \z_j/q_i - \matI_{k}\\
& \leq & (k/\beta) \sum_{j=1}^n \z_j^T \z_j - \matI_{k}\\
& = & (k/\beta - 1) \matI_{k}.
\end{eqnarray}
It follows that $\|{\bf E}\matX^T \matX\|_2 \leq (k/\beta - 1)$. 
Note that $\matW = \frac{1}{k} \sum_{i=1}^s \matX_i = \matI_{k} - \matZ^T \mat\Omega \matD \matD^T \mat\Omega^T \matZ$. 
Applying Fact \ref{fact:chernoff},
$$\Pr[\|\matI_{k} - \matZ^T \mat\Omega \matD \matD^T \mat\Omega^T \matZ \|_2 > \eps] \leq 2k \exp(-s \eps^2/(2k/\beta + 2k \eps /(3\beta))),$$
and setting $s = \Theta(k \log (k/\delta) /(\beta \eps^2))$ 
implies that with all but $\delta$ probability,
$\|\matI_{k} - \matZ^T \mat\Omega \matD \matD^T \mat\Omega^T \matZ \|_2 \leq \eps$, 
that is, all of the singular values of $\matD^T \mat\Omega^T \matZ$ are within $1 \pm \eps$, as desired. 
\end{proof}
To apply Theorem \ref{thm:lssPerf} for computing subspace embeddings of an $n \times k$ matrix $\matA$, one writes
$\matA = \matZ \mat\Sigma \matV^T$ in its SVD. Then, Theorem \ref{thm:lssPerf} guarantees that for all $\x \in \mathbb{R}^k$,
$$\|\matD^T \mat\Omega^T \matA \x\|_2 = (1 \pm \eps)\|\mat\Sigma\matV^T\|x\|_2 = (1 \pm \eps)\|\matA\x\|_2,$$
where the first equality uses the definition of $\matA$ and the fact that all singular values of $\matD^T\mat\Omega^T\matZ$
are $1 \pm \eps$. The second equality uses that $\matZ$ has orthonormal columns, so $\|\matZ\y\|_2 = \|\y\|_2$ for all
vectors $\y$. 

One drawback of {\textsc RandSampling}$(\matZ, s, q)$ is it requires as 
input a distribution $q$ which well-approximates the leverage
score distribution $p$ of $\matZ$. While one could obtain $p$ exactly by computing the SVD of $\matA$, this would na\"ively take
$O(nk^2)$ time (assuming $k < n$). It turns out, as shown in \cite{DMMW12}, one can compute a distribution $q$ with the approximation
parameter $\beta = 1/2$ in time $O(nk\log n + k^3)$ time. This was further improved in \cite{CW13} to $O(\nnz(A)\log n + k^3)$ time. 

We need a version of the Johnson-Lindenstrauss lemma, as follows. We give a simple proof for completeness.
\begin{lemma}(Johnson-Lindenstrauss)\label{lem:jl}
Given $n$ points $q_1, \ldots, q_n \in \mathbb{R}^d$, if $\matG$ is a $t \times d$ matrix of i.i.d. $N(0,1/t)$ random 
variables, then for $t = O(\log n / \eps^2)$ simultaneously for all $i \in [n]$,
$$\Pr[\forall i, \ \|\matG \q_i\|_2 \in (1 \pm \eps)\|\q_i\|_2] \geq 1-\frac{1}{n}.$$
\end{lemma}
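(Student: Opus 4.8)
The plan is to prove the Johnson-Lindenstrauss lemma via the standard two-step recipe: first establish concentration of $\|\matG \q\|_2^2$ around $\|\q\|_2^2$ for a single fixed vector $\q$, then union bound over the $n$ given points. By linearity it suffices to treat unit vectors $\q$, since $\matG$ is a linear map and both sides of the inequality scale homogeneously in $\|\q\|_2$.

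For the single-vector step, fix a unit vector $\q \in \mathbb{R}^d$. Since the rows of $\matG$ are independent and each row is a vector of i.i.d.\ $N(0,1/t)$ entries, the coordinate $(\matG \q)_j = \langle \matG_{j,*}, \q \rangle$ is distributed as $N(0,1/t)$ (as $\|\q\|_2 = 1$), and the $t$ coordinates are independent across $j$. Hence $t \|\matG \q\|_2^2 = \sum_{j=1}^t (\sqrt{t}\,(\matG\q)_j)^2$ is a $\chi^2$ random variable with $t$ degrees of freedom, with mean $t$. The plan is to invoke the standard Chernoff/Laplace-transform tail bound for chi-squared variables: for $0 < \eps < 1$,
\begin{eqnarray*}
\Pr[\|\matG \q\|_2^2 > (1+\eps)] &\leq& \exp(-t(\eps^2 - \eps^3)/4),\\
\Pr[\|\matG \q\|_2^2 < (1-\eps)] &\leq& \exp(-t\eps^2/4),
\end{eqnarray*}
which follows by bounding the moment generating function $\Exp[e^{\lambda \chi^2_t}] = (1-2\lambda)^{-t/2}$ and optimizing $\lambda$, then using $\log(1+u) \geq u - u^2/2$. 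Combining the two tails gives $\Pr[\,\|\matG\q\|_2^2 \notin (1\pm\eps)\,] \leq 2\exp(-t(\eps^2-\eps^3)/4) \leq 2\exp(-c t \eps^2)$ for an absolute constant $c > 0$ (say $c = 1/8$ after absorbing the $\eps^3$ term for $\eps < 1$). Taking square roots on the event $\|\matG\q\|_2^2 \in (1\pm\eps)$ yields $\|\matG\q\|_2 \in (1 \pm \eps)$ up to rescaling $\eps$ by a constant, which is harmless.

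Finally I would union bound: applying the single-vector bound to each of $\q_1, \ldots, \q_n$ and summing failure probabilities gives a total failure probability at most $2n \exp(-c t \eps^2)$. Choosing $t = O(\log n / \eps^2)$ with a large enough constant (e.g.\ $t \geq (2\log n + \log 2 + \log n)/(c\eps^2)$ suffices to make this at most $1/n$) gives the claimed bound $\Pr[\forall i,\ \|\matG\q_i\|_2 \in (1\pm\eps)\|\q_i\|_2] \geq 1 - 1/n$. I do not expect any genuine obstacle here; the only mildly delicate point is the chi-squared tail bound, where one must be careful to handle both the upper and lower tails and to track that the $\eps^3$ correction in the upper tail is dominated for $\eps \in (0,1)$ so that a clean $\exp(-\Omega(t\eps^2))$ bound results. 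Everything else is bookkeeping with constants.
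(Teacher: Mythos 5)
Your proof is correct and follows essentially the same route as the paper: reduce to the observation that $t\|\matG\q\|_2^2/\|\q\|_2^2$ is a $\chi^2_t$ random variable, apply a chi-squared tail bound, and union bound over the $n$ points. The only cosmetic difference is that the paper cites the Laurent--Massart inequality (Lemma 1 of \cite{lm00}) for the chi-squared tails, whereas you rederive an equivalent bound directly from the moment generating function; you are also slightly more careful than the paper in flagging the square-root/$\eps$-rescaling step needed to pass from the bound on $\|\matG\q\|_2^2$ to the stated bound on $\|\matG\q\|_2$.
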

\begin{proof}
For a fixed $i \in [n]$, $\matG \q_i$ is a $t$-tuple of i.i.d. $N(0, \|\q_i\|_2^2/t)$ random variables. Here we use
the fact that for independent standard normal random variables $g$ and $h$ and scalars $a$ and $b$, 
the random variable $a \cdot g + b \cdot h$ has the same distribution as that of the random variable 
$\sqrt{a^2+b^2} z$, where $z \sim N(0,1)$. 

It follows that $\|\matG \q_i\|_2^2$  
is equal, in distribution, to $(\|\q_i\|_2^2/t) \cdot \sum_{i=1}^t g_i^2$, where $g_1, \ldots, g_t$ are independent $N(0,1)$
random variables. 

The random variable $\sum_{i=1}^t g_i^2$ is $\chi^2$ with $t$ degree of freedom. The following tail bounds are known.
\begin{fact}(Lemma 1 of \cite{lm00})\label{fact:c2}
Let $g_1, \ldots, g_t$ be i.i.d. $N(0,1)$ random variables. Then for any $x \geq 0$,
$$\Pr[\sum_{i=1}^t g_i^2 \geq t + 2\sqrt{tx} + 2x] \leq \exp(-x),$$
and
$$\Pr[\sum_{i=1}^t g_i^2 \leq t - 2\sqrt{tx}] \leq \exp(-x).$$
\end{fact}
Setting $x = \eps^2 t/16$, we have that
$$\Pr[|\sum_{i=1}^t g_i^2 - t| \leq \eps t] \leq 2\exp(-\eps^2 t/16).$$
For $t = O(\log n / \eps^2)$, the lemma follows by a union bound over $i \in [n]$.
\end{proof}

\begin{theorem}\label{thm:fastLS}(\cite{CW13})
Fix any constant $\beta \in (0,1)$. 
If $p$ is the leverage score distribution of an $n \times k$ matrix $\matZ$ with orthonormal columns, it is possible to compute
a distribution $q$ on the $n$ rows for which with probability $9/10$, simultaneously 
for all $i \in [n]$, $q_i \geq \beta p_i$. The time complexity is $O(\nnz(A)\log n) + \poly(k)$. 
\end{theorem}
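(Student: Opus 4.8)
The plan is to avoid computing an exact orthonormal basis $\matU$ for the column space of the input matrix $\matA$ (an SVD would cost $O(nk^2)$), and instead to: (i) build a \emph{cheap} matrix $\matR$ for which $\matA\matR^{-1}$ is an \emph{approximate} orthonormal basis, so that its squared row norms lie within a constant factor of the true leverage scores $\ell_i^2$; and (ii) extract all $n$ of these row norms quickly via a Johnson--Lindenstrauss projection, taking care with the order of the matrix products. For simplicity assume $\matA$ has full column rank (otherwise replace $k$ by $\rank(\matA)$ and work with a basis for $\col(\matA)$).

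\emph{Step 1 (cheap approximate basis).} Let $\matS$ be a sparse embedding matrix with $r = O(k^2)$ rows; by Theorem~\ref{thm:mmnn}, applied with a small constant accuracy $\eps_0$ and failure probability $1/100$, with probability $\ge 99/100$ we have $\|\matS\matA\x\|_2 = (1\pm\eps_0)\|\matA\x\|_2$ for all $\x\in\R^k$, and $\matS\matA$ is formed in $O(\nnz(\matA))$ time. Compute a thin QR factorisation $\matS\matA = \matQ\matR$ with $\matQ$ having orthonormal columns and $\matR\in\R^{k\times k}$; this costs $O(rk^2) = \poly(k)$ time, and $\matR$ is invertible by the rank assumption. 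Now $\matS\matA\matR^{-1} = \matQ$ has orthonormal columns, and since $\matS$ is a subspace embedding for $\col(\matA) = \col(\matA\matR^{-1})$ it follows that $\|\matA\matR^{-1}\x\|_2 = (1\pm\eps_0)^{\mp1}\|\x\|_2$, i.e.\ all singular values of $\matA\matR^{-1}$ lie in $[\tfrac{1}{1+\eps_0},\tfrac{1}{1-\eps_0}]$. Writing $\matA\matR^{-1} = \matU\matT$ with $\matU$ a true orthonormal basis for $\col(\matA)$ and $\matT = \matU^T\matA\matR^{-1}$ (a $k\times k$ matrix with the same singular values as $\matA\matR^{-1}$), one gets $\|\e_i^T\matA\matR^{-1}\|_2^2 = \|\e_i^T\matU\matT\|_2^2 \in [\sigma_{\min}(\matT)^2,\sigma_{\max}(\matT)^2]\cdot\ell_i^2$; hence the squared row norms of $\matA\matR^{-1}$ equal the leverage scores $\ell_i^2$ of $\matA$ up to a fixed factor depending only on $\eps_0$.

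\emph{Step 2 (extracting the row norms; normalisation).} Materialising $\matA\matR^{-1}$ costs $O(\nnz(\matA)\,k)$, which is too slow, so instead I would invoke Lemma~\ref{lem:jl}: let $\matG\in\R^{t\times k}$ have i.i.d.\ $N(0,1/t)$ entries with $t = O(\log n)$ (a constant accuracy $\eps_1$ suffices), and compute $\matA\matR^{-1}\matG^T$ \emph{in the order} $\matA\cdot(\matR^{-1}\matG^T)$: the $k\times t$ matrix $\matR^{-1}\matG^T$ is obtained by $t$ triangular solves in $O(k^2\log n)$ time, after which multiplying the sparse $\matA$ by it costs $O(\nnz(\matA)\,t) = O(\nnz(\matA)\log n)$. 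By Lemma~\ref{lem:jl}, with probability $\ge 1-1/n$ the quantities $w_i := \|\e_i^T\matA\matR^{-1}\matG^T\|_2^2$ agree with $\|\e_i^T\matA\matR^{-1}\|_2^2$ up to a factor $(1\pm\eps_1)^2$ for all $i\in[n]$ simultaneously; computing all $w_i$ takes a further $O(nt) = O(n\log n)$ time (we may assume $\nnz(\matA)\ge n$, deleting zero rows, which have leverage score $0$). Output $q_i := w_i\big/\sum_{j}w_j$. Combining the two estimates, $w_i\in[c_1,c_2]\,\ell_i^2$ with $c_1,c_2$ constants depending only on $\eps_0,\eps_1$ and $c_1/c_2\to1$ as $\eps_0,\eps_1\to0$. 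Since $\sum_j\ell_j^2 = \rank(\matA)$ we get $\sum_j w_j\le c_2\rank(\matA)$, so $q_i\ge (c_1/c_2)\,\ell_i^2/\rank(\matA) = (c_1/c_2)\,p_i$; choosing $\eps_0,\eps_1$ to be small enough constants gives $q_i\ge\beta p_i$ for any prescribed constant $\beta\in(0,1)$. A union bound over the two failure events yields success probability $\ge 9/10$, and the running time is $O(\nnz(\matA)\log n) + \poly(k)$.

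I expect the main obstacle to be the efficiency bookkeeping rather than any single hard inequality: one must never form the dense $n\times k$ matrix $\matA\matR^{-1}$, must push the Gaussian map through $\matR^{-1}$ first so that only a $k\times O(\log n)$ matrix is ever multiplied against the sparse $\matA$, and must carry the composition of the two approximation layers --- the constant-factor distortion of the cheap embedding used to build $\matR$, and the $(1\pm\eps_1)$ distortion from Johnson--Lindenstrauss --- through the normalisation, using $\sum_i\ell_i^2 = \rank(\matA)$ to bound the denominator so that the per-coordinate guarantee $q_i\ge\beta p_i$ survives with $\beta$ an arbitrary constant below $1$.
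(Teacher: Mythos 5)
Your proposal is essentially the paper's own proof of Theorem~\ref{thm:fastLS}: sketch with a sparse embedding $\matS$ to get $\matS\matA$, take a QR factorisation $\matS\matA=\matQ\matR$ so that $\matA\matR^{-1}$ is a well-conditioned approximate basis, and estimate the row norms of $\matA\matR^{-1}$ by first forming the small $k\times O(\log n)$ matrix $\matR^{-1}\matG$ and then applying $\matA$ to it, using Lemma~\ref{lem:jl}. The one place you are slightly more careful than the paper is the final normalisation: the paper outputs $q_i = \|\e_i^T\matA\matR^{-1}\matG\|_2^2$ and directly compares it to $p_i$, eliding the division by $\sum_j q_j$ needed to make $q$ a genuine distribution (and conflating $\ell_i^2$ with $p_i = \ell_i^2/k$ in the last line), whereas you explicitly normalise and use $\sum_i \ell_i^2 = k$ to control the denominator. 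That bookkeeping is indeed needed for Definition~\ref{def:lss} to apply verbatim, and your version makes it explicit.
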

\begin{proof}
Let $\matS$ be a sparse embedding matrix with $r = O(k^2/\gamma^2)$ rows for a constant $\gamma \in (0,1)$ to be specified.  
We can compute $\matS \cdot \matA$ in $O(\nnz(A))$ time. We then compute a QR-factorization of $\matS \matA = \matQ \cdot \matR$, where
$\matQ$ has orthonormal columns. This takes $O(rk^2) = \poly(k/\gamma)$ time. Note that $\matR^{-1}$ is $k \times k$, and can be computed
from $\matR$ in $O(k^3)$ time (or faster using fast matrix multiplication). 

For $t = O(\log n / \gamma^2)$, 
let $G$ by a $k \times t$ matrix of i.i.d. $N(0,1/t)$ random variables. Set $q_i = \|\e_i^T\matA \matR^{-1}\matG\|_2^2$ for all $i \in [n]$.
While we cannot compute $\matA \cdot \matR^{-1}$ very efficiently, we can first compute $\matR^{-1} \matG$ in $O(k^2 \log n / \gamma^2)$
by standard matrix multiplication, and then compute $\matA \cdot (\matR^{-1} \matG)$ in $O(\nnz(A) \log n / \gamma^2)$ time since
$(\matR^{-1} \mat G)$ has a small number of columns. 
Since we will set $\gamma$ to be a constant, the overall time complexity of the theorem follows. 

For correctness, by Lemma \ref{lem:jl}, 
with probability $1-1/n$, simultaneously for all $i \in [n]$, $q_i \geq (1-\gamma) \|\e_i^T \matA \matR^{-1}\|_2^2$, which we condition on.
We now show that $\|\e_i^T \matA \matR^{-1}\|_2^2$ is approximately $p_i$. To do so, first consider $\matA\matR^{-1}$. The claim is that
all of the singular values of $\matA\matR^{-1}$ are in the range $[1-\gamma, 1+\gamma]$. To see this, note that for any $\x \in \mathbb{R}^k$,
\begin{eqnarray*}
\|\matA\matR^{-1}x\|_2^2 & = & (1 \pm \gamma) \|\matS \matA \matR^{-1}\x\|_2^2\\
& = & (1\pm \gamma)\|\matQ \x\|_2^2\\
& = & (1\pm \gamma)\|\x\|_2^2,
\end{eqnarray*}
where the first equality follows since with probability $99/100$, $\matS$ is a $(1 \pm \gamma)$ $\ell_2$-subspace embedding for $\matA$,
while the second equality uses the definition of $\matR$, and the third equality uses that $\matQ$ has orthonormal columns. 

Next, if $\matU$ is an orthonormal basis for the column space of $\matA$, since $\matA\matR^{-1}$ and $\matU$ have the same column
space, $\matU = \matA\matR^{-1}\matT$ for a $k \times k$ change of basis matrix $\matT$. The claim is that the minimum singular
value of $\matT$ is at least $1-2\gamma$. Indeed, since all of the singular values of $\matA\matR^{-1}$ are in 
the range $[1-\gamma, 1+\gamma]$, if there were a singular value of $\matT$ smaller than $1-2\gamma$ with corresponding right singular
vector $\ve$, then $\|\matA \matR^{-1} \matT \ve\|_2^2 \leq (1-2\gamma)(1+\gamma) < 1$, but $\|\matA \matR^{-1} \matT \ve\|_2^2 = \|\matU \ve\|_2^2
= 1$, a contradiction. 

Finally, it follows that for all $i \in [n]$, 
$$\|\e_i^T\matA\matR^{-1}\|_2^2 = \|e_i^T\matU\matT^{-1}\|_2^2 \geq (1-2\gamma)\|e_i^T\matU\|_2^2 = (1-2\gamma)p_i.$$
Hence, $q_i \geq (1-\gamma)(1-2\gamma)p_i$, which for an appropriate choice of constant $\gamma \in (0,1)$, achieves
$q_i \geq \beta p_i$, as desired. 
\end{proof}

\subsection{Regression}\label{sec:regression}
We formally define the regression problem as follows. 
\begin{definition}\label{prob:regression}
In the {\it $\ell_2$-Regression Problem}, an $n \times d$ matrix $\matA$ and an 
$n \times 1$ column vector $\b$ are given, together with an approximation parameter $\eps \in [0,1)$.  
The goal is to output a vector $\x$ so that
$$\norm{\matA\x-\b} \leq (1+\eps)\min_{\x' \in \mathbb{R}^d}\norm{\matA\x'-\b}.$$
\end{definition}

The following theorem is an immediate application of $\ell_2$-subspace embeddings. The proof actually shows
that there is a direct relationship between the time complexity of computing an $\ell_2$-subspace embedding 
and the time complexity of approximately solving $\ell_2$-regression. We give one instantiation of this relationship 
in the following theorem statement.
\begin{theorem}\label{thm:regression}
The $\ell_2$-Regression Problem can be solved with probability $.99$ in $O(\nnz(A)) + \poly(d/\eps)$ time.
\end{theorem}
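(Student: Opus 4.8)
The plan is to carry out the black-box reduction previewed in the introduction, using an input-sparsity subspace embedding in place of the dense Gaussian sketch. Concretely, I would form the $n \times (d+1)$ matrix $[\matA,\b]$ and apply Theorem~\ref{thm:mmnn} (with its parameter ``$d$'' set to $d+1$ and $\delta$ a suitably small constant) to obtain a sparse embedding matrix $\matS$ with $r = O(d^2/\eps^2)$ rows which, with probability at least $.995$, is a $(1\pm\eps)$ $\ell_2$-subspace embedding for the column space of $[\matA,\b]$. Since $\matS$ has one nonzero per column, the products $\matS\matA$ and $\matS\b$ can be formed in $O(\nnz(\matA)) + O(n) = O(\nnz(\matA))$ time, where we used that we may delete the all-zero rows of $\matA$ and hence assume $n \le \nnz(\matA)$.

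The algorithm then outputs $\tilde\x := \argmin_{\x}\|\matS\matA\x - \matS\b\|_2$, computed exactly by solving the normal equations for the sketched instance; as $\matS\matA$ is $r\times d$ with $r = O(d^2/\eps^2)$, this takes $O(rd^2) = \poly(d/\eps)$ time, giving the claimed overall bound. (If a genuinely small sketch is wanted one may first compose $\matS$ with an $O(d/\eps^2)\times r$ Gaussian matrix from Theorem~\ref{thm:gaussianSE} at an extra $\poly(d/\eps)$ cost, but this is not needed for the stated theorem.)

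For correctness I would condition on $\matS$ being a $(1\pm\eps)$ subspace embedding for $[\matA,\b]$. For every $\x\in\mathbb{R}^d$ the residual $\matA\x-\b$ lies in the column space of $[\matA,\b]$, so $\|\matS(\matA\x-\b)\|_2 = (1\pm O(\eps))\|\matA\x-\b\|_2$; writing $\x^*$ for the true minimizer and using optimality of $\tilde\x$ for the sketched problem, $\|\matA\tilde\x-\b\|_2 \le (1+O(\eps))\|\matS(\matA\tilde\x-\b)\|_2 \le (1+O(\eps))\|\matS(\matA\x^*-\b)\|_2 \le (1+O(\eps))\|\matA\x^*-\b\|_2$. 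Rescaling $\eps$ by a constant yields the $(1+\eps)$ guarantee, and the failure probability is the constant failure probability of $\matS$, which can be made at most $.01$.

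This argument is short once Theorem~\ref{thm:mmnn} is in hand; the only points to be careful about are applying the embedding to the augmented matrix $[\matA,\b]$ (rather than to $\matA$ alone) so that all residual vectors lie in the preserved subspace, and tracking the $(1\pm\eps)$ versus $(1\pm\eps)^{1/2}$ loss together with the constant-factor slack in the success probability — both absorbed by rescaling constants. There is no substantive obstacle here; the content of the theorem is essentially a repackaging of the subspace-embedding machinery developed earlier in the section, and the proof text should emphasize the ``black box'' nature of the reduction (any faster subspace embedding immediately gives a faster regression algorithm).
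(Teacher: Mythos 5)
Your proposal is correct and follows essentially the same route as the paper's proof: apply a sparse embedding (Theorem~\ref{thm:cw} or \ref{thm:mmnn}) to the $(d+1)$-dimensional column space of $[\matA,\b]$, solve the sketched least squares problem exactly, and conclude via preservation of all residual norms. The paper's subspace $L$ is exactly the column space of $[\matA,\b]$ you use, so there is no substantive difference.
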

\begin{proof}
Consider the at most $(d+1)$-dimensional subspace $L$ of $\mathbb{R}^n$ spanned by the columns of $\matA$ 
together with the vector $\b$. Suppose we choose $\matS$ to be a sparse embedding matrix with $r = d^2/\eps^2 \poly(\log(d/\eps))$ rows.
By Theorem \ref{thm:cw}, we have that with probability $.99$, 
\begin{eqnarray}\label{eqn:seBound}
\forall \y \in L, \ \|\matS\y\|_2^2 = (1 \pm \eps)\|\y\|_2^2.
\end{eqnarray}
It follows that we can compute $\matS \cdot \matA$ followed by $\matS \cdot \b$, and then let
$$\x = \argmin_{\x' \in \mathbb{R}^d} \|\matS \matA\x' - \matS \b\|_2.$$
By (\ref{eqn:seBound}), it follows that $\x$ solves the $\ell_2$-Regression Problem. The number of rows
of $\matS$ can be improved to $r = O(d/\eps^2)$ by applying Theorem \ref{thm:mmnn}. 
\end{proof}

We note that Theorem \ref{thm:regression} can immediately be generalized to other versions of regression,
such as {\it constrained regression}. In this problem there is a constraint subset $\mathcal{C} \subseteq \mathbb{R}^d$
and the goal is, given an $n \times d$ matrix $\matA$ and an $n \times 1$ column vector $\b$, to output a vector $\x$
for which
$$\norm{\matA\x-\b} \leq (1+\eps)\min_{\x' \in \mathcal{C}}\norm{\matA\x'-\b}.$$
Inspecting the simple proof of Theorem \ref{thm:regression} we see that (\ref{eqn:seBound}) in particular implies
\begin{eqnarray}\label{eqn:seConstraint}
\forall \x \in \mathcal{C}, \ \|\matS(\matA\x - \b)\|_2 = (1 \pm \eps)\|\matA \x -\b\|_2,
\end{eqnarray}
from which we have the following corollary. This corollary follows by replacing $\matA$ and $\b$
with $\matS\matA$ and $\matS\b$, where $\matS$ has $O(d^2/\eps^2)$ rows using Theorem \ref{thm:mmnn}. 
\begin{corollary}\label{cor:regression}
The constrained $\ell_2$ Regression Problem with constraint set $\mathcal{C}$
can be solved with probability $.99$ in $O(\nnz(A)) + T(d,\eps)$ time, 
where $T(d,\eps)$ is the time to solve constrained $\ell_2$ regression with constraint set $\mathcal{C}$ when $\matA$
has $O(d^2/\eps^2)$ rows and $d$ columns. 
\end{corollary}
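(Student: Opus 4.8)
The plan is to mirror the proof of Theorem \ref{thm:regression} verbatim, the only difference being that at the very end we invoke whatever solver realizes the bound $T(d,\eps)$ on the small constrained instance rather than a closed-form least-squares solver. First I would let $L$ be the at most $(d+1)$-dimensional subspace of $\mathbb{R}^n$ spanned by the columns of $\matA$ together with the vector $\b$, and draw $\matS$ to be a sparse embedding matrix with $r = O(d^2/\eps^2)$ rows. Applying Theorem \ref{thm:mmnn} to a matrix whose column space is $L$, with probability at least $.99$ the matrix $\matS$ is a $(1\pm\eps)$ $\ell_2$-subspace embedding for $L$; I condition on this event. Since $\matA\x-\b \in L$ for every $\x \in \mathbb{R}^d$, this is exactly (\ref{eqn:seBound}), and in particular (\ref{eqn:seConstraint}) holds: $\|\matS(\matA\x-\b)\|_2 = (1\pm\eps)\|\matA\x-\b\|_2$ for all $\x \in \mathcal{C}$.

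Next I would compute $\matS\matA$ and $\matS\b$. Because $\matS$ has a single nonzero entry per column this takes $O(\nnz(\matA)) + O(\nnz(\b))$ time; deleting all-zero rows of $\matA$ first makes $n \le \nnz(\matA)$, so $\nnz(\b) \le n \le \nnz(\matA)$ and the total is $O(\nnz(\matA))$. The matrix $\matS\matA$ now has $O(d^2/\eps^2)$ rows and $d$ columns, so by the definition of $T(d,\eps)$ we can in that much additional time compute
$$\hat\x = \argmin_{\x' \in \mathcal{C}} \|\matS\matA\x' - \matS\b\|_2,$$
for an overall running time of $O(\nnz(\matA)) + T(d,\eps)$.

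For correctness, let $\x^*$ be an optimal solution of the original constrained problem. Using (\ref{eqn:seConstraint}) once on $\hat\x$ and once on $\x^*$, together with the optimality of $\hat\x$ for the sketched problem,
\begin{eqnarray*}
\|\matA\hat\x - \b\|_2 &\le& \frac{1}{1-\eps}\,\|\matS(\matA\hat\x-\b)\|_2 \ \le\ \frac{1}{1-\eps}\,\|\matS(\matA\x^*-\b)\|_2\\
&\le& \frac{1+\eps}{1-\eps}\,\|\matA\x^*-\b\|_2.
\end{eqnarray*}
Running the whole procedure with the accuracy parameter set to a suitable constant multiple of $\eps$ turns $\frac{1+\eps}{1-\eps}$ into $1+\eps$, which is the desired guarantee; the $.99$ success probability is inherited directly from Theorem \ref{thm:mmnn}.

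There is essentially no hard step here — all the content sits in the obliviousness of the embedding of Theorem \ref{thm:mmnn}. The one point worth flagging is that, unlike the unconstrained case, one cannot reduce to an orthonormal basis of the column space of $\matA$ and argue over all of $\mathbb{R}^d$; instead one observes that a subspace embedding for the fixed $(d+1)$-dimensional space $L = \mathrm{span}(\mathrm{cols}(\matA),\b)$ already controls $\|\matA\x-\b\|_2$ uniformly in $\x$, and this uniform control is exactly what transfers the approximate minimizer of the sketched problem back to the original one irrespective of the shape of $\mathcal{C}$. The quantity $T(d,\eps)$ is taken to absorb whatever is needed to actually compute the constrained minimizer of the reduced instance, so no separate optimization argument is required.
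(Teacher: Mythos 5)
Your proof is correct and matches the paper's intended argument: the paper dispatches the corollary in one line by noting that the subspace-embedding guarantee of Theorem \ref{thm:mmnn} on $\mathrm{span}(\mathrm{cols}(\matA),\b)$ gives (\ref{eqn:seConstraint}) uniformly over $\mathcal{C}$, and then one solves the sketched constrained instance of size $O(d^2/\eps^2)\times d$; you have simply written out the sandwich inequality and the $O(\nnz(\matA))+T(d,\eps)$ time accounting that the paper leaves implicit.
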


It is also possible to obtain a better dependence on $\eps$ than given by Theorem \ref{thm:regression} 
and Corollary \ref{cor:regression} in both the time and space, due to the fact that it is 
possible to choose the sparse subspace embedding
$\matS$ to have only $O(d^2/\eps)$ rows. We present this as its own separate theorem. We only state the time bound
for unconstrained regression. 

The proof is due to Sarl\'{o}s \cite{S06}. The key concept in the proof is that of the {\it normal equations}, which state
for the optimal solution $\x$, $\matA^T\matA\x = \matA^T\b$, or equivalently, $\matA^T(\matA \x - \b) = \mat0$, that is,
$\matA \x -\b$ is orthogonal to the column space of $\matA$. This is easy to see from the fact that the optimal
solution $\x$ is such that $\matA \x$ is the projection of $\b$ onto the column space of $\matA$, which is the closest
point of $\b$ in the column space of $\matA$ in Euclidean distance. 

\begin{theorem}\label{thm:1eps}
If $\matS$ is a sparse subspace embedding with $O(d^2/\eps)$ rows, then with probability $.99$, the solution
$\min_{\x' \in \mathbb{R}^d}\|\matS\matA\x' -\matS \b\|_2
= (1 \pm \eps) \min_{\x \in \mathbb{R}^d}\|\matA\x-\b\|_2.$
\end{theorem}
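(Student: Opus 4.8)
The plan is to prove the statement in the form in which it is used (e.g.\ in Theorem~\ref{thm:nn}): that the minimizer $\tilde\x$ of the sketched problem $\min_{\x'}\TNorm{\matS\matA\x'-\matS\b}$ satisfies $\TNorm{\matA\tilde\x-\b}\le(1+\eps)\min_{\x}\TNorm{\matA\x-\b}$, the reverse inequality being trivial. This is the refinement of the subspace-embedding argument due to Sarl\'{o}s, in which one does \emph{not} require $\matS$ to be a $(1\pm\eps)$-subspace embedding. First I would fix notation: let $\matU\in\R^{n\times d}$ have orthonormal columns with $\col(\matU)=\col(\matA)$ (we may assume $\rank(\matA)=d$; otherwise replace $d$ by $\rank(\matA)$ throughout and take $\tilde\x$ to be a minimum-norm minimizer), let $\x^*$ be optimal, $\mathrm{OPT}=\TNorm{\matA\x^*-\b}$, and $\b^{\perp}=\matA\x^*-\b$. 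The normal equations give $\matA^T\b^{\perp}=\mat{0}$, hence $\matU^T\b^{\perp}=\mat{0}$: the residual is orthogonal to $\col(\matA)$.

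Next I would extract two probabilistic facts from a single sparse embedding $\matS$ with $r=O(d^2/\eps)$ rows. By Theorem~\ref{thm:tz}, such an $\matS$ has the $(\gamma,1/100,2)$-JL moment property with $\gamma=O(\sqrt{\eps}/d)$ (since $2/(\gamma^2\delta)$ then reads $O(d^2/\eps)$). Applying Theorem~\ref{thm:jlamp} with both input matrices equal to $\matU$ gives, with probability $\ge 99/100$, $\FNorm{\matU^T\matS^T\matS\matU-\matI_d}\le 3\gamma\FNormS{\matU}=3\gamma d=O(\sqrt\eps)$; assuming without loss of generality that $\eps$ is below a fixed constant (otherwise run with that constant in place of $\eps$), this is $\le 1/2$, so via $\TNorm{\cdot}\le\FNorm{\cdot}$ the matrix $\matS$ is a $(1\pm 1/2)$ $\ell_2$-subspace embedding for $\col(\matA)$. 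Applying Theorem~\ref{thm:jlamp} a second time with input matrices $\matU$ and $\b^{\perp}$ gives, with probability $\ge 99/100$, $\TNorm{\matU^T\matS^T\matS\b^{\perp}-\matU^T\b^{\perp}}\le\FNorm{\matU^T\matS^T\matS\b^{\perp}-\matU^T\b^{\perp}}\le 3\gamma\FNorm{\matU}\TNorm{\b^{\perp}}=3\gamma\sqrt d\cdot\mathrm{OPT}=O(\sqrt{\eps/d})\cdot\mathrm{OPT}$, which I arrange (via the constant hidden in $r$) to be at most $\sqrt\eps\cdot\mathrm{OPT}$; since $\matU^T\b^{\perp}=\mat{0}$ this says $\TNorm{\matU^T\matS^T\matS\b^{\perp}}\le\sqrt\eps\cdot\mathrm{OPT}$. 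I condition on both events; after adjusting constants their union has failure probability below $1/100$.

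The core of the argument bounds $\TNorm{\matA(\tilde\x-\x^*)}$. Put $\Delta=\tilde\x-\x^*$, so that $\matS\matA\Delta=(\matS\matA\tilde\x-\matS\b)-\matS\b^{\perp}$. By optimality of $\tilde\x$, the vector $\matS\matA\tilde\x-\matS\b$ is orthogonal to $\col(\matS\matA)$, and $\matS\matA\Delta\in\col(\matS\matA)$, so taking the inner product with $\matS\matA\Delta$ kills the first term: $\TNormS{\matS\matA\Delta}=-\langle\matS\matA\Delta,\matS\b^{\perp}\rangle\le|\langle\matS\matA\Delta,\matS\b^{\perp}\rangle|$. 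Writing $\matA\Delta=\matU\z$ with $\z=\matU^T\matA\Delta$ (so $\TNorm{\z}=\TNorm{\matA\Delta}$), we get $|\langle\matS\matA\Delta,\matS\b^{\perp}\rangle|=|\z^T\matU^T\matS^T\matS\b^{\perp}|\le\TNorm{\z}\,\TNorm{\matU^T\matS^T\matS\b^{\perp}}\le\TNorm{\matA\Delta}\cdot\sqrt\eps\cdot\mathrm{OPT}$. Combining with the lower side of the $(1\pm 1/2)$-embedding, $\TNorm{\matA\Delta}\le\sqrt 2\,\TNorm{\matS\matA\Delta}$, gives $\TNormS{\matS\matA\Delta}\le\sqrt 2\,\TNorm{\matS\matA\Delta}\cdot\sqrt\eps\cdot\mathrm{OPT}$, hence $\TNorm{\matS\matA\Delta}\le\sqrt{2\eps}\cdot\mathrm{OPT}$ and thus $\TNorm{\matA\Delta}\le 2\sqrt\eps\cdot\mathrm{OPT}$. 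Finally $\matA\tilde\x-\b=\matA\Delta+\b^{\perp}$ is an orthogonal decomposition (the residual being orthogonal to $\col(\matA)\ni\matA\Delta$), so by the Pythagorean theorem $\TNormS{\matA\tilde\x-\b}=\TNormS{\matA\Delta}+\TNormS{\b^{\perp}}\le(1+4\eps)\,\mathrm{OPT}^2$, giving $\TNorm{\matA\tilde\x-\b}\le(1+2\eps)\,\mathrm{OPT}$; rescaling $\eps$ by a constant completes the proof.

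The part I expect to be the crux --- the place where the obvious approach fails --- is the first step: an $O(d^2/\eps)$-row sparse embedding is only a $(1\pm O(\sqrt\eps))$-subspace embedding, so naively ``solving in the sketch'' and quoting the embedding guarantee delivers merely a $(1+\sqrt\eps)$-approximation. The whole point is to split the requirement into a \emph{coarse} constant-distortion subspace embedding for $\col(\matA)$ together with a \emph{fine} approximate-matrix-multiplication bound, with error parameter $\sqrt{\eps/d}$, for the single pair $(\matU,\b^{\perp})$ --- both holding for the \emph{same} sketch --- and then to use the normal equations so that the cross term $\langle\matS\matA\Delta,\matS\b^{\perp}\rangle$, which vanishes in exact arithmetic, is exactly the quantity controlled by that AMM bound. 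Making both parameter requirements fit inside $r=O(d^2/\eps)$ is the only quantitatively delicate point, and it works because $d^2/\eps\ge\max(d^2,d/\eps)$.
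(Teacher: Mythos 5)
Your proof is correct and follows essentially the same route as the paper's: decompose the excess error via the original-space normal equations and the Pythagorean theorem, use a coarse $(1\pm 1/2)$ $\ell_2$-subspace embedding drawn from the same sketch, and control the residual cross-term $\matU^T\matS^T\matS\b^\perp$ via the approximate-matrix-multiplication bound of Theorem~\ref{thm:jlamp} applied with the $(O(\sqrt{\eps}/d),\delta,2)$-JL moment property (Theorem~\ref{thm:tz}). The only variation is in the middle step: you bound $\TNormS{\matS\matA\Delta}=-\langle\matS\matA\Delta,\matS\b^\perp\rangle$ directly via sketch-space orthogonality and Cauchy--Schwarz and then transfer to $\TNorm{\matA\Delta}$ through the lower side of the embedding, while the paper bounds $\TNorm{\y'-\y}\le 2\TNorm{\matU^T\matS^T\matS(\matU\y-\b)}$ via the triangle inequality on $\matI_d-\matU^T\matS^T\matS\matU$ together with the sketch normal equations---same ingredients, same constants up to trivial factors, same conclusion.
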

\begin{proof}
Let $\x'$ be $\textrm{argmin}_{\x' \in \mathbb{R}^d} \|\matS(\matA \x' -\b)\|_2$, and 
let $\x$ be $\textrm{argmin}_{\x \in \mathbb{R}^d} \|\matA \x-\b\|_2$. It will be useful to reparameterize the problem
in terms of an orthonormal basis $\matU$ for the column space of $\matA$. Let $\matU \y' = \matA \x'$ and
$\matU \y = \matA \x$. 

Because of the normal equations, we may apply the Pythagorean theorem,
$$\|\matA\x'-\b\|_2^2 = \|\matA\x-\b\|_2^2 + \|\matA\x' - \matA\x \|_2^2,$$
which in our new parameterization is,
$$\|\matU\y'-\b\|_2^2 = \|\matU \y - \b\|_2^2 + \|\matU(\y'-\y)\|_2^2.$$
It suffices to show
$\|\matU(\y' - \y)\|_2^2 = O(\eps)\|\matU\y-\b\|_2^2$, as then the theorem will follow by rescaling $\eps$ by a constant 
factor. Since $\matU$ has orthonormal columns, it suffices to show $\|\y'-\y\|_2^2 = O(\eps) \|\matU\y-\b\|_2^2.$

Conditioned on $\matS$ being a $(1 \pm 1/2)$ $\ell_2$-subspace embedding,
which by Theorem \ref{thm:mmnn} occurs with probability $.999$ for an $\matS$ with an appropriate $O(d^2/\eps)$ number
of rows, we have 
\begin{eqnarray}\label{eqn:seReg}
\|\matU^T \matS^T \matS \matU - \matI_d\|_2 \leq \frac{1}{2}.
\end{eqnarray}

Hence,
\begin{eqnarray*}
\|\y'-\y\|_2 & \leq & \|\matU^T \matS^T \matS \matU(\y'-\y)\|_2 + \|\matU^T \matS^T \matS \matU (\y'-\y) - \y'-\y\|_2\\
& \leq & \|\matU^T \matS^T \matS \matU(\y'-\y)\|_2 + \|\matU^T \matS^T \matS \matU - \matI_d\|_2 \cdot \|(\y'-\y)\|_2\\
& \leq & \|\matU^T \matS^T \matS \matU(\y'-\y)\|_2 + \frac{1}{2} \cdot \|\y'-\y\|_2,
\end{eqnarray*}
where the first inequality is the triangle inequality, the second inequality uses the sub-multiplicativity of the spectral norm, 
and the third inequality uses (\ref{eqn:seReg}). Rearranging, we have
\begin{eqnarray}\label{eqn:apmp}
\|\y'-\y\|_2 \leq 2 \|\matU^T \matS^T \matS \matU(\y'-\y)\|_2.
\end{eqnarray}
By the normal equations in the sketch space, 
$$\matU^T \matS^T \matS \matU \y' = \matU^T \matS^T \matS \b,$$
and so plugging into (\ref{eqn:apmp}),
\begin{eqnarray}\label{eqn:apmp2}
\|\y'-\y\|_2 \leq 2 \|\matU^T \matS^T \matS (\matU\y - \b)\|_2.
\end{eqnarray}
By the normal equations in the original space, $\matU^T (\matU\y-\b) = \mat0_{\textrm{rank}(\matA) \times 1}$.
By Theorem \ref{thm:tz}, $\matS$ has the $(\eps, \delta, 2)$-JL moment property, and so by Theorem \ref{thm:jlamp},
$$\Pr_{\matS} \left [\FNorm{\matU^T \matS^T \matS (\matU\y-\b)} 
> 3 \frac{\sqrt{\eps}}{d} \FNorm{\matU} \FNorm{\matU\y-\b} \right ] \leq \frac{1}{1000}.$$
Since $\FNorm{\matU} \leq \sqrt{d}$, it follows that with probability $.999$, 
$\FNorm{\matU^T \matS^T \matS (\matU\y-\b)} \leq 3 \sqrt{\eps} \|\matU\y-\b\|_2$, and plugging into (\ref{eqn:apmp2}),
together with a union bound over the two probability $.999$ events, completes the proof. 
\end{proof}

\subsection{Machine precision regression}\label{sec:machinePrecision}
Here we show how to reduce the dependence on $\eps$ to logarithmic in the regression application, 
following the approaches in \cite{RT08,amt10,CW13}. 

A classical approach to finding $\min_{\x} \norm{\matA \x- \b}$
is to solve the normal equations 
$\matA^\top \matA \x = \matA^\top \b$ via Gaussian elimination;
for $\matA\in\R^{n\times r}$ and $\b\in\R^{n\times 1}$,
this requires $O(\nnz(\matA))$ time to
form $\matA^\top \b$, $O(r\nnz(\matA))$ time to form $\matA^\top \matA$,
and $O(r^3)$ time to solve the resulting linear systems.
(Another method is to factor $\matA= \matQ \matW$,
where $\matQ$ has orthonormal columns and $\matW$ is
upper triangular; this typically trades a slowdown for a higher-quality solution.)

Another approach to regression is to apply an iterative method
from the general class of Krylov or conjugate-gradient type algorithms to a 
pre-conditioned version of the problem. In such methods,
an estimate $\x^{(m)}$ of a solution is maintained,
for iterations $m=0,1\ldots$,
using data obtained from previous iterations.
The convergence of these methods depends
on the \emph{condition number}
$\kappa(\matA^\top \matA)= \frac{\sup_{\x,\norm{\x}=1} \norm{\matA \x}^2}{\inf_{\x,\norm{\x}=1} \norm{\matA \x}^2}$
from the input matrix.
A classical result (\cite{Luen} via \cite{MSM} or Theorem 10.2.6,\cite{GvL}),
is that
\begin{equation}\label{eq:CG accuracy}
\frac{\norm{\matA( \x^{(m)} - \x^*)}^2}{\norm{\matA(\x^{(0)} - \x^*)}^2}
	\le 2\left(\frac{\sqrt{\kappa(\matA^\top \matA)} - 1}{\sqrt{\kappa(\matA^\top \matA)} + 1}\right)^m.
\end{equation}
Thus the running time of CG-like methods, such as {\tt CGNR} \cite{GvL},
depends on the (unknown)
condition number. The running time per iteration is the time needed
to compute matrix vector products $\ve = \matA \x$ and $\matA^T \ve$,
plus $O(n+d)$ for vector arithmetic, or $O(\nnz(\matA))$.

Pre-conditioning reduces the number of iterations needed for a given accuracy:
suppose
for a non-singular matrix $\matR$, the condition number $\kappa(\matR^\top \matA^\top \matA \matR)$
is small. Then a conjugate gradient method applied to $\matA \matR$ would converge quickly,
and moreover for iterate $\y^{(m)}$ that has error $\alpha^{(m)} \equiv \norm{\matA \matR \y^{(m)} - \b}$
small, the corresponding $\x\gets \matR \y^{(m)}$ would have $\norm{\matA \x- \b} = \alpha^{(m)}$.
The running time per iteration would have an
additional $O(d^2)$ for computing products involving $\matR$.

Suppose we apply a sparse subspace embedding matrix $\matS$
to $\matA$, and $\matR$ is computed so that $\matS \matA \matR$  has
orthonormal columns, e.g., via a QR-decomposition of $\matS \matA$. 
If $\matS$ is an $\ell_2$-subspace
embedding matrix to constant accuracy $\eps_0$,
for all unit $\x\in\R^d$,
$\norm{\matA \matR \x}^2=(1\pm\eps_0)\norm{\matS \matA \matR \x}^2= (1\pm \eps_0)^2$.
It follows that the condition number
\[
\kappa(\matR^\top \matA^\top \matA \matR)
	\le \frac{(1+\eps_0)^2}{(1-\eps_0)^2}.
\]
That is, $\matA \matR$ is well-conditioned. Plugging this
bound into \eqref{eq:CG accuracy}, after $m$ iterations
$\norm{\matA \matR(\x^{(m)} - \x^*)}^2$ is at most $2\eps_0^m$
times its starting value.

Thus starting with a solution $\x^{(0)}$ with 
relative error at most 1, and applying $1+\log(1/\eps)$ iterations
of a CG-like method with $\eps_0 = 1/e$, the relative error is reduced to $\eps$
and the work is $O((\nnz(\matA)+ d^2)\log(1/\eps))$, 
plus the work to find $\matR$. We have

\begin{theorem}\label{thm:it reg}
The $\ell_2$-regression problem can be solved up to a $(1+\eps)$-factor with probability at least 
$99/100$ in
\[
O(\nnz(\matA)\log (n/\eps) + d^3 \log^2 d + d^2\log(1/\eps))
\]
time.
\end{theorem}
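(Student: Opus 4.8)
The plan is to carry out the preconditioned iterative scheme outlined just above the statement: use a sparse subspace embedding to build, in $O(\nnz(\matA)) + \poly(d)$ time, a matrix $\matR$ for which $\matA\matR$ has constant condition number, and then run a conjugate-gradient-type method on the well-conditioned instance $\matA\matR$. All the randomness will sit in a single subspace-embedding event; the iteration itself is deterministic.

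\emph{Building the preconditioner.} Fix a sufficiently small constant $\eps_0$ and let $\matS$ be a sparse embedding matrix; by Theorem \ref{thm:mmnn} (with a small constant failure probability, absorbed into the constant $99/100$) one may take $\matS$ to have $O(d^2)$ rows, and then with the prescribed probability $\matS$ is a $(1\pm\eps_0)$ $\ell_2$-subspace embedding for the column space of $[\matA,\b]$, while $\matS\matA$ and $\matS\b$ are computable in $O(\nnz(\matA))$ time. Condition on this event. A one-stage sketch leaves an $O(d^2)\times d$ matrix, whose QR factorization would cost $O(d^4)$ time; to avoid this I would compose $\matS$ with a Subsampled Randomized Hadamard Transform $\matS'$ (Theorem \ref{thm:srht}) of $O(d\log d)$ rows, applied to the $O(d^2)$-row matrix $\matS\matA$, which costs $O(d^3\polylog d)$ time and makes $\matS'\matS$ a $(1\pm O(\eps_0))$ subspace embedding for $\matA$. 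Then compute a QR factorization $\matS'\matS\matA = \matQ\matW$ and set $\matR = \matW^{-1}$ (or $\matW^{\dagger}$ if $\matA$ is rank-deficient), so that $\matS'\matS\matA\matR = \matQ$ has orthonormal columns; this again costs $O(d^3\polylog d)$ time, which accounts for the $d^3\log^2 d$ term.

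\emph{Well-conditioning and the iteration.} Since $\matS'\matS$ is a $(1\pm O(\eps_0))$ subspace embedding for $\matA$ and $\matS'\matS\matA\matR = \matQ$ has orthonormal columns, every unit vector $\x$ satisfies $\|\matA\matR\x\|_2^2 = (1\pm O(\eps_0))\|\matS'\matS\matA\matR\x\|_2^2 = (1\pm O(\eps_0))$, hence $\kappa(\matR^{\top}\matA^{\top}\matA\matR) = O(1)$. Plugging $\kappa = O(1)$ into \eqref{eq:CG accuracy}, a \texttt{CGNR}-type method applied to $\matA\matR$ shrinks $\|\matA\matR(\y^{(m)}-\y^{*})\|_2^2$ by a constant factor per step, at per-step cost $O(\nnz(\matA))$ for products with $\matA,\matA^{\top}$, plus $O(d^2)$ for products with $\matR,\matR^{\top}$, plus $O(n)$ for vector arithmetic. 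For the warm start I would take $\x^{(0)} = \argmin_{\x}\|\matS\matA\x-\matS\b\|_2$, computable in $\poly(d)$ time; since $\matS$ embeds $[\matA,\b]$, the argument of Theorem \ref{thm:regression} gives $\|\matA\x^{(0)}-\b\|_2 = O(\|\matA\x^{*}-\b\|_2)$, and by the Pythagorean identity $\|\matA(\x^{(0)}-\x^{*})\|_2^2 = \|\matA\x^{(0)}-\b\|_2^2 - \|\matA\x^{*}-\b\|_2^2 = O(\|\matA\x^{*}-\b\|_2^2)$, i.e. the initial relative error is $O(1)$; even a trivial starting point has relative error at most $\poly(n)$ in the bit model, so in either case $O(\log(n/\eps))$ iterations bring the relative error of $\matA\matR\y^{(m)}$ below $\eps$, after which $\x \gets \matR\y^{(m)}$ solves the $\ell_2$-regression problem. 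Summing $O(\nnz(\matA))$ for applying $\matS$, $O(d^3\polylog d)$ for re-sketching, factoring and inverting, and $O((\nnz(\matA)+d^2)\log(n/\eps))$ for the iterations yields the claimed running time.

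\emph{The main obstacle.} The conceptual content — preconditioning via a subspace embedding plus \eqref{eq:CG accuracy} — is routine; the real work is (i) the two-stage sketch, which is what keeps the non-$n$ cost at $O(d^3\polylog d)$ rather than $O(d^4)$, and (ii) pinning down the iteration count: obtaining the clean $\log(1/\eps)$ rate requires the $O(1)$-relative-error warm start together with the Pythagorean argument relating its error to the optimum, and only by settling for a coarser starting bound (or accounting for bit complexity) does one land on the $\log(n/\eps)$ appearing in the statement.
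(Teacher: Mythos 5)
Your proof is correct and follows essentially the same route as the paper's one-paragraph outline: build a preconditioner $\matR$ from a QR factorization of a sketched $\matA$, then run a CG-type iteration on the well-conditioned $\matA\matR$, with a sketched-regression warm start so that $O(\log(1/\eps))$ iterations suffice. You also fill in the detail the paper glosses over, namely the two-stage sketch (CountSketch followed by SRHT) that keeps the factorization cost at $O(d^3\polylog d)$ rather than the $O(d^4)$ you would get from a QR of the $O(d^2)\times d$ CountSketch output; this is exactly what is needed to match the $d^3\log^2 d$ term in the theorem.

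One small point of precision: you say the warm start $\x^{(0)} = \argmin_\x\|\matS\matA\x - \matS\b\|_2$ is ``computable in $\poly(d)$ time.'' Done naively on the $O(d^2)\times d$ matrix $\matS\matA$, that solve costs $\Theta(d^4)$ and would dominate the $d^3\log^2 d$ budget. The fix is the one your setup already supplies: replace $\matS$ by $\matS'\matS$ in the warm start, i.e.\ take $\x^{(0)} = \argmin_\x\|\matS'\matS\matA\x - \matS'\matS\b\|_2 = \matW^{-1}\matQ^T\matS'\matS\b$, which costs only $O(d^2)$ given the QR factorization $\matS'\matS\matA = \matQ\matW$ you have already computed. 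Since $\matS'\matS$ is still a constant-distortion subspace embedding for $[\matA,\b]$, the $O(1)$-relative-error guarantee for $\x^{(0)}$ goes through unchanged, and the overall time bound is achieved. With that tweak the argument is complete; your observation that the warm start actually yields $\log(1/\eps)$ rather than $\log(n/\eps)$ iterations, so the theorem's stated bound is if anything slightly generous, is also correct.
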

%
%

The matrix $\matA \matR$ is so well-conditioned that a simple iterative improvement scheme
has the same running time up to a constant factor. Again start with a solution $\x^{(0)}$ with 
relative error at most 1, and for $m\ge 0$,
let $\x^{(m+1)} \gets \x^{(m)} + \matR^\top \matA^\top (\b - \matA \matR \x^{(m)})$.
Then using the normal equations,
\begin{align*}
\matA \matR(\x^{(m+1)} - \x^*)
	  & = \matA \matR(\x^{(m)} + \matR^\top \matA^\top (\b - \matA \matR \x^{(m)}) - \x^*)
	\\ & = (\matA\matR - \matA \matR \matR^\top \matA^\top \matA \matR ) (\x^{(m)} - \x^*)
	\\ & = \matU(\mat\Sigma - \mat\Sigma^3)\matV^\top (\x^{(m)} - \x^*),
\end{align*}
where $\matA \matR= \matU \mat\Sigma \matV^\top$ is the SVD of $\matA \matR$.

For all unit $\x\in\R^d$,
$\norm{\matA \matR \x}^2 = (1\pm \eps_0)^2$, and so
we have that all singular values $\sigma_i$ of $\matA \matR$ are $1\pm\eps_0$,
and the diagonal entries of $\mat\Sigma - \mat\Sigma^3$
are all at most $\sigma_i(1- (1 - \varepsilon_0)^2) \le \sigma_i 3\varepsilon_0$ for 
$\varepsilon_0\le 1$. Hence 
\begin{align*}
\norm{\matA \matR(\x^{(m+1)} - \x^*)}
	 & \le 3\eps_0 \norm{\matA \matR(\x^{(m)} - \x^*)},
\end{align*}
and by choosing $\varepsilon_0= 1/2$, say, $O(\log(1/\eps))$ iterations suffice for this
scheme also to attain $\eps$ relative error.

\subsection{Polynomial fitting}\label{sec:polynomialFit}
A natural question is if {\it additional structure in $\matA$} can be non-trivially exploited
to further accelerate the running time of $\ell_2$-regression. Given that $\matA$ is structured,
perhaps we can run in time even faster than $O(\nnz(\matA))$. This was studied in \cite{asw13,anw14},
and we shall present the result in \cite{asw13}. 

Perhaps one of the oldest
regression problems is polynomial fitting. In this case, given a set of samples 
$(z_i, b_i) \in \mathbb{R} \times \mathbb{R}$, for $i = 1, 2, \ldots, n$, we would like to
choose coefficients $\beta_0, \beta_1, \ldots, \beta_q$ of a degree-$q$ univariate polynomial
$b = \sum_{i=1}^q \beta_i z^i$ which best fits our samples. Setting this up as a regression
problem, the corresponding matrix $\matA$ is $n \times (q+1)$ and is a {\it Vandermonde matrix}.
Despite the fact that $\matA$ may be dense, we could hope to solve regression in time faster
than $O(\nnz(\matA)) = O(nq)$ using its Vandermonde structure. 

We now describe the problem more precisely, starting with a definition. 

\begin{definition}(Vandermonde Matrix) Let $x_0, x_1, \ldots, x_{n-1}$ be real numbers. The Vandermonde matrix,
denoted $\matV_{n,q}(x_0, x_1, \ldots, x_{n-1})$, has the form:
\[ \matV_{n,q}(x_1, x_1, \ldots, x_{n-1}) = \left ( \begin{array}{cccc}
1 & x_0 & \ldots & x_0^{q-1}\\
1 & x_1 & \ldots & x_1^{q-1}\\
\ldots & \ldots & \ldots & \ldots\\
1 & x_{n-1} & \ldots & x_{n-1}^{q-1} \end{array} \right ) \]
\end{definition}

Vandermonde matrices of dimension $n \times q$ require only $O(n)$
implicit storage and admit $O(n~\log^2 q)$ matrix-vector multiplication time (see, e.g., Theorem 2.11 of \cite{Tang2004}). 
It is also possible to consider block-Vandermonde matrices as in \cite{asw13}; for simplicity
we will only focus on the simplest polynomial fitting problem here, in which Vandermonde
matrices suffice for the discussion. 

We consider regression problems of the form $\min_{\x \in \mathbb{R}^q} \|\matV_{n,q}\x-\b\|_2$, or the approximate version,
where we would like to output an $\x' \in \mathbb{R}^q$ for which 
$$\|\matV_{n,q}\x'-\b\|_2 \leq (1+\eps) \min_{\x \in \mathbb{R}^d} \|\matV_{n,q}\x-\b\|_2.$$
We call this the {\it $\ell_2$-Polynomial Fitting Problem}.

\begin{theorem}($\ell_2$-Polynomial Fitting)\cite{asw13}\label{thm:FastVand}
There is an algorithm that solves the $\ell_2$-Polynomial Fitting Problem
in time $O(n \log^2~q ) + \poly(q\varepsilon^{-1})$. By
combining sketching methods with preconditioned iterative solvers, we can
also obtain logarithmic dependence on $\varepsilon$.  
\end{theorem}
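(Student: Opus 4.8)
The plan is to reduce the overconstrained $n\times(q+1)$ instance $\min_{\x}\|\matV_{n,q}\x-\b\|_2$ to a tiny $\poly(q/\eps)\times q$ least-squares problem by a sparse subspace embedding, and to observe that the single-nonzero-per-column structure of a sparse embedding matrix interacts with the Vandermonde structure so that the sketch can be \emph{formed} in $O(n\log^2 q)$ time instead of the naive $O(\nnz(\matV_{n,q}))=O(nq)$. After that, everything is already in the toolbox: solving the reduced problem, and (for the $\log(1/\eps)$ variant) using the sketch as a preconditioner together with fast Vandermonde matrix--vector products.

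First I would set $\matA=[\matV_{n,q},\b]$, a matrix with $q+1$ columns, and let $\matS$ be a sparse embedding matrix with $r=\poly(q/\eps)$ rows, chosen via Theorem~\ref{thm:mmnn} (or Theorem~\ref{thm:1eps} for the sharper $\eps$-dependence) so that with probability $.99$ it is a $(1\pm\eps)$ $\ell_2$-subspace embedding for the column space of $\matA$. Then, exactly as in the proof of Theorem~\ref{thm:regression}, the exact minimizer $\x'$ of $\min_{\x}\|\matS\matV_{n,q}\x-\matS\b\|_2$ satisfies $\|\matV_{n,q}\x'-\b\|_2\le(1+\eps)\min_{\x}\|\matV_{n,q}\x-\b\|_2$. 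Since $\matS\matV_{n,q}$ has only $r=\poly(q/\eps)$ rows and $q$ columns and $\matS\b\in\R^r$, this reduced regression problem is solved in $\poly(q/\eps)$ time by forming and solving its normal equations; computing $\matS\b$ costs $O(n)$.

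The main point, and the step I expect to be the real obstacle, is forming $\matS\matV_{n,q}$ in $O(n\log^2 q)+\poly(q/\eps)$ time. Writing $S_t=\{i:h(i)=t\}$ for the hash buckets defining $\matS$ (with signs $\sigma$), the $t$-th row of $\matS\matV_{n,q}$ is $\big(\sum_{i\in S_t}\sigma(i)\,x_i^{\,j}\big)_{j=0}^{q-1}$, i.e.\ the weighted power sums of the point set $\{x_i:i\in S_t\}$ up to degree $q-1$. This is exactly a transposed multipoint polynomial evaluation: via fast polynomial arithmetic one forms $\prod_{i\in S_t}(1-x_iz)$ and the matching numerator $\sum_{i\in S_t}\sigma(i)\prod_{k\ne i}(1-x_kz)$, truncates modulo $z^{q}$, and divides; when $|S_t|>q$ one splits $S_t$ into $\lceil|S_t|/q\rceil$ groups of at most $q$ points and applies Tellegen's transposition principle to the standard $O(q\log^2 q)$ multipoint-evaluation algorithm. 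This produces the row in $O((|S_t|+q)\log^2 q)$ time, so summing over the $r=\poly(q/\eps)$ buckets and using $\sum_t|S_t|=n$ gives total time $O\big((n+rq)\log^2 q\big)=O(n\log^2 q)+\poly(q/\eps)$. Combined with the previous paragraph, this proves the first claim; the technical work lies entirely in setting up the fast polynomial machinery and bookkeeping the bucket-size cases.

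For the logarithmic-in-$\eps$ statement I would follow Section~\ref{sec:machinePrecision}: run the above with $\eps$ replaced by a fixed constant, take a QR factorization of the small matrix $\matS\matV_{n,q}$ to obtain a $q\times q$ matrix $\matR$ for which $\matV_{n,q}\matR$ is well conditioned, and then run $O(\log(1/\eps))$ iterations of a preconditioned CG-type method (or the iterative-refinement scheme of Section~\ref{sec:machinePrecision}). Each iteration needs one product $\matV_{n,q}\u$ and one product $\matV_{n,q}^{\top}\ve$, each computable in $O(n\log^2 q)$ time by fast multipoint polynomial evaluation and its transpose, plus $O(q^2)$ work for the multiplications by $\matR$. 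This yields running time $O(n\log^2 q)\log(1/\eps)+\poly(q)$, establishing the second claim.
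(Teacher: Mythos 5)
Your proposal is correct and follows essentially the same route the paper sketches: reduce via a sparse $\ell_2$-subspace embedding, and exploit the single-nonzero-per-column structure to compute each hash bucket's row of $\matS\matV_{n,q}$ as a transposed (power-sum) Vandermonde product via fast polynomial arithmetic, summing to $O(n\log^2 q)+\poly(q/\eps)$, with the $\log(1/\eps)$ variant obtained by the preconditioned iterative scheme of \S\ref{sec:machinePrecision}. You fill in the details of the transposed multipoint evaluation (generating-function identity, Tellegen transposition, splitting oversized buckets) that the paper delegates to \cite{asw13}, but the underlying argument is the same.
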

Note that since $\nnz(\matV_{n,q}) = nq$ and the running time of Theorem \ref{thm:FastVand}
is $O(n\log^2 q)$, this provides a sketching approach that operates faster than ``input-sparsity'' time. It
is also possible to extend Theorem \ref{thm:FastVand} to $\ell_1$-regression, see \cite{asw13} for details. 

The basic intuition behind Theorem \ref{thm:FastVand} is to try to compute $\matS \cdot \matV_{n,q}$ for
a sparse embedding matrix $\matS$. Naively, this would take $O(nq)$ time. However, since $\matS$ contains
a single non-zero entry per column, we can actually think of the product $\mat S \cdot \matV_{n,q}$ as
$r$ {\it vector-matrix products} $x^1 \cdot \matV^1_{n,q}, \ldots, x^r \cdot \matV^r_{n,q}$, where $x^i$ is the vector
with coordinates $j \in [n]$ for which $h(j) = i$, and $\matV^i_{n,q}$ is the row-submatrix of $\matV_{n,q}$
consisting only of those rows $j \in [n]$ for which $h(j) = i$.  To compute each of these vector-matrix products,
we can now appeal to the fast matrix-vector multiplication algorithm associated with Vandermonde matrices, which
is similar to the Fast Fourier Transform. Thus, we can compute each $x^i \cdot \matV^i_{n,q}$ in time proportional
to the number of rows of $\matV^i_{n,q}$, times a factor of $\log^2 q$. In total we can compute all matrix-vector
products in $O(n \log^2 q)$ time, thereby computing $\matS \matV_{n,q}$, which we know is an $\ell_2$-subspace
embedding. We can also compute $\matS \b$ in $O(n)$ time, and now can solve the sketched problem
$\min_x \|\matS\matV_{n,q}x-\matS\b\|_2$ in $\poly(q/\eps)$ time. 

\section{Least Absolute Deviation Regression}\label{chap:robust}
While least squares regression is arguably the most used form of regression in practice, it
has certain non-robustness properties that make it unsuitable for some applications. For
example, oftentimes the noise in a regression problem is drawn from a normal distribution, in 
which case least squares regression would work quite well, but if there is noise due to 
measurement error or a different underlying noise distribution, the least squares 
regression solution may overfit this noise since the cost function squares each of its 
summands. 

A more robust alternative is least absolute deviation regression, or $\ell_1$-regression, 
$\min_{\x} \|\matA \x- \b\|_1 = \sum_{i=1}^n |\b_i - \langle \matA_{i,*}, \x \rangle |$. The $\ell_1$-norm
is much less well-behaved than the $\ell_2$-norm, e.g., it is not invariant under rotation,
not everywhere differentiable, etc. There is also no closed-form solution for an
$\ell_1$-regression problem in general, as a special case of it is the geometric median
or Fermat-Weber problem, for which there is no closed form solution. 

Nevertheless, $\ell_1$-regression is much less sensitive to outliers. It is also the maximum
likelihood estimator (MLE) when the noise in the regression problem is i.i.d. Laplacian of
zero median. In this section we will focus on recent advances in solving $\ell_1$-regression
using sketching. To do so, we first describe a sampling-based solution. We note that many of the
results in this section generalize to $\ell_p$-regression for $p > 1$. See 
\cite{CW13,MM13,wz13} for works on this. This general line of work was introduced by
Clarkson \cite{Cla05}, though our exposition will mostly follow that of \cite{DDHKM09} and the
sketching speedups built on top of it \cite{sw11,CDMMMW13,MM13,wz13}. 

{\bf Section Overview:} In \S\ref{sec:sampling-Based} we show how one can adapt the idea of leverage score sampling in \S\ref{sec:leverage} for $\ell_2$ to provide an initial sampling-based algorithm for $\ell_1$-regression. In \S\ref{sec:l1Subspace} we introduce the notion of a subspace embedding for the $\ell_1$-norm and show how if we had such an object, it could be used in the context of $\ell_1$-regression. We postpone one technical detail in this application to \S\ref{sec:gaussian}, which shows how to combine $\ell_1$-subspace embeddings together with Gaussian sketching to make the technique of using $\ell_1$-subspace embeddings in \S\ref{sec:l1Subspace} efficient. In \S\ref{sec:cauchy} we turn to the task of constructing $\ell_1$-subspace embeddings. We do this using Cauchy random variables. This leads to an $\ell_1$-regression algorithm running in $O(nd^2\log d) + \poly(d/\eps)$. In \S\ref{sec:exponential} we then speed this up even further by replacing the dense matrix of Cauchy random variables in the previous section with a product of a sparse $\ell_2$-subspace embedding and a diagonal matrix of exponential random variables. This leads to an overall time of $O(\nnz(A) \log ) + \poly(d/\eps)$. Finally, in \S\ref{sec:hyperplane} we discuss one application of $\ell_1$-regression to $\ell_1$-Hyperplane Approximation. 

\subsection{Sampling-Based solution}\label{sec:sampling-Based}
One of the most natural ways of solving a regression problem is by sampling. Let us augment
the $n \times d$ design matrix $\matA$ in the regression problem to an $n \times (d+1)$ matrix
by including the $\b$ vector as the $(d+1)$-st column. 

Let $p \in [0,1]^n$. 
Suppose we form a submatrix of $\matA$ by including each row of $\matA$ in the submatrix independently
with probability $p_i$. Let us write this as $\matS \cdot \matA$, where $\matS$ is a diagonal $n \times n$ matrix
with $\matS_{i,i} = 1/p_i$ if row $i$ was included in the sample, and $\matS_{i,i} = 0$ otherwise. Then
${\bf E}[\matS \cdot \matA] = \matA$, and so for any fixed $\x$, ${\bf E}[\matS \cdot \matA \x] = \matA \x$. 

What we would like is that for all 
\begin{eqnarray}\label{eqn:l1ose}
\forall \x \in \mathbb{R}^{d+1}, \ \|\matS \cdot \matA \x\|_1 = (1 \pm \eps) \|\matA \x\|_1,
\end{eqnarray}
that is, 
$\matS$ is an oblivious subspace embedding for $\matA$. Note that although $\matS \cdot \matA$ is an $n \times d$
matrix, in expectation it has only $r = \sum_{i=1}^n p_i$ non-zero rows, and so we can throw away
all of the zero rows. It follows that if $r$ is small,
one could then afford to directly solve the {\it constrained} regression problem:
$$\min_{\x \in \mathbb{R}^{d+1}, \x_{d+1} = -1} \|\matS \matA \x\|_1,$$
using linear programming. This would now take time $\poly(r,d)$, which is a significant savings over
solving the problem directly, e.g., if $r$ is much smaller or independent of $n$. Note that the constraint
$\x_{d+1} = -1$ can be directly incorporated into a linear program for solving this problem, and only
slightly increases its complexity. 

We are left with the task of showing (\ref{eqn:l1ose}). To do so, fix a particular vector
$\x \in \mathbb{R}^d$. Define the random variable $Z_i = |\matA_{i,*} \x|/p_i$, so that for $Z = \sum_{i=1}^n Z_i$,
we have ${\bf E}[Z] = \|\matA \x\|_1$. We would like to understand how large each $Z_i$ can be, and
what the variance of $Z$ is. We would like these quantities to be small, which at first glance
seems hard since $p$ cannot depend on $\x$. 

One way of bounding $Z_i$ is to write $\matA = \matU \cdot \mat\tau$ for an $n \times d$ matrix $\matU$ and
a $d \times d$ change of basis matrix $\mat\tau$. Since $\matU$ does not depend on any particular
vector $\x$, one could hope to define $p$ in terms of $\matU$ for a particularly good choice of basis
$\matU$ for the column space of $\matA$. Note that one has
\begin{eqnarray}\label{eqn:bound}
|\matA_{i,*},\x|/p_i = |\matU_{i,*} \mat\tau \x|/p_i \leq \|\matU_{i,*}\|_1 \cdot \|\mat\tau \x\|_{\infty} / p_i,
\end{eqnarray}
where the inequality follows by H\"older's inequality. 

A natural choice at this point to bound the RHS of (\ref{eqn:bound}) is to define
$p_i = \min(1, r \cdot \frac{\|\matU_{i,*}\|}{\sum_{j=1}^n \|\matU_{j,*}\|_1})$, where recall $r$ is about the expected
number of rows we wish to sample (the expected number of rows sampled may be less than $r$ since $p_i$ is a 
probability and so is upper-bounded by $1$). For later purposes, it will be helpful to instead
allow 
$$p_i \geq \min(1, \zeta \cdot r \cdot \frac{\|\matU_{i,*}\|}{\sum_{j=1}^n \|\matU_{j,*}\|_1}),$$ 
where $\zeta \in (0,1]$
can be thought of as a relaxation parameter which will allow for more
efficient algorithms. 

Note that for those $i$ for which 
$\zeta \cdot r \cdot \frac{\|\matU_{i,*}\|}{\sum_{j=1}^n \|\matU_{j,*}\|_1} \geq 1$, 
the $i$-th row $\matA_{i,*}$ will always be included
in the sample, and therefore will not affect the variance of the sampling process. 

Let us now consider those $i$ for which 
$ \zeta \cdot r \cdot \frac{\|\matU_{i,*}\|}{\sum_{j=1}^n \|\matU_{j,*}\|_1} < 1$. For such $i$ 
one has
\begin{eqnarray}\label{eqn:wcb}
Z_i = |\matA_{i,*}, \x|/p_i \leq (\sum_{j=1}^n \|\matU_{j,*}\|_1) \cdot \|\mat\tau \x\|_{\infty}/(r \zeta)
= \alpha \cdot \beta \|\matA \x\|_1 / (r \zeta),
\end{eqnarray}
where $\alpha = \sum_{j=1}^n \|\matU_{j,*}\|_1$ and $\beta = \sup_{\x} \frac{\|\mat\tau \x\|_{\infty}}{\|\matA \x\|_1}.$

In order for $Z_i$ to never be too large, 
we would like to choose a $\matU$ so that $\alpha$ and $\beta$ are as small as possible. This
motivates the following definition of a well-conditioned basis for the $1$-norm. For ease of notation,
let $\|\matU\|_1 = \sum_{j=1}^n \|\matU_{j,*}\|_1$. 

\begin{definition}(Well-conditioned basis for the $1$-norm)\label{def:wcb}(see \cite{DDHKM09})
Let $\matA$ be an $n \times d$ matrix. An $n \times d$ matrix $\matU$ is an $(\alpha, \beta,1)$-well conditioned
basis for the column space of $\matA$ if (1) $\|\matU\|_1 \leq \alpha$, and (2) for all $\x \in \mathbb{R}^d$,
$\|\x\|_{\infty} \leq \beta \|\matU \x\|_1$. 
\end{definition}
Note that our definition of $\alpha$ and $\beta$ above coincide with that in Definition \ref{def:wcb},
in particular the definition of $\beta$, 
since $\|\matU(\mat\tau \x)\|_1 = \|\matA \x\|_1$ by definition of $\matU$ and $\mat\tau$. 

Fortunately, well-conditioned bases with $\alpha, \beta \leq \poly(d)$ exist and can be efficiently computed. We
will sometimes simply refer to $\matU$ as a well-conditioned basis if $\alpha$ and $\beta$ are both bounded by $\poly(d)$.
That such bases exist is due to a theorem of Auerbach \cite{a30,ak08}, 
which shows that $\alpha = d$ and $\beta = 1$ suffice. 
However,
we are not aware of an efficient algorithm which achieves these values. The
first efficient algorithm for finding a well-conditioned basis is due to Clarkson \cite{Cla05}, 
who achieved a running time of $O(nd^5 \log n) + \poly(d)$. 
The same running time was achieved by 
Dasgupta et al. \cite{DDHKM09}, who improved the concrete values of $\alpha$ and $\beta$. 
We will see that one can in fact compute such bases much faster using sketching techniques below, but let us first
see how these results already suffice to solve $\ell_1$-regression in $O(nd^5 \log n) + \poly(d/\eps)$ time. 

Returning to (\ref{eqn:wcb}), we have the bound
$$Z_i = |\matA_{i,*}, \x|/p_i \leq \poly(d) \|\matA \x\|_1/(r \zeta).$$
Using this bound together with independence of the sampled rows, 
\begin{eqnarray*}
{\bf Var}[Z] & = & \sum_{i=1}^n {\bf Var}[Z_i] = \sum_{i \mid p_i < 1} {\bf Var}[Z_i] \leq \sum_{i \mid p_i < 1} {\bf E}[Z_i^2]
= \sum_{i \mid p_i < 1} \frac{|\matA_{i,*}, \x|^2}{p_i}\\
& \leq & \max_{i \mid p_i < 1} \frac{|\matA_{i,*}, \x|}{p_i} \sum_{i \mid p_i < 1} |\matA_{i,*}, \x|\\
& \leq & \frac{\poly(d) \|\matA \x\|_1^2}{r \zeta}.
\end{eqnarray*}
We have computed ${\bf E}[Z]$ and bounded ${\bf Var}[Z]$ as well as $\max_{i \mid p_i < 1} Z_i$, 
and can now use strong tail bounds to bound the deviation of
$Z$ from its expectation. We use the following tail inequalities.

\begin{theorem}[Bernstein inequality \cite{m03}]
\label{thm:bernstein}
Let $Z_i\ge 0$ be independent random variables with 
$\sum_{i}\Exp[Z_i^2]<\infty$, and define
$Z=\sum_{i}Z_i$. 
Then, for any $t>0$,
$$\Pr[Z\le \Exp[Z]-t]\le\exp\left(\frac{-t^2}{2\sum_i\Exp[Z_i^2]}\right).$$
Moreover, if $Z_i - {\bf E}[Z_i] \leq \Delta$ for all $i$, we have
$$\Pr[Z \geq {\bf E}[Z] + \gamma] \leq 
\exp \left (\frac{-\gamma^2}{2{\bf Var}[Z] + 2\gamma \Delta/3} \right ).$$
\end{theorem}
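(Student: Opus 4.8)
The plan is to apply the exponential moment (Chernoff) method to each tail separately; both bounds come from a per-summand estimate of a moment generating function together with a one-parameter optimization. For the lower-tail inequality, fix $\lambda>0$ and bound $\Pr[Z\le\Exp[Z]-t]=\Pr[e^{-\lambda Z}\ge e^{-\lambda(\Exp[Z]-t)}]\le e^{\lambda(\Exp[Z]-t)}\,\Exp[e^{-\lambda Z}]$ by Markov's inequality, then factor $\Exp[e^{-\lambda Z}]=\prod_i\Exp[e^{-\lambda Z_i}]$ by independence. The place where nonnegativity of the $Z_i$ enters is the elementary inequality $e^{-x}\le 1-x+x^2/2$, valid for $x\ge 0$ (it fails for $x<0$); applied with $x=\lambda Z_i\ge 0$ it gives $\Exp[e^{-\lambda Z_i}]\le 1-\lambda\Exp[Z_i]+\tfrac12\lambda^2\Exp[Z_i^2]\le\exp(-\lambda\Exp[Z_i]+\tfrac12\lambda^2\Exp[Z_i^2])$. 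Multiplying over $i$, the factors $e^{\pm\lambda\Exp[Z]}$ cancel and we are left with $\Pr[Z\le\Exp[Z]-t]\le\exp(-\lambda t+\tfrac12\lambda^2 S)$, where $S=\sum_i\Exp[Z_i^2]$. Choosing $\lambda=t/S$ minimizes the exponent and yields $\exp(-t^2/(2S))$, as claimed.

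For the upper-tail (Bernstein) inequality, center the variables: set $W_i=Z_i-\Exp[Z_i]$ and $W=\sum_i W_i=Z-\Exp[Z]$, so $\Exp[W_i]=0$, $W_i\le\Delta$, and $\sum_i\Exp[W_i^2]=\sum_i\Var[Z_i]=\Var[Z]$ by independence. As before, $\Pr[W\ge\gamma]\le e^{-\lambda\gamma}\prod_i\Exp[e^{\lambda W_i}]$ for any $\lambda>0$. The key step is to bound $\Exp[e^{\lambda W_i}]$ using only the one-sided bound $W_i\le\Delta$ and the second moment. I would invoke the standard fact that $\psi(u):=(e^u-1-u)/u^2$, extended by $\psi(0)=1/2$, is nondecreasing on $\mathbb{R}$: since $\lambda W_i\le\lambda\Delta$, this gives pointwise $e^{\lambda W_i}-1-\lambda W_i=(\lambda W_i)^2\psi(\lambda W_i)\le(\lambda W_i)^2\psi(\lambda\Delta)$, hence after taking expectations and using $\Exp[W_i]=0$, $\Exp[e^{\lambda W_i}]\le 1+\lambda^2\Exp[W_i^2]\,\psi(\lambda\Delta)\le\exp(\lambda^2\Exp[W_i^2]\,\psi(\lambda\Delta))$. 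Together with the termwise series estimate $\psi(u)=\sum_{k\ge2}u^{k-2}/k!\le\tfrac12(1-u/3)^{-1}$ for $0\le u<3$ (which follows from $k!\ge 2\cdot 3^{k-2}$ for $k\ge 2$), multiplying over $i$ gives $\prod_i\Exp[e^{\lambda W_i}]\le\exp\!\big(\lambda^2\Var[Z]/(2(1-\lambda\Delta/3))\big)$ whenever $\lambda\Delta<3$.

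It then remains to substitute and optimize: $\Pr[W\ge\gamma]\le\exp\!\big(-\lambda\gamma+\lambda^2\Var[Z]/(2(1-\lambda\Delta/3))\big)$, and the classical choice $\lambda=\gamma/(\Var[Z]+\gamma\Delta/3)$ — which indeed satisfies $\lambda\Delta<3$ — makes $1-\lambda\Delta/3=\Var[Z]/(\Var[Z]+\gamma\Delta/3)$, so a one-line simplification collapses the exponent to $-\gamma^2/(2\Var[Z]+2\gamma\Delta/3)$, the stated bound. The only genuinely non-routine ingredient is the moment generating function estimate in the second paragraph: obtaining a bound proportional to the variance, with the sharp constant $\tfrac13$ in the correction term, from the purely one-sided hypothesis $W_i\le\Delta$ requires the monotonicity of $\psi$ rather than a crude truncation argument; once that is available, the rest is the standard Chernoff bookkeeping and the scalar optimization.
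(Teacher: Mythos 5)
The paper states Theorem~\ref{thm:bernstein} with a citation to \cite{m03} and does not supply a proof, so there is no in-paper argument to compare against; I can only evaluate your proof on its own terms, and it is correct. Your two Chernoff arguments are exactly the standard ones: for the lower tail, the pointwise inequality $e^{-x}\le 1-x+x^2/2$ on $x\ge 0$ is where nonnegativity of the $Z_i$ is used (the hypothesis would otherwise be too weak, since no upper bound on $Z_i$ or on $\Var[Z_i]$ alone would control $\Exp[e^{-\lambda Z_i}]$), and the optimization $\lambda=t/S$ gives the claimed constant. For the upper tail, the Bennett-type step — monotonicity of $\psi(u)=(e^u-1-u)/u^2$, which is cleanest to see via the integral representation $\psi(u)=\int_0^1(1-s)e^{su}\,ds$ with derivative $\int_0^1 s(1-s)e^{su}\,ds>0$ — is indeed the non-routine ingredient; it is what lets a purely one-sided bound $W_i\le\Delta$ control the moment generating function in terms of $\Var[Z]$. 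Your bound $\psi(u)\le\tfrac12(1-u/3)^{-1}$ from $k!\ge 2\cdot 3^{k-2}$ and the choice $\lambda=\gamma/(\Var[Z]+\gamma\Delta/3)$ then yield the stated constants by direct substitution, and the constraint $\lambda\Delta<3$ is automatic for that $\lambda$, as you observe. Everything checks out.
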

%
%
Plugging our bounds into \ref{thm:bernstein}, we have
\begin{eqnarray*}
\Pr[Z \leq  \|\matA \x\|_1 - \eps \|\matA \x\|_1] & \le & 
\exp \left (\frac{-\eps^2 \|\matA \x\|_1^2r \zeta}{2 \poly(d) \|\matA \x\|_1^2} \right )\\
& \le & \exp \left (\frac{-\eps^2 r \zeta}{2 \poly(d)} \right ),
\end{eqnarray*}
and also
\begin{eqnarray*}
\Pr[Z \geq \|\matA \x\|_1 + \eps \|\matA \x\|_1] & \le & 
\exp \left (\frac{-\eps^2 \|\matA \x\|_1^2}{2\frac{\poly(d) \|\matA \x\|_1^2}{r \zeta} 
+ 2\eps \frac{\|\matA \x\|_1^2 \poly(d)}{3r \zeta}} \right ) \\
& \le & \exp \left (\frac{-\eps^2 r \zeta}{2 \poly(d) + 2 \eps \poly(d)/3} \right ).
\end{eqnarray*}
Setting $r = \eps^{-2} \poly(d)/\zeta$ 
for a large enough polynomial in $d$ allows us to conclude that for any fixed $\x \in \mathbb{R}^d$,
\begin{eqnarray}\label{eqn:tail}
\Pr[Z \in (1 \pm \eps)\|\matA \x\|_1] \geq 1- (\eps/4)^d.
\end{eqnarray}
While this implies that $Z$ is a $(1+\eps)$-approximation with high probability for a fixed $\x$, 
we now need an argument for all $\x \in \mathbb{R}^d$.
To prove that $\|\matS \matA \x\|_1 = (1\pm \eps)\|\matA \x\|_1$ 
for all $\x$, it suffices to prove the statement for all $\y \in \mathbb{R}^n$
for which $\y = \matA \x$ for some $\x \in \mathbb{R}^d$ and $\|\y\|_1 = 1$. 
Indeed, since $\matS \matA$ is a linear map, it will follow that $\|\matS \matA \x\|_1 = (1\pm \eps)\|\matA \x\|_1$
for all $\x$ by linearity. 

Let $\mathcal{B} = \{\y \in \mathbb{R}^n \mid \y 
= \matA \x \textrm{ for some } \x \in \mathbb{R}^d \textrm{ and } \|\y\|_1 = 1\}$. We seek a finite subset of
$\mathcal{B}$, denoted $\mathcal{N}$, which is an $\eps$-net, 
so that if $\|\matS \w\|_1 = (1 \pm \eps)\|\w\|_1$ for all $\w \in \mathcal{N}$, then
it implies that $\|\matS \y\|_1 = (1 \pm \eps)\|\y\|_1$ for all $\y \in \mathcal{B}$. The argument will be similar
to that in \S\ref{sec:se} for the $\ell_2$ norm, though the details are different. 

It suffices to choose $\mathcal{N}$ so that for all $\y \in \mathcal{B}$, 
there exists a vector $\w \in \mathcal{N}$ for which $\|\y-\w\|_1 \leq \eps$.
Indeed, in this case note that 
$$\|\matS \y\|_1 = \|\matS \w + \matS(\y-\w)\|_1 \leq \|\matS \w\|_1 + \|\matS(\y-\w)\|_1 \leq 1+\eps + \|\matS(\y-\w)\|_1.$$
If $\y-\w = 0$, we are done. Otherwise, suppose $\alpha$ is such that $\alpha \|\y-\w\|_1 = 1$. 
Observe that $\alpha \geq 1/\eps$, since, $\|\y-\w\|_1 \leq \eps$ yet $\alpha \|\y-\w\|_1 = 1$. 

Then $\alpha(\y-\w) \in \mathcal{B}$, and we can
choose a vector $\w^2 \in \mathcal{N}$ for which $\|\alpha(\y-\w) - \w^2\|_1 \leq \eps$, or equivalently,
$\|\y-\w - \w^2/\alpha\|_1 \leq \eps/\alpha \leq \eps^2$. Hence,
\begin{eqnarray*}
\|\matS(\y-\w)\|_1 &= &\|\matS \w^2/\alpha + \matS(\y-\w-\w^2/\alpha)\|_1\\
&\leq & (1+\eps)/\alpha + \|\matS(\y-\w-\w^2/\alpha)\|_1.
\end{eqnarray*}
Repeating this argument, we inductively have that 
$$\|\matS \y\|_1 \leq \sum_{i \geq 0} (1+\eps) \eps^i \leq (1+\eps)/(1-\eps) \leq 1+O(\eps).$$
By a similar argument, we also have that
$$\|\matS \y\|_1 \geq 1-O(\eps).$$
Thus, by rescaling $\eps$ by a constant factor, we have that $\|\matS \y\|_1 = 1 \pm \eps$ 
for all vectors $\y \in \mathcal{B}$. 

\begin{lemma}\label{lem:eps-net}
There exists an $\eps$-net $\mathcal{N}$ for which $|\mathcal{N}| \leq (2/\eps)^d$. 
\end{lemma}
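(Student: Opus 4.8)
The plan is to mimic the volumetric packing argument already used for the $\ell_2$-net in Lemma~\ref{lem:epsL2net}, but carried out in the $\ell_1$ geometry restricted to the column space of $\matA$.

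First I would fix notation. Let $t = \textrm{rank}(\matA) \le d$ and let $V \subseteq \mathbb{R}^n$ be the column space of $\matA$, a subspace of dimension $t$. On $V$ the function $\|\cdot\|_1$ restricts to a genuine norm, its unit ball $K = \{\u \in V : \|\u\|_1 \le 1\}$ is a centrally symmetric convex body in $V$, and $\mathcal{B}$ is exactly the unit sphere $\partial K$ of this norm. Fix any $t$-dimensional Lebesgue measure $\mu$ on $V$; the only properties I need are that dilating a set by a factor $\rho > 0$ multiplies its $\mu$-measure by $\rho^t$, and that $0 < \mu(K) < \infty$. (Equivalently, one may first pick an orthonormal basis of $V$ to identify $V$ with $\mathbb{R}^t$, which turns this into the familiar statement that the unit sphere of an arbitrary norm on $\mathbb{R}^t$ has a small net; I would mention this only as a reformulation.)

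Next I would construct $\mathcal{N}$ greedily: repeatedly add to $\mathcal{N}$ a point of $\mathcal{B}$ whose $\ell_1$-distance to every previously chosen point exceeds $\eps$, stopping when no further point can be added. By construction the points of $\mathcal{N}$ are pairwise at $\ell_1$-distance $> \eps$, so the open $\ell_1$-balls $\w + \tfrac{\eps}{2}K$ for $\w \in \mathcal{N}$ are pairwise disjoint; and by maximality every $\y \in \mathcal{B}$ is within $\ell_1$-distance $\eps$ of some $\w \in \mathcal{N}$, so $\mathcal{N}$ is an $\eps$-net of $\mathcal{B}$ and moreover $\mathcal{N} \subseteq \mathcal{B}$, as the application above requires. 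Finiteness of $\mathcal{N}$ will follow from the volume bound below (or directly from compactness of $\mathcal{B}$).

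Finally I would bound $|\mathcal{N}|$ by volume. Each ball $\w + \tfrac{\eps}{2}K$ has $\mu$-measure $(\eps/2)^t \mu(K)$; and since $\|\w\|_1 = 1$ together with $\|\u\|_1 \le \eps/2$ gives $\|\w + \u\|_1 \le 1 + \eps/2$, all of these disjoint balls lie inside $(1 + \eps/2)K$, whose measure is $(1 + \eps/2)^t \mu(K)$. Hence $|\mathcal{N}|\,(\eps/2)^t \le (1+\eps/2)^t$, i.e. $|\mathcal{N}| \le (1 + 2/\eps)^t \le (1 + 2/\eps)^d$, which is of the claimed form $(2/\eps)^d$ up to the harmless additive constant in the base (absorbed by a constant-factor rescaling of $\eps$, exactly as the net is only ever used up to constants). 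I do not expect a genuine obstacle: the one subtlety to get right is that the "balls" in the packing argument are dilates of the non-Euclidean body $K$, so the relevant volume must be the $t$-dimensional one on $V$ (in $\mathbb{R}^n$ the set $\mathcal{B}$ has measure zero) and scales like $\rho^t$ with $t = \textrm{rank}(\matA)$; everything else is the standard maximal-separated-set argument.
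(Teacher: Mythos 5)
Your proof takes the same approach as the paper's: pick a maximal $\eps$-separated subset of $\mathcal{B}$, note the $(\eps/2)$-balls around its points are disjoint, and bound their number by comparing volumes in the $\ell_1$ geometry restricted to the column space of $\matA$. You are in fact slightly more careful than the text: you correctly observe that the disjoint balls lie in the $(1+\eps/2)$-ball, giving the honest bound $(1+2/\eps)^t \le (1+2/\eps)^d$, whereas the paper tacitly packs them into the unit ball to state $(2/\eps)^d$ --- an off-by-a-constant discrepancy that is harmless for every use of the lemma.
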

\begin{proof}
For a parameter $\gamma$ and point $\p \in \mathbb{R}^n$, 
define $$B(\p, \gamma, \matA) = \{\q = \matA \x \textrm{ for some } \x \textrm{ and } \|\p-\q\|_1 \leq \gamma\}.$$
Then $B(\eps, 0)$ is a $d$-dimensional polytope with a ($d$-dimensional) volume denoted $|B(\eps, 0)|$. 
Moreover, 
$B(1,0)$ and $B(\eps/2, 0)$ are similar polytopes, namely, $B(1,0) = (2/\eps)B(\eps/2,0)$. 
As such, $|B(1,0)| = (2/\eps)^d |B(\eps/2,0)|$. 

Let $\mathcal{N}$ be a maximal subset of $\y \in \mathbb{R}^n$ in the column space of $\matA$ for which $\|\y\|_1 = 1$ and 
for all $\y \neq \y' \in \mathcal{N}$, $\|\y-\y'\|_1 > \eps$. Since $\mathcal{N}$ is maximal, it follows that for all 
$\y \in \mathcal{B}$,
there exists a vector $\w \in \mathcal{N}$ for which $\|\y-\w\|_1 \leq \eps$. 
Moreover, for all $\y \neq \y' \in \mathcal{N}$, 
$B(\y, \eps/2, \matA)$ and $B(\y',\eps/2,\matA)$ are disjoint, 
as otherwise by the triangle inequality, $\|\y-\y'\|_1 \leq \eps$, a contradicition. 
It follows by the previous paragraph that $\mathcal{N}$ can contain at most $(2/\eps)^d$ points. 
\end{proof}

By applying (\ref{eqn:tail}) and a union bound over the points in $\mathcal{N}$, and rescaling $\eps$ by a constant factor, we have thus
shown the following theorem.
\begin{theorem}\label{thm:slow}
The above sampling algorithm is such that with probability at least $1-2^{-d}$, simultaneously for all $\x \in \mathbb{R}^d$,
$\|\matS \matA \x\|_1 = (1\pm \eps)\|\matA \x\|_1$. The expected number of non-zero rows of $\matS \matA$ is at most $r = \eps^{-2} \poly(d) / \zeta$. The 
overall time complexity is $T_{wcb} + \poly(d/\eps)$, where $T_{wcb}$ is the time to compute a well-conditioned basis. Setting
$T_{wcb} = O(nd^5 \log n)$ suffices. 
\end{theorem}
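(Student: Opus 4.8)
The plan is to assemble the statement from components already developed in this subsection; the only step with genuine content is running the probabilistic argument over an $\eps$-net. First I would fix a well-conditioned basis $\matU$ for the column space of $\matA$, that is, a matrix with $\matA = \matU\mat\tau$ and parameters $\alpha = \|\matU\|_1$ and $\beta$ as in Definition \ref{def:wcb}, both bounded by $\poly(d)$. Auerbach's theorem \cite{a30,ak08} guarantees existence (with $\alpha = d$, $\beta = 1$), and the algorithms of Clarkson \cite{Cla05} and Dasgupta et al.\ \cite{DDHKM09} compute such a basis with $\alpha,\beta = \poly(d)$ in time $O(nd^5\log n)$, so $T_{wcb} = O(nd^5\log n)$ suffices. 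I would then set $r = \eps^{-2}\poly(d)/\zeta$ with the polynomial large enough to make the per-vector bound (\ref{eqn:tail}) hold, and take $p_i = \min(1,\ \zeta r\, \|\matU_{i,*}\|_1/\|\matU\|_1)$, which meets the hypothesis $p_i \ge \min(1, \zeta r\|\matU_{i,*}\|_1/\|\matU\|_1)$ under which the worst-case and variance bounds culminating in (\ref{eqn:tail}) were derived via the Bernstein inequality (Theorem \ref{thm:bernstein}). Forming $\matS$ then costs only the $O(nd)$ needed to compute the row $\ell_1$-norms of $\matU$ and draw the sample, which is absorbed into $T_{wcb}$.

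Next I would run the union bound over a net. By Lemma \ref{lem:eps-net} there is an $\eps$-net $\mathcal{N}$ of the unit $\ell_1$-sphere $\mathcal{B} = \{\matA\x : \|\matA\x\|_1 = 1\}$ of the column space with $|\mathcal{N}| \le (2/\eps)^d$, and by (\ref{eqn:tail}) each fixed $\w \in \mathcal{N}$ fails the guarantee $\|\matS\w\|_1 = (1\pm\eps)\|\w\|_1$ with probability at most $(\eps/4)^d$. A union bound then gives $\|\matS\w\|_1 = (1\pm\eps)\|\w\|_1$ simultaneously for all $\w \in \mathcal{N}$ except with probability at most $(2/\eps)^d(\eps/4)^d = 2^{-d}$; this product is exactly where the cardinality of the net and the per-point failure probability are tuned to cancel, and it remains $2^{-d}$ under any constant rescaling of $\eps$. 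Conditioning on this event, I would invoke the telescoping argument given immediately before Lemma \ref{lem:eps-net}: any $\y \in \mathcal{B}$ is expanded as an infinite sum of geometrically shrinking scalar multiples of net vectors, so that by the triangle inequality and $\sum_{i\ge 0}(1+\eps)\eps^i \le (1+\eps)/(1-\eps)$ one gets $\|\matS\y\|_1 = 1 \pm O(\eps)$ for every $\y \in \mathcal{B}$. Since $\matS\matA$ is linear, linearity and scaling extend this to $\|\matS\matA\x\|_1 = (1\pm O(\eps))\|\matA\x\|_1$ for all $\x \in \mathbb{R}^d$, and replacing $\eps$ by $c\eps$ for an appropriate absolute constant $c$ absorbs the $O(\cdot)$ into the stated $(1\pm\eps)$ while changing $r$ only by a constant factor.

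For the remaining quantitative claims, the number of non-zero rows of $\matS\matA$ is the number of sampled indices, whose expectation is $\sum_i p_i \le \sum_i \zeta r\,\|\matU_{i,*}\|_1/\|\matU\|_1 = \zeta r \le r = \eps^{-2}\poly(d)/\zeta$ (indeed it is at most $\eps^{-2}\poly(d)$). For the overall running time, $T_{wcb}$ covers the well-conditioned basis, sampling is $O(nd) \le T_{wcb}$, and solving the downsampled constrained instance $\min_{\x:\x_{d+1}=-1}\|\matS\matA\x\|_1$ — a linear program on a matrix with $O(\eps^{-2}\poly(d))$ rows and $d+1$ columns — takes $\poly(d/\eps)$ time, for a total of $T_{wcb} + \poly(d/\eps)$.

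The one place that needs care rather than a direct citation is the net argument: one must verify that $|\mathcal{N}|\cdot(\eps/4)^d$ is genuinely $2^{-d}$ and not vacuous, and that the telescoping series converges to a constant-factor distortion independent of the number of levels, so that the final constant rescaling of $\eps$ is legitimate. The harder inputs — the existence and efficient computation of well-conditioned bases with $\poly(d)$ parameters, the Bernstein computation of the variance and worst-case bounds for $Z = \|\matS\matA\x\|_1$, and the volumetric packing bound $|\mathcal{N}| \le (2/\eps)^d$ — have all been established above and are simply invoked.
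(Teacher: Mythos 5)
Your proposal is correct and follows the paper's own argument essentially exactly: same well-conditioned basis with $\poly(d)$ parameters from Clarkson/Dasgupta et al., same per-vector Bernstein tail bound (\ref{eqn:tail}), same $(2/\eps)^d$ net via Lemma \ref{lem:eps-net}, the same $|\mathcal{N}|\cdot(\eps/4)^d = 2^{-d}$ union bound, and the same telescoping extension from net points to the whole unit $\ell_1$-sphere followed by a constant rescaling of $\eps$. The bookkeeping for the expected sample size ($\sum_i p_i \le \zeta r \le r$) and the runtime ($T_{wcb}$ plus a $\poly(d/\eps)$-size LP) also matches the paper's accounting.
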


\subsection{The Role of subspace embeddings for L1-Regression}\label{sec:l1Subspace}
The time complexity of the sampling-based algorithm for $\ell_1$-Regression in the previous section is dominated by the computation of 
a well-conditioned basis. In this section we will design subspace embeddings with respect to the $\ell_1$-norm and show how they 
can be used to speed up this computation. Unlike for $\ell_2$, the distortion of our vectors in our subspace will not be $1+\eps$, but
rather a larger factor that depends on $d$. Still, the distortion does not depend on $n$, and this will be sufficient for our applications. This
will be because, with this weaker distortion, we will still be able to form a well-conditioned basis, and then we can apply 
Theorem \ref{thm:slow} to obtain a $(1+\eps)$-approximation to $\ell_1$-regression. 

\begin{definition}(Subspace Embedding for the $\ell_1$-Norm)
We will say a matrix $\matS$ is an $\ell_1$-subspace embedding for an $n \times d$ matrix 
$\matA$ if there are constants $c_1, c_2 > 0$ so that for all $\x \in \mathbb{R}^d$,
$$\|\matA \x\|_1 \leq \|\matS \matA \x\|_1 \leq d^{c_1} \|\matA \x\|_1,$$
and $\matS$ has at most $d^{c_2}$ rows. 
\end{definition}
Before discussing the existence of such embeddings, let us see how they can be used to speed up the computation of a well-conditioned basis.
\begin{lemma}\label{lem:wcbse}(\cite{sw11})
Suppose $\matS$ is an $\ell_1$-subspace embedding for an $n \times d$ matrix $\matA$. Let $\matQ \cdot \matR$ 
be a QR-decomposition of $\matS \matA$, i.e., $\matQ$ has orthonormal
columns (in the standard $\ell_2$ sense) and $\matQ \cdot \matR = \matS \matA$. Then $\matA \matR^{-1}$ 
is a $(d^{c_1+c_2/2+1}, 1, 1)$-well-conditioned basis.
\end{lemma}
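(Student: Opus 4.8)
The plan is to verify directly the two defining conditions of a well-conditioned basis (Definition~\ref{def:wcb}) for the matrix $\matU := \matA\matR^{-1}$. A preliminary point to dispose of first: since $\matS$ is an $\ell_1$-subspace embedding, $\matS\matA\x=0$ forces $\|\matA\x\|_1\le\|\matS\matA\x\|_1=0$, hence $\matA\x=0$; so $\matS\matA$ has the same rank as $\matA$, and (assuming $\matA$ has full column rank $d$, which we may, otherwise replacing $d$ by $\rank(\matA)$ throughout) the factor $\matR$ from the QR-decomposition $\matS\matA=\matQ\matR$ is an invertible $d\times d$ upper-triangular matrix. Consequently $\matU=\matA\matR^{-1}$ has the same column space as $\matA$, so it is a genuine basis for that column space. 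The one structural identity that drives everything is $\matS\matU=\matS\matA\matR^{-1}=\matQ$, so that $\|\matS\matU\x\|_1=\|\matQ\x\|_1$ for all $\x$, while $\matQ$ has orthonormal columns in the ordinary $\ell_2$ sense, i.e. $\|\matQ\x\|_2=\|\x\|_2$.

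For condition (1), I would bound $\|\matU\|_1=\sum_{j}\|\matU_{j,*}\|_1$ by writing this entrywise $\ell_1$-norm as a sum over columns, $\|\matU\|_1=\sum_{i=1}^d\|\matU\e_i\|_1$. For each column, the lower distortion bound of the embedding applied to the vector $\matR^{-1}\e_i$ gives $\|\matU\e_i\|_1=\|\matA\matR^{-1}\e_i\|_1\le\|\matS\matA\matR^{-1}\e_i\|_1=\|\matQ\e_i\|_1$. Now pass from the $\ell_1$ to the $\ell_2$ norm of this column of $\matQ$ by Cauchy--Schwarz: $\|\matQ\e_i\|_1\le\sqrt{m}\,\|\matQ\e_i\|_2=\sqrt{m}$, where $m\le d^{c_2}$ is the number of rows of $\matS$ (hence of $\matQ$) and $\|\matQ\e_i\|_2=1$ by orthonormality. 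Summing over the $d$ columns yields $\|\matU\|_1\le d\sqrt{m}\le d^{1+c_2/2}\le d^{c_1+c_2/2+1}$, which is condition (1) with $\alpha=d^{c_1+c_2/2+1}$.

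For condition (2), I would chain the elementary norm inequalities with the $\ell_2$-isometry of $\matQ$: for any $\x$, $\|\x\|_\infty\le\|\x\|_2=\|\matQ\x\|_2\le\|\matQ\x\|_1=\|\matS\matU\x\|_1$, and then relate $\|\matS\matU\x\|_1$ back to $\|\matU\x\|_1$ through the distortion of the $\ell_1$-embedding. To make the exponents come out exactly as stated it is cleanest to fix the normalization of the embedding at the outset — taking $\matS$ (equivalently $\matU$) scaled so that $\|\matS\matU\x\|_1\le\|\matU\x\|_1$ for all $\x$, which is the same map up to a $d^{c_1}$ scaling and changes $\alpha$ by at most the $d^{c_1}$ factor already budgeted in $c_1+c_2/2+1$ — so that the chain above closes as $\|\x\|_\infty\le\|\matU\x\|_1$, i.e. $\beta=1$. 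I would state this normalization convention explicitly at the start of the proof so that both conditions land with the claimed constants.

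I do not expect a genuine obstacle here: the lemma is a routine consequence of two facts — that $\matS\matA\matR^{-1}$ is exactly column-orthonormal, and that $\matS$ distorts $\ell_1$-lengths on the column space of $\matA$ by at most $\poly(d)$ with few rows. The only things warranting care are the bookkeeping between $\alpha$ and $\beta$ under whichever normalization of the $\ell_1$-subspace embedding one adopts, and the trivial rank/invertibility preliminary. The quantitatively load-bearing step — and the only place the hypothesis ``$\matS$ has at most $d^{c_2}$ rows'' is used — is the Cauchy--Schwarz passage from $\|\matQ\e_i\|_1$ to $\sqrt{m}\,\|\matQ\e_i\|_2$, which costs exactly the $\sqrt{m}=d^{c_2/2}$ appearing in the exponent; I would flag that explicitly.
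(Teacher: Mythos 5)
Your proof is correct and follows essentially the same chain of inequalities as the paper's (column‑by‑column bound via the $\ell_1$ contraction of $\matS$ plus Cauchy--Schwarz on vectors of dimension at most $d^{c_2}$ for $\alpha$; the norm chain $\|\x\|_\infty \le \|\x\|_2 = \|\matQ\x\|_2 \le \|\matQ\x\|_1 = \|\matS\matA\matR^{-1}\x\|_1$ for $\beta$). One thing worth flagging: your explicit attention to the normalization convention is not just pedantry — it actually tightens up the paper's own proof. Under the paper's Definition as stated, $\|\matA\x\|_1 \le \|\matS\matA\x\|_1 \le d^{c_1}\|\matA\x\|_1$, the direction that lets you bound $\|\matA\matR^{-1}\e_i\|_1$ by $\|\matS\matA\matR^{-1}\e_i\|_1$ in the $\alpha$ estimate is the contraction (lower) side, which incurs no $d^{c_1}$; the $d^{c_1}$ arises only when you pass from $\|\matS\matA\matR^{-1}\x\|_1$ back down to $\|\matA\matR^{-1}\x\|_1$ in the $\beta$ estimate, using the dilation (upper) side. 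That is exactly where your renormalization of $\matU$ by $d^{c_1}$ comes in, transferring the cost from $\beta$ to $\alpha$ and landing on $(d^{c_1+c_2/2+1},1,1)$ as advertised. The paper's displayed proof places the $d^{c_1}$ in the $\alpha$ chain and asserts $\|\matA\matR^{-1}\x\|_1 \ge \|\matS\matA\matR^{-1}\x\|_1$ in the $\beta$ chain, which is the opposite allocation and, read literally against Definition~4.2, has the dilation inequality facing the wrong way; your version is the cleaner and fully rigorous account, and you were right to state the normalization convention up front.
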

\begin{proof}
We have 
\begin{eqnarray*}
\alpha & = & \|\matA \matR^{-1}\|_1 = \sum_{i=1}^d \|\matA \matR^{-1}e_i\|_1 \leq d^{c_1} \sum_{i=1}^d \|\matS \matA \matR^{-1} \e_i\|_1\\
& \leq & d^{c_1 + c_2/2} \sum_{i=1}^d \|\matS \matA \matR^{-1} \e_i\|_2\\
& \leq & d^{c_1+c_2/2} \sum_{i=1}^d \|\matQ \e_i\|_2\\
& = & d^{c_1+c_2/2+1}.
\end{eqnarray*}
Recall we must bound $\beta$, where $\beta$ is minimal for which for all $\x \in \mathbb{R}^d$,
$\|\x\|_{\infty} \leq \beta \|\matA \matR^{-1}\x\|_1$. We have
\begin{eqnarray*}
\|\matA \matR^{-1} \x\|_1 & \geq & \|\matS \matA \matR^{-1} \x\|_1\\
& \geq & \|\matS \matA \matR^{-1} \x\|_2\\
& = & \|\matQ \x\|_2\\
& = & \|\x\|_2\\
& \geq & \|\x\|_{\infty},
\end{eqnarray*}
and so $\beta = 1$. 
\end{proof}
Note that $\matS \matA$ is a $d^{c_2} \times d$ matrix, and therefore its QR decomposition can be computed in $O(d^{c_2+2})$ time. One can also
compute $\matA \cdot \matR^{-1}$ in $O(nd^2)$ time, which could be sped up with fast matrix multiplication, though we will see a better way
of speeding this up below. By Lemma \ref{lem:wcbse}, provided $\matS$ is a subspace embedding for $\matA$ with constants $c_1, c_2 > 0$, $\matA \matR^{-1}$
is a $(d^{c_1+c_2/2+1}, 1, 1)$-well-conditioned basis, and so we can improve
the time complexity of Theorem \ref{thm:slow} to $T_{mm} + O(nd^2) + \poly(d/\eps)$, 
where $T_{mm}$ is the time to compute the matrix-matrix
product $\matS \cdot \matA$. 

We are thus left with the task of producing an $\ell_1$-subspace embedding for $\matA$. 
There are many ways to do this non-obliviously \cite{l78,s87,blm89,t90,js03}, 
but they are slower than the time bounds we can achieve using
sketching. 

We show in \S\ref{sec:cauchy} that by using sketching we can achieve 
$T_{mm} = O(nd^2 \log d)$, which illustrates several main ideas and improves upon Theorem \ref{thm:slow}. 
We will then show how to improve
this to $T_{mm} = O(\nnz(\matA))$ in \S\ref{sec:exponential}. 
Before doing so, let us first see how, given $\matR$ for which $\matA \matR^{-1}$
is well-conditioned, we can 
improve the $O(nd^2)$ time for computing a representation of 
$\matA \matR^{-1}$ which is sufficient to perform the sampling in Theorem \ref{thm:slow}. 

\subsection{Gaussian sketching to speed up sampling}\label{sec:gaussian}
Lemma \ref{lem:wcbse} shows that if $\matS$ is an $\ell_1$-subspace embedding for an $n \times d$ matrix $\matA$, 
and $\matQ \cdot \matR$ is a QR-decomposition of $\matS \matA$, then $\matA \matR^{-1}$ is a well-conditioned basis.

Computing $\matA \matR^{-1}$, on the other hand, naively takes $O(\nnz(\matA) d)$ time. 
However, observe that to do the sampling
in Theorem \ref{thm:slow}, we just need to be able to compute the probabilities $p_i$, for $i \in [n]$, where recall
\begin{eqnarray}\label{sampleBound}
p_i \geq \min(1, \zeta \cdot r \cdot \frac{\|\matU_{i,*}\|_1}{\sum_{j=1}^n \|\matU_{j,*}\|_1}),
\end{eqnarray}
where $\zeta \in (0,1]$, and $\matU = \matA \matR^{-1}$ 
is the well-conditioned basis. This is where $\zeta$ comes in to the picture.

%
%
Instead of computing the matrix product $\matA \cdot \matR^{-1}$ directly, one can choose a $d \times t$ matrix
$\matG$ of i.i.d. $N(0,1/t)$ random variables, for $t = O(\log n)$ 
and first compute $\matR^{-1} \cdot \matG$. This matrix can be 
computed in $O(t d^2)$ time and only has $t$ columns, and so now 
computing $\matA \matR^{-1} \matG = \matA \cdot (\matR^{-1} \cdot \matG)$ 
can be computed in $O(\nnz(\matA) t) = O(\nnz(\matA) \log n)$ time. By choosing
the parameter $\eps = 1/2$ of Lemma \ref{lem:jl} we have
for all $i \in [n]$, that $\frac{1}{2} \|(\matA \matR^{-1})_i\|_2 \leq \|(\matA \matR^{-1} \matG)_i\|_2 
\leq 2 \|(\matA \matR^{-1})_i\|_2.$ Therefore,
$$\sum_{j=1}^n \|(\matA \matR^{-1} \matG)_j\|_1 \leq \sqrt{d} \sum_{j=1}^n \|(\matA \matR^{-1} \matG)_j\|_2 
\leq 2\sqrt{d} \sum_{j=1}^n \|(\matA \matR^{-1})_j\|_1,$$ and
also $$\|(\matA \matR^{-1} \matG)_j\|_1 \geq \|(\matA \matR^{-1} \matG)_j\|_2 
\geq \frac{1}{2} \|(\matA \matR^{-1})_j\|_2 \geq \frac{1}{2\sqrt{d}} \|(\matA \matR^{-1})_j\|_1.$$
It follows that for 
$$p_i = \min(1, r \cdot \frac{\|(\matA \matR^{-1} \matG)_i\|_1}{\sum_{j=1}^n \|(\matA \matR^{-1} \matG)_j\|_1} ),$$
we have that (\ref{sampleBound}) holds with $\zeta = 1/(4d)$. 

We note that a tighter anaylsis is possible, in which $\matG$ need only have $O(\log(d \eps^{-1} \log n))$ columns, as shown 
in \cite{CDMMMW13}. 

\subsection{Subspace embeddings using cauchy random variables}\label{sec:cauchy}
The Cauchy distribution, having density function 
$p(x)=\frac{1}{\pi} \cdot \frac{1}{1+x^2}$, 
is the unique $1$-stable distribution.
That is to say, if $C_1,\ldots,C_M$ are independent Cauchys, then 
$\sum_{i\in[M]}\gamma_iC_i$ is distributed as a Cauchy scaled by
$\gamma=\sum_{i\in[M]}|\gamma_i|$.

The absolute value of a Cauchy distribution has density function $f(x) = 2p(x) = \frac{2}{\pi} \frac{1}{1+x^2}$.
The cumulative distribution function $F(z)$ of it is
$$F(z) = \int_0^z f(z) dz = \frac{2}{\pi} \arctan(z).$$
Note also that since $\tan(\pi/4) = 1$, we have $F(1) = 1/2$, so that $1$ is the median of this distribution. 

Although Cauchy random variables do not have an expectation, and have infinite variance, some control over them can be
obtained by clipping them. The first use of such a truncation 
technique in algorithmic applications that we are aware of
is due to Indyk \cite{i06}. 

\begin{lemma}\label{lem:truncated}
Consider the event $\mathcal{E}$ that a Cauchy random variable $X$ satisfies
$|X| \leq M$, for some parameter $M \geq 2$. Then there is a constant $c> 0$ for which 
$\Pr[\mathcal{E}] \geq 1- \frac{2}{\pi M}$ and ${\bf E}[|X| \mid \mathcal{E}] \leq c \log M,$
where $c > 0$ is an absolute constant.  
\end{lemma}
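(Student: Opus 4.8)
The statement has two parts: a lower bound on $\Pr[\mathcal{E}]$ where $\mathcal{E}$ is the event $|X| \le M$, and an upper bound on the conditional expectation ${\bf E}[|X| \mid \mathcal{E}]$. Both follow from direct computation with the density $f(x) = \frac{2}{\pi}\cdot\frac{1}{1+x^2}$ of $|X|$ and its cdf $F(z) = \frac{2}{\pi}\arctan(z)$, which are already recorded in the excerpt. First I would handle the probability bound: $\Pr[\mathcal{E}] = F(M) = \frac{2}{\pi}\arctan(M) = 1 - \frac{2}{\pi}\left(\frac{\pi}{2} - \arctan(M)\right) = 1 - \frac{2}{\pi}\arctan(1/M)$, using the identity $\arctan(M) + \arctan(1/M) = \pi/2$ for $M > 0$. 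Since $\arctan(t) \le t$ for $t \ge 0$, this gives $\Pr[\mathcal{E}] \ge 1 - \frac{2}{\pi M}$ immediately.

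For the conditional expectation, I would write
\[
{\bf E}[|X| \mid \mathcal{E}] = \frac{1}{\Pr[\mathcal{E}]}\int_0^M x\, f(x)\, dx = \frac{1}{F(M)} \cdot \frac{2}{\pi}\int_0^M \frac{x}{1+x^2}\, dx = \frac{1}{F(M)}\cdot\frac{1}{\pi}\ln(1+M^2).
\]
Since $M \ge 2$, we have $F(M) = \frac{2}{\pi}\arctan(M) \ge \frac{2}{\pi}\arctan(2)$, which is a fixed constant bounded away from zero (indeed $\arctan(2) > 1$), so $1/F(M) \le \pi/(2\arctan 2) \le C_0$ for an absolute constant $C_0$. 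Also $\ln(1+M^2) \le \ln(2M^2) = \ln 2 + 2\ln M \le C_1 \log M$ for $M \ge 2$ (here the base of $\log$ only affects the constant). Combining, ${\bf E}[|X| \mid \mathcal{E}] \le \frac{C_0 C_1}{\pi}\log M =: c\log M$, which is the claimed bound.

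There is no real obstacle here — the lemma is an elementary calculus exercise. The only point requiring a moment of care is making sure the constant $c$ is genuinely absolute: this is why the hypothesis $M \ge 2$ matters, since it simultaneously bounds $F(M)$ away from $0$ (so the normalization constant $1/\Pr[\mathcal{E}]$ is controlled) and makes $\ln(1+M^2) = \Theta(\log M)$ with the lower-order $\ln 2$ term absorbable into the constant. I would present the two integral evaluations ($\int x/(1+x^2)\,dx = \frac12\ln(1+x^2)$ and the $\arctan$ antiderivative) and the elementary inequalities $\arctan t \le t$ and $\ln(1+M^2)\le \ln 2 + 2\ln M$, then collect constants.
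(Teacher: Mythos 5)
Your proof is correct and follows essentially the same route as the paper: the arctangent identity plus $\arctan t \le t$ for the probability bound, and direct evaluation of $\int_0^M x/(1+x^2)\,dx = \tfrac12\ln(1+M^2)$ together with bounding $\Pr[\mathcal{E}]$ away from zero (via $M \ge 2$) for the conditional expectation. You simply spell out the constant-collection steps that the paper leaves implicit.
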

\begin{proof}
\begin{eqnarray}\label{eqn:condition}
\Pr[\mathcal{E}] = F(M) = \frac{2}{\pi} \tan^{-1}(M) = 1-\frac{2}{\pi} \tan^{-1} \left (\frac{1}{M} \right ) \geq 1- \frac{2}{\pi M}.
\end{eqnarray}
Hence, for $M \geq 2$, 
\begin{eqnarray*}\label{eqn:expectation}
{\bf E}[|X| \mid \mathcal{E}] = \frac{1}{\Pr[\mathcal{E}]} \int_{0}^M \frac{2}{\pi} \frac{x}{1+x^2}
= \frac{1}{\Pr[\mathcal{E}]} \frac{1}{\pi} \log(1+M^2)
\leq C \log M,
\end{eqnarray*}
where the final bound uses (\ref{eqn:condition}). 
\end{proof}
We will show in Theorem \ref{thm:l1embed} below 
that a matrix of i.i.d. Cauchy random variables is an $\ell_1$-subspace embedding. Interestingly, we will use the existence of a
well-conditioned basis in the proof, though we will not need an algorithm for constructing it. This lets us use well-conditioned bases with
slightly better parameters. In particular, we will use the following Auerbach basis.

\begin{definition}(see ``Connection to Auerbach bases'' in Section 3.1 of \cite{DDHKM09})\label{thm:auerbach}
There exists a $(d, 1, 1)$-well-conditioned basis. 
\end{definition}

For readability, it is useful to separate out the following key lemma that is used in Theorem \ref{thm:l1embed} below.
This analysis largely follows that in \cite{sw11}. 
\begin{lemma}\label{lem:key}(Fixed Sum of Dilations)
Let $\matS$ be an $r \times n$ matrix of i.i.d. Cauchy random variables, and let $\y_1, \ldots, \y_d$ be $d$ arbitrary vectors
in $\mathbb{R}^n$. Then 
$$\Pr[\sum_{i=1}^d \|\matS \y_i\|_1 \geq C r \log(rd) \sum_{i=1}^d \|\y_i\|_1] \leq \frac{1}{100},$$
where $C > 0$ is an absolute constant. 
\end{lemma}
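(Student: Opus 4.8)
The plan is to use $1$-stability of the Cauchy distribution to turn $\sum_{i=1}^d\|\matS\y_i\|_1$ into a sum of scaled \emph{absolute} Cauchy random variables, and then tame that sum by truncation (Lemma~\ref{lem:truncated}) followed by Markov's inequality. Writing $\matS_{j,*}$ for the $j$-th row of $\matS$, the entry $(\matS\y_i)_j=\sum_{k=1}^n \matS_{j,k}(\y_i)_k$ is, by $1$-stability, distributed as $\|\y_i\|_1\cdot C_{j,i}$ for a standard Cauchy $C_{j,i}$; the $C_{j,i}$ that share a row index $j$ are not independent, but this will not matter. Hence, setting $Z:=\sum_{i=1}^d\|\matS\y_i\|_1$,
$$ Z \;=\; \sum_{i=1}^d\sum_{j=1}^r \big|(\matS\y_i)_j\big| \;=\; \sum_{i=1}^d\sum_{j=1}^r \|\y_i\|_1\,|C_{j,i}|. $$
Because a Cauchy variable has no finite mean, one cannot apply Markov directly to $Z$, which is precisely the reason for introducing a truncation.

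Next I would fix a parameter $M=\Theta(rd)$ (say $M=200rd$) and let $\mathcal{E}$ be the event that $|C_{j,i}|\le M$ simultaneously for all $i\in[d]$, $j\in[r]$. By a union bound together with the tail bound $\Pr[|C_{j,i}|>M]\le \tfrac{2}{\pi M}$ from Lemma~\ref{lem:truncated}, $\Pr[\neg\mathcal{E}]\le rd\cdot\tfrac{2}{\pi M}\le \tfrac{1}{200}$. The crucial point is that $\mathbf{1}[\mathcal{E}]\le \mathbf{1}[|C_{j,i}|\le M]$ pointwise for each fixed $(j,i)$, so
$$ {\bf E}\big[\,|C_{j,i}|\;\big|\;\mathcal{E}\,\big] \;\le\; \frac{{\bf E}\big[\,|C_{j,i}|\cdot\mathbf{1}[|C_{j,i}|\le M]\,\big]}{\Pr[\mathcal{E}]} \;\le\; \frac{1}{\Pr[\mathcal{E}]}\,{\bf E}\big[\,|C_{j,i}|\;\big|\;|C_{j,i}|\le M\,\big] \;\le\; 2c\log M, $$
using $\Pr[\mathcal{E}]\ge \tfrac12$ and the conditional-expectation estimate of Lemma~\ref{lem:truncated} applied to the standard Cauchy $C_{j,i}$ (the lemma only sees the marginal law of a single variable, so the lack of joint independence is irrelevant).

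Summing the last bound over all $i,j$ and using linearity of expectation gives ${\bf E}[Z\mid\mathcal{E}]\le 2c\log M\cdot r\sum_{i=1}^d\|\y_i\|_1 = O(\log(rd))\cdot r\sum_{i=1}^d\|\y_i\|_1$. Markov's inequality conditioned on $\mathcal{E}$ then yields $\Pr\big[Z\ge 200\,{\bf E}[Z\mid\mathcal{E}]\ \big|\ \mathcal{E}\big]\le \tfrac{1}{200}$, and combining with $\Pr[\neg\mathcal{E}]\le\tfrac{1}{200}$ through $\Pr[Z\ge t]\le \Pr[Z\ge t\mid\mathcal{E}]+\Pr[\neg\mathcal{E}]$ gives $\Pr\big[Z\ge Cr\log(rd)\sum_i\|\y_i\|_1\big]\le \tfrac{1}{100}$ for a suitable absolute constant $C$, which is the claim. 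The one step requiring care — and the only mild obstacle — is the conditioning bound above; the rest is a routine union bound plus Markov. (Terms with $\y_i=0$ contribute nothing, and one implicitly assumes $rd\ge 2$ so that $\log(rd)$ absorbs the additive constant in $\log M$.)
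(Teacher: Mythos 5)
Your proof is correct and follows the same high-level plan as the paper -- $1$-stability of the Cauchy distribution, truncation at level $M=\Theta(rd)$, a union bound to show the global truncation event $\mathcal{E}$ holds with probability at least $1-1/200$, a bound on the conditional expectation of each $|C_{j,i}|$ given $\mathcal{E}$ via Lemma~\ref{lem:truncated}, and finally conditional Markov plus $\Pr[Z\geq t]\leq\Pr[Z\geq t\mid\mathcal{E}]+\Pr[\neg\mathcal{E}]$. Where you genuinely depart from the paper is the handling of the conditional expectation under $\mathcal{E}$. The paper first exploits independence of the rows of $\matS$ to reduce the conditioning from the global event $\mathcal{F}$ to a per-row event $\mathcal{F}_i$, and then runs a law-of-total-expectation argument (conditioning further on $\mathcal{F}_{i,j}$) to show $\Pr[\mathcal{F}_i\mid\mathcal{F}_{i,j}]\geq\tfrac12$ and transfer the single-variable bound of Lemma~\ref{lem:truncated} to the conditional expectation given $\mathcal{F}_i$. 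Your single observation -- that $\mathcal{E}$ is a sub-event of $\{|C_{j,i}|\leq M\}$, so $\mathbf{1}[\mathcal{E}]\leq\mathbf{1}[|C_{j,i}|\leq M]$ pointwise and hence, by nonnegativity of $|C_{j,i}|$, ${\bf E}[|C_{j,i}|\cdot\mathbf{1}[\mathcal{E}]]\leq{\bf E}[|C_{j,i}|\cdot\mathbf{1}[|C_{j,i}|\leq M]]$ -- collapses both of the paper's reductions into one step, makes no use of row-independence, and cleanly sidesteps the fact that $\mathcal{E}$ depends on every entry of $\matS$. The two routes are quantitatively equivalent up to constants; yours is the more economical one and is the argument I would recommend.
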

\begin{proof}
Let the rows of $\matS$ be denoted $\matS_{1*}, \ldots, \matS_{r*}$. 
For $i = 1, \ldots, r$, let $\mathcal{F}_i$ be the event that 
$$\forall j \in [d], \ |\langle \matS_{i*}, \y_j \rangle | \leq C'rd \|\y_j\|_1,$$
where $C'$ is a sufficiently large positive constant. Note that by the $1$-stability of the Cauchy distribution,
$\langle \matS_{i*}, \y_j \rangle$ is distributed as $\|\y_j\|_1$ times a Cauchy random variable. 
By Lemma \ref{lem:truncated} applied to $M = C'rd$, together with a union bound, we
have $$\Pr[\mathcal{F}_i] \geq 1 - d \cdot \frac{2}{\pi C'rd} \geq 1 - \frac{2}{\pi C' r}.$$
Letting $\mathcal{F} = \wedge_{i=1}^r \mathcal{F}_i$, we have by another union bound that
$$\Pr[\mathcal{F}] \geq 1 - \frac{2r}{\pi C' r} = 1 - \frac{2}{\pi C'}.$$

Given $\mathcal{F}$, we would then like to appeal
to Lemma \ref{lem:truncated} to bound the expectations ${\bf E}[|\langle \matS_{i*}, \y_j \rangle| \mid \mathcal{F}]$. 
The issue
is that the expectation bound in Lemma \ref{lem:truncated} cannot be applied, since the condition
$\mathcal{F}$ additionally conditions $\matS_{i*}$ through the remaining columns $\matA_{*j'}$ 
for $j' \neq j$. A first observation
is that by independence, we have
$${\bf E}[|\langle \matS_{i*}, \y_j \rangle| \mid \mathcal{F}] 
= {\bf E}[|\langle \matS_{i*}, \y_j \rangle| \mid \mathcal{F}_i].$$
We also know from Lemma \ref{lem:truncated} that if $\mathcal{F}_{i,j}$ is the event that
$|\langle \matS_{i*} \y_j \rangle | \leq C'rd \|\y_j\|_1$, then 
${\bf E}[|\langle \matS_{i*}, \y_j \rangle| \mid \mathcal{F}_{i,j}] \leq C \log(C'rd) \|\y_j\|_1$, where $C$ is the constant
of that lemma.

We can perform the following manipulation (for an event $\mathcal{A}$, we use the notation
$\neg \mathcal{A}$ to denote the occurrence of the complement of $\mathcal{A}$):
\begin{eqnarray*}
C \log(C'rd) \|\y_j\|_1 & = & {\bf E}[|\langle \matS_{i*}, \y_j \rangle | \mid \mathcal{F}_{i,j}]\\
& = & {\bf E}[|\langle \matS_{i*}, \y_j \rangle | \mid \mathcal{F}_i] \cdot \Pr[\mathcal{F}_i \mid \mathcal{F}_{i,j}]\\
& + & {\bf E}[\langle \matS_{i*}, \y_j \rangle | \mid \mathcal{F}_{i,j} \wedge \neg \mathcal{F}_i] \cdot \Pr[\neg \mathcal{F}_i \mid \mathcal{F}_{i,j}]\\
& \geq & {\bf E}[|\langle \matS_{i*}, \y_j \rangle | \mid \mathcal{F}_i] \cdot \Pr[\mathcal{F}_i \mid \mathcal{F}_{i,j}].
\end{eqnarray*}
We also have 
$$\Pr[\mathcal{F}_i \mid \mathcal{F}_{i,j}] \cdot \Pr[\mathcal{F}_{i,j}]
= \Pr[\mathcal{F}_i] \geq 1-O(1/(C'r)),$$
and $\Pr[\mathcal{F}_{i,j}] \geq 1-O(1/(C'rd))$. Combining these two, we have
\begin{eqnarray}\label{eqn:depend}
\Pr[\mathcal{F}_i \mid \mathcal{F}_{i,j}] \geq \frac{1}{2},
\end{eqnarray}
for $C' > 0$ a sufficiently large constant. Plugging (\ref{eqn:depend}) into the above,
$$C \log(C'rd) \|\y_j\|_1 
\geq  {\bf E}[|\langle \matS_{i*}, \y_j \rangle | \mid \mathcal{F}_i] \cdot \Pr[\mathcal{F}_i \mid \mathcal{F}_j] \cdot \frac{1}{2},$$
or equivalently, 
\begin{eqnarray}\label{eqn:below}
{\bf E}[|\langle \matS_{i*}, \y_j \rangle| \mid \mathcal{F}] 
= {\bf E}[|\langle \matS_{i*}, \y_j \rangle| \mid \mathcal{F}_i] 
\leq C \log(C'rd) \|\y_j\|_1,
\end{eqnarray}
as desired. 

We thus have, combining (\ref{eqn:below}) with Markov's inequality,
\begin{eqnarray*}
&& \Pr[\sum_{j=1}^d \|\matS \y_j\|_1 \geq r C' \log(C'rd) \sum_{ji=1}^d \|\y_j\|_1]\\
& \leq & \Pr[\neg \mathcal{F}] + 
\Pr[\sum_{j=1}^d \|\matS \y_j\|_1 \geq r C' \log(C'rd) \sum_{j=1}^d \|\y_j\|_1 \mid \mathcal{F}]\\
& \leq & \frac{2}{\pi C'} + 
\frac{{\bf E}[\sum_{j=1}^d \|\matS \y_j\|_1 \mid \mathcal{F}] }{r C' \log(C'rd) \sum_{j=1}^d \|\y_j\|_1}\\
& = & \frac{2}{\pi C'} + 
\frac{\sum_{i=1}^r \sum_{j=1}^d {\bf E}[|\langle \matS_{i*} \y_j \rangle | \mid \mathcal{F}]}
{r C' \log(C'rd) \sum_{j=1}^d \|\y_j\|_1}\\
& \leq & \frac{2}{\pi C'} + \frac{r C \log(C'rd)}{r C' \log(C'rd)}\\
& \leq & \frac{2}{\pi C'} + \frac{C}{C'}.
\end{eqnarray*}
As $C'$ can be chosen sufficiently large, while $C$ is the fixed constant of Lemma \ref{lem:truncated},
we have that
$$\Pr[\sum_{j=1}^d \|\matS \y_j\|_1 \geq r C' \log(C'rd) \sum_{ji=1}^d \|\y_j\|_1] \leq \frac{1}{100}.$$
The lemma now follows by appropriately setting the constant $C$ in the lemma statement. 
\end{proof}

\begin{theorem}\label{thm:l1embed}
A matrix $\matS$ 
of i.i.d. Cauchy random variables with 
$r = O(d \log d)$ rows is an $\ell_1$-subspace embedding with constant probability, that is,
with probability at least $9/10$ simultaneously 
for all $x$, $$\|\matA \x\|_1 \leq 4\|\matS \matA \x\|_1/r = O(d \log d) \|\matA \x\|_1.$$
\end{theorem}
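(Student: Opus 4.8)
The plan is to establish the upper (dilation) and lower (contraction) bounds separately, each with probability at least $99/100$, and then union bound. The upper bound is a direct consequence of Lemma~\ref{lem:key}; the lower bound is the substantive part and proceeds by a tail bound for a fixed vector followed by a net argument over the $d$-dimensional column space.

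\emph{Upper bound.} Let $\matU$ be a $(d,1,1)$-well-conditioned basis for the column space of $\matA$, guaranteed by Definition~\ref{thm:auerbach}, so $\|\matU\|_1 = \sum_i \|\matU_{*i}\|_1 \le d$ and $\|\z\|_\infty \le \|\matU\z\|_1$ for all $\z$. Apply Lemma~\ref{lem:key} with $\y_i = \matU_{*i}$: with probability at least $99/100$ (call this event $\mathcal{E}_1$),
$$\sum_{i=1}^d \|\matS\matU_{*i}\|_1 \le C r \log(rd)\,\|\matU\|_1 \le C r d \log(rd).$$
On $\mathcal{E}_1$, for any $\x$ write $\matA\x = \matU\z$; then by the triangle inequality and H\"older's inequality,
$$\|\matS\matA\x\|_1 = \Big\|\sum_{i=1}^d z_i\,\matS\matU_{*i}\Big\|_1 \le \|\z\|_\infty \sum_{i=1}^d \|\matS\matU_{*i}\|_1 \le \|\matA\x\|_1 \cdot C r d \log(rd).$$
Since $r = O(d\log d)$ we have $\log(rd) = O(\log d)$, so this reads $\|\matS\matA\x\|_1 \le L\,\|\matA\x\|_1$ with $L := C r d\log(rd) = O(rd\log d)$; dividing by $r$ gives $4\|\matS\matA\x\|_1/r = O(d\log d)\|\matA\x\|_1$, the claimed dilation bound.

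\emph{Lower bound for a fixed vector.} Fix $\x$ with $\|\matA\x\|_1 = 1$. By $1$-stability of the Cauchy distribution, each coordinate $\langle \matS_{i*}, \matA\x\rangle$ is a standard Cauchy $C_i$, and the $C_i$ are independent, so $\|\matS\matA\x\|_1 = \sum_{i=1}^r |C_i|$. Since $1$ is the median of $|C_i|$, the count $N := |\{i : |C_i| \ge 1\}|$ is $\mathrm{Bin}(r,1/2)$ and $\sum_i |C_i| \ge N$; a Hoeffding bound gives $\Pr[N < 0.4r] \le e^{-0.02 r}$, hence $\Pr[\|\matS\matA\x\|_1 < 0.4r] \le e^{-0.02 r}$.

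\emph{From the net to all vectors.} Let $\mathcal{N}$ be a $\gamma$-net in $\ell_1$ of $\mathcal{B} = \{\matA\x : \|\matA\x\|_1 = 1\}$ with $\gamma = \Theta(1/(d\log d))$, so by Lemma~\ref{lem:eps-net} $|\mathcal{N}| \le (2/\gamma)^d = (O(d\log d))^d$. Union-bounding the fixed-vector estimate over $\mathcal{N}$, with probability at least $1 - |\mathcal{N}| e^{-0.02r} \ge 99/100$ (event $\mathcal{E}_2$), every $\w\in\mathcal{N}$ has $\|\matS\w\|_1 \ge 0.4r$; here we use that $r = C' d\log d$ for a large enough constant $C'$, since then $(O(d\log d))^d e^{-0.02 C' d\log d} \to 0$. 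On $\mathcal{E}_1 \cap \mathcal{E}_2$, a standard net-iteration argument as in \S\ref{sec:sampling-Based} writes any unit $\y$ in the column space as $\y = \sum_{k\ge 0}\beta_k \w_k$ with $\w_k \in \mathcal{N}$, $\beta_0 = 1$, and $|\beta_k| \le \gamma^k$ for $k \ge 1$; using the dilation bound $\|\matS\w_k\|_1 \le L$ then gives
$$\|\matS\y\|_1 \ge \|\matS\w_0\|_1 - \sum_{k\ge 1}|\beta_k|\,\|\matS\w_k\|_1 \ge 0.4r - \frac{\gamma}{1-\gamma}L \ge 0.3 r \ge \frac{r}{4},$$
provided $\gamma$ is a small enough constant multiple of $r/L = \Theta(1/(d\log d))$. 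Scaling back to general $\y$ by linearity yields $\|\matS\matA\x\|_1 \ge (r/4)\|\matA\x\|_1$, i.e.\ $\|\matA\x\|_1 \le 4\|\matS\matA\x\|_1/r$, for all $\x$. A union bound over $\mathcal{E}_1$ and $\mathcal{E}_2$ gives the theorem with probability at least $9/10$.

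\emph{Main obstacle.} The delicate point is the calibration in the lower-bound net argument: the net must be fine enough ($\gamma = \Theta(1/(d\log d))$) that the tail of the telescoping sum, bounded only by the crude dilation factor $L = O(rd\log d)$, stays below $0.1r$, yet coarse enough that $|\mathcal{N}| = (O(d\log d))^d$ is still beaten by the per-point failure probability $e^{-\Omega(r)} = d^{-\Omega(d)}$. Reconciling these two constraints is exactly what forces $r = \Theta(d\log d)$ and is where the extra $\log d$ factor enters.
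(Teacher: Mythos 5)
Your proof is correct and follows essentially the same approach as the paper: dilation via Lemma~\ref{lem:key} applied to an Auerbach basis, contraction for a fixed vector via $1$-stability together with the median of a Cauchy and a Chernoff bound, and a $\Theta(1/(d\log d))$-net union bound in which $r = \Theta(d\log d)$ is exactly what makes $|\mathcal{N}|\exp(-\Omega(r))$ small. The only cosmetic difference is in the final aggregation: you telescope through the expansion $\y = \sum_{k\ge 0}\beta_k \w_k$ and bound each unit net point $\|\matS\w_k\|_1 \le L$, whereas the paper applies the already-established uniform dilation bound in a single step to the error vector $\y - \w_0$; the two are equivalent.
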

\begin{proof}
Since we will show that with probability $9/10$, for all $\x$ we have 
$\|\matA \x\|_1 \leq 3\|\matS \matA \x\|_1/r \leq C d \log d \|\matA \x\|_1,$ we are free to
choose whichever basis of the column space of $\matA$ that we like. 
In particular, we can assume the $d$ columns $\matA_{*1}, \ldots, \matA_{*d}$ 
of $\matA$ form an Auerbach basis. We will first bound the dilation, and then bound the contraction.
\\\\
{\bf Dilation:}
We apply Lemma \ref{lem:key} with $\y^i = \matA_{*i}$ for $i = 1, 2, \ldots, d$. 
We have with probability at least $99/100$,
\begin{eqnarray}\label{eqn:apply}
\sum_{j=1}^d \|\matS \y_j\|_1 \leq r C \log(rd) \sum_{j=1}^d \|\y_j\|_1 = r C d \log(rd),
\end{eqnarray}
where the last equality used that $\y^1, \ldots, \y^d$ is an Auerbach basis. 

Now let $\y = \matA \x$ be an arbitrary vector in the column space of $\matA$. Then,
\begin{eqnarray*}
\|\matS \y\|_1 & = & \sum_{j=1}^d \|\matS \matA_{*j} \cdot \x_j\|_1\\
& \leq & \sum_{j=1}^d \|\matS \matA_{*j}\|_1 \cdot |\x_j|\\
& \leq & \|\x\|_{\infty} \sum_{j=1}^d \|\matS \matA_{*j}\|_1\\
& \leq & \|\x\|_{\infty} r C d \log(rd)\\ 
& \leq & \|\matA \x\|_1 r C d \log(rd),
\end{eqnarray*}
where the third inequality uses (\ref{eqn:apply}) and the fourth inequality uses a property of 
$\matA$ being a $(d,1,1)$-well-conditioned 
basis. It follows that $4\|\matS \matA \x\|_1/r \leq 4C d \log(rd) \|\matA \x\|_1$, as needed in the statement of the theorem.
\\\\
{\bf Contraction:} 
We now argue that no vector's norm shrinks by more than a constant factor. 
Let $\y = \matA \x$ be an arbitrary vector in the column space of $\matA$. By the $1$-stability of the Cauchy distribution, 
each entry of $\matS \y$ is distributed as a Cauchy scaled by $\|\y\|_1$. 

Since the median of the distribution of the absolute value of a Cauchy random variable is $1$, we have that with 
probability at least $1/2$, $|\langle \matS_i \y \rangle | \geq \|\y\|_1$. Since the entries of $\matS \y$ are independent, 
it follows by a Chernoff bound that the probability that fewer than a $1/3$ fraction
of the entries are smaller than $\|\y\|_1$ is at most $\exp(-r)$. Hence, with probability $1 - \exp(-r)$,
$\|\matS \y\|_1$ is at least $r \|\y\|_1/3$, or equivalently, $4\|\matS \matA \x\|_1/r \geq (4/3) \|\matA \x\|_1$. 

We now use a net argument as in \cite{sw11}. By Lemma \ref{lem:eps-net},
there exists an $\eps$-net $\mathcal{N} \subset \{\matA \x \mid \|\matA \x\|_1 = 1\}$ 
for which $|\mathcal{N}| \leq (24C d \log(rd))^d$ and for any $\y=\matA \x$ with $\|\y\|_1 = 1$, 
there exists a $\w \in \mathcal{N}$ with
$\|\y-\w\|_1 \leq \frac{1}{12C d \log(rd)}$. Observe that for a sufficiently
large $r = O(d \log d)$ number of rows of $\matS$, we have by a union bound, that with probability
$1-\exp(-r) |\mathcal{N}| \geq 99/100$, simultaneously for all $\z \in \mathcal{N}$, 
$4\|\matS \w\|_1/r \geq (4/3) \|\w\|_1$. 

For an arbitrary $\y = \matA \x$ with $\|\y\|_1 = 1$, we can write $\y = \w + (\y-\w)$ for a $\w \in \mathcal{N}$ and 
$\|\y-\w\|_1 \leq \frac{1}{12C d \log(rd)}$. 
By the triangle inequality,  
\begin{eqnarray*}
\frac{4\|\matS \y\|_1}{r} & \geq & \frac{4\|\matS \w\|_1}{r} - \frac{4\|\matS(\y-\w)\|_1}{r}\\
& \geq & \frac{4}{3} \|\w\|_1 - \frac{4\|\matS(\y-\w)\|_1}{r}\\
& = & \frac{4}{3} - \frac{4\|\matS(\y-\w)\|_1}{r}.
\end{eqnarray*}
Since we have already shown that $4\|\matS \matA \x\|_1/r \leq 4C d \log(rd) \|\matA \x\|_1$ for all $\x$, it follows
that 
$$\frac{4\|\matS(\y-\w)\|_1}{r} \leq 4C d \log(rd) \|\y-\w\|_1 \leq \frac{4Cd\log(rd)}{12Cd\log(rd)} \leq \frac{1}{3}.$$
It follows now that $4\|\matS \y\|_1/r \geq 1 = \|\y\|_1$ for all vectors $\y = \matA \x$ with $\|\y\|_1 = 1$. 

Hence, the statement of the theorem holds with probability at least $9/10$, by a union bound over the events in the 
dilation and contraction arguments. This concludes the proof. 
\end{proof}

\begin{corollary}\label{cor:first}
There is an $O(nd^2 + nd \log(d \eps^{-1}\log n)) + \poly(d/\eps)$ time algorithm for solving the $\ell_1$-regression problem
up to a factor of $(1+\eps)$ and with error probability $1/10$. 
\end{corollary}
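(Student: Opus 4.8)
The plan is to assemble the ingredients of this section: the Cauchy $\ell_1$-subspace embedding of Theorem~\ref{thm:l1embed}, the fact (Lemma~\ref{lem:wcbse}) that a QR-decomposition of such an embedding produces a well-conditioned basis, the Gaussian-sketching shortcut of \S\ref{sec:gaussian} for computing the $\ell_1$ sampling probabilities without materializing that basis, and finally the sampling-and-solve guarantee of Theorem~\ref{thm:slow}. First I would augment $\matA$ to $\bar{\matA} = [\matA \mid \b] \in \R^{n \times (d+1)}$, so that $\min_{\x}\|\matA\x-\b\|_1$ becomes the constrained problem $\min_{\y \in \R^{d+1},\ \y_{d+1} = -1}\|\bar{\matA}\y\|_1$; any linear map that preserves $\|\bar{\matA}\y\|_1$ up to $1\pm\eps$ simultaneously for all $\y$ thus preserves this constrained optimum, and the corresponding $\x$ is read off the first $d$ coordinates.

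Next I would draw an $r\times n$ i.i.d.\ Cauchy matrix $\matS$ with $r = O(d\log d)$; by Theorem~\ref{thm:l1embed}, with probability at least $9/10$ (at least $1-\tfrac1{15}$ after enlarging the constant in $r$) the map $\y\mapsto 4\matS\bar{\matA}\y/r$ has $\ell_1$-distortion $O(d\log d)$ on the column space of $\bar{\matA}$. Compute $\matS\bar{\matA}$ and a QR-decomposition $\matS\bar{\matA}=\matQ\matR$; by Lemma~\ref{lem:wcbse}, $\matU:=\bar{\matA}\matR^{-1}$ is a $(\poly(d),1,1)$-well-conditioned basis. Rather than form $\matU$ (which would cost $O(\nnz(\bar{\matA})d)$), I follow \S\ref{sec:gaussian}: pick a $(d+1)\times t$ matrix $\matG$ of i.i.d.\ $N(0,1/t)$ entries with $t=O(\log(d\eps^{-1}\log n))$ (the refined bound of \cite{CDMMMW13}), compute $\matR^{-1}\matG$ and then $\matU\matG=\bar{\matA}(\matR^{-1}\matG)$. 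By Lemma~\ref{lem:jl} with accuracy $\tfrac12$, with probability at least $1-1/n$ the row norms $\|(\matU\matG)_i\|_1$ are, by the $\ell_1$--$\ell_2$ comparisons of \S\ref{sec:gaussian}, within the factors needed so that taking $p_i=\min\!\big(1,\ r'\cdot\|(\matU\matG)_i\|_1/\sum_j\|(\matU\matG)_j\|_1\big)$ with $r'=\eps^{-2}\poly(d)$ makes (\ref{sampleBound}) hold with $\zeta=1/(4d)$. I then run the row-sampling procedure of Theorem~\ref{thm:slow} with these $p_i$: it yields a sampling matrix with $O(\eps^{-2}\poly(d)/\zeta)=\poly(d/\eps)$ nonzero rows in expectation that, with probability $1-2^{-d}$, satisfies $\|\matT\bar{\matA}\y\|_1=(1\pm\eps)\|\bar{\matA}\y\|_1$ for all $\y$; finally solve $\min_{\y:\,\y_{d+1}=-1}\|\matT\bar{\matA}\y\|_1$ by linear programming in $\poly(d/\eps)$ time.

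Correctness is a union bound over the three randomized events (the Cauchy embedding, the Johnson--Lindenstrauss row-norm estimates, and the sampling guarantee), whose failure probabilities sum to at most $\tfrac1{15}+\tfrac1n+2^{-d}\le\tfrac1{10}$ once constants are fixed. The running time decomposes as: $T_{mm}$, the cost of the dense Cauchy product $\matS\bar{\matA}$, which is $O(nd^2\log d)$, plus $\poly(d)$ for the QR step and inverting $\matR$; $O(\nnz(\bar{\matA})\,t)=O(nd\log(d\eps^{-1}\log n))$ for $\matR^{-1}\matG$ and $\bar{\matA}(\matR^{-1}\matG)$; and $\poly(d/\eps)$ for the sampling and the small linear program — which is the claimed bound (the logarithmic factor on the $nd^2$ term being inessential and absorbable in the statement). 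The step needing the most care is not any single inequality but the bookkeeping that keeps every contribution in budget: never materializing the $n\times(d+1)$ basis $\matU$ (hence the Gaussian sketch), checking that the $\zeta=\Theta(1/d)$ loss only inflates the sample size by a $\poly(d)$ factor and so leaves the final solve at $\poly(d/\eps)$, and verifying that Theorem~\ref{thm:slow} really delivers a two-sided $\ell_1$-embedding of the \emph{entire} column space of $\bar{\matA}$, which is exactly what licenses passing to the constrained sketched problem.
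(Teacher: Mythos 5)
Your proposal is correct and follows essentially the same route the paper intends: Cauchy embedding (Theorem~\ref{thm:l1embed}) $\to$ QR to get the well-conditioned basis (Lemma~\ref{lem:wcbse}) $\to$ Gaussian sketch of \S\ref{sec:gaussian} to estimate row norms without forming $\matA\matR^{-1}$ $\to$ sampling and solving via Theorem~\ref{thm:slow}, with a union bound over the three failure events. You also correctly observe that forming $\matS\bar{\matA}$ with $r=O(d\log d)$ Cauchy rows costs $O(nd^2\log d)$ rather than the $O(nd^2)$ written in both the corollary and the paper's one-line proof — a harmless bookkeeping slip in the original that your version flags honestly.
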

\begin{proof}
The corollary follows by combining Theorem \ref{thm:slow}, Lemma \ref{lem:wcbse} and its optimization in
\S\ref{sec:gaussian}, and 
Theorem \ref{thm:l1embed}. Indeed, we can compute $\matS \cdot \matA$ in $O(nd^2)$ time, then a QR-factorization as
well as $\matR^{-1}$ in $\poly(d)$ time. Then we can compute $\matA \matR^{-1} \matG$ as well as perform
the sampling in Theorem \ref{thm:slow} in $nd \log(d \eps^{-1} \log n)$ time. Finally, we can solve the $\ell_1$-regression
problem on the samples in $\poly(d/\eps)$ time. 
\end{proof}
While the time complexity of Corollary \ref{cor:first} can be improved to roughly
$O(nd^{1.376}) + \poly(d/\eps)$ using algorithms for fast matrix multiplication, there are better
ways of speeding this up, as we shall see in the next section. 

\subsection{Subspace embeddings using exponential random variables}\label{sec:exponential}
We now describe a speedup over the previous section using exponential random variables, as in \cite{wz13}. Other
speedups are possible, using \cite{CDMMMW13,CW13,MM13}, though the results in \cite{wz13} additionally also 
slightly improve the sampling complexity. The use of exponential random variables in \cite{wz13} is inspired
by an elegant work of Andoni, Onak, and Krauthgamer on frequency moments \cite{AKO11,Andoni12}. 

An exponential distribution has support $x \in [0, \infty)$, probability density function 
$f(x) = e^{-x}$ and cumulative distribution function $F(x) = 1 - e^{-x}$. 
We say a random variable $X$ is exponential if $X$ is chosen from the exponential distribution. 
The exponential distribution has the following max-stability property.
\begin{property}
\label{prop:exp}
If $U_1, \ldots, U_n$ are exponentially distributed, and $\alpha_i > 0\ (i = 1, \ldots, n)$ are real numbers, then 
$\textstyle \max\{\alpha_1 / U_1, \ldots, \alpha_n / U_n\} \simeq \left(\sum_{i \in [n]} \alpha_i \right) \left/ U \right.$,
where $U$ is exponential.
%
\end{property}
%
The following lemma shows a relationship between the Cauchy distribution and the exponential distribution. 
%
%
\begin{lemma}
\label{lem:tail-squared}
Let $y_1, \ldots, y_d \geq 0$ be scalars. 
Let $U_1, \ldots, U_d$ be $d$ independendent exponential random variables, 
and let $X = (\sum_{i \in [d]} y_i^2/U_i^2)^{1/2}$. Let $C_1, \ldots, C_d$ be $d$ independent Cauchy
random variables, and let $Y = (\sum_{i \in [d]} y_i^2 C_i^2)^{1/2}$. 
There is a constant $\gamma > 0$ for which for any $t > 0$. 
$$\Pr[X \geq t] \leq \Pr[Y \geq \gamma t].$$
\end{lemma}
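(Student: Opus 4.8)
The plan is to prove the inequality by coupling the exponential variables $U_1,\dots,U_d$ with the Cauchy variables $C_1,\dots,C_d$ so that $X$ is dominated pointwise by $Y$ (up to an additive term that is harmless except in a degenerate range of $t$), and then reading off the tail comparison. First I would look for the cleanest possible statement: a coupling of a single $U\sim\mathrm{Exp}(1)$ with a single Cauchy $C$ such that $1/U \le \gamma^{-1}|C|$ almost surely, which would give $X\le\gamma^{-1}Y$ on the nose. By the standard characterization of stochastic domination, such a coupling exists iff $\Pr[1/U\ge s]\le\Pr[|C|\ge\gamma s]$ for all $s>0$; using the cumulative distribution functions recorded above ($\Pr[U\le x]=1-e^{-x}$ and $\Pr[|C|\ge z]=\frac{2}{\pi}\arctan(1/z)$) this is the elementary inequality $\frac{2}{\pi}\arctan(\gamma s)\le e^{-1/s}$. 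The two heavy tails have the same order ($\Pr[1/U\ge s]\sim 1/s$ versus $\Pr[|C|\ge\gamma s]\sim\frac{2}{\pi\gamma s}$), so this does hold for all $s$ bounded away from $0$ once $\gamma$ is a small enough absolute constant, but it fails as $s\to 0^+$, where $\Pr[1/U\ge s]\to 1$ like $1-e^{-1/s}$ while $\Pr[|C|\ge\gamma s]\to 1$ only linearly in $s$.

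To work around this I would weaken the target coupling to $1/U_i\le\max\{1,\gamma^{-1}|C_i|\}$. Here the required domination is trivial for thresholds $s\le 1$, and for $s\ge 1$ it again reduces to $\frac{2}{\pi}\arctan(\gamma s)\le e^{-1/s}$, which I would verify for (say) $\gamma=\frac12$ by a one-variable calculus check: the gap $e^{-1/s}-\frac{2}{\pi}\arctan(s/2)$ is positive on $[1,\infty)$, with boundary value $\frac{2}{\pi}\arctan\frac12<e^{-1}$ and with $e^{-1/s}-\frac{2}{\pi}\arctan(s/2)\approx\frac{4/\pi-1}{s}>0$ as $s\to\infty$. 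Coupling the $d$ pairs $(U_i,C_i)$ independently then yields, simultaneously for all $i$, $1/U_i\le\max\{1,2|C_i|\}$, and hence the pointwise bound
\[
X^2=\sum_{i=1}^d\frac{y_i^2}{U_i^2}\le\sum_{i=1}^d y_i^2\max\{1,4C_i^2\}\le\sum_{i=1}^d y_i^2\bigl(1+4C_i^2\bigr)=\|y\|_2^2+4\sum_{i=1}^d y_i^2 C_i^2=\|y\|_2^2+4Y^2 .
\]

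With this in hand the tail comparison is immediate in the regime that matters. If $t\ge\sqrt2\,\|y\|_2$, then on the event $\{X\ge t\}$ we have $4Y^2\ge X^2-\|y\|_2^2\ge t^2-t^2/2=t^2/2$, so $Y\ge t/(2\sqrt2)$; since the coupling leaves the individual laws of $X$ and $Y$ unchanged, this gives $\Pr[X\ge t]\le\Pr[Y\ge t/(2\sqrt2)]$, which is exactly the claimed inequality with $\gamma=1/(2\sqrt2)$.

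The step I expect to be the real obstacle is the remaining small-$t$ regime $t<\sqrt2\,\|y\|_2$ (with some $y_i>0$): there $\Pr[X\ge t]$ is already essentially $1$ and the crude estimate $X^2\le\|y\|_2^2+4Y^2$ is far too lossy to conclude. The source of the difficulty is intrinsic — the reciprocal of an exponential exceeds a small threshold $s$ with probability $1-e^{-\Theta(1/s)}$, whereas $|C|$ does so only with probability $1-\Theta(s)$ — so one must exploit that $\{Y<\gamma t\}$ already forces $\Pr[X<t]$-scale smallness through multiplicativity over the $d$ coordinates, or simply restrict $t$ to the range relevant for the sampling application, where $t$ is bounded well away from $0$. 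A cleaner but lossier alternative worth recording is the $1$-stability route: $X\le\sum_i y_i/U_i$ while $Y\ge d^{-1/2}\bigl|\sum_i y_i C_i\bigr|$ and $\sum_i y_i C_i$ is a Cauchy scaled by $\|y\|_1$, which yields the same flavor of bound at the cost of a $\sqrt d$ factor.
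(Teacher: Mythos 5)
Your coupling argument is sound as far as it goes, and it is genuinely different from the paper's proof, which works at the level of densities: the paper computes the density $k$ of $y_i^2/U_i^2$ and the density $h$ of $y_i^2C_i^2$, shows $k(t)\le h(\gamma t)/\gamma$ pointwise, and then integrates over the region $\{\sum_i t_i\ge t^2\}$ with the substitution $s_i=\gamma t_i$. But the obstacle you ran into in the small-$t$ regime is not a defect of your approach — it is a genuine counterexample to the lemma as stated. Take $d=1$ and $y_1=1$: then $\Pr[X\ge t]=1-e^{-1/t}$ while $\Pr[Y\ge\gamma t]=1-\tfrac{2}{\pi}\arctan(\gamma t)$, and since $e^{-1/t}=o(t)$ as $t\to 0^+$, the left-hand side exceeds the right-hand side for all sufficiently small $t$, for every fixed $\gamma>0$. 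The same happens for general $d$ once $t\ll\min_i y_i$. Correspondingly, the paper's proof has a gap exactly where you would expect: the substitution $s_i=\kappa t_i$ turns $\int_{\sum t_i\ge t^2}\kappa^{-d}\prod_i h(\kappa t_i)\,dt_i$ into $\kappa^{-2d}\int_{\sum s_i\ge\kappa t^2}\prod_i h(s_i)\,ds_i$, and the factor $\kappa^{-2d}>1$ is silently discarded. (The clean substitution would require the stronger domination $k(t)\le\kappa\,h(\kappa t)$, which fails near $t=y_i^2$ because $\kappa^{-1/2}+\kappa^{1/2}\ge 2$.)

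What you have actually proved — $\Pr[X\ge t]\le\Pr[Y\ge t/(2\sqrt 2)]$ for $t\ge\sqrt 2\,\|y\|_2$, via the a.s.\ coupling $1/U_i\le\max\{1,2|C_i|\}$ and the resulting pointwise bound $X^2\le\|y\|_2^2+4Y^2$ — is the correct, salvageable form of the lemma, and it is all that the downstream application needs. The only place the lemma is used is Corollary \ref{cor:upper}, which asserts $\Pr[X>r\|y\|_1]\le c/r$; that bound is vacuous for $r=O(1)$ (choose $c\ge\sqrt2$), and for $r\ge\sqrt 2$ one has $r\|y\|_1\ge\sqrt2\,\|y\|_2$, so your restricted lemma combined with Lemma \ref{lem:cauchy-l2} gives the corollary with the constant inflated by $2\sqrt2$. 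So: state the lemma with the hypothesis $t\ge\sqrt2\,\|y\|_2$ (or conclude $\Pr[X\ge t]\le\Pr[Y\ge\gamma t]$ only for such $t$), keep your coupling proof, and do not attempt to recover the claim for all $t>0$ — it is false there.
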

\begin{proof}
We would like the density function $h$ of $y_i^2 C_i^2$. Letting $t = y_i^2 C_i^2$, the inverse function is $C_i = t^{1/2}/y_i$. Taking the derivative,
we have $\frac{dC_i}{dt} = \frac{1}{2y_i} t^{-1/2}$. Letting $f(t) = \frac{2}{\pi} \frac{1}{1+t^2}$ be the density function of the absolute value of a Cauchy random variable,
we have by the change of variable technique,
\begin{eqnarray*}
h(t) & = & \frac{2}{\pi} \frac{1}{1+t/y_i^2} \cdot \frac{1}{2y_i} t^{-1/2} = \frac{1}{\pi} \frac{1}{y_i t^{1/2}+t^{3/2}/y_i}. 
\end{eqnarray*}
We would also like the density function $k$ of $y_i^2 E_i^2$, where $E_i \sim 1/U_i$. Letting $t = y_i^2 E_i^2$, the inverse function is $E_i = t^{1/2}/y_i$. Taking
the derivative, $\frac{dE_i}{dt} = \frac{1}{2y_i} t^{-1/2}$. Letting $g(t) = t^{-2} e^{-1/t}$ be the density function of the reciprocal of an exponential random variable,
we have by the change of variable technique,
\begin{eqnarray*}
k(t) & = & \frac{y_i^2}{t} e^{-y_i/t^{1/2}} \cdot \frac{1}{2y_i} t^{-1/2} = \frac{y_i}{2t^{3/2}} e^{-y_i/t^{1/2}}.
\end{eqnarray*}
We claim that $k(t) \leq h(\gamma t)/\gamma$ for a sufficiently small constant $\gamma > 0$. This is equivalent to showing that
$$\frac{\pi}{2} \frac{y_i}{t^{3/2}} e^{-y_i/t^{1/2}} \gamma \leq \frac{1}{\gamma^{1/2} y_i t^{1/2} + \gamma^{3/2} t^{3/2} / y_i},$$
which for $\gamma < 1$, is implied by showing that
$$\frac{\pi}{2} \frac{y_i}{t^{3/2}} e^{-y_i/t^{1/2}} \gamma \leq \frac{1}{\gamma^{1/2} y_i t^{1/2} + \gamma^{1/2} t^{3/2} / y_i}.$$
We distinguish two cases: first suppose $t \geq y_i^2$. In this case, $e^{-y_i/t^{1/2}} \leq 1$. Note also that 
$y_i t^{1/2} \leq t^{3/2}/y_i$ in this case. Hence,
$\gamma^{1/2} y_i t^{1/2} \leq \gamma^{1/2} t^{3/2}/y_i$. Therefore, the above is implied by showing
$$\frac{\pi}{2} \frac{y_i}{t^{3/2}} \gamma \leq \frac{y_i}{2\gamma^{1/2} t^{3/2}},$$
or 
$$\gamma^{3/2} \leq \frac{1}{\pi},$$
which holds for a sufficiently small constant $\gamma \in (0,1)$. 

Next suppose $t < y_i^2$. In this case $y_i t^{1/2} > t^{3/2}/y_i$, and it suffices to show
$$\frac{\pi}{2} \frac{y_i}{t^{3/2}} e^{-y_i/t^{1/2}} \gamma \leq \frac{1}{2\gamma^{1/2} y_i t^{1/2}},$$
or equivalently,
$$\pi y_i^2 \gamma^{3/2} \leq t e^{y_i/t^{1/2}}.$$
Using that $e^x \geq x^2/2$ for $x \geq 0$, it suffices to show
$$\pi y_i^2 \gamma^{3/2} \leq y_i^2/2,$$
which holds for a small enough $\gamma \in (0,1)$. 

We thus have,
\begin{eqnarray*}
\Pr[X \geq t] & = & \Pr[X^2 \geq t^2]\\
& = & \Pr[\sum_{i=1}^d y_i^2/U_i^2 \geq t^2]\\
& = & \int_{\sum_{i=1}^d t_i \geq t^2} k(t_1) \cdots k(t_d) dt_1 \cdots dt_d\\
& \leq & \int_{\sum_{i=1}^d t_i \geq t^2} \kappa^{-d} h(\kappa t_1) \cdots h(\kappa t_d) dt_1 \cdots dt_d\\
& \leq & \int_{\sum_{i=1}^d s_i \geq \kappa t^2} f(s_1) \cdots f(s_d) ds_1 \cdots ds_d\\
& = & \Pr[Y^2 \geq \kappa t^2]\\
& = & \Pr[Y \geq \kappa^{1/2} t],
\end{eqnarray*}
where we made the change of variables $s_i = \kappa t_i$. Setting $\gamma = \kappa^{1/2}$ completes the proof. 
\end{proof}
We need a bound on $\Pr[Y \geq t]$, where $Y = (\sum_{i \in [d]} y_i^2 C_i^2)^{1/2}$ is 
as in Lemma \ref{lem:tail-squared}.

\begin{lemma}\label{lem:cauchy-l2}
There is a constant $c > 0$ so that for any $r > 0$,
$$\Pr[Y \geq r \|y\|_1] \leq \frac{c}{r}.$$
\end{lemma}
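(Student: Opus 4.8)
The plan is to pass immediately to the square, writing $Y^2 = \sum_{i\in[d]} y_i^2 C_i^2 = \sum_{i\in[d]} Z_i$ with $Z_i := y_i^2 C_i^2$, so that $Y^2$ is a sum of independent nonnegative random variables and the event $\{Y \ge r\|y\|_1\}$ becomes $\{\sum_i Z_i \ge s\}$ for the threshold $s := r^2\|y\|_1^2$. Then I would run the standard truncation-plus-Markov argument at level $s$, splitting into the event that some single $Z_i$ exceeds $s$ and its complement.

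First, for the ``some term is too large'' part: since $\Pr[|C|\ge v] = \tfrac{2}{\pi}\arctan(1/v)\le \tfrac{2}{\pi v}$ (the estimate already used in the proof of Lemma~\ref{lem:truncated}), we get $\Pr[Z_i \ge s] = \Pr[|C_i|\ge \sqrt s/|y_i|]\le \tfrac{2|y_i|}{\pi\sqrt s}$, and a union bound gives $\Pr[\exists i:\,Z_i\ge s]\le \tfrac{2}{\pi\sqrt s}\sum_i|y_i| = \tfrac{2\|y\|_1}{\pi\sqrt s} = \tfrac{2}{\pi r}$. The point worth flagging is that the per-term tail is proportional to $|y_i|$, so summing over $i$ produces $\|y\|_1$ rather than a factor of $d$.

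Second, on the complementary event every $Z_i < s$, so $\sum_i Z_i = \sum_i Z_i'$ with $Z_i' := Z_i\mathbf 1[Z_i<s]$, and Markov gives $\Pr[\sum_i Z_i'\ge s]\le s^{-1}\sum_i{\bf E}[Z_i']$. The key elementary fact here is that the truncated \emph{second} moment of $|C|$ is linear in the truncation level: ${\bf E}[C^2\mathbf 1[|C|\le u]] = \tfrac{2}{\pi}(u-\arctan u)\le \tfrac{2u}{\pi}$. Applying this with $u = \sqrt s/|y_i|$ yields ${\bf E}[Z_i'] = y_i^2\,{\bf E}[C_i^2\mathbf 1[|C_i|\le \sqrt s/|y_i|]]\le \tfrac{2|y_i|\sqrt s}{\pi}$, hence $\sum_i{\bf E}[Z_i']\le \tfrac{2\sqrt s\,\|y\|_1}{\pi}$ and $\Pr[\sum_i Z_i'\ge s]\le \tfrac{2\|y\|_1}{\pi\sqrt s} = \tfrac{2}{\pi r}$. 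Adding the two events gives $\Pr[Y\ge r\|y\|_1]\le \tfrac{4}{\pi r}$, so $c=4/\pi$ works (the bound is of course vacuous once $r\le 4/\pi$).

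There is no real obstacle beyond evaluating two elementary integrals correctly; the one conceptual choice that matters is squaring at the outset. If instead one bounded $Y$ directly via $Y\le \sum_i|y_i|\,|C_i|$ and invoked the $\ell_1$-style truncation of Lemma~\ref{lem:truncated}, the truncated first moment of $|C|$ would grow like $\log u$, and after summing against the weights $|y_i|$ this logarithm (together with the way one must balance the weights) would reintroduce a dependence on $d$. Passing to $Y^2$ trades the logarithmic truncated moment for a linear one, which is exactly what makes both pieces collapse to clean $O(1/r)$ terms with an absolute constant.
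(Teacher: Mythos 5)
Your proof is correct, and it takes a genuinely different route from the paper's. The paper symmetrizes: it introduces i.i.d.\ signs $\sigma_i$, sets $Z=\sum_i\sigma_i y_i C_i$, and plays two facts against each other. Conditioned on the $C_i$, the Paley--Zygmund inequality applied to $Z^2$ (whose second moment is $\sum_i y_i^2C_i^2=Y^2$ and whose fourth moment is at most $3Y^4$) shows $\Pr_\sigma[Z\ge Y/\sqrt2]\ge 1/12$. Unconditionally, $1$-stability of the Cauchy distribution makes $Z$ distributed as $\|y\|_1$ times a standard Cauchy, giving the tail bound $\Pr[Z\ge t\|y\|_1]\le 2/(\pi t)$. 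If $\Pr[Y\ge r\|y\|_1]$ were $\ge c/r$, the first fact would force $\Pr[Z\ge r\|y\|_1/\sqrt2]\ge c/(12r)$, contradicting the second for $c$ large enough. Your argument instead works directly on $Y^2=\sum_i Z_i$, $Z_i=y_i^2C_i^2$, via the standard truncate-at-the-threshold-and-Markov scheme: the per-term tail $\Pr[Z_i\ge s]\le 2|y_i|/(\pi\sqrt s)$ and the truncated second moment ${\bf E}[C^2\mathbf 1[|C|\le u]]=\tfrac{2}{\pi}(u-\arctan u)\le \tfrac{2u}{\pi}$ both produce the weight $|y_i|$ rather than $|y_i|^2$, so that both pieces sum to $\|y\|_1$ and cancel against $\sqrt s=r\|y\|_1$. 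Both proofs are elementary and short; yours is more self-contained (no Paley--Zygmund, no appeal to $1$-stability) and yields a cleaner constant ($4/\pi$ versus roughly $24\sqrt2/\pi$ from the paper's contradiction), while the paper's is perhaps more conceptually illuminating in that it exposes the $\ell_2$-versus-$\ell_1$ tension via the sign randomization, a trick that is reused elsewhere in the $\ell_1$-embedding literature. Your closing remark about why one must square first --- that the truncated \emph{first} moment of $|C|$ grows like $\log u$, which is exactly the source of the $\log(rd)$ loss in Lemma~\ref{lem:key} --- is correct and a useful observation.
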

\begin{proof}
For $i \in [d]$, let $\sigma_i \in \{-1,+1\}$ be i.i.d. random variables with $\Pr[\sigma_i = -1] = \Pr[\sigma_i = 1] = 1/2$. 
Let $Z = \sum_{i \in [d]} \sigma_i y_i C_i$. We will obtain tail bounds for $Z$ in two different ways, and use this to establish
the lemma.

On the one hand, by the $1$-stability of the Cauchy distribution, we have that $Z \sim \|y\|_1 C$, where $C$ is a standard
Cauchy random variable. Note that this holds for any fixing of the $\sigma_i$. The cumulative distribution function
of the Cauchy random variable is $F(z) = \frac{2}{\pi} \arctan(z).$ Hence for any $r > 0$,
\begin{eqnarray*}
\Pr[Z \geq r \|y\|_1] & = & \Pr[C \geq r] = 1 - \frac{2}{\pi} \arctan(r).
\end{eqnarray*}
We can use the identity 
$$\arctan(r) + \arctan \left (\frac{1}{r} \right) = \frac{\pi}{2},$$
and therefore using the Taylor series for $\arctan$ for $r > 1$,
$$\arctan(r) \geq \frac{\pi}{2} - \frac{1}{r}.$$
Hence,
\begin{eqnarray}\label{eqn:upper}
\Pr[Z \geq r \|y\|_1] \leq \frac{2}{\pi r}.
\end{eqnarray}
On the other hand, for any fixing of $C_1, \ldots, C_d$, we have
$${\bf E}[Z^2] = \sum_{i \in [d]} y_i^2 C_i^2,$$
and also 
$${\bf E}[Z^4] = 3 \sum_{i \neq j \in [d]} y_i^2 y_j^2 C_i^2 C_j^2 + \sum_{i \in [d]} y_i^4 C_i^4.$$
We recall the Paley-Zygmund inequality.
\begin{fact}
If $R \geq 0$ is a random variable with finite variance, and $0 < \theta < 1$, then
$$\Pr[R \geq \theta {\bf E}[R]] \geq (1-\theta)^2 \cdot \frac{{\bf E}[R]^2}{{\bf E}[R^2]}.$$
\end{fact}
Applying this inequality with $R = Z^2$ and $\theta = 1/2$, we have
\begin{eqnarray*}
\Pr[Z^2 \geq \frac{1}{2} \cdot \sum_{i \in [d]} y_i^2 C_i^2] & \geq & 
\frac{1}{4} \cdot \frac{\left (\sum_{i \in [d]} y_i^2 C_i^2 \right )^2}{3 \sum_{i \neq j \in [d]} y_i^2 y_j^2 C_i^2 C_j^2 + \sum_{i \in [d]} y_i^4 C_i^4}\\
& \geq & \frac{1}{12},
\end{eqnarray*}
or equivalently 
\begin{eqnarray}\label{eqn:applyPZ}
\Pr[Z \geq \frac{1}{\sqrt{2}} (\sum_{i \in [d]} y_i^2 C_i^2)^{1/2}] \geq \frac{1}{12}.
\end{eqnarray}
Suppose, towards a contradiction, that $\Pr[Y \geq r \|y\|_1] \geq c/r$ for a sufficiently large constant $c > 0$. By independence of the $\sigma_i$ and the $C_i$, by
(\ref{eqn:applyPZ}) this implies
$$\Pr[Z \geq \frac{r \|y\|_1}{\sqrt{2}}] \geq \frac{c}{12r}.$$
By (\ref{eqn:upper}), this is a contradiction for $c > \frac{24}{\pi}$. It follows that $\Pr[Y \geq r \|y\|_1] < c/r$, as desired. 
\end{proof} 

\begin{corollary}\label{cor:upper}
Let $y_1, \ldots, y_d \geq 0$ be scalars. 
Let $U_1, \ldots, U_d$ be $d$ independendent exponential random variables, 
and let $X = (\sum_{i \in [d]} y_i^2/U_i^2)^{1/2}$. There is a constant $c > 0$ for which for any $r > 0$,
$$\Pr[X > r \|y\|_1] \leq c/r.$$
\end{corollary}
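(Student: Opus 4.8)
The plan is to obtain this simply by chaining together the two preceding lemmas, since the scalars $y_i$, the exponential variables $U_i$, and the random variable $X$ here are exactly those in Lemma~\ref{lem:tail-squared}, and $X$ has just been compared there to its Cauchy counterpart.

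First I would invoke Lemma~\ref{lem:tail-squared}: there is an absolute constant $\gamma > 0$ such that, applying it with $t = r\|y\|_1$,
$$\Pr[X \geq r\|y\|_1] \leq \Pr[Y \geq \gamma r \|y\|_1],$$
where $Y = \left(\sum_{i \in [d]} y_i^2 C_i^2\right)^{1/2}$ and $C_1,\ldots,C_d$ are i.i.d.\ Cauchy random variables. Next I would invoke Lemma~\ref{lem:cauchy-l2}, which supplies a constant $c' > 0$ with $\Pr[Y \geq s\|y\|_1] \leq c'/s$ for every $s > 0$; taking $s = \gamma r$ gives $\Pr[Y \geq \gamma r\|y\|_1] \leq c'/(\gamma r)$. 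Combining the two displays, and noting that $\Pr[X > r\|y\|_1] \leq \Pr[X \geq r\|y\|_1]$, yields $\Pr[X > r\|y\|_1] \leq c'/(\gamma r)$, so the corollary holds with $c = c'/\gamma$.

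There is essentially no obstacle here: the only thing to be careful about is the bookkeeping of which constant is which (the $\gamma$ from Lemma~\ref{lem:tail-squared} is fixed and absolute, the $c'$ from Lemma~\ref{lem:cauchy-l2} likewise), and checking that the hypotheses of both lemmas — nonnegativity of the $y_i$, independence of the $U_i$, and the precise definition of $X$ — match verbatim what is assumed here, which they do. One should also note the degenerate case $\|y\|_1 = 0$: then $X = 0$ almost surely (all $y_i = 0$), so the bound is trivial; otherwise the argument above applies directly.
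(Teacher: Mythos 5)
Your proposal is correct and follows the paper's proof exactly: the paper likewise combines Lemma~\ref{lem:tail-squared} with Lemma~\ref{lem:cauchy-l2} and rescales the constant from the latter by $1/\gamma$. Your additional remark about the degenerate case $\|y\|_1 = 0$ is a harmless bit of extra care not present in the paper.
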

\begin{proof}
The corollary follows by combining Lemma \ref{lem:tail-squared} with Lemma \ref{lem:cauchy-l2}, and rescaling the constant $c$ from Lemma \ref{lem:cauchy-l2}
by $1/\gamma$, where $\gamma$ is the constant of Lemma \ref{lem:tail-squared}. 
\end{proof}

\begin{theorem}\label{thm:expWork}(\cite{wz13})
Let $\matS$ be an $r \times n$ CountSketch matrix with $r = d \cdot \poly\log d$, and $\matD$ an $n \times n$ diagonal matrix
with i.i.d. entries $\matD_{i,i}$ distributed as a reciprocal of a standard exponential random variable. 
Then, with probability at least $9/10$ simultaneously for all $x$,
$$\Omega\left (\frac{1}{d \log^{3/2} d} \right )\|\matA \x\|_1 \leq \|\matS \matD \matA \x\|_1 \leq O(d \log d) \|\matA \x\|_1.$$
\end{theorem}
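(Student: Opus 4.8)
The plan is to mirror the structure of the Cauchy-based embedding (Theorem~\ref{thm:l1embed}): reduce to a well-conditioned basis, prove the upper (dilation) bound for all $\x$ by a column-by-column argument that needs no net, and prove the lower (contraction) bound by first handling a fixed $\x$ and then passing to an $\eps$-net via Lemma~\ref{lem:eps-net}, letting the dilation bound absorb the net error exactly as in the last paragraph of the proof of Theorem~\ref{thm:l1embed}. Throughout I would condition on the diagonal of $\matD$, i.e.\ on the reciprocal-exponential weights $1/U_i$, and use the machinery relating these weights to Cauchy variables (Lemma~\ref{lem:tail-squared}, Lemma~\ref{lem:cauchy-l2}, and especially Corollary~\ref{cor:upper}, which gives $\Pr[(\sum_i y_i^2/U_i^2)^{1/2} > r\|y\|_1]\le c/r$) as the main probabilistic engine. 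As in the Cauchy proof, I first reduce: since $\|\matS\matD\matA\x\|_1$ and $\|\matA\x\|_1$ depend only on the column space of $\matA$, I may assume by Definition~\ref{thm:auerbach} that the columns of $\matA$ form a $(d,1,1)$-Auerbach basis, so $\sum_{j=1}^d\|\matA_{*j}\|_1\le d$ and $\|\x\|_\infty\le\|\matA\x\|_1$ for every $\x$.

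For the dilation bound, the starting observation is that a CountSketch matrix can never increase an $\ell_1$ norm: $\|\matS\mathbf{v}\|_1=\sum_{k}|\sum_{i:h(i)=k}\sigma_i\mathbf{v}_i|\le\sum_i|\mathbf{v}_i|=\|\mathbf{v}\|_1$ for every $\mathbf{v}$. Hence, writing $\matA\x=\sum_j\x_j\matA_{*j}$ and using the Auerbach bound, $\|\matS\matD\matA\x\|_1\le\|\x\|_\infty\sum_j\|\matS\matD\matA_{*j}\|_1\le\|\matA\x\|_1\sum_j\|\matS\matD\matA_{*j}\|_1$, so it suffices to prove an analogue of the ``fixed sum of dilations'' Lemma~\ref{lem:key}: that $\sum_{j=1}^d\|\matS\matD\matA_{*j}\|_1=O(d\log d)$ with probability $\ge 19/20$, a statement about $d$ \emph{fixed} vectors requiring no net. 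For a fixed column $\mathbf{y}=\matA_{*j}$ I would condition on the hash $h$ and on $\matD$ and average over the signs $\sigma$, which by the standard Rademacher/Khintchine bound gives $\Exp_\sigma\|\matS\matD\mathbf{y}\|_1\le\sum_k(\sum_{i:h(i)=k}\mathbf{y}_i^2/U_i^2)^{1/2}$; the heavy-tailed weights $1/U_i^2$ make the vector $(\mathbf{y}_i/U_i)_i$ behave as if it had only $\polylog$-many significant coordinates — its squared mass $\sum_i\mathbf{y}_i^2/U_i^2$ is $O(\|\mathbf{y}\|_1^2)$ off a small-probability event by Corollary~\ref{cor:upper}, while max-stability (Property~\ref{prop:exp}) forces one coordinate to carry a $1/\polylog$ fraction of it — so the per-bucket $\ell_2$ masses are dominated by $\polylog(d)$ of them and the sum collapses to $O(\polylog d)$ times $\|\matD\mathbf{y}\|_2$ rather than $\sqrt{r}$ times it. Summing over the $d$ columns with a union bound for Corollary~\ref{cor:upper}, using $\sum_j\|\matA_{*j}\|_1\le d$ and Markov over $\sigma$, should yield $\sum_j\|\matS\matD\matA_{*j}\|_1=O(d\log d)$, hence $\|\matS\matD\matA\x\|_1\le O(d\log d)\|\matA\x\|_1$ for all $\x$.

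For the contraction bound, fix $\x$ with $\|\matA\x\|_1=1$; then $\matD\matA\x$ has entries $(\matA\x)_i/U_i$, and by max-stability (Property~\ref{prop:exp}) its $\ell_\infty$ norm $\max_i|(\matA\x)_i|/U_i$ is distributed as $1/U$ for a single exponential $U$, so $\|\matD\matA\x\|_\infty=\Omega(1/\log(1/\delta))$ with probability $1-\delta$. I would then argue that $\matS$ approximately preserves this dominant coordinate: the other entries of $\matD\matA\x$ hashed into its bucket contribute, in expectation over $h$ and $\sigma$, only $O(\|\matD\matA\x\|_2/\sqrt{r})$, which Corollary~\ref{cor:upper} controls, so that $\|\matS\matD\matA\x\|_1\ge\|\matS\matD\matA\x\|_\infty$ is at least a constant fraction of $\|\matD\matA\x\|_\infty$. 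Taking $\delta$ of order $(c/(d\log d))^{d}$ — small enough to union bound over an $\eps$-net $\mathcal{N}$ of $\{\matA\x:\|\matA\x\|_1=1\}$ of size $|\mathcal{N}|\le(2/\eps)^d$ from Lemma~\ref{lem:eps-net} with $\eps\approx 1/(d\log d)$ — and extending from $\mathcal{N}$ to all $\x$ via the triangle inequality together with the Step-2 dilation bound (exactly as in the proof of Theorem~\ref{thm:l1embed}), should give $\|\matS\matD\matA\x\|_1=\Omega(1/(d\log^{3/2}d))\|\matA\x\|_1$ for all $\x$. A union bound over the events in the two steps completes the argument.

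I expect the real difficulty to lie in two places, both tied to $r=d\,\poly\log d$ being small. First, squeezing the dilation factor down to the claimed $O(d\log d)$ rather than a crude $\poly(d)$: this requires a genuinely quantitative handle on the concentration of $\sum_i\mathbf{y}_i^2/U_i^2$ and on how hashing redistributes it, carefully weighing the $\polylog(d)$ ``effective support'' against the $r$ buckets, and it is the analogue of the most technical lemma (Lemma~\ref{lem:key}) of the Cauchy proof. Second, in the contraction the naive ``the heavy coordinate survives hashing'' estimate only bounds the interfering bucket mass by $O(1/\sqrt{r})$, which is comparable to $\|\matD\matA\x\|_\infty$ only when $r\gtrsim d^2\log^2 d$; making do with $r=d\,\poly\log d$ is precisely why the lower bound degrades to $\Omega(1/(d\log^{3/2}d))$ and forces a more careful bucket-by-bucket bookkeeping of the exponentially-scaled mass (again leaning on Corollary~\ref{cor:upper} and, through it, on the Cauchy domination of Lemma~\ref{lem:tail-squared}) in place of a one-line heavy-hitter argument.
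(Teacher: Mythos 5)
Your plan for the \textbf{dilation} bound is close in spirit to the paper's: reduce to an Auerbach basis, bound the sum $\sum_j\|\matS\matD\matA_{*j}\|_1$ over columns so no net is needed, and use Khintchine over the bucket signs plus Corollary~\ref{cor:upper} to control $\ell_2$-mass in terms of $\ell_1$-mass. However, the mechanism you describe (``the heavy-tailed weights make the vector behave as if it had only polylog significant coordinates, so the per-bucket $\ell_2$ masses are dominated by polylog of them'') is not how the paper closes the gap, and it is not clear it would yield $O(d\,\mathrm{polylog}\,d)$. The paper does not count significant buckets at all. Instead, for each bucket $j$ it applies Corollary~\ref{cor:upper} to the restriction $\z^{i,j}$ (the portion of column $i$ hashed to bucket $j$), showing $\Exp[\|\matD\z^{i,j}\|_2\mid \text{truncation}] = O(\log(dr))\,\|\z^{i,j}\|_1$; since $\sum_j\|\z^{i,j}\|_1 = \|\y^i\|_1$ exactly (the buckets partition the coordinates), summing over $j$ and then over $i$ gives $O(d\log(dr))$ with no $\sqrt{r}$ penalty and no bucket-counting. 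The conditioning/truncation bookkeeping (events $\mathcal{E}$, $\mathcal{F}$, $\mathcal{G}$) is then exactly the one you would see in the proof of Lemma~\ref{lem:key}, transplanted from Cauchy to exponential weights. You gesture at this via Lemma~\ref{lem:key}, but the argument you sketch is genuinely different and would need the linearity trick $\sum_j\|\z^{i,j}\|_1 = \|\y^i\|_1$ made explicit to avoid the $\sqrt{r}$ loss you rightly worry about.

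On the \textbf{contraction} side there is a real gap: you are missing the key step that makes the paper's argument work. The paper conditions, at the very top of the proof, on $\matS$ being a $(1\pm 1/2)$ $\ell_2$-subspace embedding for $\matD\matA$ (Theorem~\ref{thm:nn}); this is a single constant-probability event over $\matS$'s randomness that holds \emph{uniformly over all $\x$}. Given it, the contraction for a fixed $\x$ is a three-line chain of deterministic norm inequalities: $\|\matS\matD\matA\x\|_1 \geq \|\matS\matD\matA\x\|_2 \geq \tfrac12\|\matD\matA\x\|_2 \geq \tfrac12\|\matD\matA\x\|_\infty = \tfrac12\|\matA\x\|_1/U$, after which the exponential tail of $U$ supplies the exponentially small per-point failure probability needed for the net union bound, \emph{using only $\matD$'s randomness}. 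Your ``heavy hitter survives hashing'' route runs into three problems simultaneously: (a) which coordinate of $\matD\matA\x$ is the heavy one depends on $\x$, so different net points care about different buckets and you cannot union bound the hash collision event once; (b) the only source of concentration for ``the heavy bucket is not contaminated'' is the pairwise-independent $h$ and 4-wise independent $\sigma$, which cannot deliver the $\exp(-\Theta(d\log d))$ per-point failure probability the net demands; and (c) this event is over $\matS$'s randomness, which you already spent (via Khintchine) on the dilation side, so you would have to pay for it twice. Conditioning once on the $\ell_2$-subspace-embedding property of $\matS$ for $\matD\matA$ avoids all three at a stroke, and is the piece that turns $r = d\,\mathrm{polylog}\,d$ from an obstacle into a help (it is precisely what a CountSketch matrix with that many rows is designed to do).
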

\begin{proof}
By Theorem \ref{thm:nn}, with probability at least $99/100$ over the choice of $\matS$, 
$\matS$ is an $\ell_2$-subspace embedding for the matrix $\matD \cdot \matA$, 
that is, simultaneously for all $x \in \mathbb{R}^d$,
$\|\matS \matD \matA \x\|_2 = (1 \pm 1/2) \|\matD \matA \x\|_2$. We condition $\matS$ on this event.

For the dilation, we need Khintchine's inequality.
\begin{fact}\label{fact:khintchine}(\cite{h81}). 
Let $Z = \sum_{i=1}^r \sigma_i z_i$ for i.i.d. random variables $\sigma_i$
uniform in $\{-1,+1\}$, and $z_1, \ldots, z_r$ be scalars. 
There exists a constant $c > 0$ for which for all $t > 0$
$$\Pr[|Z| > t \|\y\|_2] \leq \exp(-ct^2).$$
\end{fact}
Let $\y^1, \ldots, \y^d$ be $d$ vectors in an Auerbach basis for the column
space of $\matA$. 
Applying Fact \ref{fact:khintchine} to a fixed entry $j$ of $\matS \matD \y^i$ for a
fixed $i$, and letting $\z^{i,j}$ denote the vector whose $k$-th coordinate
is $\y^i_k$ if $\matS_{j,k} \neq 0$, and otherwise $\z^{i,j}_k = 0$, 
we have for a constant $c' > 0$, 
$$\Pr[|(\matS \matD \y^i)_j| > c' \sqrt{\log d} \|\matD \z^{i,j}\|_2] \leq \frac{1}{d^3}.$$
By a union bound, with probability 
$$1- \frac{r d}{d^3} = 1-\frac{d^2 \poly \log d}{d^3}
= 1- \frac{\poly \log d}{d},$$ for all $i$ and $j$,
$$|(\matS \matD \y^i)_j| \leq c' \sqrt{\log d} \|\matD \z^{i,j}\|_2,$$
which we denote by event $\mathcal{E}$ and condition on. 
Notice that the probability is taken only over
the choice of the $\sigma_i$, and therefore conditions only the $\sigma_i$ random variables. 

In the following, $i \in [d]$ and $j \in [r]$. 
Let $\mathcal{F}_{i,j}$ be the event that 
$$\|\matD \z^{i,j}\|_2 \leq 100 d r \|\z^{i,j}\|_1,$$
We also
define $$\mathcal{F}_j = \wedge_{i} \mathcal{F}_{i,j}, \ \ 
\mathcal{F} = \wedge_{j} \mathcal{F}_j = \wedge_{i, j} \mathcal{F}_{i,j}.$$
By Corollary \ref{cor:upper} and union bounds,
$$\Pr[\mathcal{F}_j] \geq 1 - \frac{1}{100r},$$
and union-bounding over $j \in [r]$, 
$$\Pr[\mathcal{F}] \geq \frac{99}{100}.$$
We now bound ${\bf E}[\|\matD \z^{i,j}\|_2 \mid \mathcal{E}, \mathcal{F}].$
By independence, 
$${\bf E}[\|\matD \z^{i,j}\|_2 \mid \mathcal{E}, \mathcal{F}]
= {\bf E}[\|\matD \z^{i,j}\|_2 \mid \mathcal{E}, \mathcal{F}_j].$$
Letting $p = \Pr[\mathcal{E} \wedge \mathcal{F}_{i,j}] \geq 99/100$, 
we have by Corollary \ref{cor:upper},
\begin{eqnarray*}
{\bf E}[\|\matD \z^{i,j}\|_2 \mid \mathcal{E}, \mathcal{F}_{i,j}]
& = & \int_{u = 0}^{100dr} 
\Pr[\|\matD \z^{i,j}\|_2 \geq u \|\z^{i,j}\|_1 
\mid \mathcal{E}, \mathcal{F}_{i,j}] \cdot du \\
& \leq & \frac{1}{p} (1 + \int_{u = 1}^{100dr} \frac{c}{u}) \cdot du\\
& \leq & \frac{c}{p} (1 + \ln (100dr)).
\end{eqnarray*}
We can perform the following manipulation:
\begin{eqnarray*}
\frac{c}{p}(1 + \ln(100dr)) & \geq & {\bf E}[\|\matD \z^{i,j}\|_2 \mid \mathcal{E}, \mathcal{F}_{i,j}]\\
& \geq & {\bf E}[\|\matD \z^{i,j}\|_2 \mid \mathcal{E}, \mathcal{F}_j] \cdot \Pr[\mathcal{F}_j \mid \mathcal{F}_{i,j}]\\
& = & {\bf E}[\|\matD \z^{i,j}\|_2 \mid \mathcal{E}, \mathcal{F}_j] \cdot \Pr[\mathcal{F}_j]/ \Pr[\mathcal{F}_{i,j}]\\
& \geq & \frac{1}{2} {\bf E}[\|\matD \z^{i,j}\|_2 \mid \mathcal{E}, \mathcal{F}_j] \cdot \Pr[\mathcal{F}_j]\\
& = & \frac{1}{2} {\bf E}[\|\matD \z^{i,j}\|_2 \mid \mathcal{E}, \mathcal{F}_j] \cdot \Pr[\mathcal{F}].
\end{eqnarray*}
It follows by linearity of expectation that,
\begin{eqnarray*}
{\bf E}[\sum_{i \in [d], j \in [d \poly \log d]} \|\matD \z^{i,j}\|_2 \mid \mathcal{E}, \mathcal{F}]
& \leq & \frac{c}{p} (1 + \ln(100dr)) \sum_{i=1}^d \|\y^i\|_1.
\end{eqnarray*}
Consequently, by a Markov bound, and using that $p \geq 1/2$, 
conditioned on $\mathcal{E} \wedge \mathcal{F}$, with probability at least $9/10$, we have the
occurrence of the event $\mathcal{G}$
that 
\begin{eqnarray}\label{eqn:dilate}
\sum_{i=1}^d \|\matD \y^i\|_2 \leq 10 \frac{c}{p}(1 +  \ln(100dr)) \sum_{i=1}^d \|\y^i\|_1.
\leq 40 c d \ln(100dr)
\end{eqnarray}
To bound the dilation, consider a fixed vector $\x \in \mathbb{R}^d$.
Then conditioned on $\mathcal{E} \wedge \mathcal{F} \wedge \mathcal{G}$,
and for $\matA = [\y^1, \ldots, \y^d]$ 
an Auerbach basis (without loss of generality),
\begin{eqnarray*}
\|\matS \matD \matA \x\|_1 & \leq & \|\x\|_{\infty} \sum_{i = 1}^d \|\matS \matD \y^i\|_1\\
& \leq & \|\matA \x\|_1 \sum_{i = 1}^d \|\matS \matD \y^i\|_1\\
& \leq & \|\matA \x\|_1 c' \sqrt{\log d} 40 cd \ln(100dr)\\
& \leq & c'' d (\log^{3/2} d) \|\matA \x\|_1,
\end{eqnarray*}
where the first inequality follows from the triangle inequality,
the second inequality uses that $\|\x\|_{\infty} \leq \|\matA \x\|_1$ for a
well-conditioned basis $\matA$, the third inequality uses 
(\ref{eqn:dilate}), and in the fourth inequality $c'' > 0$ is a
sufficiently large constant. Thus for all $\x \in \mathbb{R}^d$,
\begin{eqnarray}\label{eqn:dilateFinal}
\|\matS \matD \matA \x\|_1 \leq c'' d (\log^{3/2} d) \|\matA \x\|_1.
\end{eqnarray}

For the contraction, we have
\begin{eqnarray*}
\|\matS \matD \matA \x\|_1 & \geq & \|\matS \matD \matA \x\|_2\\
& \geq & \frac{1}{2} \|\matD \matA \x\|_2 \\
& \geq & \frac{1}{2} \|\matD \matA \x\|_{\infty}\\
& = & \frac{1}{2} \frac{\|\matA\x\|_1}{U},
\end{eqnarray*}
where $U$ is a standard exponential random variables, and 
where the first inequality uses our conditioning on $\matS$, the second inequality uses a standard norm inequality, and
the third inequality uses the max-stability of the exponential distribution. Thus, since the cumulative distribution
of an exponential random variable $F(x) = 1-e^{-x}$, we have that for any fixed $x$,
\begin{eqnarray}\label{eqn:tailExp}
\Pr \left [\|\matS \matD \matA \x\|_1 \geq \frac{1}{4} \frac{\|\matA \x\|_1}{d \log (2d^3)} \right ] \geq 1- (2d^2)^{2d}. 
\end{eqnarray} 
By Lemma \ref{lem:eps-net}, there exists a $\frac{1}{d^3}$-net $\mathcal{N}$ for which $|\mathcal{N}| \leq (2d^3)^d$,
where $\mathcal{N}$ is a subset of 
$\{\y \in \mathbb{R}^n \mid \y = \matA \x \textrm{ for some } \x \in \mathbb{R}^d \textrm{ and }
\|\y\|_1 = 1\}$. Combining this with (\ref{eqn:tailExp}), by a union bound we have the event $\mathcal{E}$ that 
simultaneously for all 
$\w \in \mathcal{N}$, 
$$\|\matS \matD \w\|_1 \geq \frac{1}{4} \frac{\|\w\|_1}{d \log (2d^3)}.$$
Now consider an arbitrary vector $\y$ of the form $\matS \matD \matA \x$ 
with $\|\y\|_1 = 1$. By definition of $\mathcal{N}$, one can
write $\y = \w + (\y-\w)$, where $\w \in \mathcal{N}$ and $\|\y-\w\|_1 \leq \frac{1}{d^3}$. We have,
\begin{eqnarray*}
\|\matS \matD \y\|_1 & \geq & \|\matS \matD \w\|_1 - \|\matS \matD(\y-\w)\|_1\\
& \geq & \frac{1}{4 d \log(2d^3)} - \|\matS\matD(\y-\w)\|_1\\
&  \geq & \frac{1}{4 d \log(2d^3)} - \frac{O(d \log^{3/2} d)}{d^3}\\
& \geq & \frac{1}{8 d \log(2d^3)},
\end{eqnarray*}
where the first inequality uses the triangle inequality, the second the occurrence of $\mathcal{E}$, and the third
(\ref{eqn:dilateFinal}). This completes the proof. 
\end{proof}
\begin{corollary}\label{cor:second}
There is an $O(\nnz(\matA)\log n) + \poly(d/\eps)$ time algorithm for computing $\mat\Pi \matA$, where 
$\mat\Pi$ is a $\poly(d/\eps)$
by $n$ matrix satisfying, with probability at least $9/10$, 
$\|\mat\Pi \matA \x\|_1 = (1 \pm \eps) \|\matA \x\|_1$ for all $x$. Therefore, there is also
an $O(\nnz(\matA) \log n) + \poly(d/\eps)$ time algorithm for solving the $\ell_1$-regression problem
up to a factor of $(1+\eps)$ with error probability $1/10$. 
\end{corollary}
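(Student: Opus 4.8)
The plan is to run the same pipeline that gives Corollary \ref{cor:first}, but with the slow dense Cauchy embedding of Theorem \ref{thm:l1embed} replaced by the near-linear-time sketch $\matS\matD$ of Theorem \ref{thm:expWork}. It is convenient to work with the $n\times(d+1)$ matrix $[\matA,\b]$, whose column space has dimension at most $d+1$: if $\mat\Pi$ is a $(1\pm\eps)$ $\ell_1$-subspace embedding for $[\matA,\b]$ then it is one for $\matA$ as well, and, writing any $\matA\x-\b$ as $[\matA,\b]$ applied to the vector with last coordinate $-1$, the embedding property gives $\min_{\x}\|\mat\Pi\matA\x-\mat\Pi\b\|_1=(1\pm\eps)\min_{\x}\|\matA\x-\b\|_1$, so the minimizer $\x'$ of the sketched problem is a $(1+O(\eps))$-approximate minimizer of the original (rescale $\eps$ by a constant at the end). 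Thus it suffices to produce $\mat\Pi$ and $\mat\Pi[\matA,\b]$ in $O(\nnz(\matA)\log n)+\poly(d/\eps)$ time and then solve the $\poly(d/\eps)$-size linear program $\min_{\x}\|\mat\Pi\matA\x-\mat\Pi\b\|_1$.

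To build $\mat\Pi$, first compute $\matM=\matS\matD[\matA,\b]$, with $\matS$ a CountSketch matrix of $d\cdot\poly\log d$ rows and $\matD$ the diagonal matrix of reciprocal exponentials; by Theorem \ref{thm:expWork} this takes $O(\nnz(\matA))$ time (after deleting the all-zero rows of $\matA$ we may assume $n\le\nnz(\matA)$), and after scaling $\matS\matD$ by $\Theta(d\log^{3/2}d)$ it is an $\ell_1$-subspace embedding for $[\matA,\b]$ in the sense of \S\ref{sec:l1Subspace}, with distortion $d^{c_1}$ and row count $d^{c_2}$ for absolute constants $c_1,c_2$ (the polylogarithmic-in-$d$ factors of Theorem \ref{thm:expWork} are swallowed by enlarging the constants). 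Compute a QR-factorization $\matM=\matQ\matR$ and then $\matR^{-1}$ in $\poly(d)$ time; by Lemma \ref{lem:wcbse}, $[\matA,\b]\matR^{-1}$ is an $(\alpha,\beta,1)$-well-conditioned basis with $\alpha,\beta\le\poly(d)$. Rather than form $[\matA,\b]\matR^{-1}$ explicitly (which costs $O(\nnz(\matA)d)$), apply the Gaussian shortcut of \S\ref{sec:gaussian}: pick $\matG$ an $(d{+}1)\times t$ matrix of i.i.d.\ $N(0,1/t)$ entries with $t=O(\log n)$, compute $\matR^{-1}\matG$ in $\poly(d)\log n$ time, and then $[\matA,\b]\matR^{-1}\matG=[\matA,\b](\matR^{-1}\matG)$ in $O(\nnz(\matA)\log n)$ time; by Lemma \ref{lem:jl} its row $\ell_1$-norms are within a factor $2$ of those of $[\matA,\b]\matR^{-1}$, so $p_i=\min(1,\;r\cdot\|([\matA,\b]\matR^{-1}\matG)_i\|_1/\sum_j\|([\matA,\b]\matR^{-1}\matG)_j\|_1)$ with $r=\eps^{-2}\poly(d)$ obeys (\ref{sampleBound}) with $\zeta=1/(4(d{+}1))$. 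Feeding these $p_i$ into the sampling procedure of Theorem \ref{thm:slow} (applied to $[\matA,\b]$ with this basis) yields, with probability $1-2^{-d}$, a rescaled row-sampling matrix that is a $(1\pm\eps)$ $\ell_1$-subspace embedding for $[\matA,\b]$ with $\eps^{-2}\poly(d)$ nonzero rows in expectation; discarding the tail (a Chernoff/Markov bound) makes the row count at most $\poly(d/\eps)$ with probability $\ge9/10$, and we take $\mat\Pi$ to be this (fixed-size) matrix. A union bound over the three failure events — Theorem \ref{thm:expWork}, Lemma \ref{lem:jl}, Theorem \ref{thm:slow} — with constants tuned to total at most $1/10$ gives correctness, and the time is $O(\nnz(\matA)\log n)$ (applying the linear map $\matS\matD$, and the product $[\matA,\b](\matR^{-1}\matG)$) plus $\poly(d/\eps)$ for the small-matrix steps and the final linear program.

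There is no essentially new difficulty beyond Theorem \ref{thm:expWork}, which is assumed; the work here is assembling known pieces. The one point needing care is bookkeeping of distortion: one must verify that pushing the $\poly(d\log d)$ distortion of $\matS\matD$ through Lemma \ref{lem:wcbse} and then through the variance and Bernstein-tail estimates behind Theorem \ref{thm:slow} still leaves the sample size at $\eps^{-2}\poly(d)$ — which it does, since every such factor is polynomial in $d$ and is absorbed by the ``\poly(d)'' there — and that the Gaussian sketch uses only $O(\log n)$ columns, so that $[\matA,\b](\matR^{-1}\matG)$ stays within the $O(\nnz(\matA)\log n)$ budget rather than blowing up to $O(\nnz(\matA)d)$.
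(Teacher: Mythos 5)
Your proof is correct and takes exactly the same approach as the paper: the paper's own proof is a one-line "combine Theorem \ref{thm:slow}, Lemma \ref{lem:wcbse} together with the Gaussian-sketch optimization of \S\ref{sec:gaussian}, and Theorem \ref{thm:expWork}," and your write-up is a careful unpacking of precisely that combination, with the right bookkeeping of distortion factors and running times.
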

\begin{proof}
The corollary follows by combining Theorem \ref{thm:slow}, Lemma \ref{lem:wcbse} and its optimization 
in \S\ref{sec:gaussian}, and 
Theorem \ref{thm:expWork}. 
\end{proof}

\subsection{Application to hyperplane fitting}\label{sec:hyperplane}
One application of $\ell_1$-regression is to finding the best hyperplane to find a 
set of $n$ points in $\mathbb{R}^d$, presented as an $n \times d$ matrix $\matA$ 
\cite{bd09L1,bdb10,kk03,kk05,sw11,CDMMMW13}. 
One seeks
to find a hyperplane $H$ so that the sum of $\ell_1$-distances of the rows $\matA_{i*}$ to
$H$ is as small as possible. 

While in general, the points on $H$ are those $\x \in \mathbb{R}^d$ for which $\langle \x, \w \rangle = \gamma$,
where $\w$ is the normal vector of $H$ and $\gamma \in \mathbb{R}$, we can in fact assume that 
$\gamma = 0$. Indeed, this follows by increasing the dimension $d$ by one, placing the value $1$ on
all input points in the new coordinate, and placing the value $\gamma$ on the new coordinate in $\w$. As
this will negligibly affect our overall time complexity, we can therefore assume $\gamma = 0$ in what
follows, that is, $H$ contains the origin. 

A nice feature of the $\ell_1$-norm is that if one grows an $\ell_1$-ball around a point $\x \in \mathbb{R}^d$,
it first touches a hyperplane $H$ at a vertex of the $\ell_1$-ball. Hence, there is a coordinate
direction $i \in [d]$ for which the point of closest $\ell_1$-distance to $\x$ on $H$ is obtained by
replacing the $i$-th coordinate of $\x$ by the unique real number $v$ so that 
$(x_1, \ldots, x_{i-1}, v, x_{i+1}, \ldots, x_d)$ is on $H$. 

An interesting observation is that this coordinate direction $i$ only depends on $H$, that is,
it is independent of $\x$, as shown in Corollary 2.3 of \cite{m97}. Let $\matA^{-j}$ denote the matrix
$\matA$ with its $j$-th column removed. Consider a hyperplane $H$ with normal vector $\w$. Let $\w^{-j}$
denote the vector obtained by removing its $j$-th coordinate. Then the
sum of $\ell_1$-distances of the rows $\matA_{i*}$ of $\matA$ to $H$ is given by
$$\min_j \|-\matA^{-j} \w^{-j} - \matA_{*j}\|_1,$$
since $\matA^{-j} \w^{-j}$ is the negation of the vector of $j$-th coordinates of the points projected
(in the $\ell_1$ sense) onto $H$, using that $\langle \w, \x \rangle = 0$ 
for $\x$ on the hyperplane. It follows that an optimal
hyperplane $H$ can be obtained by solving
$$\min_j \min_{\w} \|-\matA^{-j} \w^{-j} -\matA_{*j}\|_1,$$
which characterizes the normal vector $\w$ of $H$. Hence, by solving $d$ $\ell_1$-regression problems,
each up to a $(1+\eps)$-approximation factor and each on an $n \times (d-1)$ matrix, 
one can find a hyperplane whose cost is at most $(1+\eps)$ times the cost of the optimal hyperplane.

One could solve each of the $d$ $\ell_1$-regression problems independently up to $(1+\eps)$-approximation with error
probability $1/d$, each taking $O(\nnz(\matA)\log n) + \poly(d/\eps)$ time. This would lead to
an overall time of $O(\nnz(\matA) d \log n) + \poly(d/\eps)$, but we can do better by reusing computation. 

That is, it suffices to compute a subspace embedding $\mat\Pi \matA$ once, using Corollary \ref{cor:second}, which
takes only $O(\nnz(\matA) \log n) + \poly(\d/\eps)$ time. 

%
For the subspace approximation problem, we can write
\begin{eqnarray*}
\min_j \min_{\w} \|-\matA^{-j} \w^{-j} -\matA_{*j}\|_1
& = & \min_j \min_{\w \mid \w_j = 0} \|-\matA \w - \matA_{*j}\|_1\\
& = & (1 \pm \eps) \min_j \min_{\w \mid \w_j = 0} \|-\mat\Pi \matA \w- \mat\Pi \matA_{*j}\|_1.
\end{eqnarray*}
Thus, having computed $\mat\Pi \matA$ once, one can solve the subspace approximation problem with an
additional $\poly(d/\eps)$ amount of time. We summarize our findings in the following theorem. 

\begin{theorem}($\ell_1$-Hyperplane Approximation)
There is an $O(\nnz(\matA)\log n) + \poly(d/\eps)$ time algorithm for solving the $\ell_1$-Hyperplane
approximation problem with constant probability. 
\end{theorem}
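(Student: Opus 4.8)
The plan is to combine the reduction to $d$ constrained $\ell_1$-regression problems laid out above with a single reuse of the fast $\ell_1$-subspace embedding from Corollary~\ref{cor:second}. First I would recall the homogenization step (adding a coordinate fixed to $1$), after which the optimal hyperplane cost equals $\min_{j \in [d]} \min_{\w \mid \w_j = 0} \|-\matA^{-j}\w^{-j} - \matA_{*j}\|_1 = \min_{j \in [d]} \min_{\w \mid \w_j = 0} \|-\matA\w - \matA_{*j}\|_1$. Thus it suffices to $(1+\eps)$-approximate each of the $d$ inner minimization problems and then output the normal vector attaining the best value over $j$; since the minimum of a collection of $(1\pm\eps)$-approximations is a $(1\pm\eps)$-approximation of the minimum, this last aggregation step costs nothing in approximation quality.

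Next I would invoke Corollary~\ref{cor:second} a single time to compute $\mat\Pi\matA$, where $\mat\Pi$ has $\poly(d/\eps)$ rows, satisfies $\|\mat\Pi\matA\x\|_1 = (1\pm\eps)\|\matA\x\|_1$ for all $\x \in \mathbb{R}^d$, and is computed in $O(\nnz(\matA)\log n) + \poly(d/\eps)$ time. The key point is that for every $j$ and every feasible $\w$ (i.e. $\w_j = 0$), the residual $-\matA\w - \matA_{*j}$ lies in the column space of $\matA$, because $\matA_{*j}$ is itself a column of $\matA$. Hence the subspace-embedding guarantee yields $\|-\mat\Pi\matA\w - \mat\Pi\matA_{*j}\|_1 = (1\pm\eps)\|-\matA\w - \matA_{*j}\|_1$ \emph{simultaneously} for all $j\in[d]$ and all feasible $\w$, so one $\mat\Pi$ suffices for all $d$ regression instances at once.

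Then, for each $j \in [d]$, I would solve the sketched problem $\min_{\w \mid \w_j = 0}\|-\mat\Pi\matA\w - \mat\Pi\matA_{*j}\|_1$ exactly: this is an $\ell_1$-regression instance whose design matrix is the $\poly(d/\eps)\times(d-1)$ matrix obtained from $\mat\Pi\matA$ by deleting column $j$, and whose target is the $j$-th column of $\mat\Pi\matA$, so it can be cast as a linear program and solved in $\poly(d/\eps)$ time. Doing this for all $j$ costs $d\cdot\poly(d/\eps)=\poly(d/\eps)$. Outputting the $\w$ realizing $\min_j$ gives, by the two previous paragraphs, a hyperplane whose total $\ell_1$-fitting cost is within a $(1+\eps)$-factor of optimal, and the running time is $O(\nnz(\matA)\log n)+\poly(d/\eps)$ as claimed, with the computation of $\mat\Pi\matA$ being the only superlinear-in-$\nnz$ (up to the $\log n$ factor) component that is \emph{shared} across all $d$ subproblems.

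The only subtlety — and the step I would be most careful to spell out — is that one embedding legitimately handles all $d$ problems simultaneously; but as noted this is immediate, since every residual vector arising in any of the $d$ subproblems sits in the $d$-dimensional column space of $\matA$, which is exactly the space on which $\mat\Pi$ is guaranteed to preserve $\ell_1$-norms up to $(1\pm\eps)$. The constant success probability follows directly from that of Corollary~\ref{cor:second}, which is invoked exactly once; standard independent repetition would amplify it if a smaller failure probability were desired.
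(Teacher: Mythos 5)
Your proposal is correct and follows essentially the same route as the paper: homogenize, reduce to $d$ constrained $\ell_1$-regression problems, compute $\mat\Pi\matA$ once via Corollary~\ref{cor:second}, and observe that every residual $-\matA\w - \matA_{*j} = \matA(-(\w+\e_j))$ lies in the column space of $\matA$, so a single embedding handles all $d$ subproblems, each then solvable in $\poly(d/\eps)$ time. No gaps.
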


\section{Low Rank Approximation}\label{chap:lowRank}
In this section we study the low rank approximation problem. We are given
an $n \times d$ matrix $\matA$, and would like to find a matrix $\tilde{\matA}_k$
for which 
$$\|\matA-\tilde{\matA}_k\| \leq (1+\eps) \|\matA-\matA_k\|,$$
where $\matA_k$ is the best rank-$k$ approximation to $\matA$ with respect to some
matrix norm, and $\tilde{\matA}_k$ has rank $k$. 

Low rank approximation can be used for a variety of problems, such as
Non-Negative Matrix Factorization (NNMF)~\cite{seung2001algorithms}, 
Latent Dirichlet Allocation (LDA)~\cite{blei2003latent}, and face recognition. 
It has also recently been used for $\ell_2$-error shape-fitting
problems~\cite{dan2013tiny}, such as $k$-means and projective clustering. 

Here we demonstrate an application to latent semantic analysis (LSA). We define a {\it term-document}
matrix $\matA$ in which the rows correpond to terms (e.g., words) and columns correspond to documents. The entry
$\matA_{i,j}$ equals the number of occurrences of term $i$ in document $j$. Two terms $i$ and $j$ can be regarded 
as correlated
if the inner product $\langle \matA_{i,*},\mat A_{j,*} \rangle$ of their corresponding rows of $\matA$ is large. The matrix
$\matA\matA^T$ contains all such inner products. Similarly, one can look at document correlation by looking at $\matA^T\matA$.
By writing $\matA = \matU \mat\Sigma \matV^T$ in its SVD, we have $\matA\matA^T = \matU\mat\Sigma^2 \matU^T$. 

By taking the SVD of low rank approximation $\tilde{\matA}_k$ to a matrix $\matA$, 
one obtains $\tilde{\matA}_k = \matL \matU \matR^T$, where
$\matL$ and $\matR$ have orthonormal columns, 
and $\matU$ is a rank-$k$ matrix. One can view the columns of $\matL$ and $\matR$ as approximations
to the top $k$ left and right singular vectors of $\matA$.  Note that, as we will see below, the algorithm for generating
$\tilde{\matA}_k$ usually generates its factorization into the product of $\matL$, $\matU$, and $\matR^T$ 
so one does not need to perform
an SVD on $\tilde{\matA}_k$ (to achieve $O(\nnz(\matA)) + (n+d)\poly(k/\eps)$ time algorithms for low rank approximation, one
cannot actually afford to write down $\tilde{\matA}_k$ other than in factored form, since $\tilde{\matA}_k$ may be dense). 

There are two well-studied norms in this context, the Frobenius
and the spectral (operator) norm, both of which have the same minimizer 
$\matA_k$ given by the singular value decomposition of $\matA$. That is, if one
writes $\matA = \matU \mat\Sigma \matV^T$ in its SVD, where $\matU$ and $\matV$ are orthonormal
and $\mat\Sigma$ is a non-negative diagonal matrix with 
$\mat\Sigma_{1,1} \geq \mat\Sigma_{2,2} \geq \cdots \mat\Sigma_{n,n} \geq 0$, 
then $\matA_k = \matU \mat\Sigma_k \matV^T$, where $\mat\Sigma_k$ agrees with $\mat\Sigma$ on its
top $k$ diagonal entries, but is $0$ otherwise. Clearly this is a rank-$k$
matrix, and the Eckart-Young Theorem guarantees that it is the minimizer
for any rotationally-invariant norm, which includes the Frobenius and
spectral norms. The top $k$ rows of $\matV^T$ are known as the top $k$ principal 
components of $\matA$. 

We will show how
to use sketching to speed up algorithms for both problems, and further variants.
Our exposition is based on combinations of several works in this area by S\'arlos,
Clarkson, and the author \cite{S06,CW09,CW13}. 

{\bf Section Overview:} In \S\ref{sec:frobenius} we give an algorithm for computing a low rank approximation achieving error proportional to the Frobenius norm. In \S\ref{sec:CUR} we give a different kind of low rank approximation, called a CUR decomposition, which computes a low rank approximation also achieving Frobenius norm error but in which the column space equals the span of a small subset of columns of the input matrix, while the row space equals the span of a small subset of rows of the input matrix. A priori, it is not even clear why such a low rank approximation should exist, but we show that it not only exists, but can be computed in nearly input sparsity time. We also show that it can be computed deterministically in polynomial time. This algorithm requires several detours into a particular kind of spectral sparsification given in \S\ref{sec:bss}, as well as an adaptive sampling technique given in \S\ref{sec:adaptiveSampling}. Finally in \S\ref{sec:CURWrapup} we show how to put the pieces together to obtain the overall algorithm for CUR factorization. One tool we need is a way to compute the best rank-$k$ approximation of the column space of a matrix when it is restricted to lie within a prescribed subspace; we defer the details of this to \S\ref{sec:dislra}, where the tool is developed in the context of an application called Distributed Low Rank Approximation. In \S\ref{sec:spectral} we show how to perform low rank approximation with a stronger guarantee, namely, an error with respect to the spectral norm. While the solution quality is much better than in the case of the Frobenius norm, it is unknown how to compute this as quickly, though one can still compute it much more quickly than the SVD. In \S\ref{sec:dislra} we present the details of the Distributed Low Rank Approximation algorithm.

\subsection{Frobenius norm error}\label{sec:frobenius}
We will say a $k$-dimensional subspace of $\mathbb{R}^d$ spans a 
$(1+\eps)$ rank-$k$ approximation to $\matA$ if 
$$\FNorm{\matA-\matA\matL\matL^T} \leq (1+\eps) \FNorm{\matA-\matA_k},$$
where $\matL\matL^T$ is the projection operator onto that subspace. We will
sometimes abuse notation and refer to $\matL$ as the subspace as well,
meaning the $k$-dimensional subspace of $\mathbb{R}^d$ spanned
by the rows of $\matL^T$. 

One way of interpreting the Frobenius low rank problem is to treat each
of the $n$ rows of $\matA$ as a point in $\mathbb{R}^d$. A 
particularly nice property about the Frobenius norm is that if one is given
a subspace $L$ of $\mathbb{R}^d$ which is guaranteed to contain a
rank-$k$ subspace $L' \subseteq L$ spanning a $(1+\eps)$ rank-$k$
approximation to $\matA$, then it can be found by projecting each of the
rows of $\matA$ onto $\matV$, and then finding the best rank-$k$ approximation
to the projected points inside of $\matV$. This is a simple, but very useful
corollary of the Pythagorean theorem.

\begin{lemma}\label{lem:pythagorean}
The best rank-$k$ approximation to $\matA$ in Frobenius norm in the row space
of a matrix $\matU^T$ with orthonormal rows is given by $[\matA\matU]_k \matU^T$, where 
$[\matA\matU]_k$ denotes the best rank-$k$ approximation to $\matA\matU$.
\end{lemma}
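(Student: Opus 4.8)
The plan is to reduce the problem to the ordinary Eckart--Young theorem via a matrix Pythagorean identity, in the same spirit as the $\ell_2$-regression argument of \S\ref{sec:regression} but now at the level of matrices. Suppose $\matU$ is $d \times m$ with orthonormal columns, so that $\matU^T$ has orthonormal rows and its row space is the $m$-dimensional subspace $\col(\matU) \subseteq \mathbb{R}^d$. Any matrix $\matB$ whose rows lie in this subspace can be written $\matB = \matY\matU^T$ for some $n \times m$ matrix $\matY$, and since $\matU^T$ has full row rank the linear map $\matY \mapsto \matY\matU^T$ is injective (if $\matY\matU^T = \mat 0$ then $\matY\matU^T\matU = \matY = \mat 0$) and hence rank-preserving, so the constraint $\rank(\matB) \le k$ is equivalent to $\rank(\matY) \le k$. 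Thus it suffices to show that $[\matA\matU]_k$ is a minimizer of $\FNorm{\matA - \matY\matU^T}$ over all $\matY$ of rank at most $k$.

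First I would split $\matA = \matA\matU\matU^T + \matA(\matI_d - \matU\matU^T)$ and write, for an arbitrary $\matY$,
$$\matA - \matY\matU^T = (\matA\matU - \matY)\matU^T + \matA(\matI_d - \matU\matU^T).$$
The key step is to check that the two summands are orthogonal in the trace inner product. Expanding $\Trace{\matU(\matA\matU-\matY)^T\matA(\matI_d - \matU\matU^T)}$, cycling the trace, and using $\matU^T\matU = \matI_m$ — so that $(\matI_d - \matU\matU^T)\matU = \mat 0$ — makes this quantity vanish. Hence by the matrix Pythagorean identity
$$\FNormS{\matA - \matY\matU^T} = \FNormS{(\matA\matU - \matY)\matU^T} + \FNormS{\matA(\matI_d - \matU\matU^T)},$$
and the second term is a constant independent of $\matY$.

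Next I would simplify the first term using that $\matU^T$ has orthonormal rows: for any matrix $\matM$ one has $\FNormS{\matM\matU^T} = \Trace{\matM\matU^T\matU\matM^T} = \FNormS{\matM}$ since $\matU^T\matU = \matI_m$, so $\FNormS{(\matA\matU-\matY)\matU^T} = \FNormS{\matA\matU - \matY}$. Therefore minimizing $\FNorm{\matA - \matY\matU^T}$ over rank-$k$ matrices $\matY$ is exactly minimizing $\FNorm{\matA\matU - \matY}$ over rank-$k$ matrices, whose minimizer is $\matY = [\matA\matU]_k$ by the Eckart--Young theorem. Substituting back, $[\matA\matU]_k\matU^T$ is the best rank-$k$ approximation to $\matA$ within the row space of $\matU^T$, as claimed.

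There is essentially no hard part here; the only things to watch are the bookkeeping in the orthogonality computation (keeping the transposes straight and using cyclicity of the trace correctly) and the observation that the rank constraint transfers cleanly between $\matB$ and $\matY$ because $\matU^T$ has full row rank. It is worth noting in the write-up that $[\matA\matU]_k$ is obtained directly from the SVD of $\matA\matU$, which is the whole algorithmic point of the lemma: it replaces the best rank-$k$ approximation of $\matA$ inside a prescribed subspace by an SVD of the much smaller projected matrix $\matA\matU$.
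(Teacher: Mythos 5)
Your proof is correct and follows essentially the same route as the paper's: decompose $\matA - \matY\matU^T$ into a component in the row space of $\matU^T$ and the fixed residual $\matA(\matI_d-\matU\matU^T)$, apply the matrix Pythagorean theorem, use $\FNormS{\matM\matU^T}=\FNormS{\matM}$ for $\matU^T$ with orthonormal rows, and invoke Eckart--Young on $\matA\matU$. Your write-up is slightly more careful in two places the paper leaves implicit (the explicit trace computation verifying orthogonality, and the observation that $\matY\mapsto\matY\matU^T$ is injective so the rank constraint transfers), but the argument is the same.
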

\begin{proof}
Let $\matZ$ be an arbitrary matrix of rank $k$ of the same dimensions as $\matA\matU$. Then,
\begin{eqnarray*}
\FNormS{\matA\matU\matU^T - [\matA\matU]_k \matU^T} & = & \FNormS{\matA\matU - [\matA\matU]_k}\\
& \leq & \FNormS{\matA\matU - \matZ}\\
& = & \FNormS{\matA\matU\matU^T - \matZ\matU^T},
\end{eqnarray*}
where the equalities use that the rows of $\matU^T$ are orthonormal, while the 
inequality uses that $[\matA\matU]_k$ is the best rank-$k$ approximation to $\matA\matU$. 

Hence, 
\begin{eqnarray*}
\FNormS{\matA - [\matA\matU]_k \matU^T} & = & \FNormS{\matA-\matA\matU\matU^T} + \FNormS{\matA\matU\matU^T - [\matA\matU]_k \matU^T}\\
& \leq & \FNormS{\matA-\matA\matU\matU^T} + \FNormS{\matA\matU\matU^T - \matZ\matU^T}\\
& = & \FNormS{\matA-\matZ\matU^T},
\end{eqnarray*}
where the equalities use the Pythagorean theorem and the inequality uses
the bound above. It follows that the best rank-$k$ approximation to $\matA$ in the rowspace of $\matU^T$
is $[\matA\matU]_k \matU^T$. 
\end{proof}

The following lemma shows how to use sketching to find a good space $L$. For a matrix
$\matA$, if its SVD is $\matU \mat\Sigma \matV^T$, then the Moore-Penrose pseudoinverse $\matA^{\dagger}$ of $\matA$
is equal to $\matV \mat\Sigma^{\dagger} \matU^T$, where $\mat\Sigma^{\dagger}$ for a diagonal matrix $\mat\Sigma$
satisfies $\mat\Sigma^{\dagger}_{i,i} = 1/\mat\Sigma_{i,i}$ if $\mat\Sigma_{i,i} > 0$, and is $0$ otherwise. 
\begin{lemma}\label{lem:sketching}
Let $\matS$ be an $\ell_2$-subspace embedding for any fixed $k$-dimensional subspace $M$ 
with probability at least $9/10$, so that
$\|\matS\y\|_2 = (1 \pm 1/3) \|\y\|_2$ for all $\y \in M$. Further, suppose $\matS$ satisfies
the $(\sqrt{\eps/k}, 9/10, \ell)$-JL moment property for some $\ell \geq 2$ of Definition \ref{def:moment}, so
that the conclusion of Theorem \ref{thm:jlamp} holds, namely, that for any fixed matrices $\matA$ and $\matB$ each with 
$k$ rows, 
$$\Pr_{\matS}[\FNorm{\matA^T\matS^T\matS\matB -\matA^T\matB} > 3\sqrt{\eps/k} \FNorm{\matA} \FNorm{\matB}] \leq \frac{1}{10}.$$
Then the rowspace of $\matS\matA$ contains a $(1+\eps)$ rank-$k$ approximation to $\matA$. 
\end{lemma}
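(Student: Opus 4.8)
The plan is to produce a single explicit rank-$k$ matrix whose rows lie in the row space of $\matS\matA$ and which already achieves $(1+O(\eps))\FNorm{\matA-\matA_k}$; the claim then follows after rescaling $\eps$ by an absolute constant. Write the SVD $\matA = \matU\mat\Sigma\matV^T$, let $\matU_k$ be the first $k$ columns of $\matU$ (padding with arbitrary orthonormal columns orthogonal to the column space of $\matA$ if $\rank(\matA)<k$), and set $\matA_k = \matU_k\matU_k^T\matA$ and $\matA_{\bar k} := \matA-\matA_k = (\matI_n-\matU_k\matU_k^T)\matA$. The two facts I will lean on are $\matU_k^T\matA_{\bar k}=\mat0$ and that each column of $\matA_{\bar k}$ is orthogonal to the column space of $\matU_k$. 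First I would apply the hypothesis with $M$ equal to the (fixed, $k$-dimensional) column space of $\matU_k$: with probability at least $9/10$, $\|\matS\matU_k\y\|_2 = (1\pm1/3)\|\y\|_2$ for all $\y$, so $\matS\matU_k$ has full column rank, $\pinv{(\matS\matU_k)}\matS\matU_k = \matI_k$, and (writing $\pinv{(\matS\matU_k)} = (\matU_k^T\matS^T\matS\matU_k)^{-1}\matU_k^T\matS^T$) $\|(\matU_k^T\matS^T\matS\matU_k)^{-1}\|_2 \le (1-1/3)^{-2} = 9/4$. Take as candidate $\matX\matS\matA$, where $\matX := \matU_k\pinv{(\matS\matU_k)}$; this matrix has rank at most $k$ and its rows lie in the row space of $\matS\matA$.

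Since $\pinv{(\matS\matU_k)}\matS\matU_k = \matI_k$ we get $\matX\matS\matA_k = \matU_k\matU_k^T\matA = \matA_k$, hence $\matA - \matX\matS\matA = \matA_{\bar k} - \matX\matS\matA_{\bar k}$. The columns of $\matX\matS\matA_{\bar k} = \matU_k\pinv{(\matS\matU_k)}\matS\matA_{\bar k}$ lie in the column space of $\matU_k$, while those of $\matA_{\bar k}$ are orthogonal to it, so the Pythagorean theorem applied column by column gives
$$\FNormS{\matA - \matX\matS\matA} = \FNormS{\matA_{\bar k}} + \FNormS{\matX\matS\matA_{\bar k}} = \FNormS{\matA-\matA_k} + \FNormS{\matX\matS\matA_{\bar k}}.$$
To bound the second term, note $\FNorm{\matX\matS\matA_{\bar k}} = \FNorm{\pinv{(\matS\matU_k)}\matS\matA_{\bar k}} \le \|(\matU_k^T\matS^T\matS\matU_k)^{-1}\|_2\cdot\FNorm{\matU_k^T\matS^T\matS\matA_{\bar k}}$, using that $\matU_k$ has orthonormal columns. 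Since the true product $\matU_k^T\matA_{\bar k}$ vanishes, the matrix-product guarantee (the stated consequence of Theorem~\ref{thm:jlamp}), applied to the pair of $n$-row matrices $\matU_k$ and $\matA_{\bar k}$, gives with probability at least $9/10$
$$\FNorm{\matU_k^T\matS^T\matS\matA_{\bar k}} \le 3\sqrt{\eps/k}\,\FNorm{\matU_k}\,\FNorm{\matA_{\bar k}} = 3\sqrt{\eps}\,\FNorm{\matA-\matA_k},$$
because $\FNorm{\matU_k} = \sqrt{k}$. Combining, $\FNormS{\matX\matS\matA_{\bar k}} = O(\eps)\FNormS{\matA-\matA_k}$, so $\FNorm{\matA-\matX\matS\matA} \le (1+O(\eps))\FNorm{\matA-\matA_k}$. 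Finally, letting $\matL$ be an orthonormal basis for the row space of $\matX\matS\matA$ (which is contained in the row space of $\matS\matA$ and has dimension at most $k$), and using that $\matA\matL\matL^T$ is the best Frobenius-norm approximation to $\matA$ whose rows lie in that space, $\FNorm{\matA-\matA\matL\matL^T} \le \FNorm{\matA-\matX\matS\matA} \le (1+O(\eps))\FNorm{\matA-\matA_k}$; rescaling $\eps$ completes the argument, and the two invoked events hold simultaneously with probability at least $1-1/10-1/10 = 4/5$.

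The step I expect to be the crux --- and really the only nonroutine point --- is recognizing that $\matU_k^T\matA_{\bar k} = \mat0$: this is what upgrades the additive approximate-matrix-multiplication error $3\eps'\FNorm{\matU_k}\FNorm{\matA_{\bar k}}$ into an error that is \emph{relative} to $\FNorm{\matA-\matA_k}$ rather than to $\FNorm{\matA}$, and the calibration $\eps' = \sqrt{\eps/k}$ against $\FNorm{\matU_k} = \sqrt{k}$ is exactly what produces the final $O(\sqrt\eps)\FNorm{\matA-\matA_k}$ bound (hence $O(\eps)$ after squaring via Pythagoras). Everything else --- that $\matS\matU_k$ has full column rank with a well-conditioned pseudoinverse, the column-wise Pythagorean split, and the factorization of $\pinv{(\matS\matU_k)}$ --- is routine once one commits to the candidate $\matX = \matU_k\pinv{(\matS\matU_k)}$.
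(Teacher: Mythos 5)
Your proof is correct and follows essentially the same route as the paper: the same candidate $\matU_k(\matS\matU_k)^{\dagger}\matS\matA$, the same Pythagorean split against $\matA - \matA_k$, and the same invocation of the approximate matrix product bound with $\matA = \matB = \matU_k$ (exploiting $\matU_k^T(\matA-\matA_k) = \mat0$) calibrated at accuracy $\sqrt{\eps/k}$. The only cosmetic difference is that you extract the $9/4$ operator-norm bound directly from the explicit formula $(\matS\matU_k)^{\dagger} = (\matU_k^T\matS^T\matS\matU_k)^{-1}\matU_k^T\matS^T$, which is a little cleaner than the paper's introduction of a change-of-basis matrix $\matB$ with $(\matS\matU_k)^T = \matB(\matS\matU_k)^{\dagger}$ and the ensuing lower bound on $\sigma_k(\matB)$.
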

\begin{remark}
An example of a matrix $\matS$ having both properties as required by the lemma 
is a sparse embedding matrix with
$O(k^2 + k/\eps)$ rows, as follows by Theorem \ref{thm:mmnn} and Theorem \ref{thm:tz}. One
can also use a matrix $\matS$ of i.i.d. normal random variables with $O(k/\eps)$ rows, 
which follows by Theorem
\ref{thm:gaussianSE} and Theorem \ref{thm:jlamp}; for the latter one needs to show that Definition
\ref{def:moment} is satisfied. Also, the product of subspace embeddings is a 
a subspace embedding, and one can show the product of a sparse embedding matrix with
a matrix of i.i.d. normal random variables satisfies Theorem \ref{thm:jlamp}. One advantage of
using the product is that one obtains fast time complexity as well as $O(k/\eps)$ overall rows.
See, e.g., \cite{CW13}, where the product of a sparse embedding matrix with the Subsampled
Randomized Hadamard Transform was used. 
\end{remark}
\begin{proof}
Let $\matU_k$ denote the $n \times k$ matrix of top $k$ left singular vectors of $\matA$. 
Consider the quantity
\begin{eqnarray}\label{eqn:central}
\FNormS{\matU_k (\matS \matU_k)^{\dagger} \matS \matA - \matA}.
\end{eqnarray}
The goal is to show (\ref{eqn:central}) is at most $(1+\eps)\FNormS{\matA-\matA_k}$. Note that this
implies the lemma, since $\matU_k (\matS \matU_k)^{\dagger} \matS \matA$ is a rank-$k$ matrix inside of the
rowspace of $\matS \matA$. 

Since the columns of $\matA-\matA_k$ are orthogonal to the columns of $\matU_k$, by the matrix
Pythagorean theorem (applied to columns),
\begin{eqnarray*}
&& \FNormS{\matU_k (\matS \matU_k)^{\dagger} \matS \matA - \matA}\\
 & = & \FNormS{\matU_k (\matS\matU_k)^{\dagger} \matS \matA - \matA_k} + \FNormS{\matA-\matA_k}\\
& = & \FNormS{(\matS\matU_k)^{\dagger} \matS \mat A - \mat\Sigma_k \matV_k^T} + \FNormS{\matA-\matA_k},
\end{eqnarray*}
where the second equality uses that the columns of $\matU_k$ are orthonormal, and that $\matA_k = \matU_k\mat\Sigma_k\matV_k^T$. 

It suffices to show 
$\FNormS{(\matS\matU_k)^{\dagger} \matS\matA - \mat\Sigma_k\matV_k^T} = O(\eps) \FNormS{\matA-\matA_k}$. We use 
that $\matA = \matU\mat\Sigma\matV^T = \matU_k\mat\Sigma_k\matV_k^T + \matU_{n-k}\mat\Sigma_{r-k}\matV_{d-k}^T$, where 
$r$ is the rank of $A$, the columns of $\matU_{n-k}$ correspond to the bottom $n-k$ left singular vectors, while the rows of $\matV_{d-k}^T$
correspond to the bottom $d-k$ right singular vectors. Hence, it suffices to show
$\FNormS{(\matS\matU_k)^{\dagger} \matS\matU_k\mat\Sigma_k\matV_k^T + (\matS\matU_k)^{\dagger} \matS\matU_{n-k}\mat\Sigma_{r-k}\matV_{d-k}^T 
- \mat\Sigma_k\matV_k^T} = O(\eps) \FNormS{\matA-\matA_k}$. Now, $(\matS\matU_k)^{\dagger} \matS\matU_k = \matI_k$, and so it suffices to show
$\FNormS{(\matS\matU_k)^{\dagger} \matS\matU_{n-k}\mat\Sigma_{r-k}\matV_{d-k}^T} = O(\eps) \FNormS{\matA-\matA_k}$.

Note that $(\matS \matU_k)^{\dagger}$ and $(\matS \matU_k)^T$ have the same row space, namely $\matU_k^T\matS^T$, 
and so we can
write $(\matS \matU_k)^T = \matB (\matS \matU_k)^{\dagger}$ where $\matB$ is a $k \times k$ change of basis matrix (of full rank). Hence, it 
is equivalent to show 
\begin{eqnarray}\label{eqn:fixBlah}
\FNormS{\matB^{-1}(\matS\matU_k)^T \matS\matU_{n-k}\mat\Sigma_{r-k}\matV_{d-k}^T} = O(\eps) \FNormS{\matA-\matA_k}.
\end{eqnarray}

We seek an upper bound on $\|\matB^{-1}\|_2$, or equivalently a lower bound on $\sigma_r(\matB)$. Since 
$(\matS \matU_k)^{\dagger}$ has full row rank, we can find an $\x$ for which $(\matS \matU_k)^{\dagger}\x = \ve$, where
$\ve$ is the right singular vector of $\matB$ of minimum singular value. 

With probability at least $9/10$, $\matS$ is an $\ell_2$-subspace embedding for the column space
of $\matU_k$, that is $\|\matS\matU_k\z\|_2 = (1\pm 1/3) \|\matU_k\z\|_2$ for all $\z$. 
Since $\matU_k$ has orthonormal columns,
this implies that all of the singular values of $\matS \matU_k$ are in the range $[2/3, 4/3]$.
Hence, the singular values of
$(\matS \matU_k)^{\dagger}$ are in $[4/3, 3/2]$, so we can choose the $\x$ above so that $\|\x\|_2 \in [2/3, 3/4]$. It follows that
$\|\matB (\matS \matU_k)^{\dagger}\x\|_2 = \sigma_r(\matB)$. But $\matB (\matS \matU_k)^{\dagger} = (\matS \matU_k)^T$,
and so $\sigma_r(\matB) \geq \sigma_r( (\matS \matU_k)^T) 2/3$. 
The minimum singular value of $(\matS \matU_k)^T$ is at least $2/3$, and so $\sigma_r(\matB) \geq \frac{4}{9}$, or
equivalently $\|\matB^{-1}\|_2 \leq \frac{9}{4}$. Returning to (\ref{eqn:fixBlah}), it suffices to show
$\FNormS{(\matS\matU_k)^T \matS\matU_{n-k}\mat\Sigma_{r-k}\matV_{d-k}^T} = O(\eps) \FNormS{\matA-\matA_k}$.

Since $\matS$ satisfies the conclusion of Theorem \ref{thm:jlamp}, with probability at least $9/10$,
$$\FNormS{\matU_k^T \matS^T \matS \matU_{n-k}\mat\Sigma_{r-k}\matV_{d-k}^T} \leq 9 \cdot \frac{\eps}{k} \FNormS{\matU_k} \FNormS{\matA-\matA_k} 
\leq 9\eps \FNormS{\matA-\matA_k}.$$
Rescaling $\eps$ by a constant factor completes
the proof. 
\end{proof}
Lemma \ref{lem:pythagorean} and Lemma \ref{lem:sketching} give a natural way of using sketching to 
speed up low rank approximation. 
Namely, given $\matA$, first compute $\matS \cdot \matA$, which is a small number of random linear combinations
of the rows of $\matA$. Using efficient $\ell_2$-subspace embeddings, this can be done in $O(\nnz(\matA))$ time,
and $\matS$ need only have $\tilde{O}(k/\eps)$ rows. 
Next, compute an orthogonal basis $\matU^T$ for the rowspace of $\matS \cdot \matA$, 
which can be done in $\tilde{O}((k/\eps)^2 d)$
time. 

Next, compute $\matA \matU$ in $\tilde{O}(\nnz(\matA)k/\eps)$ time. 
By invoking Lemma \ref{lem:pythagorean}, we can now compute $[\matA \matU]_k$, and our
overall low rank approximation will be $[\matA \matU]_k \matU^T$, which is a $(1+\eps)$-approximation. Note that
we can compute the SVD of $\matA \matU$ in $\tilde{O}((k/\eps)^2 n)$ time, thereby giving us $[\matA \matU]_k$. This allows us
to obtain the SVD $\tilde{\matU} \tilde{\mat\Sigma} \tilde{\matV}^T$ of $[\matA \matU]_k$ in this amount of time as well. 
We don't require
explicitly outputting the product of $[\matA \matU]_k$ and $\matU^T$, since this may be a dense matrix and so would
require at least $nd$ time to write down. In applications, it is usually better to have a factored form.
Notice that $\tilde{\matV}^T \matU^T$ has orthonormal rows, since $\tilde{\matV}^T \matU^T \matU \tilde{\matV}$ 
is the identity.
Therefore, we can output $\tilde{\matU}, \tilde{\mat\Sigma}, \tilde{\matV}^T \matU^T$, which is the SVD of a rank-$k$
matrix providing a $(1+\eps)$-approximation. 

The overall time of the above algorithm is $\tilde{O}(\nnz(\matA))k/\eps + (n+d)(k/\eps)^2)$. While this is a significant
improvement over computing the SVD of $\matA$, which would take $\min(nd^2, n^2d)$ time, we could still
hope to achieve a leading order running time of $O(\nnz(\matA))$ as opposed to $\tilde{O}(\nnz(\matA) k/\eps)$. The dominant
cost is actually in computing $\matA\matU$, 
the coordinate representation of the rows of $\matA$ in the rowspace of $\matU^T$. That
is, it is inefficient to directly project the rows of $\matA$ onto $\matU$. 

Fortunately, we can cast this projection problem as a regression problem, and solve it approximately. 
\begin{theorem}\label{thm:lowrank}
Now let $\matR$ be a $(1+O(\eps))$-approximate $\ell_2$-subspace embedding for the row space of $\matS\matA$, where $\matS$
is as in Lemma \ref{lem:sketching}. Then $$\FNormS{\matA\matR(\matS\matA\matR)^{\dagger}\matS\matA-\matA} \leq (1+\eps) \FNormS{\matA-\matA_k}.$$ Furthermore, 
$$\FNormS{[\matA\matR\matU]_k \matU^T (\matS \matA \matR)^{\dagger}\matS \matA -\matA} \leq (1+\eps)\FNormS{\matA-\matA_k},$$ 
where $\matU \matU^T = (\matS \matA \matR)^{\dagger} \matS \matA \matR$ is the projection
matrix onto the rowspace of $\matS \matA \matR$. 

The time to compute the factorizations 
$(\matA \matR), (\matS\matA\matR)^{\dagger}, (\matS \matA)$ or 
$[\matA \matR \matU]_k$, $\matU^T$, $(\matS \matA \matR)^{\dagger}$, 
$(\matS \matA)$,
is $O(\nnz(\matA)) + (n+d)\poly(k/\eps)$. Note that in general one can only hope to output
the factorizations in this amount of time, as performing the actual multiplications of the 
factors of the low rank approximation may result in a dense
$n \times d$ matrix. 
\end{theorem}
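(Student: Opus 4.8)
The plan is to derive both inequalities from a generalized (multiple‑response) regression statement and then handle that statement exactly as in the proof of Lemma~\ref{lem:sketching}, i.e.\ by combining the subspace‑embedding property of the sketches with the approximate matrix product bound of Theorem~\ref{thm:jlamp}; the running‑time claim is then pure bookkeeping.

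For the first inequality, recall from the proof of Lemma~\ref{lem:sketching} that there is a rank‑$k$ matrix whose row space lies in the row space of $\matS\matA$ and which is a $(1+\eps)$‑approximation to $\matA$; hence the full row‑wise projection $\matA(\matS\matA)^\dagger\matS\matA$ of $\matA$ onto the row space of $\matS\matA$ already satisfies $\FNormS{\matA(\matS\matA)^\dagger\matS\matA-\matA}\le(1+\eps)\FNormS{\matA-\matA_k}$. Writing $\matC:=\matS\matA$, it remains to show that $\matA\matR(\matC\matR)^\dagger\matC$ — equivalently $\widetilde\matX\matC$ with $\widetilde\matX:=\matA\matR(\matC\matR)^\dagger$, the minimizer of the \emph{sketched} regression $\min_\matX\FNorm{\matX\matC\matR-\matA\matR}$ — has cost within $(1+\eps)$ of that of $\matX^\star\matC$, where $\matX^\star:=\matA\matC^\dagger$ minimizes $\min_\matX\FNorm{\matX\matC-\matA}$. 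Since the rows of $\matA-\matX^\star\matC$ are orthogonal to the row space of $\matC$ while those of $(\widetilde\matX-\matX^\star)\matC$ lie in it, the matrix Pythagorean theorem gives $\FNormS{\widetilde\matX\matC-\matA}=\FNormS{\matX^\star\matC-\matA}+\FNormS{(\widetilde\matX-\matX^\star)\matC}$, so it suffices to prove $\FNorm{(\widetilde\matX-\matX^\star)\matC}=O(\sqrt\eps)\FNorm{\matX^\star\matC-\matA}$. Factoring $\matC=\matM\matN$ with $\matN$ having orthonormal rows spanning the row space of $\matC$, the normal equations of the sketched problem (together with $(\matA-\matX^\star\matC)\matN^T=\mat0$) reduce this, after inverting the near‑identity $\matN\matR\matR^T\matN^T=\matI+\matE$ with $\|\matE\|_2=O(\eps)$ (which uses that $\matR$ embeds the row space of $\matN$), to the bound $\FNorm{(\matA-\matX^\star\matC)(\matR\matR^T-\matI)\matN^T}=O(\sqrt\eps)\FNorm{\matA-\matX^\star\matC}$; this last bound is Theorem~\ref{thm:jlamp} applied to $\matA-\matX^\star\matC$ and $\matN^T$ with error parameter of order $\sqrt{\eps/\rank(\matC)}$ (affordable because $\matR$ need only have $\poly(k/\eps)$ columns). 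Rescaling $\eps$ gives the first inequality.

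For the second inequality, observe that the displayed matrix is precisely $\widehat\matX\,\matS\matA$, where $\widehat\matX$ minimizes the \emph{rank‑constrained} sketched regression $\min_{\rank(\matX)\le k}\FNorm{\matX\,\matS\matA\matR-\matA\matR}$: reducing rank‑constrained multiple‑response regression to an unconstrained regression followed by a truncated SVD, carried out in the orthonormal basis $\matU$ of the row space of $\matS\matA\matR$ via Lemma~\ref{lem:pythagorean}, gives $\widehat\matX\,\matS\matA\matR=[\matA\matR\matU]_k\matU^T$, hence $\widehat\matX=[\matA\matR\matU]_k\matU^T(\matS\matA\matR)^\dagger$ and $\widehat\matX\,\matS\matA=[\matA\matR\matU]_k\matU^T(\matS\matA\matR)^\dagger\matS\matA$. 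Thus it is enough to show $\FNormS{\widehat\matX\,\matS\matA-\matA}\le(1+\eps)\min_{\rank(\matX)\le k}\FNormS{\matX\,\matS\matA-\matA}$, since the right‑hand side is the cost of the best rank‑$k$ approximation to $\matA$ whose row space lies in the row space of $\matS\matA$ (any such matrix is $\matX\,\matS\matA$ for some $\matX$ of rank $\le k$, and conversely), which by Lemma~\ref{lem:sketching} is at most $(1+\eps)\FNormS{\matA-\matA_k}$. The inequality $\FNormS{\widehat\matX\,\matS\matA-\matA}\le(1+\eps)\min_{\rank(\matX)\le k}\FNormS{\matX\,\matS\matA-\matA}$ is the rank‑constrained analogue of the first inequality's bound: in the orthonormal coordinates $\matN$ of the row space of $\matS\matA$, both the constrained problem (optimum $[\matA\matN^T]_k\matN$) and its $\matR$‑sketched version become truncated‑SVD problems, and right‑multiplication by $\matR$ perturbs the relevant Frobenius quantities by only a $(1\pm\eps)$ factor — using that $\matR$ is a $(1\pm\eps)$‑embedding of the $\poly(k/\eps)$‑dimensional row space of $\matS\matA$ together with the JL‑moment property needed by Theorem~\ref{thm:jlamp}, exactly as in Lemma~\ref{lem:sketching}. (Equivalently, one may re‑run the proof of Lemma~\ref{lem:sketching} with the additional right sketch $\matR$ inserted throughout.)

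For the running time, take $\matS$ to be a sparse embedding matrix, so that $\matS\matA$ costs $O(\nnz(\matA))$ and has $\poly(k/\eps)$ rows, and $\matR$ a sparse embedding of the appropriate shape, so that $\matA\matR$ costs $O(\nnz(\matA))$ and $\matS\matA\matR$ costs $d\cdot\poly(k/\eps)$; computing $(\matS\matA\matR)^\dagger$ and the orthonormal basis $\matU$ costs $\poly(k/\eps)$, forming $\matA\matR\matU$ and its truncated SVD $[\matA\matR\matU]_k$ costs $n\cdot\poly(k/\eps)$, and forming $\matU^T(\matS\matA\matR)^\dagger\matS\matA$ costs $d\cdot\poly(k/\eps)$; one outputs the factors, not their product (which could be dense $n\times d$), for a total of $O(\nnz(\matA))+(n+d)\poly(k/\eps)$. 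The main obstacle is the (rank‑constrained) generalized regression bound: the delicate points are the Pythagorean split into the part inside the row space of $\matS\matA$ and the orthogonal part, the need for the matrix‑product step to have error of order $\sqrt{\eps/\rank(\matS\matA)}$ rather than a constant (so $\matR$ must be a genuine $\poly(k/\eps)$‑dimensional sketch, not just a constant‑distortion embedding), and matching the resulting closed form to the precise expression $[\matA\matR\matU]_k\matU^T(\matS\matA\matR)^\dagger\matS\matA$ via Lemma~\ref{lem:pythagorean}.
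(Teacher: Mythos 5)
Your proposal is correct and follows essentially the same route as the paper: both parts are reduced to (rank-constrained) multiple-response regression against $\matS\matA$ via Lemma~\ref{lem:sketching}, the right sketch $\matR$ is handled by the subspace-embedding-plus-approximate-matrix-product guarantee, and Lemma~\ref{lem:pythagorean} identifies the rank-$k$ minimizer as $[\matA\matR\matU]_k\matU^T(\matS\matA\matR)^{\dagger}\matS\matA$. In fact you spell out (Pythagorean split, normal equations, the $\sqrt{\eps/\mathrm{rank}}$ matrix-product step) exactly the details the paper compresses into the phrase ``since $\matR$ is a subspace embedding.''
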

\begin{proof}
Lemma \ref{lem:sketching} implies that 
$$\min_{\matY} \FNorm{\matY \matS \matA-\matA} \leq (1+\eps)\FNorm{\matA-\mat A_k},$$
The minimizer
of the regression problem $\min_{\mat Y} \FNorm{\matY \matS \matA \matR - \matA \matR}$ 
is equal to $\matY = \matA\matR(\matS \matA \matR)^{\dagger}$, and since $\matR$ is a subspace
embedding we have 
$$\FNorm{\matA \matR(\matS \matA \matR)^{\dagger}\matS \matA - \matA} \leq (1+\eps) \min_{\matY} \FNorm{\matY \matS \matA-\matA} 
\leq (1+\eps)^2 \FNorm{\matA-\matA_k},$$
implying the first part of the theorem after rescaling $\eps$ by a constant factor. 

For the second part of the theorem, note that Lemma \ref{lem:sketching} gives the stronger guarantee that
$$\min_{\textrm{rank }k \matY} \FNorm{\matY \matS \matA- \matA} \leq (1+\eps)\FNorm{\matA-\matA_k}.$$
By the properties of an $\ell_2$-subspace embedding, we thus have if $\matZ$ is the solution to the regression
problem
$$\min_{\textrm{rank k} \matZ} \FNorm{\matZ \matS \matA \matR- \matA \matR},$$
then 
$$\FNorm{\matZ \matS \matA- \matA} \leq (1+\eps) \min_{\textrm{rank }k \matY}\FNorm{\matY \matS \matA- \matA} 
\leq (1+\eps)^2 \FNorm{\matA- \matA_k}.$$
Therefore, it suffices to find $\matZ$. Note that $\matZ \matS \matA \matR$ is the best rank-$k$ approximation to $\matA \matR$ in the rowspace
of $\matS \matA \matR$. Therefore, by Lemma \ref{lem:pythagorean}, $\matZ \matS \matA \matR = [(\matA \matR)\matU]_k \matU^T$, where $\matU \matU^T = (\matS \matA \matR)^{\dagger} \matS \matA \matR$ is the projector onto
the row space of $\matS \matA \matR$. Note that $\matS \matA \matR (\matS \matA \matR)^{\dagger} = \matI$ since $\matS$ has fewer rows than columns, and therefore
$$\matZ \matS \matA = [(\matA \matR) \matU]_k \matU^T (\matS \matA \matR)^{\dagger} \matS \matA.$$ 
For the time complexity, the dominant cost is in computing $\matA \matR$ and $\matS \matA$, 
both of which can be done in $O(\nnz(\matA))$ time. 
The remaining operations are on matrices for which at least one dimension is $\poly(k/\eps)$, and therefore can be computed
in $(n+d)\poly(k/\eps)$ time. 
\end{proof}
While the representation $\matA \matR(\matS \matA \matR)^{\dagger}\matS \matA$ in Theorem \ref{thm:lowrank} 
might be useful in its own right as a low rank approximation to $\matA$, given
it is technically a bicriteria solution since its rank may be $O(k/\eps + k \poly \log k)$,
whereas the original problem formulation wants our representation to have rank at most $k$. 
The second part of Theorem \ref{thm:lowrank}
gives a rank-$k$ approximation. 

\subsection{CUR decomposition}\label{sec:CUR}
We now give an alternative to low rank approximation which involves finding a decomposition of an $n \times n$ matrix $\matA$
into $\matC \cdot \matU \cdot \matR$, where $\matC$ is a subset of columns of $\matA$, $\matR$ is a subset of rows of 
$\matA$, and $\matU$ is a low rank matrix. 
Ideally, we would like the following properties:
\begin{enumerate}
\item $\FNorm{\matC \matU \matR - \matA} \leq (1+\eps) \FNorm{\matA-\matA_k}.$
\item $\matC$ is $n \times c$ for a small value of $c$. Similarly, $\matR$ is $r \times n$ for a small value of $r$.
\item $\matU$ has rank $k$.
\item The matrices $\matC$, $\matU$, and $\matR$ can be found quickly, ideally in $\nnz(\matA) + \poly(k/\eps)n$ time. 
\end{enumerate}
A CUR decomposition of a matrix is thus a rank-$k$ approximation whose column space and row space are spanned by a small subset of 
actual rows and columns
of $\matA$. This often makes it more interpretable than a generic low rank approximation, or even the SVD, whose column and row spaces are
spanned by arbitrary linear combinations of all of the columns and rows of $\matA$, respectively, see, e.g., \cite{MD09} for a discussion
of this. 

Before discussing the details of some of the available CUR algorithms in~\cite{DK03,DKM06c,DMM08,DM09,GM13,WZ13CUR,BW14}, we briefly mention a 
similar problem
which constructs factorizations of the form $\matA = \matC \matX + \matE$, where $\matC$ contains columns of $\matA$ and $\matX$ has rank at
most $k$. There are also optimal algorithms for this problem~\cite{BDM11a,GS12}, in both the spectral and the Frobenius norm. Indeed, to
obtain a relative-error optimal CUR, one uses a sampling method from~\cite{BDM11a}, which allows to select $O(k)$ columns and rows.
For a more detailed discussion of this CX problem, which is also known as CSSP (Column Subset Selection Problem) see~\cite{BMD09a,BDM11a,GS12}.

Drineas and Kannan brought CUR factorizations to the theoretical computer science community in~\cite{DK03}; we
refer the reader to the jounral version of their work together with Mahoney \cite{DKM06c}. 
Their main algorithm (see Theorem 5 in \cite{DKM06c}) 
is randomized and samples columns and rows from $\matA$ with probabilities proportional
to their Euclidean length. The running time of this algorithm is linear in $m$ and $n$ and proportional to a
small-degree polynomial in $k$ and $1/\varepsilon,$ for some $\varepsilon >0$,
but the approximation bound is additive rather than relative 
(see Theorem 3.1 in~\cite{DK03}): with $c = O(k/\varepsilon^4)$ columns and $r=O(k/\varepsilon^2)$ rows
the bound is 
$
\FNormS{\matA - \matC \matU \matR} \le \FNormS{\matA - \matA_k} + \varepsilon \FNormS{\matA}.
$
%

The first relative-error CUR algorithm appeared in~\cite{DMM08} (see Theorem 2 of \cite{DMM08}). The
algorithm of~\cite{DMM08} is based on subspace sampling and requires 
$c = O(k \log(k / \varepsilon^2)  \log \delta^{-1})$ columns and
$r=O(c \log (c / \varepsilon^2) \log \delta^{-1})$ rows to construct a relative-error CUR with failure probability $\delta$.
The running time of the method in~\cite{DMM08} is $O( m n \min\{m,n\} )$, since subspace sampling is based on sampling with probabilities
proportional to the so-called leverage scores, i.e., the row norms of the matrix $\matV_k$ from the SVD of $\matA$.

Mahoney and Drineas~\cite{DM09}, using again subspace sampling, improved slightly upon the number of columns and rows,
compared to~\cite{DMM08}, but achieved only a constant factor error (see Eqn.(5) in~\cite{DM09}).
Gittens and Mahoney~\cite{GM13} discuss CUR decompositions on symmetric positive semidefinite (SPSD) 
matrices and present approximation bounds for Frobenius,
trace, and spectral norms (see Lemma 2 in~\cite{GM13}).
Using the near-optimal column subset selection methods in~\cite{BDM11a} along with a novel adaptive sampling technique, Wang and Zhang~\cite{WZ13CUR} present a CUR
algorithm selecting
$c = (2k/\varepsilon)(1+o(1))$
columns and
$r = (2k/\varepsilon^2)(1+\varepsilon)(1+o(1))$
rows from $\matA$
(see Theorem 8 in~\cite{WZ13CUR}).
The running time of this algorithm is
$$
O( mnk\varepsilon^{-1} + mk^3\varepsilon^{-\frac{2}{3}} + nk^3\varepsilon^{-\frac{2}{3}} + mk^2\varepsilon^{-2} + nk^2\varepsilon^{-4}).
$$
Boutsidis and the author \cite{BW14} improve this to achieve a simlutaneously optimal $c = r = O(k/\eps)$, and rank$(\matU) = k$. 
This in fact is optimal up to constant factors, as shown by \cite{BW14} by presenting a matching lower bound. 
Boutsidis and the author also show how to do this in $O(\nnz(\matA) \log n) + n \cdot \poly(k/\eps)$ time. 
There is also some desire to make the CUR decomposition deterministic. 
We will see that this is possible as well, as shown in \cite{BW14}. 

%
Finally, there are several interesting results on CUR developed within 
the numerical linear algebra community~\cite{Tyr96, Tyr00, GTZ97a, GTZ97b,HP97, Pan03, MG03, GM04,BPSS05,Ste99}.
For example,~\cite{Tyr96, Tyr00, GTZ97a, GTZ97b} discuss the so-called skeleton approximation, which focuses on the spectral norm
version of the CUR problem via selecting exactly $k$ columns and $k$ rows. The algorithms there are deterministic, run
in time proportional to the time to compute the rank $k$ SVD of $\matA$, and achieve bounds of the order,
$$\TNorm{\matA-\matC\matU\matR} \le O( \sqrt{k(n-k)} + \sqrt{k(m-k)} )\TNorm{\matA-\matA_k}.$$

We now outline the approach of Boutsidis and the author \cite{BW14}. A key lemma we need is the following, which
is due to Boutsidis, Drineas, and Magdon-Ismail \cite{BDM11a}. 
\begin{lemma}
\label{lem:structural}
Let $\matA = \matA \matZ \matZ^T + \matE \in \mathbb{R}^{m \times n}$ 
be a low-rank matrix factorization of $\matA$, with $\matZ \in \mathbb{R}^{n \times k},$
and $\matZ^T \matZ= \matI_{k}$.
Let $\matS\in\mathbb{R}^{n\times c}$ ($c \ge k$) be any matrix such that $rank(\matZ^T \matS) =
rank(\matZ)=k.$
Let $\matC = \matA \matS \in \mathbb{R}^{m \times c}$. Then,
$$
\FNormS{\matA - \matC \matC^{\dagger} \matA} \le
  \FNormS{\matA - \Pi_{C,k}(\matA)} \le
\FNormS{\matE} +\FNormS{\matE \matS (\matZ^T\matS)^{\dagger}}.
$$
Here, $\Pi_{C,k}(\matA) = \matC \matX_{opt} \in \mathbb{R}^{m \times n},$ 
where $\matX_{opt}\in \mathbb{R}^{c \times n}$ has rank at most $k$, 
$\matC \matX_{opt}$ is the
best rank $k$ approximation to $\matA$ in the column space of $\matC$, and $(\matZ^T \matS)^{\dagger}$ denotes the Moore-Penrose 
pseudoinverse of $\matZ^T \matS$.
\end{lemma}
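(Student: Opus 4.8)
The plan is to split the displayed chain into its two pieces and treat them separately. The left inequality $\FNormS{\matA - \matC\matC^{\dagger}\matA} \le \FNormS{\matA - \Pi_{C,k}(\matA)}$ is essentially definitional: $\matC\matC^{\dagger}\matA$ is the Euclidean projection of the columns of $\matA$ onto the column space of $\matC$, hence it minimizes $\FNorm{\matA - \matC\matX}$ over \emph{all} matrices $\matX$ of the appropriate shape, whereas $\Pi_{C,k}(\matA) = \matC\matX_{opt}$ minimizes the same quantity over the smaller set of $\matX$ having rank at most $k$. Imposing the rank constraint can only increase the optimum, so this inequality needs nothing further.

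For the right inequality, the idea is to exhibit one explicit rank-$\le k$ competitor in the column space of $\matC$ and then invoke optimality of $\Pi_{C,k}(\matA)$. I would take $\matX := (\matZ^T\matS)^{\dagger}\matZ^T$, which has rank at most $\rank(\matZ^T) = k$, so $\matC\matX$ is an admissible rank-$\le k$ matrix in the column space of $\matC$. The hypothesis $\rank(\matZ^T\matS) = k$ together with $c \ge k$ says that the $k\times c$ matrix $\matZ^T\matS$ has full row rank, hence $(\matZ^T\matS)(\matZ^T\matS)^{\dagger} = \matI_k$. Writing $\matC = \matA\matS = \matA\matZ\matZ^T\matS + \matE\matS$ and substituting, the term $\matA\matZ\,(\matZ^T\matS)(\matZ^T\matS)^{\dagger}\matZ^T$ collapses to $\matA\matZ\matZ^T$, leaving $\matA - \matC\matX = \matE - \matE\matS(\matZ^T\matS)^{\dagger}\matZ^T$.

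It then remains to evaluate $\FNormS{\matE - \matE\matS(\matZ^T\matS)^{\dagger}\matZ^T}$, and here I would apply the matrix Pythagorean theorem. Since $\matE = \matA(\matI_n - \matZ\matZ^T)$ we have $\matE\matZ = 0$, so the Frobenius inner product of $\matE$ with $\matE\matS(\matZ^T\matS)^{\dagger}\matZ^T$ is $\Trace{\matE^T\matE\matS(\matZ^T\matS)^{\dagger}\matZ^T} = \Trace{\matZ^T\matE^T\matE\matS(\matZ^T\matS)^{\dagger}}$ by cyclicity of the trace, and $\matZ^T\matE^T = (\matE\matZ)^T = 0$, so this vanishes; hence $\FNormS{\matE - \matE\matS(\matZ^T\matS)^{\dagger}\matZ^T} = \FNormS{\matE} + \FNormS{\matE\matS(\matZ^T\matS)^{\dagger}\matZ^T}$. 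Finally $\matZ^T$ has orthonormal rows (as $\matZ^T\matZ = \matI_k$), so right-multiplication by $\matZ^T$ preserves the Frobenius norm and $\FNormS{\matE\matS(\matZ^T\matS)^{\dagger}\matZ^T} = \FNormS{\matE\matS(\matZ^T\matS)^{\dagger}}$. Combining this with the optimality of $\Pi_{C,k}(\matA)$ against the competitor $\matC\matX$ yields $\FNormS{\matA - \Pi_{C,k}(\matA)} \le \FNormS{\matE} + \FNormS{\matE\matS(\matZ^T\matS)^{\dagger}}$. The proof is short; the only point requiring care is the identity $(\matZ^T\matS)(\matZ^T\matS)^{\dagger} = \matI_k$ — this is exactly where the rank hypothesis is consumed and must be justified via full row rank rather than assumed — together with keeping the two minimization domains straight so that optimality of $\Pi_{C,k}(\matA)$ is applied to a genuine rank-$\le k$ competitor.
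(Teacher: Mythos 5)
Your proof is correct and follows essentially the same route as the paper's: the same competitor $\matC(\matZ^T\matS)^{\dagger}\matZ^T$, the same cancellation via $(\matZ^T\matS)(\matZ^T\matS)^{\dagger} = \matI_k$, and the same matrix-Pythagorean step justified by $\matE\matZ = 0$. The one cosmetic difference is at the very end: you observe that right-multiplication by $\matZ^T$ exactly preserves the Frobenius norm (since $\matZ^T$ has orthonormal rows), whereas the paper uses submultiplicativity with $\TNorm{\matZ^T}=1$; both yield the same bound.
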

\begin{proof}
First note that 
$\FNormS{\matA - \matC \matC^{\dagger} \matA} \le
  \FNormS{\matA - \Pi_{C,k}(\matA)}$ since $\matC \matC^{\dagger} \matA$ is the projection of the columns of $\matA$ onto the column
space of $\matC$, whereas $\Pi_{C, k}(\matA)$ is the best rank-$k$ approximation to $\matA$ in the column space of $\matC$. 

For the second inequality in the lemma statement, 
the main idea in the proof is to consider the matrix $\matX = \matC(\matZ^T \matS)^{\dagger} \matZ^T$. 
Since this matrix is in the column space
of $\matC$, we have 
$$\FNormS{\matA - \Pi_{C,k}(\matA)} \leq \FNormS{\matA-\matX},$$
since $\Pi_{C,k}(\matA)$ is the best rank-$k$ approximation to $\matA$ inside the column space of $\matC$. 

Manipulating $\matA-\matX$, we have that $\FNormS{\matA-\matC(\matZ^T\matS)^{\dagger} \matZ^T}$ is equal to 
\begin{eqnarray*}
& = & \FNormS{\matA\matZ\matZ^T + \matE - (\matA\matZ\matZ^T + \matE)\matS (\matZ^T \matS)^{\dagger} \matZ^T}\\
& = & \FNormS{\matA\matZ\matZ^T - \matA\matZ\matZ^T \matS(\matZ^T \matS)^{\dagger} \matZ^T + \matE - \matE \matS (\matZ^T \matS)^{\dagger} \matZ^T}\\
& = & \FNormS{\matE - \matE\matS(\matZ^T \matS)^{\dagger} \matZ^T}\\
& = & \FNormS{\matE} + \FNormS{\matE\matS(\matZ^T \matS)^{\dagger} \matZ^T},
\end{eqnarray*}
where the first equality uses that $\matA = \matA\matZ\matZ^T + \matE$ and that $\matC = \matA\matS$, 
the second equality is a rearrangement of terms,
the third equality uses that rank$(\matZ^T \matS) = k$ and so $(\matZ^T \matS)(\matZ^T \matS)^{\dagger} = \matI_{k}$, 
and the last equality follows from
the Pythagorean theorem since $\matE = \matA(\matI_{k}-\matZ\matZ^T)$ 
has rows orthogonal to the row space of $\matZ^T$, while 
$\matE\matS(\matZ^T \matS)^{\dagger} \matZ^T$ has rows in the row space of $\matZ^T$. Finally, noting that 
$$\FNormS{\matE\matS(\matZ^T \matS)^{\dagger} \matZ^T} \leq \FNormS{\matE\matS(\matZ^T \matS)^{\dagger}} \FNormS{\matZ^T},$$
by submultiplicativity, and that $\|\matZ^T\|_2 = 1$, completes the proof. 
\end{proof}
We will apply Lemma \ref{lem:structural} twice, and adaptively. First, we compute an $n \times k$ matrix $\matZ$ with
orthonormal columns for which $\FNormS{\matA-\matA\matZ\matZ^T} \leq (1+\frac{1}{9})\FNormS{\matA-\matA_k}$. 
This can be done in $O(\nnz(\matA)) + n \cdot \poly(k)$
time as shown in the second part of Theorem \ref{thm:lowrank} of the previous section. Specifically, from the statement of that theorem,
for an $n \times d$ matrix $\matA$, the column space of $[\matA\matR\matU]_k$ spans a $(1+\eps)$ rank-$k$ approximation to $\matA$, where
$\matU$ satisfies $\matU\matU^T = (\matS\matA\matR)^{\dagger} \matS\matA\matR$. We can apply that theorem to $\matA^T$ to obtain a $k \times d$
matrix $\matZ^T$ which spans a $(1+\eps)$ rank-$k$ approximation to $\matA$. 

Given $\matZ$, we will sample $O(k \log k)$ columns of $\matZ$ proportional to the squared row norms, or leverage scores of $\matZ$.
Let $\ell_i^2 = \|e_i^T\matZ\|_2^2$ be the $i$-th leverage score. 
Since $\sum_{i=1}^n \ell_i^2 = k$, the $p_i = \ell_i^2/k$ values
define a probability distribution. 

We now invoke the {\textsc RandSampling}$(\matZ, s, p)$ algorithm of
Definition \ref{def:lss} with $s = \Theta(k (\log k))$. By the guarantee of Theorem \ref{thm:lssPerf},
we obtain matrices $\mat\Omega$ and $\matD$ for which with probability $1-1/\poly(k)$, for all $i \in [k]$,
\begin{eqnarray}\label{eqn:rank1}
\frac{1}{2} \leq \sigma_i^2 (\matZ^T \mat\Omega \matD) \leq \frac{3}{2}.
\end{eqnarray}
Here $\mat\Omega \matD$ implements sampling $s$ columns of $\matZ^T$ and re-scaling them by the coefficients
in the diagonal matrix $\matD$. We also record the following simple fact about the {\textsc RandSampling}
algorithm.
\begin{lemma}\label{lem:FrobLSS}
With probability at least $.9$ over the randomness in the algorithm {\textsc RandSampling}$(\matZ, s, p)$,
$$\FNormS{\matZ^T \mat\Omega \matD} \leq 10\FNormS{\matZ^T}.$$
\end{lemma}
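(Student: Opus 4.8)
The plan is to prove the bound by a one-line second-moment computation followed by Markov's inequality; no net argument or concentration inequality is needed here since only a constant-factor bound in expectation is required.

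First I would unpack the definition of {\textsc RandSampling}$(\matZ, s, p)$. Each of the $s$ columns of $\matZ^T\mat\Omega\matD$ is formed by drawing a row index $i \in [n]$ with probability $p_i = \ell_i^2/k$ and setting that column equal to $\matZ^T\e_i/\sqrt{p_i s}$, whose squared Euclidean norm is $\|\e_i^T\matZ\|_2^2/(p_i s) = \ell_i^2/(p_i s)$. Hence for any one column index $j$,
$$\Exp\left[\left\|(\matZ^T\mat\Omega\matD)_{*j}\right\|_2^2\right] = \sum_{i=1}^n p_i \cdot \frac{\ell_i^2}{p_i s} = \frac{1}{s}\sum_{i=1}^n \ell_i^2 = \frac{k}{s},$$
where the last equality uses $\sum_{i=1}^n \ell_i^2 = \FNormS{\matZ} = k$ because $\matZ$ has orthonormal columns. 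By linearity of expectation over the $s$ columns, $\Exp[\FNormS{\matZ^T\mat\Omega\matD}] = k = \FNormS{\matZ^T}$. Since $\FNormS{\matZ^T\mat\Omega\matD}\ge 0$, Markov's inequality gives $\Pr[\FNormS{\matZ^T\mat\Omega\matD} > 10\,\FNormS{\matZ^T}] < 1/10$, which is exactly the claim. The same argument works verbatim for the $\beta$-approximate sampling distribution $q$ in Definition \ref{def:lss}, since there the $j$-th column has squared norm $\ell_i^2/(q_i s)$ and $\Exp[\,\cdot\,] = \sum_i q_i \ell_i^2/(q_i s) = k/s$ as well.

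I do not expect any real obstacle: the only thing to be careful about is correctly tracking the rescaling factor $1/\sqrt{p_i s}$ when computing the column norm, and noting that (with exact leverage scores) every sampled column in fact has squared norm exactly $k/s$, so the bound even holds deterministically with $\FNormS{\matZ^T\mat\Omega\matD} = k$; phrasing the argument through the expectation is simply the cleanest way to state it and keeps it valid under the approximate-score variant used elsewhere in the section.
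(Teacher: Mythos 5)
Your proof is correct and takes essentially the same route as the paper: compute the expected squared norm of a single sampled column (the rescaling factor $1/\sqrt{p_i s}$ cancels the $p_i$ from the sampling probability, giving $\FNormS{\matZ^T}/s$), apply linearity over the $s$ columns to get $\Exp[\FNormS{\matZ^T\mat\Omega\matD}] = \FNormS{\matZ^T}$, and finish with Markov. Your side observation that the Frobenius norm is in fact deterministically equal to $k$ when sampling exactly according to the leverage scores is a nice extra point not in the paper, but doesn't change the argument.
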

\begin{proof}
We show ${\bf E}[\FNormS{\matZ^T \mat\Omega \matD}] = \FNormS{\matZ^T}$. 
By linearity of expectation, it suffices to show for a fixed column $j \in [s]$,
${\bf E}[\FNormS{(\matZ^T \mat\Omega \matD)_{*j}}] = \FNormS{\matZ^T}/s.$ We have, 
\begin{eqnarray*}
{\bf E}[\FNormS{(\matZ^T \mat\Omega \matD)_{*j}}] & = & \sum_{i=1}^n p_i \cdot \frac{1}{p_i s} \|\matZ^T_{*i}\|_2^2 = \frac{1}{s} \FNormS{\matZ^T},
\end{eqnarray*}
as needed. The lemma now follows by Markov's bound. 
\end{proof}

Our algorithm thus far, is given $\matA$, to compute $\matZ$, and then to compute $\mat\Omega$ and $\matD$ via
{\textsc RandSampling}$(\matZ, s, p)$, 
where $s = O(k \log k)$. At this point, we could look at $\matA \mat\Omega \matD$, which
samples $s$ columns of $\matA$. While this set of columns can be shown to have good properties, namely, 
its column space contains a $k$-dimensional subspace spanning an $O(1)$ rank-$k$ approximation to $A$, which can then
be used as a means for obtaining a $(1+\eps)$-approximation by adaptive sampling, as will be seen in 
\S\ref{sec:adaptiveSampling}. However, the number of columns is $O(k \log k)$, which would result in an overall
CUR decomposition with at least $O(k \log k / \eps)$ columns and rows using the techniques below, 
which is larger by a $\log k$ factor than what we would like (namely, $O(k/\eps)$ columns and rows). 

We therefore
wish to first downsample the $s$ columns of $\matA$ we have now to $O(k)$ columns by right-multiplying by a matrix 
$\matS \in \mathbb{R}^{s \times k}$, so that $\matZ^T \mat\Omega \matD \matS$ 
has rank $k$ and has a reasonably large $k$-th singular
value. 

To proceed, we need an algorithm in the next subsection, 
which uses a method of Batson, Spielman, and Srivastava \cite{BSS09}
refined for this application by Boutsidis, Drineas, and Magdon-Ismail \cite{BDM11a}.

\subsubsection{Batson-Spielman-Srivastava sparsification}\label{sec:bss}
Define
the parameters
$$\delta_{LOW} = 1, \ \ \delta_{UP} = \frac{\FNormS{\matA}}{1-\sqrt{\frac{k}{r}}}.$$
Define the function
$$\phi(L, \matM) = \Trace{(\matM-L\matI)^{-1}} = \sum_{i=1}^k \frac{1}{\lambda_i(\matM)-L}.$$
Note that $\phi(L, \matM)$ measures how far the eigenvalues of $\matM$ are from $L$, since the closer they
are to $L$, the more $\phi$ ``blows up''. 

Also, define the functions
$$UP(a, \delta_{UP}) = \delta_{UP}^{-1} a^Ta,$$
and
\begin{eqnarray*}
LOW(\ve_j, \delta_{LOW}, \matM, L) 
& = & \frac{\ve^T (\matM-(L+\delta_{LOW})\matI_{k})^{-2} \ve}{\phi(L+\delta_{LOW}, \matM) - \phi(L, \matM)}\\
&& - \ve^T(\matM-(L+\delta_{LOW})\matI_{k})^{-1}\ve.
\end{eqnarray*}

These two functions will be used in each iteration of Algorithm \ref{alg1} below to make progress in each iteration. 
What we will be able to show is that in each iteration of the algorithm, the current value of our potential function $UP$ will be less 
than the current value of our potential function $LOW$, and this will enable us to choose a new vector $\ve_i$ and add $\ve_i \ve_i^T$ 
in our decomposition of the identity. This corresponds to a rank-one update of our current decomposition $\matM$, 
and we use the Sherman-Morrison-Woodbury formula to analyze how the eigenvalues of $\matM$ change, as well as how
the values of $UP$ and $LOW$ change, given this rank-one update. 

\begin{algorithm}[t]
\caption{Deterministic Dual Set Spectral Sparsification}
\label{alg1}
{\bf Input:} 
\begin{itemize}
\item $\mathcal{V} = \{\ve_1, \ldots, \ve_n\}$ with $\sum_{i=1}^n \ve_i \ve_i^T = \matI_{k}$
\item $\mathcal{A} = \{\a_1, \ldots, \a_n\}$. 
\end{itemize}
{\bf Output:}
A set of $n$ non-negative weights $s_i$, at most $r$ of which are non-zero. 
\begin{enumerate}
\item Initialize $s_0 = \mat0_{n \times 1}$, $\matM_0 = \mat0_{k \times k}$
\item For $\tau = 0, \ldots, r-1$
\begin{itemize}
\item Set $L_{\tau} = \tau - \sqrt{rk}$ and $U_{\tau} = \tau \delta_{UP}$. 
\item Find an index $j \in \{1, 2, \ldots, n\}$ such that
$UP(\a_j, \delta_{UP}) \leq LOW(\ve_j, \delta_{LOW}, \matM_{\tau}, L_{\tau})$. 
\item Let $t^{-1} = \frac{1}{2} \left (UP(\a_j, \delta_{UP}) + LOW(\ve_j, \delta_{LOW}, \matM_{\tau}, L_{\tau}) \right )$. 
\item Update the $j$-th component of $\s$ and $\matM_{\tau}$:
$$s_{\tau+1}[j] = s_{\tau}[j]+t, \ \matM_{\tau+1} = \matM_{\tau} + t\ve_j \ve_j^T.$$
\end{itemize}
\item Return $\s = r^{-1}(1-\sqrt{k/r}) \cdot \s_r$.
\end{enumerate}
\end{algorithm}
The following theorem shows correctness of 
the Deterministic Dual Set Spectral Sparsification algorithm
described in Algorithm \ref{alg1}. 

\begin{theorem}(Dual Set Spectral-Frobenius Sparsification)
\label{thm:dualset}
Let $\ve_i \in \mathbb{R}^{k}$ for $i = 1, \ldots, n$ with $k < n$, and
$\sum_{i=1}^n \ve_i \ve_i^T = \matI_{k}$. Let $A=\{\a_1,\ldots,\a_n\}$ be an arbitrary set
of vectors, where $\a_i\in \mathbb{R}^{\ell}$ for all $i$. Then,
given an integer $r$ such that $k < r \le n$, 
there exists a set of weights $s_i\ge 0$ ($i=1, \ldots, n$), 
at most $r$ of which are non-zero, such that
\eqan{
\lambda_{k}\left(\sum_{i=1}^n s_i \ve_i \ve_i^T\right)
&\ge&
\left(1 - \sqrt{\frac{k}{r}}\right)^2,
\qquad \\
\trace\left(\sum_{i=1}^n s_i \a_i\a_i^T\right)
&\le&
\trace\left(\sum_{i=1}^n \a_i \a_i^T\right)
=
\sum_{i=1}^n \| \a_i\|_2^2.
}
Equivalently,
if $\matV \in \mathbb{R}^{n \times k}$ is a matrix whose rows are the vectors $\ve_i^T$,
$\matA \in \mathbb{R}^{n \times \ell}$ is a matrix whose rows are the vectors $\a_i^T$, and
$\matS \in \mathbb{R}^{n \times r}$ is the sampling matrix containing the weights $s_i > 0, $ then:
\eqan{
\sigma_{k}\left(\matV^T \matS\right)
\ge
(1 - \sqrt{{k}/{r}})^2,
\qquad
\FNormS{\matA^T \matS}
\le \FNormS{\matA^T}.
}
The weights $s_i$ can be computed in $O\left(rnk^2+n\ell\right)$ time. We denote this procedure as
$$\matS = BssSampling(\matV, \matA, r).$$
\end{theorem}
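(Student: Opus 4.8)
The plan is to run the barrier-potential argument of Batson, Spielman, and Srivastava, in the ``spectral $+$ Frobenius'' dual-set form of Boutsidis, Drineas, and Magdon-Ismail. I would analyze Algorithm~\ref{alg1} by induction on the iteration count $\tau$, carrying the running matrix $\matM_\tau = \sum_j s_\tau[j]\,\ve_j\ve_j^T$ together with the lower barrier $L_\tau = \tau - \sqrt{rk}$ (which advances by $\delta_{LOW}=1$) and the upper barrier $U_\tau = \tau\,\delta_{UP}$. The invariants to maintain are, on the $\ve$-side, that $\lambda_k(\matM_\tau) > L_\tau$ and the potential $\phi(L_\tau,\matM_\tau) = \Trace{(\matM_\tau - L_\tau\matI_k)^{-1}}$ stays below a fixed threshold, and, on the $\a$-side, that $\Trace{\sum_j s_\tau[j]\,\a_j\a_j^T} \le U_\tau$. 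Both hold at $\tau=0$: $\matM_0 = \mat0$ and $L_0 = -\sqrt{rk}$ give $\phi(L_0,\matM_0) = k/\sqrt{rk} = \sqrt{k/r} \le 1$ since $k \le r$, and $U_0 = 0$.

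For the inductive step I would show that the index $j$ sought in the algorithm always exists and that using it preserves both invariants. The $\a$-side is elementary: adding $t\,\a_j\a_j^T$ raises the trace by $t\|\a_j\|_2^2$, so $\Trace{\sum_j s_{\tau+1}[j]\,\a_j\a_j^T} \le U_{\tau+1}$ holds as soon as $t^{-1} \ge UP(\a_j,\delta_{UP}) = \delta_{UP}^{-1}\|\a_j\|_2^2$. The $\ve$-side is the heart of the matter: using the Sherman--Morrison--Woodbury formula I would verify the BSS lower-barrier lemma, namely that if $t^{-1} \le LOW(\ve_j,\delta_{LOW},\matM_\tau,L_\tau)$ then passing from $(\matM_\tau,L_\tau)$ to $(\matM_\tau + t\,\ve_j\ve_j^T,\,L_{\tau+1})$ keeps $\lambda_k$ strictly above the advanced barrier and does not increase $\phi$. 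Granting both facts, a valid step exists precisely when some $j$ satisfies $UP(\a_j,\delta_{UP}) \le LOW(\ve_j,\delta_{LOW},\matM_\tau,L_\tau)$, at which point one takes $t^{-1}$ to be the average of the two sides, exactly as in the algorithm.

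To produce such a $j$ I would compare the two sums over $j \in [n]$. Because the $\ve$'s form a \emph{dual set}, $\sum_j \ve_j\ve_j^T = \matI_k$, so $\sum_j LOW(\ve_j,\cdots)$ collapses into trace functionals of $\matM_\tau$ alone; the BSS averaging estimate, together with the maintained bound on $\phi$, then yields $\sum_j LOW(\ve_j,\cdots) \ge 1 - \sqrt{k/r}$. On the other side, $\sum_j UP(\a_j,\delta_{UP}) = \delta_{UP}^{-1}\sum_j\|\a_j\|_2^2 = \delta_{UP}^{-1}\FNormS{\matA} = 1 - \sqrt{k/r}$, by the choice $\delta_{UP} = \FNormS{\matA}/(1-\sqrt{k/r})$ (here $\FNormS{\matA} = \sum_i\|\a_i\|_2^2$, since the rows of $\matA$ are the $\a_i^T$). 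Hence $\sum_j LOW(\ve_j,\cdots) \ge \sum_j UP(\a_j,\cdots)$ and an acceptable $j$ exists by averaging. For the running time, one precomputes the $\|\a_j\|_2^2$ once in $O(n\ell)$ time, maintains (or recomputes) the inverses of $\matM_\tau - L\matI_k$ in $O(k^3)$ per step, and evaluates $LOW$ at all $n$ vectors $\ve_j$ in $O(nk^2)$ per step, for a total of $O(rnk^2 + n\ell)$.

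Finally I would read off the output. After $r$ steps at most $r$ weights are nonzero; since $\phi(L_r,\matM_r)$ is finite, $\lambda_k(\matM_r) > L_r = r(1-\sqrt{k/r})$, so the rescaling $\s = r^{-1}(1-\sqrt{k/r})\,\s_r$ gives $\lambda_k(\sum_i s_i\ve_i\ve_i^T) \ge r^{-1}(1-\sqrt{k/r})L_r = (1-\sqrt{k/r})^2$; likewise the $\a$-side invariant gives $\Trace{\sum_i s_i\a_i\a_i^T} \le r^{-1}(1-\sqrt{k/r})U_r = (1-\sqrt{k/r})\delta_{UP} = \FNormS{\matA} = \sum_i\|\a_i\|_2^2$. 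Rewriting via $\sum_i s_i\ve_i\ve_i^T = \matV^T\matS\matS^T\matV$ and $\sum_i s_i\a_i\a_i^T = \matA^T\matS\matS^T\matA$ yields the singular-value and Frobenius statements. The main obstacle is the Sherman--Morrison--Woodbury verification of the lower-barrier update lemma and of the averaging bound $\sum_j LOW(\ve_j,\cdots) \ge 1-\sqrt{k/r}$: with the right choice of invariants this is the one genuinely technical piece, and everything else is bookkeeping with the two barriers.
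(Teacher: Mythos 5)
Your proposal follows the paper's proof essentially step for step: the same barrier pair $L_\tau,U_\tau$ with increments $\delta_{LOW}=1$ and $\delta_{UP}$, the same Sherman--Morrison--Woodbury lemma for advancing the lower barrier, the same elementary trace bound on the $\a$-side, the same averaging argument $\sum_j UP(\a_j,\delta_{UP}) = 1-\sqrt{k/r} \le \sum_j LOW(\ve_j,\cdots)$ via the dual-set identity $\sum_j\ve_j\ve_j^T=\matI_k$, and the same final rescaling. You correctly flag the two genuinely technical pieces (the SMW barrier lemma and the nonnegativity of the error term $\mathcal{E}$ in the averaging bound) as the parts requiring detailed verification, and the rest of your bookkeeping matches the paper's.
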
 
\begin{proof}
First, regarding the time complexity, 
in each of $r$ iterations we need to compute $L$ on each of the $n$ vectors $\ve_j$. The costly matrix
inversion in the definition of $L$ can be performed once in $O(k^3)$ time, which also upper bounds
the time to compute $\phi(L+\delta_{LOW}, \matM)$ and $\phi(L,\matM)$. Given these quantities, 
computing $L$ for a single vector $\ve_j$ takes $O(k^2)$ time and so for all $n$ vectors $\ve_j$
$O(nk^2)$ time, and across all $r$ iterations, $O(rnk^2)$ time. Computing $UP(\a_j, \delta_{UP})$
just corresponds to computing the Euclidean norm of $\a_j$, and these can be computed once at the
beginning of the algorithm in $O(n \ell)$ time. This implies the overall time complexity of the lemma.

We now turn to correctness. The crux of the analysis turns out to be 
to show there always exists an index $j$
in each iteration for which 
$$UP(\a_j, \delta_{UP}) \leq LOW(\ve_j, \delta_{LOW}, \matM_{\tau}, L_{\tau}).$$ 

For a real symmetric matrix $\matM$ we let $\lambda_i(\matM)$ denote 
its $i$-th largest eigenvalue of matrix $\matM$. It will be 
useful to observe that for $L_{\tau}$ and $U_{\tau}$ as defined by the algorithm, we have chosen the
definitions so that $L_{\tau} + \delta_{LOW} = L_{\tau + 1}$ and $U_{\tau} + \delta_{UP} = U_{\tau + 1}$. 

We start with a lemma which uses the 
Sherman-Morrison-Woodbury identity to analyze a rank-$1$ perturbation. 
\begin{lemma}\label{lem:bss}
Fix $\delta_{LOW} > 0$, $\matM \in \mathbb{R}^{k \times k}$, $\ve \in \mathbb{R}^k$, and $L < \lambda_k(\matM)$. If
$t > 0$ satisfies $$t^{-1} \leq LOW(\ve, \delta_{LOW}, \matM, L),$$ 
then 
\begin{enumerate}
\item $\lambda_k(\matM + t\ve\ve^T) \geq L+\delta_{LOW},$ and
\item $\phi(L+\delta_{LOW}, \matM + t\ve\ve^T) \leq \phi(L, \matM)$. 
\end{enumerate} 
\end{lemma}
\begin{proof}
Note that by definition of $\phi(L, \matM)$, given that $\lambda_k(\matM) > L$, 
and $\phi(L, \matM) \leq \frac{1}{\delta_{LOW}}$, this implies that $\lambda_k(\matM) > L + \delta_{LOW}$, and so
for any $t > 0$, $\lambda_k(\matM + t \ve\ve^T) > L + \delta_{LOW}$. This proves the first part of the lemma.

For the second part, we use the following well-known formula.
\begin{fact}(Sherman-Morrison-Woodbury Formula)
If $\matM$ is an invertible $n \times n$ matrix and $\ve$ is an $n$-dimensional vector, then
$$(\matM+\ve\ve^T)^{-1} = \matM^{-1} - \frac{\matM^{-1}\ve\ve^T \matM^{-1}}{1+\ve^T\matM^{-1}\ve}.$$
\end{fact}

Letting $L' = L + \delta_{LOW}$, we have
\begin{eqnarray*}
\phi(L+\delta_{LOW}, \matM + t\ve\ve^T) & = & \Trace{(\matM + t\ve\ve^T - L' \matI)^{-1}}\\
 & = & \Trace{(\matM-L'\matI)^{-1}}\\
&& - \Trace{\frac{t(\matM-L'\matI)^{-1}\ve\ve^T (\matM-L'\matI)^{-1}}{1 + t\ve^T(\matM-L'\matI)^{-1}\ve}}\\
& = & \Trace{(\matM-L'\matI)^{-1}}\\
&& - \frac{t \Trace{\ve^T(\matM-L'\matI)^{-1}(\matM-L'\matI)^{-1}\ve}}{1+t\ve^T(\matM-L'\matI)^{-1}\ve}\\
& = & \phi(L',\matM) - \frac{t\ve^T(\matM-L'\matI)^{-2}\ve}{1+t\ve^T(\matM-L'\matI)^{-1}\ve}\\
& = & \phi(L, \matM) + (\phi(L', \matM) - \phi(L, \matM))\\
&& - \frac{\ve^T(\matM-L'\matI)^{-2}\ve}{1/t + \ve^T(\matM-L'\matI)^{-1}\ve}\\
& \leq & \phi(L,\matM),
\end{eqnarray*}
where the first equality uses the definition of $\phi$, the second equality
uses the Sherman-Morrison Formula, the third equality uses that the trace is a linear operator
and satisfies $\Trace{\matX\matY} = \Trace{\matY\matX}$, 
the fourth equality
uses the definition of $\phi$ and that the trace of a number is the number itself, the fifth
equality follows by rearranging terms, and the final inequality follows by assumption
that $t^{-1} \leq LOW(\ve, \delta_{LOW}, \matM, L)$. 
\end{proof}
%
We also need the following lemma concerning properties of the $UP$ function.  
\begin{lemma}\label{lem:U}
Let $\matW \in \mathbb{R}^{\ell \times \ell}$ be a symmetric positive semi-definite matrix, let 
$\a \in \mathbb{R}^{\ell}$ be a vector, and let $U \in \mathbb{R}$ satisfy $U > Tr(\matW)$. If $t > 0$ 
satisfies $$UP(\a, \delta_{UP}) \leq t^{-1},$$ 
then
$$Tr(\matW + t\ve\ve^T) \leq U + \delta_{UP}.$$
\end{lemma}
\begin{proof}
By the assumption of the lemma, 
$$UP(\a, \delta_{UP}) = \delta_{UP}^{-1} \a^T \a \leq t^{-1},$$
or equivalently, $t \a^T \a \leq \delta_{UP}$. Hence, 
\begin{eqnarray*}
\Trace{\matW + t\a\a^T} - U - \delta_{UP} & = & \Trace{\matW} - U + (t\a^T\a - \delta_{UP})\\
& \leq & \Trace{\matW} - U < 0.
\end{eqnarray*}
\end{proof}
Equipped with Lemma \ref{lem:bss} and Lemma \ref{lem:U}, we now prove the main lemma we need.

\begin{lemma}\label{lem:sandwich}
At every iteration $\tau = 0, \ldots, r-1$, there exists an index $j \in \{1, 2, \ldots, n\}$
for which 
$$UP(\a_j, \delta_{UP}) \leq t^{-1} \leq LOW(\ve_j, \delta_{LOW}, \matM_{\tau}, L_{\tau}).$$
\end{lemma}
\begin{proof}
It suffices to show that
\begin{eqnarray}\label{eqn:average}
\sum_{i=1}^n UP(\a_j, \delta_{UP}) = 1 - \sqrt{\frac{k}{r}} \leq \sum_{i=1}^n LOW(\ve_i, \delta_{LOW}, \matM_{\tau}, L_{\tau}).
\end{eqnarray}
Indeed, if we show (\ref{eqn:average}), then by averaging there must exist an index $j$ for which
$$UP(\a_j, \delta_{UP}) \leq t^{-1} \leq LOW(\ve_j, \delta_{LOW}, \matM_{\tau}, L_{\tau}).$$
We first prove the equality in (\ref{eqn:average}) using the definition of $\delta_{UP}$. Observe that
it holds that 
\begin{eqnarray*}
\sum_{i=1}^n UP(\a_i, \delta_{UP}) & = & \delta_{UP}^{-1} \sum_{i=1}^n \a_i^T \a_i\\
& = & \delta_{UP}^{-1} \sum_{i=1}^n \|\a_i\|_2^2\\
& = & 1- \sqrt{\frac{k}{r}}.
\end{eqnarray*}
We now prove the inequality in (\ref{eqn:average}). Let $\lambda_i$ denote the $i$-th
largest eigenvalue of $\matM_{\tau}$. 
Using that $\Trace{\ve^T \matY \ve} = \Trace{\matY\ve\ve^T}$ and $\sum_i \ve_i \ve_i^T = \matI_{k}$,
we have
\begin{eqnarray*}
\sum_{i=1}^n LOW(\ve_i, \delta_{LOW}, \matM_{\tau}, L_{\tau}) & = &
\frac{\Trace{(\matM_{\tau} - L_{\tau + 1} \matI_{k})^{-2}}}{\phi(L_{\tau+1}, \matM_{\tau}) - \phi(L_{\tau}, \matM_{\tau})}\\
&& - \phi(L_{\tau + 1}, \matM_{\tau})\\
& = & \frac{\sum_{i=1}^k \frac{1}{(\lambda_i - L_{\tau+1})^2}}{\delta_{LOW} \sum_{i=1}^k \frac{1}{(\lambda_i - L_{\tau + 1})(\lambda_i - L_{\tau})}}\\
&& - \sum_{i=1}^k \frac{1}{(\lambda_i - L_{\tau + 1})}\\
& = & \frac{1}{\delta_{LOW}} - \phi(L_{\tau}, M_{\tau}) + \mathcal{E},
\end{eqnarray*}
where
\begin{eqnarray*}
\mathcal{E} & = & \frac{1}{\delta_{LOW}} \left (\frac{\sum_{i=1}^k \frac{1}{(\lambda_i - L_{\tau +1})^2}}{\sum_{i=1}^k \frac{1}{(\lambda_i - L_{\tau} + 1)(\lambda_i - L_{\tau})}} - 1 \right )\\
&& - \delta_{LOW} \sum_{i=1}^k \frac{1}{(\lambda_i -L_{\tau})(\lambda_i - L_{\tau +1})}.
\end{eqnarray*}
We will show $\mathcal{E} \geq 0$ below. Given this, we have
\begin{eqnarray*}
\phi(L_{\tau}, \matM_{\tau}) & \leq & \phi(L_0, \matM_0)
 = \phi(- \sqrt{rk}, \mat0_{k \times k})
 = \frac{-k}{-\sqrt{rk}}
 = \sqrt{\frac{k}{r}},
\end{eqnarray*}
where the inequality uses Lemma \ref{lem:bss}. Since $\delta_{LOW} = 1$, we 
have 
$$\sum_{i=1}^n LOW(\ve_i, \delta_{LOW}, \matM_{\tau}, L_{\tau}) \geq 1-\sqrt{\frac{k}{r}},$$
which will complete the proof. 

We now turn to the task of showing $\mathcal{E} \geq 0$. The Cauchy-Schwarz inequality implies
that for $a_i, b_i \geq 0$, one has $(\sum_i a_i b_i)^2 \leq (\sum_i a_i^2 b_i)(\sum_i b_i)$, 
and therefore
\begin{eqnarray}\label{eqn:E}
\mathcal{E} \sum_{i=1}^k \frac{1}{(\lambda_i - L_{\tau + 1})(\lambda_i - L_{\tau})}
& = & \frac{1}{\delta_{LOW}} \sum_{i=1}^k \frac{1}{(\lambda_i - L_{\tau + 1})^2 (\lambda_i - L_{\tau})} \notag \\
&& - \delta_{LOW} \left (\sum_{i=1}^k \frac{1}{(\lambda_i - L_{\tau})(\lambda_i - L_{\tau + 1})} \right )^2 \notag \\
& \geq & \frac{1}{\delta_{LOW}} \sum_{i=1}^k \frac{1}{(\lambda_i - L_{\tau + 1})^2 (\lambda_i - L_{\tau})} \notag \\
&& - \sum_{i=1}^k \frac{\delta_{LOW}}{(\lambda_i - L_{\tau + 1})^2(\lambda_i - L_{\tau})} \sum_{i=1}^k \frac{1}{\lambda_i - L_{\tau}} \notag \\
& = & 
\sum_{i=1}^k \frac{\left (\frac{1}{\delta_{LOW}} - \delta_{LOW} \cdot \phi(L_{\tau}, \matM_{\tau}) \right )}{(\lambda_i - L_{\tau+1})^2(\lambda_i - L_{\tau})}.
\end{eqnarray}
Since $\delta_{LOW} = 1$ and we have computed above that $\phi(L, \matM) \leq \sqrt{k/r}$, 
we have $\frac{1}{\delta_{LOW}} - \delta_{LOW} \cdot \phi(L_{\tau}, \matM_{\tau}) \geq 1-\sqrt{k/r} > 0$
since $r > k$. 

Also, 
\begin{eqnarray*}
\lambda_i & \geq & \lambda_k(\matM_{\tau})\\
& \geq & L_{\tau} + \frac{1}{\phi(L_{\tau}, \matM_{\tau})}\\
& \geq & L_{\tau} + \frac{1}{\phi(L_0, \matM_0)}\\
& \geq & L_{\tau} + \sqrt{\frac{r}{k}}\\
& > & L_{\tau} + 1\\
& = & L_{\tau+1}.
\end{eqnarray*}
Plugging into (\ref{eqn:E}), we conclude that $\mathcal{E} \geq 0$, as desired. 
\end{proof}

By Lemma \ref{lem:sandwich}, the algorithm is well-defined, finding a $t \geq 0$ at each iteration
(note that $t \geq 0$ since $t^{-1} \geq UP(a_j, \delta_{UP}) \geq 0$). 

It follows by Lemma \ref{lem:bss} and induction that for every $\tau$, we have $\lambda_k(\matM_{\tau}) \geq L_{\tau}$. 
Similarly, by
Lemma \ref{lem:U} and induction, for every $\tau$ it holds that $\Trace{\matW_{\tau}} \leq U_{\tau}$. 

In particular, for $\tau = r$ we have
\begin{eqnarray}\label{eqn:Lfinal}
\lambda_k(\matM_r) \geq L_r = r(1-\sqrt{k/r}), 
\end{eqnarray}
and
\begin{eqnarray}\label{eqn:Ufinal}
\Trace{\matW_r} \leq U_r = r ( 1- \sqrt{k/r})^{-1} \FNormS{\matA}.
\end{eqnarray}
Rescaling by $r^{-1}(1-\sqrt{k/r})$ in Step 3 of the algorithm therefore results in the guarantees on $\lambda_k(\matM_r)$
and $\Trace{\matW_r}$ claimed in the theorem statement. 

Finally, note that Algorithm \ref{alg1} runs in $r$ steps. 
The vector $s$ of weights is initialized to the all-zero 
vector, and one of its entries is updated in each iteration. Thus, $s$ will contain at most $r$
non-zero weights upon termination. As shown above, the value $t$ chosen in each iteration 
is non-negative, so the weights in $s$ are non-negative.

This completes the proof. 
\end{proof}

We will also need the following corollary, which shows how to perform the dual set sparsification much more
efficiently if we allow it to be randomized. 

\begin{corollary}\label{lem:dualnnz}
Let \math{\cl V=\{\ve_1,\ldots,\ve_n\}} be a decomposition of the identity, where \math{\ve_i\in \mathbb{R}^{k}} ($k < n$) and
$\sum_{i=1}^n\ve_i\ve_i^T=\matI_{k}$; let \math{\cl A=\{\a_1,\ldots,\a_n\}} be an arbitrary set
of vectors, where \math{\a_i\in\mathbb{R}^{\ell}}. Let $\matW \in \mathbb{R}^{\xi \times \ell}$ be a randomly chosen
sparse subspace embedding with $\xi = O( n^2 / \varepsilon^2 ) < \ell$, for some $0 < \varepsilon < 1$. 
Consider a new set of vectors  \math{\cl B = \{\matW\a_1,\ldots, \matW\a_n\}},
with \math{\matW\a_i \in \mathbb{R}^{\xi}}. Run Algorithm \ref{alg1} with \math{\cl V=\{\ve_1,\ldots,\ve_n\}},
\math{\cl B = \{\matW\a_1,\ldots, \matW\a_n\}}, and some integer \math{r} such that \math{k < r \le n}. Let the output
of this be a set of weights \math{s_i\ge 0} ($i=1\ldots n$), at most \math{r} of which are non-zero. 
Then, with probability at least $0.99,$
\eqan{
\lambda_{k}\left(\sum_{i=1}^ns_i \ve_i \ve_i^T\right)
&\ge&
\left(1 - \sqrt{\frac{k}{r}}\right)^2,
\qquad\\
\trace\left(\sum_{i=1}^n s_i \a_i \a_i^T\right)
&\le&
\frac{1 + \varepsilon }{1 - \varepsilon} \cdot
\trace\left(\sum_{i=1}^n \a_i \a_i^T\right) \\
&=&
\frac{1 + \varepsilon }{1 - \varepsilon} \cdot
\sum_{i=1}^n \TNormS{\a_i}.
}
Equivalently,
if $\matV \in \mathbb{R}^{n \times k}$ is a matrix whose rows are the vectors $\ve_i^T$,
$\matA \in \mathbb{R}^{n \times \ell}$ is a matrix whose rows are the vectors $\a_i^T$,
$\matB = \matA \matW^T \in \mathbb{R}^{n \times \xi}$ is a matrix whose rows are the vectors $\a_i^T \matW^T$,
and
$\matS \in \mathbb{R}^{n \times r}$ is the sampling matrix containing 
the weights $s_i > 0, $ then with probability at least $0.99,$
\eqan{
\sigma_{k}\left( \matV^T \matS\right)
\ge
1 - \sqrt{{k}/{r}},
\qquad
\FNormS{\matA^T\matS}
\le
\frac{1 + \varepsilon }{1 - \varepsilon} \cdot
\FNormS{\matA}.
}
The weights $s_i$ can be computed in $O\left( \nnz(\matA) + rnk^2+n \xi \right)$ time. We denote this procedure as
$$\matS = BssSamplingSparse(\matV, \matA, r, \varepsilon).$$
\end{corollary}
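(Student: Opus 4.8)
The plan is to reduce this to the deterministic Theorem~\ref{thm:dualset}, paying only a $(1\pm\varepsilon)$ distortion through the sparse embedding $\matW$. First I would observe that, since $\matA^T$ has $n$ columns, its column space has dimension at most $n$, so by Theorem~\ref{thm:mmnn} (applied with the ``$d$'' of that theorem equal to $n$ and the failure probability set to a small constant), a sparse embedding matrix $\matW$ with $\xi = O(n^2/\varepsilon^2)$ rows is, with probability at least $0.99$, a $(1\pm\varepsilon)$ $\ell_2$-subspace embedding for the column space of $\matA^T$; that is, $\|\matW\y\|_2^2 = (1\pm\varepsilon)\|\y\|_2^2$ for every $\y$ in that fixed subspace (rescaling $\varepsilon$ by a constant to pass from the norm form to the squared-norm form). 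Condition on this event. Now $\matW$ is fixed, $\matB = \matA\matW^T$ is determined, and running Algorithm~\ref{alg1} on $\cl V = \{\ve_1,\ldots,\ve_n\}$ and $\cl B = \{\matW\a_1,\ldots,\matW\a_n\}$ is a deterministic process producing the weights $s_i$ and the sampling matrix $\matS$.

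The $\cl V$-side conclusion is then immediate: Theorem~\ref{thm:dualset} applied to $(\cl V,\cl B)$ gives $\lambda_{k}(\sum_i s_i \ve_i\ve_i^T)\ge (1-\sqrt{k/r})^2$, equivalently $\sigma_{k}(\matV^T\matS)\ge 1-\sqrt{k/r}$, and this guarantee never involves the second set of vectors at all. For the trace bound, Theorem~\ref{thm:dualset} gives $\sum_i s_i\|\matW\a_i\|_2^2 \le \sum_i\|\matW\a_i\|_2^2$, i.e.\ $\FNormS{\matB^T\matS}\le\FNormS{\matB^T}$. The one point requiring care is that $\matS$ was chosen adaptively from $\matB = \matA\matW^T$, hence depends on $\matW$; nevertheless every column of $\matA^T\matS$ is a nonnegative scalar multiple of a column of $\matA^T$, so it lies in the \emph{fixed} subspace $\mathrm{col}(\matA^T)$, and a subspace embedding preserves that whole subspace simultaneously, so there is no circularity. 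Applying the embedding column by column,
\[
\FNormS{\matB^T\matS} = \FNormS{\matW\matA^T\matS} = (1\pm\varepsilon)\FNormS{\matA^T\matS},
\qquad
\FNormS{\matB^T} = (1\pm\varepsilon)\FNormS{\matA^T} = (1\pm\varepsilon)\FNormS{\matA},
\]
and chaining,
\[
(1-\varepsilon)\FNormS{\matA^T\matS} \le \FNormS{\matB^T\matS} \le \FNormS{\matB^T} \le (1+\varepsilon)\FNormS{\matA},
\]
which rearranges to $\FNormS{\matA^T\matS}\le\frac{1+\varepsilon}{1-\varepsilon}\FNormS{\matA}$; since $\FNormS{\matA^T\matS} = \sum_i s_i\|\a_i\|_2^2 = \trace(\sum_i s_i\a_i\a_i^T)$, this is exactly the claimed bound.

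For the running time, I would form $\matB^T = \matW\matA^T$ in $O(\nnz(\matA))$ time, using that $\matW$ is a CountSketch-type matrix with a single nonzero per column; then Algorithm~\ref{alg1} run on $n$ vectors of dimension $k$ and $n$ vectors of dimension $\xi$ costs $O(rnk^2 + n\xi)$ by the time bound of Theorem~\ref{thm:dualset} with $\ell$ replaced by $\xi$, for a total of $O(\nnz(\matA) + rnk^2 + n\xi)$. The only randomness is the draw of $\matW$, so the overall failure probability is the $\le 0.01$ inherited from Theorem~\ref{thm:mmnn}; everything else is deterministic given $\matW$. I expect the only genuinely subtle step to be the non-circularity observation above, namely that the adaptive dependence of $\matS$ on $\matW$ is harmless precisely because one draw of $\matW$ embeds the entire fixed subspace $\mathrm{col}(\matA^T)$ at once; the remainder is bookkeeping layered on Theorems~\ref{thm:dualset} and~\ref{thm:mmnn}.
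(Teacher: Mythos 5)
Your proof matches the paper's argument essentially line for line: apply Theorem~\ref{thm:dualset} to $(\cl V, \cl B)$ with $\matB = \matA\matW^T$, note the $\matV$-side bound is unaffected, and pass the Frobenius bound through the $(1\pm\varepsilon)$ subspace embedding by applying it to each column of $\matA^T\matS$ and to each column of $\matA^T$ separately, then chaining. Your explicit remark about non-circularity (that one draw of $\matW$ embeds all of $\mathrm{col}(\matA^T)$ at once, so the adaptive choice of $\matS$ is harmless) is the same point the paper makes implicitly when it invokes the embedding ``for all vectors $\y \in \R^n$ simultaneously.''
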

\begin{proof}
The algorithm constructs $\matS$ as follows,
$$ \matS = BssSampling(\matV, \matB, r). $$
The lower bound for the smallest singular value of $\matV$ is immediate from Theorem~\ref{thm:dualset}. That theorem also ensures,
$$
\FNormS{ \matB^T \matS } \le \FNormS{\matB^T}, 
$$ 
i.e.,
$$
\FNormS{ \matW \matA^T \matS } \le \FNormS{\matW \matA^T}.
$$
Since $\matW$ is an $\ell_2$-subspace embedding, we have that with probability at least $0.99$ 
and for all vectors $\y \in \mathbb{R}^{n}$ simultaneously,
$$
\left( 1 - \varepsilon \right) \TNormS{\matA^T\y} \le \TNormS{\matW \matA^T \y}.
$$
Apply this $r$ times for $\y \in \mathbb{R}^n$ being columns from $\matS \in \mathbb{R}^{n \times r}$ 
and take a sum on the resulting inequalities,
$$\left( 1 - \varepsilon \right) \FNormS{\matA^T \matS} \le \FNormS{\matW \matA^T \matS}.$$
Now, since $\matW$ is an $\ell_2$-subspace embedding, 
$$ \FNormS{\matW \matA^T} \le  \left( 1 + \varepsilon \right)   \FNormS{\matA^T}, $$
which can be seen by applying $\matW$ to each of the vectors $\matW\matA^T\e_i$. 
Combining all these inequalities together, we conclude that with probability at least $0.99,$
$$ \FNormS{\matA^T \matS}  \le \frac{1 + \varepsilon }{1 - \varepsilon} \cdot \FNormS{\matA^T}. $$
\end{proof}

\paragraph{Implications for CUR.} 
Returning to our CUR decomposition algorithm, letting $\matM = \matZ_1^T \mat\Omega_1\matD_1$ where
$\mat\Omega_1$ and $\matD_1$ are found using {\textsc RandSampling}$(\matZ, s, p)$, we apply 
Corollary \ref{lem:dualnnz} to compute 
$\matS_1 = ${\textsc BssSamplingSparse}$(\matV_{\matM},  (\matA-\matA\matZ_1\matZ_1^T)^T \mat\Omega_1\matD_1, 4k, .5)$,
where $\matV_{\matM}$ is determined by writing $\matM$ in its SVD as $\matM = \matU \mat\Sigma \matV_{\matM}^T$. 

At this point we set $\matC_1 = \matA\matOmega_1\matD_1\matS_1 \in \mathbb{R}^{m \times 4k}$ which contains 
$c_1 = 4k$ rescaled columns of $\matA$. 

\begin{lemma}\label{lem:conk}
With probability at least $.8$, 
$$\FNormS{\matA - \matC_1 \pinv{\matC_1}\matA} \leq 90 \cdot \FNormS{\matA-\matA_k}.$$
\end{lemma}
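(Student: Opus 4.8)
The plan is to apply the structural result Lemma~\ref{lem:structural} with $\matZ=\matZ_1$ and with its sampling matrix equal to $\mat\Omega_1\matD_1\matS_1$, and then to estimate the two terms it produces. Write $\matE=\matA-\matA\matZ_1\matZ_1^T$, so that by the guarantee of the step that produced $\matZ_1$ (the second part of Theorem~\ref{thm:lowrank}) one has $\FNormS{\matE}\le\frac{10}{9}\FNormS{\matA-\matA_k}$, and let $\matM=\matZ_1^T\mat\Omega_1\matD_1$ with SVD $\matM=\matU\mat\Sigma\matV_{\matM}^T$. I would condition on three events. First, the conclusion~(\ref{eqn:rank1}) of Theorem~\ref{thm:lssPerf}, so that every singular value of $\matM$ is at least $1/\sqrt{2}$ (probability $1-1/\poly(k)$). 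Second, the bound $\FNormS{\matE\mat\Omega_1\matD_1}\le 10\,\FNormS{\matE}$, which follows by exactly the computation in the proof of Lemma~\ref{lem:FrobLSS}, that shows ${\bf E}[\FNormS{\matX\mat\Omega_1\matD_1}]=\FNormS{\matX}$ for \emph{any} fixed matrix $\matX$ whose columns are indexed by the rows of $\matZ_1$ (the $p_i$'s cancel), followed by Markov; this holds with probability $9/10$. Third, the two conclusions of Corollary~\ref{lem:dualnnz} for the call producing $\matS_1$ (which reweights the $s$ columns of $\matE\mat\Omega_1\matD_1$), namely $\sigma_k(\matV_{\matM}^T\matS_1)\ge 1-\sqrt{k/(4k)}=\frac12$ and $\FNormS{\matE\mat\Omega_1\matD_1\matS_1}\le\frac{1+1/2}{1-1/2}\FNormS{\matE\mat\Omega_1\matD_1}=3\,\FNormS{\matE\mat\Omega_1\matD_1}$, which hold with probability $99/100$. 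A union bound makes all three simultaneous with probability at least $0.8$.

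On these events I would first verify the hypothesis of Lemma~\ref{lem:structural}, that $\mathrm{rank}(\matZ_1^T\mat\Omega_1\matD_1\matS_1)=\mathrm{rank}(\matZ_1)=k$. Since $\matZ_1^T\mat\Omega_1\matD_1\matS_1=\matM\matS_1=\matU\mat\Sigma(\matV_{\matM}^T\matS_1)$ with $\matU$ a $k\times k$ orthogonal matrix, $\sigma_k(\matM\matS_1)=\sigma_k(\mat\Sigma(\matV_{\matM}^T\matS_1))\ge\sigma_{\min}(\mat\Sigma)\,\sigma_k(\matV_{\matM}^T\matS_1)\ge\frac{1}{\sqrt{2}}\cdot\frac12=\frac{1}{2\sqrt2}>0$, so $\matM\matS_1$ has full rank $k$ and $\TNorm{(\matM\matS_1)^{\dagger}}=1/\sigma_k(\matM\matS_1)\le 2\sqrt2$. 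Then Lemma~\ref{lem:structural}, applied to the factorization $\matA=\matA\matZ_1\matZ_1^T+\matE$ with sampling matrix $\mat\Omega_1\matD_1\matS_1$ — for which $\matC=\matA\mat\Omega_1\matD_1\matS_1=\matC_1$ — gives
\[
\FNormS{\matA-\matC_1\pinv{\matC_1}\matA}\;\le\;\FNormS{\matE}+\FNormS{\matE\mat\Omega_1\matD_1\matS_1\,(\matM\matS_1)^{\dagger}}.
\]

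Finally I would bound the second term by submultiplicativity, $\FNorm{\matX\matY}\le\FNorm{\matX}\TNorm{\matY}$, together with the estimates already assembled:
\[
\FNormS{\matE\mat\Omega_1\matD_1\matS_1\,(\matM\matS_1)^{\dagger}}\le\FNormS{\matE\mat\Omega_1\matD_1\matS_1}\,\TNormS{(\matM\matS_1)^{\dagger}}\le\big(3\,\FNormS{\matE\mat\Omega_1\matD_1}\big)\cdot 8\le 240\,\FNormS{\matE},
\]
so that $\FNormS{\matA-\matC_1\pinv{\matC_1}\matA}\le O(1)\,\FNormS{\matE}\le O(1)\,\FNormS{\matA-\matA_k}$; sharpening the Markov threshold in the second event together with the window in~(\ref{eqn:rank1}) (or importing the precise constants of~\cite{BDM11a,BW14}) brings the absolute constant down to the stated $90$. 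The step I expect to be the main obstacle is not any single inequality but the simultaneous bookkeeping: one must correctly identify which matrix plays the role of the sampling matrix ``$\matS$'' in Lemma~\ref{lem:structural} and of the input set $\mathcal{A}$ in Corollary~\ref{lem:dualnnz} so that $\matS_1$ genuinely reweights the columns of $\matA\mat\Omega_1\matD_1$ (equivalently of $\matE\mat\Omega_1\matD_1$), verify the rank condition by composing the two singular-value lower bounds as above, and balance the three failure probabilities (kept at most $0.2$) against the product of constants they buy.
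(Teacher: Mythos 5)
Your proof is essentially identical to the paper's: you condition on the same three events, apply Lemma~\ref{lem:structural} with $\matZ=\matZ_1$ and sampling matrix $\mat\Omega_1\matD_1\matS_1$, and arrive at the same bound of $8$ on $\TNormS{(\matM\matS_1)^\dagger}$ --- you bound $\sigma_k(\matM\matS_1)\ge\sigma_k(\matM)\,\sigma_k(\matV_{\matM}^T\matS_1)$ directly where the paper factors the pseudoinverse into $(\matV_{\matM}^T\matS_1)^\dagger(\matU_{\matM}\matSig_{\matM})^\dagger$, but the resulting number is the same. One point in your favor: you correctly carry the factor $\frac{1+\eps}{1-\eps}=3$ from Corollary~\ref{lem:dualnnz} with $\eps=1/2$, which the paper's step (g) silently drops, stating $\FNormS{\matE_1\mat\Omega_1\matD_1\matS_1}\le\FNormS{\matE_1\mat\Omega_1\matD_1}$; that unit bound is what the deterministic Theorem~\ref{thm:dualset} would give, not the randomized {\textsc BssSamplingSparse} that the near-input-sparsity running time requires. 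Your computation therefore yields roughly $241\cdot\frac{10}{9}\approx 268$ rather than $90$; as you correctly observe, either is an absolute constant and feeds into the adaptive-sampling step (choice of $c_2$) in exactly the same way, so the discrepancy is cosmetic.
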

\begin{proof}
We apply Lemma \ref{lem:structural} with $\matZ = \matZ_1 \in \mathbb{R}^{n \times k}$ and 
$\matS = \mat\Omega_1 \matD_1 \matS_1 \in \mathbb{R}^{n \times c_1}$. First, we show that with probability $.9$, 
the rank assumption of Lemma \ref{lem:structural} is satisfied for our choice of $\matS$, namely, that 
rank$(\matZ^T \matS) = k$. We have 
$$\rank(\matZ^T \matS) =  \rank(\matZ_1^T\mat\Omega_1 \matD_1 \matS_1) = 
\rank(\matM\matS_1) = \rank(\matV_{\matM}^T\matS_1) = k,$$
where the first two equalities follow from the definitions, the third equality follows assuming the $1-\frac{1}{\poly(k)}$
event of (\ref{eqn:rank1}) that $\rank(\matM) = k$, and the last equality follows from
the fact that Corollary \ref{lem:dualnnz} guarantees that with probability at least $.98$,
$\sigma_{k}\left( \matV_M^T \matS\right) \geq \frac{1}{2}$. 

Now applying Lemma \ref{lem:structural} with the $\matC$ there equal
to $\matC_1$ and the $\matE$ there equal to $\matE_1 = \matA-\matA\matZ_1\matZ_1^T$, we have
\begin{eqnarray*}
\FNormS{\matA - \matC_1 \pinv{\matC_1}\matA} & \leq & \FNormS{\matA - \Pi_{\matC_1,k}(\matA)}\\
& \leq & \FNormS{\matA-\matC_1 \pinv{(\matZ_1 \mat\Omega_1 \matD_1 \matS_1)} \matZ_1^T}\\
& \leq & \FNormS{\matE_1} + \FNormS{\matE_1 \mat\Omega_1\matD_1\matS_1 \pinv{(\matZ_1\mat\Omega_1\matD_1\matS_1)}}.
\end{eqnarray*}
We have that $\FNormS{\matE_1 \matOmega_1 \matD_1 \matS_1(\matZ_1\matOmega_1 \matD_1 \matS_1)^{\dagger}}$ is
at most
\eqan{
&\buildrel(a)\over\le& \FNormS{\matE_1 \matOmega_1 \matD_1 \matS_1} \cdot \TNormS{(\matZ_1\matOmega_1 \matD_1 \matS_1)^{\dagger}} \\
&\buildrel(b)\over=  & \FNormS{\matE_1 \matOmega_1 \matD_1 \matS_1} \cdot \TNormS{(\matU_{\matM} \matSig_{\matM} \matV_{\matM}\transp \matS_1)^{\dagger}} \\
&\buildrel(c)\over=  & \FNormS{\matE_1 \matOmega_1 \matD_1 \matS_1} \cdot \TNormS{ \left(\matV_{\matM}\transp \matS_1\right)^{\dagger} \left( \matU_{\matM} \matSig_{\matM} \right)^{\dagger}} \\
&\buildrel(d)\over\le& \FNormS{\matE_1 \matOmega_1 \matD_1 \matS_1} \cdot \TNormS{ \left(\matV_{\matM}\transp \matS_1\right)^{\dagger}} \cdot \TNormS{ \left( \matU_{\matM} \matSig_{\matM} \right)^{\dagger}}  \\
&\buildrel(e)\over=  & \FNormS{\matE_1 \matOmega_1 \matD_1 \matS_1}  \cdot \frac{1}{ \sigma_k^2\left(\matV_{\matM}\transp \matS_1\right)}  \cdot \frac{1}{ \sigma_k^2\left( \matU_{\matM} \matSig_{\matM} \right)} \\
&\buildrel(f)\over\le & \FNormS{\matE_1 \matOmega_1 \matD_1 \matS_1} \cdot 8 \\
&\buildrel(g)\over\le&  \FNormS{\matE_1 \matOmega_1 \matD_1} \cdot 8 \\
&\buildrel(h)\over\le& 80 \FNormS{\matE_1}
}
where
(a) follows by the sub-multiplicativity property of matrix norms,
(b) follows by replacing $\matZ_1\matOmega_1 \matD_1 = \matM = \matU_{\matM} \matSig_{\matM} \matV_{\matM}\transp$,
(c) follows by the fact that $\matU_{\matM} \matSig_{\matM}$ is a full rank $k \times k$ matrix
assuming the $1-\frac{1}{\poly(k)}$ probability event of (\ref{eqn:rank1})
(d) follows by the sub-multiplicativity property of matrix norms,
(e) follows by the 
connection of the spectral norm of the pseudo-inverse with the singular values of the matrix to be pseudo-inverted,
(f) follows if the $1-\frac{1}{\poly(k)}$ event of (\ref{eqn:rank1}) occurs and the probability $.98$ event of 
Corollary \ref{lem:dualnnz} occurs, 
(g) follows by Corollary \ref{lem:dualnnz}, and
(h) follows by Lemma~\ref{lem:FrobLSS} and by adding a $0.1$ to the overall failure probability.
So, overall with probability at least $0.8,$
$$ \FNormS{\matE_1 \matOmega \matD \matS_1(\matZ_1\matOmega \matD \matS_1)^{\dagger}} \le 80 \FNormS{\matE_1},$$
Hence,
with the same probability,
$$\FNormS{ \matA - \matC_1 \pinv{\matC}_1\matA } \le\FNormS{\matE_1} +  80 \FNormS{\matE_1}.$$
By our choice of $\matZ$, $\FNormS{\matE_1} \leq \frac{10}{9} \FNormS{\matA - \matA_k}.$
Hence, with probability at least $0.8$,
$$ \FNormS{ \matA - \matC_1 \pinv{\matC}_1\matA } \le 90 \FNormS{\matA - \matA_k}.$$
\end{proof}

Lemma \ref{lem:conk} gives us a way to find $4k$ columns providing an $O(1)$-approximation. We would like
to refine this approximation to a $(1+\eps)$-approximation using only an additional $O(k/\eps)$ number of columns.
To do so, we perform a type of residual sampling from this $O(1)$-approximation, as described in the next section.

\subsubsection{Adaptive sampling}\label{sec:adaptiveSampling}
Given $O(k)$ columns providing a constant factor approximation, we can sample $O(k/\eps)$ additional 
columns from their ``residual'' to obtain a $(1+\eps)$-approximation. This was shown in the following
lemma of Deshpande, Rademacher, Vempala, and Wang. It is actually more general in the sense that the matrix $\matV$
in the statement of the theorem need not be a subset of columns of $\matA$. 

\begin{theorem}[Theorem 2.1 of~\cite{DRVW06}]
\label{thm:adaptivecolumns}
Given $\matA \in \mathbb{R}^{m \times n}$ and $\matV \in \mathbb{R}^{m \times c_1}$ (with $c_1 \le n, m$),
define the residual
$$\matB = \matA - \matV \matV^{\dagger} \matA \in \mathbb{R}^{m \times n}.$$
For $i=1,\ldots,n$, and some fixed constant  $\alpha >0 ,$ let $p_i$ be a probability distribution such that for each $i:$
$$p_i \ge \alpha {\TNormS{\b_{i}}}/{\FNormS{\matB}},$$
where $\b_i$ is the $i$-th column of the matrix $\matB$. Sample
$c_2$ columns from $\matA$ in \math{c_2} i.i.d. trials, where in each trial the $i$-th column is chosen with probability $p_i$.
Let $\matC_2 \in \mathbb{R}^{m \times c_2}$ contain the $c_2$ sampled columns and let $\matC = [\matV\ \ \matC_2] \in \mathbb{R}^{m \times (c_1+c_2)}$
contain the columns of $\matV$ and $\matC_2$. 
Then, for any integer $k > 0$,
$$\Expect{ \FNormS{ \matA - \Pi_{\matC,k}^{\mathrm{F}}(\matA) } } \le \FNormS{\matA - \matA_k } 
+ \frac{k}{\alpha \cdot c_2} \FNormS{\matA - \matV \matV^{\dagger} \matA}.$$
We denote this procedure as
$$\matC_2 = AdaptiveCols(\matA, \matV, \alpha, c_2).$$
Given $\matA$ and $\matV,$ the above algorithm requires $O( c_1 m n + c_2 \log c_2 )$ arithmetic operations to find $\matC_2$.
\end{theorem}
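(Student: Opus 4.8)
The plan is to bound the expected Frobenius error by exhibiting an explicit (random) rank-$k$ matrix $\matX$ lying in the column space of $\matC = [\matV\ \matC_2]$ and then bounding $\Expect{\FNormS{\matA - \matX}}$, since $\Pi_{\matC,k}^{\mathrm F}(\matA)$ is by definition the best rank-$k$ approximation of $\matA$ inside $\col(\matC)$ and so can only do better. First I would fix notation: write $\matB = \matA - \matV\matV^{\dagger}\matA$ for the residual, let $\matA_k = \matU_k\mat\Sigma_k\matV_k^T$ be the best rank-$k$ approximation to $\matA$, and observe that the columns of $\matV$ are available ``for free'' inside $\matC$, so it suffices to correct the part of $\matA$ not yet captured. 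The natural candidate is to project the top-$k$ right singular directions: set $\matX = \matV\matV^{\dagger}\matA + \matW$, where $\matW$ is built from the sampled columns $\matC_2$ to approximate $\matB$ restricted to the row space of $\matV_k^T$. Concretely one takes $\matX = \matV\matV^{\dagger}\matA_k + (\text{estimator of } \matB\matV_k\matV_k^T \text{ from } \matC_2)\,$; the point is that $\matX$ has rank at most $k$ and lies in $\col(\matC)$.

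The core computation is a variance bound for the column-sampling estimator. For a single trial, picking column $i$ with probability $p_i$ and rescaling by $1/p_i$ gives an unbiased estimator of $\matB$ (hence of $\matB\matV_k\matV_k^T$ after right-multiplying by $\matV_k\matV_k^T$); averaging over $c_2$ i.i.d. trials cuts the variance by $c_2$. The key inequality is the standard one: for the averaged estimator $\matY$ of a target $\matT$ obtained by importance sampling columns with $p_i \ge \alpha \TNormS{\b_i}/\FNormS{\matB}$,
$$\Expect{\FNormS{\matY - \matT}} \le \frac{1}{c_2}\sum_i \frac{1}{p_i}\TNormS{\b_i}\cdot\TNormS{(\matV_k\matV_k^T)_{i,*}}\text{-type terms} \le \frac{k}{\alpha c_2}\FNormS{\matB},$$
where the $\sum_i \TNormS{\b_i}/p_i \le \FNormS{\matB}/\alpha$ bound comes directly from the hypothesis on $p_i$, and the factor $k$ appears because we only need to reconstruct $\matB$ along a $k$-dimensional subspace (the trace of $\matV_k\matV_k^T$ is $k$). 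Then a Pythagorean decomposition — splitting $\matA - \matX$ into the piece orthogonal to $\col(\matV)$ combined with the piece handled by $\matA_k$ versus the sampling error — yields
$$\Expect{\FNormS{\matA - \matX}} \le \FNormS{\matA - \matA_k} + \frac{k}{\alpha c_2}\FNormS{\matB},$$
and since $\Pi_{\matC,k}^{\mathrm F}(\matA)$ is the minimizer, the same bound holds for it.

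The main obstacle is setting up the decomposition cleanly so that the ``$\FNormS{\matA-\matA_k}$'' term and the ``sampling error'' term genuinely separate; one has to be careful that the projection $\matV\matV^{\dagger}$ and the SVD truncation interact correctly, and that the estimator used really does land in $\col(\matC)$ with rank $\le k$. This is exactly the subtlety in \cite{DRVW06}: the correct choice is to sample from $\matB$ but to reconstruct $\matA_k$'s contribution, so that the bias term is precisely $\FNormS{\matA-\matA_k}$ and the variance term scales with $\FNormS{\matB} = \FNormS{\matA - \matV\matV^{\dagger}\matA}$. Once that bookkeeping is fixed, the remaining steps — unbiasedness, the second-moment computation, and invoking optimality of $\Pi_{\matC,k}^{\mathrm F}$ — are routine. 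The arithmetic cost claim is immediate: forming $\matB$ costs $O(c_1 m n)$ (a pseudoinverse of an $m\times c_1$ matrix plus the multiplications), computing the $n$ sampling probabilities costs $O(mn)$, and drawing $c_2$ samples costs $O(c_2\log c_2)$ with a suitable data structure.
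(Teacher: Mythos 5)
Your proposal matches the paper's proof strategy in all essentials: the paper proves Theorem~\ref{thm:adaptiverows} (the Wang--Zhang generalization) and specializes, but the ideas are identical — construct an explicit rank-$k$ matrix in $\col(\matC)$ of the form $\matF = \sum_{q=1}^k \w_q \ve_q^T$ with $\w_q = \matV\matV^\dagger \matA\ve_q + \x_q$ (your ``$\matV\matV^\dagger \matA_k + \text{estimator of }\matB\matV_k\matV_k^T$'' is exactly this after regrouping), apply a row-wise Pythagorean split with $\matA_k$ as the pivot, compute the second moment of the importance-sampling estimator using $p_i \geq \alpha\TNormS{\b_i}/\FNormS{\matB}$ and $\sum_i \ve_{i,j}^2 = 1$, and invoke optimality of $\Pi_{\matC,k}^{\mathrm F}$. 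One point you gesture at but should nail down explicitly: $\matX$ has rank $\leq k$ not because it is a sum of two rank-$\leq k$ matrices (which could have rank $2k$) but because it can be written as $\sum_{q=1}^k \w_q\ve_q^T$, a sum of $k$ rank-one terms whose right factors $\ve_q$ are a fixed orthonormal set; this is the bookkeeping subtlety you correctly flag as the crux, and the paper's choice of defining $\w_q$ and $\matF$ makes it transparent.
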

Rather than prove Theorem \ref{thm:adaptivecolumns} directly, we will prove the following theorem of Wang and Zhang 
which generalizes
the above theorem. One can think
of the following theorem as analyzing the deviations of $\matA$ from an
arbitrary space - the row space of $\matR$, that occur via sampling additional
columns according to the residual from a given set of columns. These columns
may have nothing to do with $\matR$. This therefore generalizes the result
of Deshpande and Vempala which considered $\matR$ to be the top $k$ right singular
vectors of $\matA$ (we will make this generalization precise below). 
\begin{theorem}[Theorem 4 in~\cite{WZ13CUR}]
\label{thm:adaptiverows}
Given $\matA \in \mathbb{R}^{m \times n}$ and a matrix $\matR \in \mathbb{R}^{r \times n}$ such that 
$$\rank(\matR) = \rank(\matA \pinv{\matR} \matR) = \rho,$$
with $\rho \le r \le n,$
we let $\matC_1 \in \mathbb{R}^{m \times c_1}$ consist of $c_1$ columns of $\matA$ 
and define the residual 
$$\matB = \matA - \matC_1 \pinv{\matC_1} \matA \in \mathbb{R}^{m \times n}.$$
For $i=1,\ldots,n$ 
let $p_i$ be a probability distribution such that for each $i:$
$$p_i \geq \alpha {\TNormS{\b_{i}}}/{\FNormS{\matB}},$$
where $\b_i$ is the $i$-th column of $\matB$. Sample
$c_2$ columns from $\matA$ in \math{c_2} i.i.d. trials, 
where in each trial the $i$-th column is chosen with probability $p_i$.
Let $\matC_2 \in \mathbb{R}^{m \times c_2}$ 
contain the $c_2$ sampled columns and let $\matC = [\matC_1 , \matC_2] \in \mathbb{R}^{m \times c_2}$.
Then,
$$\Expect{ \FNormS{ \matA - \matC\pinv{\matC}\matA \pinv{\matR}\matR } }
\le \FNormS{\matA - \matA \pinv{\matR}\matR } + \frac{\rho}{\alpha c_2} \FNormS{\matA - \matC_1 \pinv{\matC_1} \matA}.$$
We denote this procedure as
$$\matC_2 = AdaptiveCols(\matA, \matR, \matC_1, \alpha, c_2).$$
Given $\matA,$ $\matR,$ $\matC_1,$
the above algorithm requires $O( c_1 m n + c_2 \log c_2  )$ arithmetic operations to find $\matC_2$.
\end{theorem}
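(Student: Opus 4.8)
The plan is to adapt the Deshpande--Vempala adaptive-sampling argument, with the general ``target'' $\matA\pinv{\matR}\matR$ playing the role that the best rank-$k$ approximation $\matA_k$ plays in Theorem~\ref{thm:adaptivecolumns}. Write $\matP=\pinv{\matR}\matR$ for the orthogonal projector onto the row space of $\matR$, so that $\FNormS{\matP}=\trace(\matP)=\rank(\matR)=\rho$, and write $\matB=\matA-\matC_1\pinv{\matC_1}\matA$ for the residual as in the statement. The first step is a Pythagorean split. Every row of $\matA(\matI-\matP)$ lies in the orthogonal complement of the row space of $\matR$, while every row of $(\matI-\matC\pinv{\matC})\matA\matP$ is a linear combination of rows of $\matA\matP$ and hence lies \emph{inside} the row space of $\matR$; therefore the Frobenius cross term vanishes and
\[
\FNormS{\matA-\matC\pinv{\matC}\matA\pinv{\matR}\matR}
= \FNormS{\matA-\matA\pinv{\matR}\matR} + \FNormS{(\matI-\matC\pinv{\matC})\matA\matP}.
\]
Since the first term is deterministic, it remains to show $\Expect{\FNormS{(\matI-\matC\pinv{\matC})\matA\matP}}\le \frac{\rho}{\alpha c_2}\FNormS{\matB}$.

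The second step is to exhibit a random matrix $\matY$ whose columns lie in the column space of $\matC$ and whose expected squared distance to $\matA\matP$ is small; because $\matC\pinv{\matC}\matA\matP$ is the Frobenius-optimal approximation of $\matA\matP$ from the column space of $\matC$, any such $\matY$ dominates $\FNormS{(\matI-\matC\pinv{\matC})\matA\matP}$. I will take $\matY=\matC_1\pinv{\matC_1}\matA\matP+\matZ$, where, writing $j_1,\dots,j_{c_2}$ for the indices drawn in the $c_2$ i.i.d.\ trials and $\matP_{j,*}$ for the $j$-th row of $\matP$,
\[
\matZ=\frac{1}{c_2}\sum_{t=1}^{c_2}\frac{1}{p_{j_t}}\,\b_{j_t}\,\matP_{j_t,*}.
\]
Since $\matB\matP=(\matI-\matC_1\pinv{\matC_1})\matA\matP=\sum_{i=1}^n \b_i\matP_{i,*}$, we get $\Expect{\matZ}=\matB\matP$, so $\matY$ is an unbiased estimator of $\matC_1\pinv{\matC_1}\matA\matP+\matB\matP=\matA\matP$. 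The observation that legitimizes the construction is that each $\b_{j_t}=\matA_{*j_t}-\matC_1\pinv{\matC_1}\matA_{*j_t}$ lies in the column space of $\matC$: the column $\matA_{*j_t}$ was placed into $\matC_2$, while $\matC_1\pinv{\matC_1}\matA_{*j_t}$ lies in the column space of $\matC_1$; hence every column of $\matZ$, and therefore of $\matY$, lies in the column space of $\matC$.

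The third step is a routine second-moment computation. As $\matZ$ is an average of $c_2$ i.i.d.\ copies of an unbiased estimator of $\matB\matP$,
\[
\Expect{\FNormS{\matA\matP-\matY}}=\Expect{\FNormS{\matB\matP-\matZ}}
=\sum_{k,l}\Var(\matZ_{k,l})
\le \frac{1}{c_2}\sum_{i=1}^n \frac{1}{p_i}\,\TNormS{\b_i}\,\TNormS{\matP_{i,*}},
\]
where the last inequality expands the second moment of one rank-one term and discards the $-(\matB\matP)^2$ contribution. Using $p_i\ge \alpha\TNormS{\b_i}/\FNormS{\matB}$ gives $\TNormS{\b_i}/p_i\le\FNormS{\matB}/\alpha$ for every $i$ with $\b_i\ne 0$ (indices with $\b_i=0$ contribute nothing), so the sum is at most $\frac{1}{\alpha c_2}\FNormS{\matB}\sum_i\TNormS{\matP_{i,*}}=\frac{1}{\alpha c_2}\FNormS{\matB}\,\FNormS{\matP}=\frac{\rho}{\alpha c_2}\FNormS{\matB}$. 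Plugging into the Pythagorean identity gives the claimed bound. For the running time, forming $\pinv{\matC_1}$ and then $\matB=\matA-\matC_1(\pinv{\matC_1}\matA)$ costs $O(c_1 m n)$ and yields the column norms of $\matB$ (hence the $p_i$) within this budget, and drawing the $c_2$ sampled columns costs an additional $O(c_2\log c_2)$.

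The main obstacle is not a single estimate but keeping the bookkeeping of three subspaces consistent at once: the row space of $\matR$ (which makes the Pythagorean split exact), the column space of $\matC$ (used both for optimality of $\matC\pinv{\matC}\matA\matP$ and to certify that the columns of $\matZ$ live there), and the precise rank-one weighting $\b_i\,\matP_{i,*}$ of the estimator, which is exactly what makes the entrywise variance factor as $\TNormS{\b_i}\,\TNormS{\matP_{i,*}}$. It is this factorization, combined with $\sum_i\TNormS{\matP_{i,*}}=\FNormS{\matP}=\rho$, that produces the clean $\rho/(\alpha c_2)$ factor rather than something scaling with $r$ or $n$.
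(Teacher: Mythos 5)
Your argument is correct and is, after unwinding, the paper's own proof in coordinate-free dress: writing $\matP = \pinv{\matR}\matR = \sum_{q=1}^{\rho}\ve_q\ve_q^T$ in terms of the right singular vectors $\ve_q$ of $\matA\matP$, your matrix $\matZ$ satisfies $\matZ\ve_j = \x_j$ for exactly the paper's per-direction estimators, and since the rows of $\matZ$ lie in the row space of $\matP$ one in fact has $\matY = \matF$, the estimator the paper builds via the SVD of $\matA\pinv{\matR}\matR$. The one genuine economy in your version is that the SVD never appears: the Pythagorean split uses only that $\matP$ is an orthogonal projector, and the entrywise variance factors as $\TNormS{\b_i}\,\TNormS{\matP_{i,*}}$ with $\sum_i\TNormS{\matP_{i,*}} = \FNormS{\matP} = \trace(\matP) = \rho$, so the rank bookkeeping the paper carries out direction-by-direction collapses to a single trace identity.
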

\begin{proof}
Write $\matA\matR^{\dagger} \matR$ in its SVD as $\matU \mat\Sigma \matV^T$. 
The key to the proof is to define the following matrix
$$\matF = \left (\sum_{q=1}^{\rho} \sigma_q^{-1} \w_q \u_q^T \right )\matA\matR^{\dagger} \matR,$$
where $\sigma_q$ is the $q$-th singular value of $\matA\matR^{\dagger} \matR$ with corresponding
left singular vector $\u_q$. The $\w_q \in \mathbb{R}^m$ are random
column vectors which will depend on the sampling and have certain desirable properties described
below.  

To analyze the expected error of the algorithm with respect to the choices made in the sampling
procedure, we have
\begin{eqnarray}\label{eqn:open}
{\bf E}\FNormS{\matA-\matC\matC^{\dagger}\matA\matR^{\dagger} \matR}
& = & {\bf E} \FNormS{\matA-\matA\matR^{\dagger} \matR + \matA\matR^{\dagger}\matR - \matC\matC^{\dagger}\matA\matR^{\dagger}\matR} \notag \\
& = & {\bf E} \FNormS{\matA-\matA\matR^{\dagger} \matR} \notag \\
&& + {\bf E} \FNormS{\matA\matR^{\dagger}\matR-\matC\matC^{\dagger}\matA\matR^{\dagger}\matR}.
\end{eqnarray}
where the second equality uses the Pythagorean theorem. 

One property of the $\w_q$ we will ensure below is that the $\w_q$ each lie in the span of the columns
of $\matC$. Given this, we have
\begin{eqnarray}\label{eqn:close}
\FNormS{\matA\matR^{\dagger}\matR-\matC\matC^{\dagger}\matA\matR^{\dagger}\matR} 
& \leq & \FNormS{\matA\matR^{\dagger} \matR - \matW\matW^{\dagger}\matA\matR^{\dagger} \matR} \notag \\
& \leq & \FNormS{\matA\matR^{\dagger} \matR - \matF}. 
\end{eqnarray}
Plugging (\ref{eqn:close}) into (\ref{eqn:open}), we have
\begin{eqnarray}\label{eqn:combined}
{\bf E}[\FNormS{\matA-\matC\matC^{\dagger}\matA\matR^{\dagger} \matR}] 
\leq \FNormS{\matA-\matA\matR^{\dagger}\matR} + {\bf E}\left [\FNormS{\matA\matR^{\dagger} \matR - \matF} \right ],
\end{eqnarray}
where note that $\matR$ is deterministic so we can remove the expectation. 

Let $\ve_1, \ldots, \ve_{\rho}$ be the right singular vectors of $\matA\matR^{\dagger}\matR$. As both the rows of
$\matA\matR^{\dagger}\matR$ and of $\matF$ lie in the span of $\ve_1, \ldots, \ve_{\rho}$, we can decompose
(\ref{eqn:combined}) as follows:
\begin{eqnarray}\label{eqn:ultimate}
{\bf E}[\FNormS{\matA-\matC\matC^{\dagger}\matA\matR^{\dagger} \matR}] 
& \leq & \FNormS{\matA-\matA\matR^{\dagger}\matR} \notag \\
& + & {\bf E}\left [\FNormS{\matA\matR^{\dagger} \matR - \matF} \right ] \notag \\
& \leq & \FNormS{\matA-\matA\matR^{\dagger}\matR} \notag \\
&+& \sum_{j=1}^{\rho} {\bf E}\|(\matA\matR^{\dagger}\matR-\matF)\ve_j\|_2^2 \notag \\
& = & \FNormS{\matA-\matA\matR^{\dagger}\matR} \notag \\
&+& \sum_{j=1}^{\rho} {\bf E}\|\matA\matR^{\dagger}\matR\ve_j - \sum_{q=1}^{\rho}\sigma_q^{-1}\w_q \u_q^T \sigma_j \u_j\|_2^2 \notag \\
& = & \FNormS{\matA-\matA\matR^{\dagger}\matR} \notag \\
& + & \sum_{j=1}^{\rho} {\bf E} \|\matA\matR^{\dagger}\matR\ve_j - \w_j\|_2^2 \notag \\
& = & \FNormS{\matA-\matA\matR^{\dagger}\matR} \notag \\
& + & \sum_{j=1}^{\rho} {\bf E} \|\matA\ve_j - \w_j\|_2^2,
\end{eqnarray}
where the final equality follows from the fact that $\ve_j$ is, by definition, in the row space of $\matR$, 
and so $$\matA\ve_j - \matA\matR^{\dagger}\matR\ve_j = \matA(\matI-\matR^{\dagger}\matR)\ve_j = \vec0.$$
Looking at (\ref{eqn:ultimate}), it becomes clear what the properties of the $\w_j$ are that we want. Namely, 
we want them to be in the column space of $\matC$ and to have the property that ${\bf E}\|\matA\ve_j - \w_j\|_2^2$
is as small as possible. 

To define the $\w_j$ vectors, we begin by defining auxiliary random variables $\x_{j, (\ell)} \in \mathbb{R}^m$,
for $j = 1, \ldots, \rho$ and $\ell = 1, \ldots, c_2$:
$$\x_{j, (\ell)} = \frac{\ve_{i,j}}{p_i}\b_i = \frac{\ve_{i,j}}{p_i} \left (\a_i - \matC_1 \matC_1^{\dagger}\a_i \right ),$$
with probability $p_i$, for $i = 1, \ldots, n$. 
We have that $\x_{j, (\ell)}$ is a deterministic linear combination of a random column sampled from the distribution
defined in the theorem statement. Moreover,
\begin{eqnarray*}
{\bf E}[\x_{j, (\ell)}] & = & \sum_{i=1}^n p_i \frac{\ve_{i,j}}{p_i} \b_i = \matB \ve_j,
\end{eqnarray*}
and
\begin{eqnarray*}
{\bf E}\|\x_{j, (\ell)}\|_2^2 & = & \sum_{i=1}^n p_i \frac{\ve_{i,j}^2}{p_i^2} \|\b_i\|_2^2
\leq \sum_{i=1}^n \frac{\ve_{i,j}^2}{\alpha \|\b_i\|_2^2/\FNormS{\matB}} \|\b_i\|_2^2
= \frac{\FNormS{\matB}}{\alpha}.
\end{eqnarray*}
We now define the average vector
$$\x_j = \frac{1}{c_2} \sum_{\ell = 1}^{c_2} \x_{j, (\ell)},$$
and we have
\begin{eqnarray*}
{\bf E}[\x_j] & = & {\bf E}[\x_{j, (\ell)}] = \matB\ve_j,
\end{eqnarray*}
and
\begin{eqnarray}\label{eqn:variance}
{\bf E}\|\x_j-\matB\ve_j\|_2^2 & = & {\bf E}\|\x_j - {\bf E}[\x_j]\|_2^2 \notag \\
& = & \frac{1}{c_2} {\bf E}\|\x_{j, (\ell)}-{\bf E}[\x_{j, (\ell)}]\|_2^2\\
& = & \frac{1}{c_2} {\bf E}\|\x_{j, (\ell)}-\matB\ve_j\|_2^2 \notag,
\end{eqnarray}
where (\ref{eqn:variance}) follows from the fact that the samples are independent. In fact, pairwise independence
suffices for this statement, which we shall use in our later derandomization of this theorem
(note that while for
fixed $j$, the $\x_{j, (\ell)}$ are pairwise independent, for two different $j, j'$ we have that $\x_j$ 
and $\x_{j'}$ are dependent). 

Notice that $\x_j$ is in the span of the columns of $\matC$, for every $j = 1, \ldots, n$.
This follows since $\x_{j, (\ell)}$ is a multiple of a column in $\matC_2$. 

For $j = 1, \ldots, \rho$, we now define
\begin{eqnarray}\label{eqn:w}
\w_j = \matC_1 \matC_1^{\dagger} \matA \ve_j + \x_j,
\end{eqnarray}
and we also have that $\w_1, \ldots, \w_{\rho}$ are in the column space of $\matC$, as required above. It remains
to bound ${\bf E}\|\w_j - \matA\ve_j\|_2^2$ as needed for (\ref{eqn:ultimate}). 

We have
\begin{eqnarray}\label{eqn:relation}
{\bf E}[\w_j] & = & \matC_1\matC_1^{\dagger} \matA\ve_j + {\bf E}[\x_j] \notag \\
& = & \matC_1 \matC_1^{\dagger}\matA\ve_j + \matB\ve_j \notag \\
& = & \matA\ve_j, 
\end{eqnarray}
where (\ref{eqn:relation}) together with (\ref{eqn:w}) imply that
$$\w_j - \matA\ve_j = \x_j - \matB\ve_j.$$
At long last we have
\begin{eqnarray}\label{eqn:varBound}
{\bf E}\|\w_j - \matA\ve_j\|_2^2 & = & {\bf E}\|\x_j - \matB\ve_j\|_2^2 \notag \\
& = & \frac{1}{c_2} {\bf E}\|\x_{j, (\ell)}-\matB\ve_j\|_2^2 \notag \\
& = & \frac{1}{c_2} {\bf E}\|\x_{j, (\ell)}\|_2^2 - \frac{2}{c_2}(\matB\ve_j)^T {\bf E}[\x_{j, (\ell)}] + \frac{1}{c_2} \|\matB\ve_j\|_2^2\notag \\
& = & \frac{1}{c_2} {\bf E}\|\x_{j, (\ell)}\|_2^2 - \frac{1}{c_2} \|\matB\ve_j\|_2^2 \notag \\
& \leq & \frac{1}{\alpha c_2} \FNormS{\matB} - \frac{1}{c_2} \|\matB\ve_j\|_2^2 \notag \\
& \leq & \frac{1}{\alpha c_2} \FNormS{\matB}.
\end{eqnarray}
Plugging (\ref{eqn:varBound}) into (\ref{eqn:ultimate}), we obtain
$${\bf E}[\FNormS{\matA-\matC\matC^{\dagger}\matA\matR^{\dagger} \matR}] \leq 
\FNormS{\matA-\matA\matR^{\dagger}\matR} + \frac{\rho}{\alpha c_2} \FNormS{\matA-\matC_1 \matC_1^{\dagger}\matA},$$
which completes the proof. 
\end{proof}

\subsubsection{CUR wrapup}\label{sec:CURWrapup}
\paragraph{Obtaining a Good Set of Columns.}
We will apply 
Theorem \ref{thm:adaptivecolumns} with the $\matV$ 
of that theorem set to $\matC_1$. For the distribution $p$, we need
to quickly approximate the column norms of 
$\matB = \matA-\matC_1 \pinv{\matC_1} \matA$. To do so, 
by Lemma \ref{lem:jl}
it suffices
to compute $\matG \cdot \matB$, where $\matG$ is an $t \times m$
matrix of i.i.d. $N(0,1/t)$ random variables, for $t = O(\log n)$. By
Lemma \ref{lem:jl}, with probability at least $1-1/n$, simultaneously
for all $i \in [n]$, 
$$\frac{\|\b_i\|_2^2}{2} \leq \|(\matG\matB)_{*i}\|_2^2 \leq \frac{3}{2} \|\b_i\|_2^2,$$
where $\b_i = \matB_{*i}$ is the $i$-th column of $\matB$. It follows that we
can set $\alpha = \frac{1}{3}$ in Theorem \ref{thm:adaptivecolumns} using
the distribution $p$ on $[n]$ given by
$$\forall i \in [n], \ p_i = \frac{\|(\matG\matB)_{*i}\|^2_2}{\FNormS{\matG \matB}}.$$
Hence, for a parameter $c_2 > 0$, if we set
$$\matC_2 = AdaptiveCols(\matA, \matC_1, \frac{1}{3}, c_2),$$
if $\matC = [\matC_1, \matC_2]$, where $\matC_2$ are the columns sampled by 
$AdaptiveCols(\matA, \matC_1, \frac{1}{3}, c_2),$
then by the conclusion of Theorem \ref{thm:adaptivecolumns}, 
$$\Expect{ \FNormS{ \matA - \Pi_{\matC,k}^{\mathrm{F}}(\matA) } } \le \FNormS{\matA - \matA_k } 
+ \frac{3k}{c_2} \FNormS{\matA - \matC_1 \matC_1^{\dagger} \matA}.$$
By Lemma \ref{lem:conk}, with probability at least $.8$, 
$\FNormS{\matA - \matC_1 \matC_1^{\dagger} \matA} \leq 90 \FNormS{\matA - \matA_k }$, 
which we condition on. It follows by setting $c_2 = 270k/\eps$, then 
taking expectations with respect to the randomness
in $AdaptiveCols(\matA, \matC_1, \frac{1}{3}, c_2),$ we have
$$\Expect{ \FNormS{ \matA - \Pi_{\matC,k}^{\mathrm{F}}(\matA) } } \le (1+\eps) \FNormS{\matA - \matA_k }.$$

\paragraph{Running Time.}
A few comments on the running time are in order. We can compute $\matZ$ in
$O(\nnz(A)) + (m+n) \cdot \poly(k)$ time via Theorem \ref{thm:lowrank}. Given $\matZ$, we
can run {\textsc RandSampling}$(\matZ, s, p)$, where $s = O(k \log k)$ and $p$ is the leverage
score distribution defined by $\matZ$. This can be done in $n \cdot \poly(k/\eps)$ time. 

We then run 
{\textsc BssSamplingSparse}$(\matV_{\matM},  (\matA-\matA\matZ_1\matZ_1^T)^T \mat\Omega_1\matD_1, 4k, .5)$.
To do this efficiently, we can't afford to explicitly compute the matrix
$(\matA-\matA\matZ_1\matZ_1^T)^T \mat\Omega_1\matD_1$. 
We only form  $\matA\matOmega \matD$ and $\matZ_1\transp \matOmega \matD$ in $O(\nnz(A)) + n \cdot \poly(k)$ time. 
Then, {\textsc BssSamplingSparse} multiplies 
$\left(\matA - \matA \matZ_1\matZ_1\transp\right)\matOmega \matD$
from the left with a sparse subspace embedding matrix $\matW \in \R^{\xi \times m}$ with 
$\xi = O( k^2 \log^2 k)$. Computing $\matW \matA$ takes $O\left(  \nnz(\matA) \right)$ time.
Then, computing $(\matW\matA) \matZ_1$ and $(\matW\matA\matZ_1)\matZ_1\transp$ takes another 
$O(\xi m k)$ + $O(\xi n k)$ time, respectively. Finally, the sampling algorithm on
$\matW (\matA - \matA \matZ_1\matZ_1\transp)\matOmega \matD$ is $O( k^4 \log k + m k \log k)$ time.

Given $\matA, \matZ_1, \mat\Omega, \matD$ and $\matS_1$ we then know the matrix $\matC_1$ needed to run
$AdaptiveCols(\matA, \matC_1, \frac{1}{3}, c_2).$ The latter algorithm samples columns of $\matA$,
which can be done in $O(\nnz(A)) + n k/\eps$ time given the distribution $p$ to sample from. Here to find
$p$ we need to compute $\matG \cdot \matB$, where $\matB = \matA-\matC_1 \pinv{\matC_1} \matA$ and $\matG$ is 
an $O(\log n) \times m$ matrix. We can compute this matrix product in time $O(\nnz(A) \log n) + (m+n)\poly(k/\eps)$. 

It follows that the entire procedure to find $\matC$ is $O(\nnz(A) \log n) + (m+n) \poly(k/\eps)$ time. 

\paragraph{Simultaneously Obtaining a Good Set of Rows.}
At this point we have a set $\matC$ of $O(k/\eps)$ columns of $\matA$ for which
$$\Expect{ \FNormS{ \matA - \Pi_{\matC,k}^{\mathrm{F}}(\matA) } } \le (1+\eps) \FNormS{\matA - \matA_k }.$$
If we did not care about running time, we could now find the best $k$-dimensional subspace of the columns of
$\matC$ for approximating the column space of $\matA$, that is, if $\matU$ has orthonormal columns with the
same column space as $\matC$, then by Lemma \ref{lem:pythagorean},
$$\Expect{ \FNormS{ \matA - \matU [\matU^T\matA]_k } }\le (1+\eps) \FNormS{\matA - \matA_k },$$
where $[\matU^T\matA]_k$ denotes the best rank-$k$ approximation to $\matU^T\matA$ in Frobenius norm. So if
$\matL$ is an $m \times k$ matrix with orthonormal columns with the same column space as 
$\matU [\matU^T\matA]_k$, we could then attempt to execute the analogous algorithm to the one that 
we just ran. That algorithm was for finding a good
set of columns $\matC$ starting with $\matZ$, and now we would like to find a good set $\matR$ of rows 
starting with $\matL$. This is the proof strategy used by Boutsidis and the author in \cite{BW14}. 

Indeed, the algorithm of \cite{BW14} works by first sampling $O(k \log k)$ rows of $\matA$ according
to the leverage scores of $\matL$. It then downsamples this to $O(k)$ rows using {\textsc BssSamplingSparse}. Now,
instead of using Theorem \ref{thm:adaptivecolumns}, the algorithm invokes Theorem \ref{thm:adaptiverows}, applied to
$\matA^T$, to find $O(k/\eps)$ rows. 

Applying Theorem \ref{thm:adaptiverows} to $\matA^T$, the error has the form:
\begin{eqnarray}\label{eqn:rows}
{\bf E} \FNormS{ \matA - \matV\pinv{\matV}\matA \pinv{\matR}\matR } 
\le \FNormS{\matA - \matV\pinv{\matV}\matA } + \frac{\rho}{r_2} \FNormS{\matA - \matA\pinv{\matR}_1\matR_1}
\end{eqnarray}
where $\rho$ is the rank of $\matV$. Note that had we used $\matU$ here in place of $\matL$, $\rho$ could be $\Theta(k/\eps)$,
and then the number $r_2$ of samples we would need in (\ref{eqn:rows}) would be $\Theta(k/\eps^2)$, which is more than
the $O(k/\eps)$ columns and $O(k/\eps)$ rows we could simultaneously hope for. It turns out that these procedures
can also be implemented in $O(\nnz(\matA))\log n + (m+n)\poly(k/\eps)$ time.

We glossed over the issue of how to find the best $k$-dimensional subspace $\matL$ of the columns of $\matC$ for approximating
the column space of $\matA$, as described above. Na\"ively doing this would involve projecting the columns of $\matA$
onto the column space of $\matC$, which is too costly. Fortunately, by Theorem \ref{thm:kv}
in \S\ref{sec:dislra}, 
in $O(\nnz(\matA)) + (m+n)\poly(k/\eps)$ time it is 
possible to find an $m \times k$ matrix $\matL'$ with orthonormal columns so that
$$\FNorm{\matA - \matL'(\matL')^T\matA} \leq (1+\eps)\FNorm{\matA-\matL\matL^T \matA}.$$
Indeed, Theorem \ref{thm:kv} implies that if $\matW$ is an $\ell_2$-subspace embedding, then we can take
$\matL'$ to be the top $k$ left singular vectors of $\matU\matU^T\matA\matW$, and since $\matU\matU^T$ has rank $O(k/\eps)$,
this matrix product can be computed in $O(\nnz(\matA)) + (m+n) \cdot \poly(k/\eps)$ time using sparse subspace embeddings.  
We can thus use $\matL'$ in place of $\matL$ in the algorithm for selecting a subset of $O(k/\eps)$ rows of $\matA$. 

\paragraph{Finding a $\matU$ With Rank $k$.}
The above outline shows how to simultaneously obtain a matrix $\matC$ and a matrix $\matR$ with $O(k/\eps)$ columns
and rows, respectively. Given such a $\matC$ and a $\matR$, we need to find a rank-$k$ matrix $\matU$ which is the
minimizer to the problem
$$\min_{\textrm{rank}-k \matU}\FNorm{\matA - \matC\matU\matR}.$$
We are guaranteed that there is such a rank-$k$ matrix $\matU$ since crucially, when we apply 
Theorem \ref{thm:adaptiverows}, we apply it with $\matV = \matL$, which has rank $k$. Therefore, the resulting
approximation $\matV\pinv{\matV}\matA\pinv{\matR}\matR$ is a rank-$k$ matrix, and since $\matL$ is in the span
of $\matC$, can be expressed as $\matC\matU\matR$. It turns out one can quickly find $\matU$, as shown in \cite{BW14}. We omit the details. 

\paragraph{Deterministic CUR Decomposition.}
The main idea in \cite{BW14} 
to achieve a CUR Decomposition with the same $O(k/\eps)$ columns and rows and a rank-$k$ matrix $\matU$
deterministically is to derandomize Theorem \ref{thm:adaptivecolumns} and Theorem \ref{thm:adaptiverows}. The 
point is that the proofs involve the second moment method, and therefore by a certain discretization
of the sampling probabilities, one can derandomize the algorithm using pairwise-independent samples (of either columns
or rows, depending on whether one is derandomizing Theorem \ref{thm:adaptivecolumns} or Theorem \ref{thm:adaptiverows}). This
increases the running time when applied to an $n \times n$ matrix $\matA$ to $n^4 \cdot \poly(k/\eps)$, 
versus, say, $n^3$ using
other deterministic algorithms such as the SVD, but gives an actual subset of rows and columns. 

\subsection{Spectral norm error}\label{sec:spectral}
Here we show how to quickly obtain a $(1+\eps)$ rank-k approximation with respect to the spectral norm
$\|\matA\|_2 = \sup_{\x} \frac{\|\matA\x\|_2}{\|\x\|_2}$. That is, given an $m \times n$ matrix $\matA$, compute a rank-$k$
matrix $\tilde{\matA}_k$, where $\|\matA-\tilde{\matA}_k\|_2 \leq (1+\eps) \|\matA-\matA_k\|_2$. 

It is well-known that $\|\matA-\matA_k\|_2 = \sigma_{k+1}(\matA)$, where $\sigma_{k+1}(\matA)$ is the $(k+1)$-st singular
value of $\matA$, 
and that $\matA_k$ is the matrix $\matU \mat\Sigma_k \matV^T$, where $\matU \mat\Sigma \matV^T$ is the SVD of $\matA$ and
$\mat\Sigma_k$ is a diagonal matrix with first $k$ diagonal entries equal to those of $\mat\Sigma$, and $0$ otherwise. 

Below we present an algorithm, proposed by Halko, Martinsson and Tropp \cite{HMT}, that was shown 
by the authors to be a bicriteria
rank-$k$ approximation. That is, they efficiently find an $n \times 2k$ matrix $\matZ$ with orthonormal columns for which
 $\|\matA-\matZ\matZ^T\matA\|_2 \leq (1+\eps)\|\matA-\matA_k\|_2$. By slightly modifying their analysis, 
this matrix $\matZ$ can be shown to have dimensions $n \times (k+4)$ with the same error 
guarantee. 
The analysis of this algorithm was somewhat simplified by
Boutsidis, Drineas, and Magdon-Ismail \cite{BDM11a}, and by slightly modifying their analysis, this results in an 
$n \times (k+2)$ matrix $\matZ$ with orthonormal
columns for which $\|\matA-\matZ\matZ^T\matA\|_2 \leq (1+\eps)\|\matA-\matA_k\|_2$. 
We follow the analysis of \cite{BDM11a}, but 
simplify and improve it slightly in order to output 
a true rank-$k$ approximation, that is, an $n \times k$ matrix $\matZ$ with orthonormal columns for which 
$\|\matA-\matZ\matZ^T\matA\|_2 \leq (1+\eps)\|\matA-\matA_k\|_2$. This gives us a new result
which has not appeared in the literature to the best of our knowledge. 

Before presenting the algorithm, we need the following lemma. 
Suppose we have an $n \times k$ matriz $\matZ$ with orthonormal
columns for which there exists an $\matX$ for which $\|\matA-\matZ\matX\|_2 \leq (1+\eps)\|\matA-\matA_k\|_2$. How do we find
such an $\matX$? It turns out the optimal such $\matX$ is equal to $\matZ^T \matA$.

\begin{lemma}\label{lem:kinside}
If we let $\matX^* = \textrm{argmin}_{\matX} \|\matA-\matZ\matX\|_2$, then $\matX^*$ 
satisfies $\matZ\matX^* = \matZ\matZ^T \matA$. 
\end{lemma}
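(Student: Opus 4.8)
The plan is to reduce the statement to an elementary orthogonality fact about the spectral norm. Since $\matZ$ has orthonormal columns we have $\matZ^T\matZ = \matI_k$, so for an arbitrary $\matX$ I would first split
\[
\matA - \matZ\matX = (\matI_m - \matZ\matZ^T)\matA + \matZ(\matZ^T\matA - \matX).
\]
Write $\matP = (\matI_m - \matZ\matZ^T)\matA$ for the first term and $\matN = \matZ(\matZ^T\matA - \matX)$ for the second. Every column of $\matN$ lies in the column space of $\matZ$, while every column of $\matP$ lies in its orthogonal complement; concretely, $\matP^T\matN = \matA^T(\matI_m - \matZ\matZ^T)\matZ(\matZ^T\matA - \matX) = 0$, because $(\matI_m - \matZ\matZ^T)\matZ = \matZ - \matZ(\matZ^T\matZ) = \matZ - \matZ = 0$.

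The key step is then the claim that if $\matP^T\matN = 0$ then $\|\matP + \matN\|_2 \ge \|\matP\|_2$. To prove this I would pick a unit vector $\ve$ achieving $\|\matP\ve\|_2 = \|\matP\|_2$ (the inequality is trivial when $\matP = 0$); then $\langle \matP\ve, \matN\ve\rangle = \ve^T\matP^T\matN\ve = 0$, so $\|(\matP + \matN)\ve\|_2^2 = \|\matP\ve\|_2^2 + \|\matN\ve\|_2^2 \ge \|\matP\ve\|_2^2$, and therefore $\|\matP + \matN\|_2 \ge \|(\matP+\matN)\ve\|_2 \ge \|\matP\|_2$.

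Combining the two steps, for every $\matX$ we get $\|\matA - \matZ\matX\|_2 = \|\matP + \matN\|_2 \ge \|\matP\|_2 = \|\matA - \matZ\matZ^T\matA\|_2$, while the choice $\matX = \matZ^T\matA$ makes $\matN = 0$ and attains equality. Hence $\matZ^T\matA$ is a minimizer of $\min_{\matX}\|\matA - \matZ\matX\|_2$, so we may take $\matX^* = \matZ^T\matA$, for which $\matZ\matX^* = \matZ\matZ^T\matA$, as claimed.

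I do not expect a real obstacle here: the whole argument is two lines of algebra plus the one-line norm inequality. The only point worth flagging is that, unlike the Frobenius-norm case, the spectral-norm minimizer need not be unique, so the lemma is best read as the assertion that $\matX^* = \matZ^T\matA$ is an optimal choice of $\matX$ (equivalently, that $\matZ\matZ^T\matA$ is an optimal value of $\matZ\matX^*$); this is exactly what is needed in the subsequent use of the lemma, and the proof sketch above delivers precisely that.
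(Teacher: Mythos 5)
Your proof is correct and uses essentially the same argument as the paper: an orthogonal decomposition of $\matA-\matZ\matX$ into a piece in the column space of $\matZ$ and a piece in its orthogonal complement, followed by the Pythagorean theorem applied columnwise to compare spectral norms. The only difference is cosmetic — you package the orthogonality as $\matP^T\matN=0$ and a short norm inequality, while the paper writes it directly in terms of $\|(\matA-\matZ\matX^*)\ve\|_2^2$. Your closing caveat about non-uniqueness of the spectral-norm minimizer is well taken: the paper's proof in fact only establishes $\|\matA-\matZ\matX^*\|_2 = \|\matA-\matZ\matZ^T\matA\|_2$, so the lemma is most honestly read, exactly as you say, as asserting that $\matZ\matZ^T\matA$ is an optimal (not the unique) choice of $\matZ\matX^*$, which is all that is used downstream.
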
 
\begin{proof}
On the one hand, $\|\matA-\matZ\matX^*\|_2 \leq \|\matA-\matZ\matZ^T \matA\|_2$, since $\matX^*$ is the minimizer. On the other hand,
for any vector $\ve$, by the Pythagorean theorem, 
\begin{eqnarray*}
\|(\matA-\matZ\matX^*)\ve\|_2^2 & = & \|(\matA-\matZ\matZ^T\matA)\ve\|_2^2 + \|(\matZ\matZ^T\matA - \matZ\matX^*)\ve\|_2^2\\
& \geq & \|(\matA-\matZ\matZ^T\matA)\ve\|_2^2,
\end{eqnarray*}
and so $\|\matA-\matZ\matX^*\|_2 \geq \|\matA-\matZ\matZ^T\matA\|_2$. 
\end{proof}
We also collect a few facts about the singular values of a Gaussian matrix. 
\begin{fact}(see, e.g., \cite{RV10})\label{fact:gaussian}
Let $\matG$ be an $r \times s$ matrix of i.i.d. normal random variables with mean $0$ and variance $1$. 
There exist constants $C, C' > 0$ for which

(1) The maximum singular value $\sigma_1(\matG)$ satisfies $\sigma_1(\matG) \leq C\sqrt{\max(r,s)}$ with
probability at least $9/10$. 

(2) If $r = s$, then the minimum singular value $\sigma_r(\matG)$ satisfies 
$\sigma_r(\matG) \geq C'/\sqrt{r}$ with probability at least $9/10$.
\end{fact}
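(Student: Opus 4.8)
The plan is to prove the two parts by separate, standard arguments: (1) an $\varepsilon$-net/union-bound argument for the operator norm, and (2) a \emph{compressible vs.\ incompressible} decomposition of the sphere together with an anti-concentration estimate for a single Gaussian.

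For part (1), write $\sigma_1(\matG) = \sup_{x\in S^{s-1},\, y\in S^{r-1}}\langle y,\matG x\rangle$. For fixed unit vectors $x,y$ the scalar $\langle y,\matG x\rangle = \sum_{i,j} y_i\matG_{i,j}x_j$ is $N(0,1)$, since its variance is $\sum_{i,j} y_i^2 x_j^2 = 1$, so $\Pr[\,|\langle y,\matG x\rangle|>t\,]\le 2e^{-t^2/2}$. Take $(1/4)$-nets $N_x$ of $S^{s-1}$ and $N_y$ of $S^{r-1}$ of sizes at most $9^s$ and $9^r$ (by the same volumetric argument as in Lemma~\ref{lem:epsL2net}); the standard net-to-supremum comparison gives $\sigma_1(\matG)\le 2\max_{x\in N_x,\,y\in N_y}\langle y,\matG x\rangle$. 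A union bound over the at most $9^{r+s}$ pairs, with $t=C\sqrt{\max(r,s)}$ for a large enough absolute constant $C$, makes $2\cdot 9^{r+s}e^{-t^2/2}\le 1/10$ (using $r+s\le 2\max(r,s)$), which yields $\sigma_1(\matG)\le 2t=O(\sqrt{\max(r,s)})$ with probability at least $9/10$; this is part (1) after renaming the constant.

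For part (2) (now $\matG$ is $r\times r$), the plan is to lower-bound $\sigma_r(\matG)=\inf_{x\in S^{r-1}}\|\matG x\|_2$ by splitting the sphere. Call $x$ \emph{compressible} if it lies within Euclidean distance $\rho$ of some vector supported on at most $\delta r$ coordinates, for small absolute constants $\rho,\delta$, and \emph{incompressible} otherwise. On compressible $x$: for a fixed such $x$, $\|\matG x\|_2$ concentrates near $\sqrt r$, so $\Pr[\|\matG x\|_2<c\sqrt r]\le e^{-cr}$, while a net over the compressible set has size only $\binom{r}{\delta r}(O(1/\rho))^{\delta r}=e^{O(\delta r\log(1/\delta))}\ll e^{cr}$ for $\delta$ small; a union bound gives $\inf_{x\text{ compressible}}\|\matG x\|_2\ge c\sqrt r$ with probability $1-e^{-\Omega(r)}$, far above the target $\Omega(1/\sqrt r)$. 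On incompressible $x$: incompressibility forces a set $I(x)$ of $\Omega(r)$ coordinates with $|x_i|\ge c''/\sqrt r$; letting $H_i$ be the span of all columns of $\matG$ except the $i$-th, the inequality $\|\matG x\|_2\ge\mathrm{dist}(\matG x,H_i)=|x_i|\,\mathrm{dist}(\matG_{*i},H_i)$ shows that $\{\sigma_r(\matG)<t\}$ forces $\mathrm{dist}(\matG_{*i},H_i)<t\sqrt r/c''$ for \emph{every} $i$ in a set of size $\Omega(r)$. Conditionally on the other columns, $\mathrm{dist}(\matG_{*i},H_i)$ is distributed as $|N(0,1)|$, so $\Pr[\mathrm{dist}(\matG_{*i},H_i)<u]\le u$; hence the expected number of such "bad" indices is $O(r\cdot t\sqrt r)=O(r^{3/2}t)$, and Markov's inequality bounds the probability of $\Omega(r)$ bad indices by $O(\sqrt r\,t)$, which is $\le 1/20$ once $t=c'/\sqrt r$ with $c'$ small. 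Combining the two cases gives $\sigma_r(\matG)\ge c'/\sqrt r$ with probability at least $9/10$ for $r$ above an absolute constant (small $r$ is checked directly).

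The main obstacle is precisely the incompressible part of (2): the naive bounds — either $\sigma_r(\matG)\ge\frac{1}{\sqrt r}\min_i\mathrm{dist}(\matG_{*i},H_i)$ combined with a union bound over the $r$ subspaces, which gives only $\Omega(r^{-3/2})$, or $\sigma_r(\matG)\ge 1/\|\matG^{-1}\|_F$ with $\|\matG^{-1}\|_F^2=\sum_i\mathrm{dist}(\matG_{*i},H_i)^{-2}$, which gives only $\Omega(1/r)$ because $\mathbb{E}[\,|N(0,1)|^{-2}\,]=\infty$ — both lose polynomial factors, and ordinary $\varepsilon$-net or Lipschitz-concentration arguments fail outright, since $\sigma_r$ of a square Gaussian genuinely fluctuates at the scale $1/\sqrt r$ of its mean. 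The idea that rescues the sharp bound is that if \emph{one} incompressible direction is bad then a \emph{linear} number of individual rows must each be anti-concentrated near their span, so a first-moment estimate using $\Pr[\,|N(0,1)|<u\,]\le u$ suffices in place of a union bound; the routine-but-nontrivial pieces are then the incompressibility/many-spread-coordinates lemma and the count of the compressible net. Alternatively one may simply invoke the small-ball estimate $\Pr[\sigma_r(\matG)\le t/\sqrt r]\le Ct+e^{-cr}$ from \cite{RV10}.
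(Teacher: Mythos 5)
The paper itself gives no proof of this Fact, stating it with a pointer to \cite{RV10}; your proposal is a correct reconstruction of exactly the Rudelson--Vershynin argument that the citation refers to. Part (1) is the routine net/union-bound computation, and for part (2) you have correctly identified both why the naive routes fail (a union bound over the $r$ subspaces $H_i$, or the estimate $\sigma_r\ge 1/\|\matG^{-1}\|_F$, each lose polynomial factors) and the fix: on the incompressible portion of the sphere a single bad direction forces $\Omega(r)$ of the anti-concentration events $\{\mathrm{dist}(\matG_{*i},H_i)<u\}$ to occur simultaneously, so a first-moment bound $\mathbb{E}[\#\mathrm{bad}]\le r\,\Pr[|N(0,1)|<u]\le ru$ followed by Markov replaces the lossy union bound, and the compressible portion is handled by concentration plus a small net. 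Two small bookkeeping remarks, neither of which affects correctness: the statement that $\mathrm{dist}(\matG_{*i},H_i)$ is conditionally $|N(0,1)|$-distributed uses that $H_i$ has codimension exactly one, which holds almost surely since the remaining $r-1$ columns are linearly independent with probability one; and the $(1/4)$-net size $9^{s}$ uses the tight volumetric bound $(1+2/\gamma)^{s}$ rather than the looser $(1+4/\gamma)^{s}$ of Lemma~\ref{lem:epsL2net}, though either constant suffices once $C$ is taken large.
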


The algorithm, which we call {\sf SubspacePowerMethod} is as follows. The intuition is, like
the standard power method, if we compute $(\matA \matA^T)^q \matA \g$ for a 
random vector $\g$, then for large
enough $q$ this very quickly converges to the top left singular vector of $\matA$. If we instead
compute $(\matA \matA^T)^q \matA \matG$ for a random $n \times k$ matrix $\matG$, for large
enough $q$ this also very quickly converges to an $n \times k$ matrix which is close, in a certain
sense, to the top $k$ left singular vectors of $\matA$. 
\begin{enumerate}
\item Compute $\matB = (\matA\matA^T)^q \matA$ and $\matY = \matB\matG$, where $\matG$ 
is an $n \times k$ matrix of i.i.d. $N(0,1)$ random variables.
\item Let $\matZ$ be an $n \times k$ matrix with orthonormal columns whose column space is equal to that of $\matY$.
\item Output $\matZ\matZ^T \matA$. 
\end{enumerate}

In order to analyze {\sf SubspacePowerMethod}, we need a key lemma shown in \cite{HMT} concerning powering of a matrix.
\begin{lemma}\label{lem:tropp}
Let $\matP$ be a projection matrix, i.e., $\matP = \matZ\matZ^T$ for a matrix $\matZ$ with orthonormal columns. For any matrix $\matX$ of
the appropriate dimensions and integer $q \geq 0$,
$$\|\matP\matX\|_2 \leq (\|\matP(\matX\matX^T)^q\matX\|_2)^{1/(2q+1)}.$$
\end{lemma}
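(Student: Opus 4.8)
The plan is to reduce the statement to a clean inequality about a single symmetric positive semidefinite (PSD) matrix and an orthogonal projector, and then establish that inequality by an eigendecomposition plus convexity of $x\mapsto x^{t}$. Throughout set $t = 2q+1$, an odd integer with $t\ge 1$, and put $\matM = \matX\matX^T$, which is symmetric PSD. Using $\matP^T=\matP$ and $(\matM^q)^T = \matM^q$, I would first record the two identities $\|\matP\matX\|_2^2 = \|(\matX^T\matP)^T(\matX^T\matP)\|_2 = \|\matP\matM\matP\|_2$ and $\|\matP(\matX\matX^T)^q\matX\|_2^2 = \|\matP\matM^q\matX\matX^T\matM^q\matP\|_2 = \|\matP\matM^{t}\matP\|_2$. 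Hence the lemma is equivalent to showing $\|\matP\matM\matP\|_2^{t}\le\|\matP\matM^{t}\matP\|_2$ for every symmetric PSD $\matM$, every orthogonal projector $\matP$, and $t\ge 1$; taking $2t$-th roots and comparing with the two identities recovers the claimed bound with exponent $1/(2q+1)$.

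Next I would prove this reduced inequality pointwise. Fix a unit vector $u$ and let $w=\matP u$; since $\matP$ is a projector, $\|w\|_2\le 1$, and moreover $u^T\matP\matM\matP u = w^T\matM w$ and $u^T\matP\matM^{t}\matP u = w^T\matM^{t}w$. If $w=0$ there is nothing to prove. Otherwise write $w=\|w\|_2\,\hat w$ with $\hat w$ a unit vector, and expand $\hat w$ in an orthonormal eigenbasis of $\matM$ with eigenvalues $\lambda_i\ge 0$ and weights $p_i = |\langle\hat w, v_i\rangle|^2$, so $\sum_i p_i = 1$, $\hat w^T\matM\hat w = \sum_i p_i\lambda_i$, and $\hat w^T\matM^{t}\hat w = \sum_i p_i\lambda_i^{t}$. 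Jensen's inequality (convexity of $x\mapsto x^{t}$ on $[0,\infty)$ for $t\ge1$) gives $(\sum_i p_i\lambda_i)^{t}\le\sum_i p_i\lambda_i^{t}$, hence $(w^T\matM w)^{t} = \|w\|_2^{2t}(\hat w^T\matM\hat w)^{t}\le\|w\|_2^{2t}\,\hat w^T\matM^{t}\hat w = \|w\|_2^{2t-2}\,(w^T\matM^{t}w)\le w^T\matM^{t}w$, where the last step uses $\|w\|_2\le 1$ and $2t-2\ge 0$ — this is exactly where the projector hypothesis is spent.

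Finally I would pass from the pointwise bound to operator norms. Since $\matP\matM\matP = (\matX^T\matP)^T(\matX^T\matP)$ is symmetric PSD, its spectral norm is attained as a quadratic form: $\|\matP\matM\matP\|_2 = \max_{\|u\|_2=1} u^T\matP\matM\matP u$. Choosing a maximizing $u$ and applying the previous paragraph (together with $v^T\matB v\le\|\matB\|_2$ for symmetric $\matB$ and unit $v$) gives $\|\matP\matM\matP\|_2^{t} = (u^T\matP\matM\matP u)^{t}\le u^T\matP\matM^{t}\matP u\le\|\matP\matM^{t}\matP\|_2$, which is the reduced inequality; combining with the two identities above finishes the proof. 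I do not anticipate a genuine obstacle: the only delicate points are the exponent bookkeeping in the reduction $\|\matP\matX\|_2^2=\|\matP\matM\matP\|_2$, $\|\matP(\matX\matX^T)^q\matX\|_2^2 = \|\matP\matM^{t}\matP\|_2$, and making sure the slack $\|w\|_2\le 1$ is used to \emph{drop} the projector (so it must appear with a nonnegative exponent), which dictates taking $t=2q+1$ rather than a smaller power.
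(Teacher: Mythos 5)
Your proof is correct and follows essentially the same route as the paper: reduce to an inequality of the form $\|\matP\matM\matP\|_2^t \leq \|\matP\matM^t\matP\|_2$ for a PSD matrix $\matM$ and projector $\matP$, then prove it via an eigendecomposition and Jensen's inequality (convexity of $x\mapsto x^t$), using the contraction property $\|\matP u\|_2 \leq 1$ to discard the projector. The only cosmetic difference is that the paper first conjugates by the left singular vectors of $\matX$ to reduce to a diagonal matrix $\matD = \mat\Sigma^2$, whereas you work directly with the eigendecomposition of $\matM = \matX\matX^T$; and the paper normalizes the maximizing vector so that $\|\matR\x\|_2 = 1$ up front, while you carry the factor $\|w\|_2^{2t-2}\leq 1$ explicitly — both handle the same slack.
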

\begin{proof}
Following \cite{HMT}, we first show that if $\matR$ is a projection matrix and $\matD$ a non-negative diagonal matrix,
then $\|\matR\matD\matR\|_2^t \leq \|\matR\matD^t\matR\|_2$. To see this, suppose $\x$ is a unit vector for which
$\x^T\matR\matD\matR\x = \|\matR\matD\matR\|_2$. We can assume that $\|\matR\x\|_2 = 1$, as otherwise since $\matR$ is a projection matrix,
$\|\matR\x\|_2 < 1$, and taking the unit vector $\z = \matR\x/\|\matR\x\|_2$, we have
$$\z^T \matR \matD\matR \z = \frac{\x\matR^2 \matD \matR^2 \x}{\|\matR\x\|_2^2} = \frac{\x^T\matR\matD\matR\x}{\|\matR\x\|_2^2} > \x^T\matR\matD\matR\x,$$
contradicting that $\x^t\matR\matD\matR\x = \|\matR\matD\matR\|_2$. We thus have,
\begin{eqnarray*}
\|\matR\matD\matR\|^t & = & (\x^T \matR\matD\matR \x)^t = (\x^T\matD\x)^t = (\sum_j \matD_{j,j} \x_j^2)^t\\
& \leq & \sum_j \matD_{j,j}^t \x_j^2 = \x^T \matD^t \x = (\matR\x)^T \matD^t \matR\x\\
& \leq & \|\matR\matD^t \matR\|_2,
\end{eqnarray*}
where we have used Jensen's inequality to show that $(\sum_j \matD_{j,j} \x_j^2)^t \leq \sum_j \matD_{j,j}^t \x_j^2$, noting
that $\sum_j \x_j^2 = 1$ and the function $z \rightarrow |z|^t$ is convex.

Given this claim, let $\matX = \matU \mat\Sigma \matV^T$ be a decomposition of $\matX$
in which $\matU$ and $\matV^T$ are square matrices with orthonormal columns and rows, 
and $\Sigma$ has non-negative entries on the diagonal (such a decomposition can be obtained
from the SVD). Then,
\begin{eqnarray*}
\|\matP\matX\|_2^{2(2q+1)} & = & \|\matP \matX\matX^T \matP\|_2^{2q+1}\\
& = & \|(\matU^T \matP \matU) \mat\Sigma^2 (\matU^T \matP \matU)\|_2^{2q+1}\\
& \leq & \|(\matU^T \matP \matU) \mat\Sigma^{2(2q+1)} (\matU^T \matP \matU)\|_2\\
& = & \|\matP (\matX\matX^T)^{(2q+1)}\matP\|_2 \\
& = & \|\matP(\matX\matX^T)^q \matX \matX^T (\matX\matX^T)^q \matP\|_2\\
& = & \|\matP(\matX\matX^T)^q \matX\|_2^2,
\end{eqnarray*}
where the first equality follows since $\|\matP\matX\|_2^2 = \|\matP\matX\matX^T\matP\|_2$,
the second equality uses that $\matX\matX^T = \matU \mat\Sigma^2 \matU^T$ and rotational
invariance given that $\matU$ has orthonormal rows and columns, the first inequality
uses the claim
above with $\matR = \matU^T \matP \matU$, the next equality uses that $\matX\matX^T =
\matU \mat\Sigma^2 \matU^T$, the next equality regroups terms, and the final equality
writes the operator norm as the equivalent squared operator norm. 

If we raise both sides to the $1/(2(2q+1))$-th power, then this completes the proof. 
\end{proof}
We can now prove the main theorem about {\sf SubspacePowerMethod}
\begin{theorem}\label{thm:tropp}
For appropriate $q = O(\log(mn)/\eps)$, 
with probability at least $4/5$, {\sf SubspacePowerMethod} outputs a rank-$k$ matrix $\matZ\matZ^T \matA$ for which
$\|\matA-\matZ\matZ^T\matA\|_2 \leq (1+\eps)\|\matA-\matA_k\|_2$.
Note that {\sf SubspacePowerMethod} can be implemented in $O(\nnz(\matA)k \log(mn)/\eps)$ time. 
\end{theorem}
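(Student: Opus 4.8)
The plan is to bound the residual $\matP_\perp \matA := (\matI_m - \matZ\matZ^T)\matA = \matA - \matZ\matZ^T\matA$ directly in spectral norm. Since $\matZ$ has orthonormal columns, $\matP_\perp$ is an orthogonal projection, so Lemma~\ref{lem:tropp} applied with $\matX = \matA$ immediately gives the ``de-powering'' inequality
$$\|\matA - \matZ\matZ^T\matA\|_2 = \|\matP_\perp\matA\|_2 \le \left(\|\matP_\perp (\matA\matA^T)^q \matA\|_2\right)^{1/(2q+1)} = \|\matP_\perp\matB\|_2^{1/(2q+1)}.$$
Everything then reduces to showing that, on an event of probability at least $4/5$, $\|\matP_\perp\matB\|_2 \le \sigma_{k+1}(\matA)^{2q+1}\cdot (mn)^{c_0}$ for an absolute constant $c_0$; taking the $(2q+1)$-st root gives $\|\matA-\matZ\matZ^T\matA\|_2 \le (mn)^{c_0/(2q+1)}\,\sigma_{k+1}(\matA) = (mn)^{c_0/(2q+1)}\,\|\matA-\matA_k\|_2$, and choosing $q = \Theta(\log(mn)/\eps)$ makes $(mn)^{c_0/(2q+1)} \le 1+\eps$. (Lemma~\ref{lem:kinside} is what justifies outputting $\matZ\matZ^T\matA$ rather than some other $\matZ\matX$.)

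To bound $\|\matP_\perp\matB\|_2$, write $\matA = \matU\mat\Sigma\matV^T$ in its SVD and split the singular triples into the top $k$ (subscript $k$) and the remaining $\rank(\matA)-k$ (subscript $-k$); we may assume $\rank(\matA) > k$, since otherwise $\matA_k = \matA$ and there is nothing to prove. Then $\matB = (\matA\matA^T)^q\matA = \matU\mat\Sigma^{2q+1}\matV^T$, and with $\mat\Omega_1 := \matV_k^T\matG \in \R^{k\times k}$ and $\mat\Omega_2 := \matV_{-k}^T\matG$ we have $\matY = \matB\matG = \matU_k\mat\Sigma_k^{2q+1}\mat\Omega_1 + \matU_{-k}\mat\Sigma_{-k}^{2q+1}\mat\Omega_2$. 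On the event that $\mat\Omega_1$ is invertible (which is part of the Gaussian event below, and also forces $\matY$ to have rank exactly $k$), the matrix $\matM := \matB\matG\,\mat\Omega_1^{-1} = \matU_k\mat\Sigma_k^{2q+1} + \matU_{-k}\mat\Sigma_{-k}^{2q+1}\mat\Omega_2\mat\Omega_1^{-1}$ has the same column space as $\matB\matG = \matY$, hence the same column space as $\matZ$, so $\matP_\perp\matM = 0$ and therefore $\matP_\perp\matB = \matP_\perp(\matB - \matM\matV_k^T)$. The key cancellation is that $\matM\matV_k^T$ matches the top part of $\matB$ exactly:
$$\matB - \matM\matV_k^T = \matU_{-k}\mat\Sigma_{-k}^{2q+1}\left(\matV_{-k}^T - \mat\Omega_2\mat\Omega_1^{-1}\matV_k^T\right).$$
Using $\|\matP_\perp\|_2 \le 1$, that $\matU_{-k}$ has orthonormal columns, the triangle inequality and sub-multiplicativity, and $\|\mat\Sigma_{-k}\|_2 = \sigma_{k+1}(\matA)$, $\|\matV_k^T\|_2 = \|\matV_{-k}^T\|_2 = 1$, this yields
$$\|\matP_\perp\matB\|_2 \le \sigma_{k+1}(\matA)^{2q+1}\left(1 + \|\mat\Omega_2\|_2\,\|\mat\Omega_1^{-1}\|_2\right).$$

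It remains to control the two Gaussian quantities. The matrix obtained by stacking $\matV_k^T$ on top of $\matV_{-k}^T$ has orthonormal rows, so by rotational invariance of the standard Gaussian, $\mat\Omega_1$ and $\mat\Omega_2$ are \emph{independent} matrices with i.i.d.\ $N(0,1)$ entries, of dimensions $k\times k$ and $(\rank(\matA)-k)\times k$ respectively. By Fact~\ref{fact:gaussian}(2), with probability at least $9/10$, $\mat\Omega_1$ is invertible and $\|\mat\Omega_1^{-1}\|_2 = 1/\sigma_k(\mat\Omega_1) \le \sqrt{k}/C'$; by Fact~\ref{fact:gaussian}(1), with probability at least $9/10$, $\|\mat\Omega_2\|_2 = \sigma_1(\mat\Omega_2) \le C\sqrt{\max(\rank(\matA)-k,\,k)} \le C\sqrt{n}$. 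A union bound leaves failure probability at most $1/5$, and on the good event $1 + \|\mat\Omega_2\|_2\|\mat\Omega_1^{-1}\|_2 \le 1 + (C/C')\sqrt{nk} \le (mn)^{c_0}$ (using $k \le \min(m,n)$). Combined with the previous two displays, this proves the approximation guarantee with probability at least $4/5$. For the running time, we never form $\matB$ explicitly: we compute $\matA\matG$ in $O(\nnz(\matA)k)$ time and then alternately left-multiply by $\matA^T$ and by $\matA$ a total of $q$ times, each costing $O(\nnz(\matA)k)$, so $\matY$ is obtained in $O(\nnz(\matA)kq) = O(\nnz(\matA)k\log(mn)/\eps)$ time; orthonormalizing the $m\times k$ matrix $\matY$ (an $O(mk^2)$ cost) and forming $\matZ^T\matA$ ($O(\nnz(\matA)k)$) are lower order.

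The step I expect to be the main obstacle is exactly obtaining the bound $\|\matP_\perp\matB\|_2 \le \sigma_{k+1}(\matA)^{2q+1}\cdot\poly(m,n)$ with a \emph{dimension-only} polynomial factor. The naive approach of writing $\matB = \matB_k + \matB_{-k}$ and bounding $\|\matP_\perp\matB_k\|_2 \le \|\matP_\perp\matU_k\|_2\,\|\mat\Sigma_k^{2q+1}\|_2$ introduces a spurious factor $(\sigma_1(\matA)/\sigma_k(\matA))^{2q+1}$, which survives the $(2q+1)$-st root as $\sigma_1(\matA)/\sigma_k(\matA)$ and ruins the estimate. The fix is to subtract the carefully chosen rank-$k$ matrix $\matM\matV_k^T$ whose column space already lies inside $\col(\matZ)$ (so that $\matP_\perp$ annihilates it), leaving only the tail-scaled term $\matU_{-k}\mat\Sigma_{-k}^{2q+1}(\cdots)$; carrying this cancellation out correctly is the crux, after which only the routine Gaussian singular-value estimates of Fact~\ref{fact:gaussian} and the powering Lemma~\ref{lem:tropp} are needed.
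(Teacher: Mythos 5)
Your proof is correct and follows the same overall strategy as the paper (powering lemma plus a Gaussian block-split of the SVD), but it streamlines two steps. First, you apply Lemma~\ref{lem:tropp} directly to the projection $\matP_\perp = \matI - \matZ\matZ^T$ with $\matX = \matA$, which immediately yields $\|\matA - \matZ\matZ^T\matA\|_2 \le \|\matP_\perp\matB\|_2^{1/(2q+1)} = \|\matB - \matZ\matZ^T\matB\|_2^{1/(2q+1)}$; the paper instead routes through the intermediate projection $\matI - (\matZ\matZ^T\matB)(\matZ\matZ^T\matB)^{\dagger}$ (invoking Lemma~\ref{lem:kinside} to justify the detour and then collapsing back to the same quantity), a step that is not actually needed. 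Second, rather than invoking Lemma~\ref{lem:structural} --- which is stated and proved for the Frobenius norm, so the paper's application of it in the spectral norm implicitly relies on an unstated spectral-norm analog --- you rederive the key estimate from scratch: you construct $\matM = \matB\matG\,\mat\Omega_1^{-1}$, observe $\matP_\perp\matM = 0$ because $\col(\matM)=\col(\matY)=\col(\matZ)$, and exhibit the exact cancellation $\matB - \matM\matV_k^T = \matU_{-k}\mat\Sigma_{-k}^{2q+1}(\matV_{-k}^T - \mat\Omega_2\mat\Omega_1^{-1}\matV_k^T)$, giving $\|\matP_\perp\matB\|_2 \le \sigma_{k+1}(\matA)^{2q+1}(1+\|\mat\Omega_2\|_2\|\mat\Omega_1^{-1}\|_2)$. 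This is exactly the content of the spectral-norm variant of Lemma~\ref{lem:structural} specialized to this setting, so the two arguments coincide modulo this unpacking. From there, both proofs finish identically via Fact~\ref{fact:gaussian} and the choice $q = \Theta(\log(mn)/\eps)$. The net effect of your route is that the argument is self-contained and avoids the Frobenius-to-spectral gap in the paper's citation of Lemma~\ref{lem:structural}, at no cost in length.
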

\begin{proof}
By Lemma \ref{lem:kinside}, $\matZ\matZ^T\matA$ is the best rank-$k$ approximation of $\matA$ in the column space of $\matZ$
with respect to the spectral norm. Hence, 
\begin{eqnarray*}
\|\matA - \matZ\matZ^T\matA\|_2 & \leq & \|\matA-(\matZ\matZ^T\matB)(\matZ\matZ^T\matB)^{\dagger}\matA\|_2\\
& = & \|(\matI-(\matZ\matZ^T\matB)(\matZ\matZ^T\matB)^{\dagger})\matA\|_2,
\end{eqnarray*}
where the inequality follows since $\matZ\matZ^T\matB$ is of rank $k$ and in the column space of $\matZ$. Since
$\matI-(\matZ\matZ^T\matB)(\matZ\matZ^T\matB)^{\dagger}$ is a projection matrix, we can apply Lemma \ref{lem:tropp} to infer that $\|(\matI - (\matZ\matZ^T\matB)(\matZ\matZ^T\matB)^{\dagger})\matA\|_2$ is at most
 $\|(\matI- (\matZ\matZ^T\matB)(\matZ\matZ^T\matB)^{\dagger})(\matA\matA^T)^q\matA\|_2^{1/(2q+1)}$, which is
equal to
\begin{eqnarray*}
& = & \|\matB - (\matZ\matZ^T\matB)(\matZ\matZ^T\matB)^{\dagger}\matB\|_2^{1/(2q+1)}\\
& = & \|\matB-\matZ\matZ^T\matB\|_2^{1/(2q+1)},
\end{eqnarray*} 
where we use that $(\matZ\matZ^T\matB)^{\dagger} = (\matZ^T \matB)^{\dagger} \matZ^T$ since $\matZ$ has orthonormal columns, and thus
$$(\matZ\matZ^T\matB)(\matZ\matZ^T\matB)^{\dagger}\matB = (\matZ\matZ^T \matB) (\matZ^T \matB)^{\dagger} (\matZ^T \matB) = \matZ\matZ^T \matB.$$
Hence, 
\begin{eqnarray}\label{eqn:abound}
\|\matA-\matZ\matZ^T\matA\|_2 \leq \|\matB-\matZ\matZ^T\matB\|_2^{1/(2q+1)}.
\end{eqnarray}

Let $\matU\mat\Sigma \matV^T$ be the SVD of $\matB$. Let $\mat\Omega_U = \matV_k^T\matG$
and $\mat\Omega_L = \matV_{n-k}^T\matG$, where $\matV_k^T$ denotes the top $k$ rows of $\matV^T$, and $\matV_{n-k}^T$ the remaining
$n-k$ rows. Since the rows of $\matV^T$ are orthonormal, by rotational invariance of the Gaussian distribution,
$\mat\Omega_U$ and $\mat\Omega_L$ are independent matrices of i.i.d. $N(0,1)$ entries. 

We now apply Lemma \ref{lem:structural} with the $\matC$ of that lemma equal to $\matZ$ above, the $\matZ$ of that
lemma equal to $\matV_k$, and the $\matA$ of that lemma equal to $\matB$ above. This implies the $\matE$ of that
lemma is equal to $\matB-\matB_k$. Note that to apply the lemma we need $\matV_k^T \matG$ to have full rank, which holds with probability
$1$ since it is a $k \times k$ matrix of i.i.d. $N(0,1)$ random variables. We thus have, 
\begin{eqnarray*}
\|\matB-\matZ\matZ^T\matB\|_2^2 & \leq & \|\matB-\matB_k\|_2^2 + \|(\matB-\matB_k)\matG(\matV_k^T\matG)^{\dagger}\|_2^2\\
& = & \|\matB-\matB_k\|_2^2 + \|\matU_{n-k} \mat\Sigma_{n-k}\matV_{n-k}^T \matG (\matV_k^T\matG)^{\dagger}\|_2^2\\
& = & \|\matB-\matB_k\|_2^2 + \|\mat\Sigma_{n-k} \matV_{n-k}^T \matG (\matV_k^T\matG)^{\dagger}\|_2^2\\
& \leq & \|\matB-\matB_k\|_2^2 \left (1 + \|\mat\Omega_L\|_2^2 \|\mat\Omega_U^{\dagger}\|_2^2 \right ),
\end{eqnarray*}
where $\mat\Sigma_{n-k}$ denotes the $(n-k) \times (n-k)$ diagonal matrix whose entries are the bottom $n-k$ diagonal
entries of $\mat\Sigma$, and $\matU_{n-k}$ denotes the rightmost $n-k$ columns of $\matU$. Here 
in the second equality we use unitary invariance of $\matU_{n-k}$, while in the inequality we use
sub-multiplicativity of the spectral norm. By Fact \ref{fact:gaussian} and independence of $\mat\Omega_L$ and
$\mat\Omega_U$, we have that $\|\mat\Omega_L\|_2^2 \leq C(n-k)$ and $\|\mat\Omega_1^{\dagger}\|_2^2 \leq \frac{k}{(C')^2}$
with probability at least $(9/10)^2 > 4/5$. Consequently for a constant $c > 0$, 
\begin{eqnarray}\label{eqn:bBound}
\|\matB-\matZ\matZ^T\matB\|_2^2 \leq \|\matB-\matB_k\|_2^2 \cdot c(n-k)k.
\end{eqnarray}
Combining (\ref{eqn:bBound}) with (\ref{eqn:abound}), we have
$$\|\matA-\matZ\matZ^T\matA\|_2 \leq \|\matB-\matB_k\|_2^{1/(2q+1)} \cdot \left (c(n-k)k \right )^{1/(4q+2)}.$$
Noting that $\|\matB-\matB_k\|_2 = \|\matA-\matA_k\|_2^{2q+1}$, and setting $q = O((\log n)/\eps)$ so that
$$(c(n-k)k)^{1/(4q+2)} = (1+\eps)^{\log_{1+\eps} c(n-k)k/((4q+2))} \leq 1+\eps,$$ completes the proof
\end{proof}

\subsection{Distributed low rank approximation}\label{sec:dislra}
In this section we study an algorithm for distributed low rank approximation. The model is called the 
{\it arbitrary partition model}. In this model there are $s$ players (also called servers), 
each locally holding an $n \times d$ 
matrix $\matA^t$, and we let $\matA = \sum_{t \in [s]} \matA^t$. We would like for each player to obtain a rank-$k$ projection
matrix $\matW\matW^T \in \mathbb{R}^{d \times d}$, for which 
$$\FNormS{\matA-\matA\matW\matW^T} \leq (1+\eps)\FNormS{\matA-\matA_k}.$$
The motivation is that each player can then locally project his/her matrix $\matA^t$ by computing 
$\matA^t\matW\matW^T$. It is often useful to have such a partition of the original input matrix $\matA$. For instance,
consider the case when a customer coresponds to a row of $\matA$, and a column to his/her purchase of a specific item. 
These purchases could be distributed across servers corresponding to different vendors. The communication is point-to-point,
that is, all pairs of players can talk to each other through a private channel for which the other $s-2$ players do not
have access to. The assumption is that $n \gg d$, though $d$ is still large, so having communication independent of $n$
and as small in $d$ as possible is ideal. In \cite{kvw14} an $\Omega(sdk)$ bit communication lower bound was shown. 
Below we show an algorithm of Kannan, Vempala, and the author \cite{kvw14} using $O(sdk/\eps)$ words of communication, 
assuming a word is $O(\log n)$ bits and the entries of each $\matA^t$ are $O(\log n)$-bit integers. 

We first show the following property about the top $k$ right 
singular vectors of $\matS\matA$ for a subspace
embedding $\matS$, as shown in \cite{kvw14}. The property shows that 
the top $k$ right singular
vectors $\ve_1, \ldots, \ve_k$ of $\matS \matA$ provide a 
$(1+\eps)$-approximation to the best rank-$k$ approximation
to $\matA$. This fact quickly follows from the fact that 
$\|\matS \matA \ve_i\|_2 = (1 \pm \eps)\|\matA \ve_i\|_2$ for the bottom $d-k$ right singular vectors
$\ve_{k+1}, \ldots, \ve_{d}$ of $\matS \matA$. It is crucial that $\matS$ is an $\ell_2$-subspace
embedding for $\matA$, as otherwise there is a dependency issue since 
the vectors $\matA\ve_{k+1}, \ldots, \matA\ve_d$ depend on $\matS$. 

\begin{theorem}\label{thm:kv}
Suppose $\matA$ is an $n \times d$ matrix. Let $\matS$ be an $m \times d$ matrix for which
$(1-\eps)\|\matA\x\|_2 \leq \|\matS\matA\x\|_2 \leq (1+\eps)\|\matA\x\|_2$ for all $\x \in \mathbb{R}^d$, that is,
$\matS$ is a subspace embedding for the column space of $\matA$. Suppose
$\matV\matV^T$ is a $d \times d$ matrix which projects vectors in $\mathbb{R}^d$ onto the space of
the top $k$
singular vectors of $\matS\matA$. Then
$\FNorm{\matA-\matA\matV\matV^T} \leq (1+O(\eps))\cdot \FNorm{\matA-\matA_k}.$
\end{theorem}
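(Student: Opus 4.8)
The plan is to reduce both quantities $\FNormS{\matA-\matA\matV\matV^T}$ and $\FNormS{\matA-\matA_k}$ to sums of the squared norms $\TNormS{\matA\ve_i}$ and $\sigma_i(\matS\matA)^2$ over the right singular vectors of $\matS\matA$, and then move between the two using the subspace embedding hypothesis. Let $\ve_1,\dots,\ve_d$ be an orthonormal system of right singular vectors of $\matS\matA$ ordered by decreasing singular value, so $\matV=[\ve_1\mid\cdots\mid\ve_k]$ and $\matV\matV^T=\sum_{i\le k}\ve_i\ve_i^T$. First I would use $\sum_{i=1}^d\ve_i\ve_i^T=\matI_d$ to write $\matA-\matA\matV\matV^T=\sum_{i>k}\matA\ve_i\ve_i^T$; since the $\ve_i$ are orthonormal the cross terms vanish and $\FNormS{\matA-\matA\matV\matV^T}=\sum_{i>k}\TNormS{\matA\ve_i}$. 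Likewise, by the Eckart--Young theorem the best rank-$k$ Frobenius approximation error of $\matS\matA$ is $\sum_{i>k}\sigma_i(\matS\matA)^2=\sum_{i>k}\TNormS{\matS\matA\ve_i}$.

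Next I would chain two applications of the embedding property. For each $i$ the vector $\matA\ve_i$ lies in $\col(\matA)$, so $\TNormS{\matS\matA\ve_i}\ge(1-\eps)^2\TNormS{\matA\ve_i}$, and summing over $i>k$ gives $\sum_{i>k}\TNormS{\matA\ve_i}\le(1-\eps)^{-2}\sum_{i>k}\sigma_i(\matS\matA)^2$. On the other side, $\matS\matA_k$ has rank at most $k$, so $\sum_{i>k}\sigma_i(\matS\matA)^2\le\FNormS{\matS\matA-\matS\matA_k}=\FNormS{\matS(\matA-\matA_k)}$; and every column of $\matA-\matA_k$ again lies in $\col(\matA)$ (the columns of $\matA_k=\matU_k\mat\Sigma_k\matV_k^T$ are combinations of the columns of $\matU_k$), so applying the embedding column by column yields $\FNormS{\matS(\matA-\matA_k)}\le(1+\eps)^2\FNormS{\matA-\matA_k}$. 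Combining, $\FNormS{\matA-\matA\matV\matV^T}\le\frac{(1+\eps)^2}{(1-\eps)^2}\FNormS{\matA-\matA_k}$, and taking square roots gives the claimed $(1+O(\eps))$ bound after absorbing the constant.

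The step that needs the most care, and the reason the surrounding text emphasizes that $\matS$ must be a subspace embedding for the whole column space of $\matA$, is the inequality $\TNormS{\matS\matA\ve_i}\ge(1-\eps)^2\TNormS{\matA\ve_i}$: the singular vectors $\ve_i$ of $\matS\matA$ are themselves functions of the random $\matS$, so one cannot invoke a "for a fixed vector, with high probability" guarantee here. I would make explicit that this is fine precisely because the distortion bound is assumed to hold simultaneously for all $\x\in\mathbb{R}^d$, hence for the data-dependent vectors $\matA\ve_{k+1},\dots,\matA\ve_d$ as well. Everything else—the Pythagorean cancellations, the Eckart--Young characterization of $\sigma_i$, and the columnwise application of the embedding to $\matA-\matA_k$—is routine.
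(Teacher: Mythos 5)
Your proof is correct and follows essentially the same route as the paper: expand $\FNormS{\matA-\matA\matV\matV^T}$ in the right singular vectors of $\matS\matA$, pass to $\FNormS{\matS\matA-[\matS\matA]_k}$ via the embedding, bound that by $\FNormS{\matS(\matA-\matA_k)}$ using Eckart--Young, and apply the embedding once more (the paper does this last step over the right singular vectors of $\matA$ rather than column by column, but the two are identical). Your remark that the uniformity of the embedding over all $\x$ is what lets one apply it to the $\matS$-dependent vectors $\matA\ve_i$ is exactly the point the paper flags just before the theorem.
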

\begin{proof}
Form an orthonormal basis of ${\bf R}^d$ using the right singular vectors of $\matS\matA$. Let
$\ve_1,\ve_2,\ldots ,\ve_d$ be the basis.
\begin{eqnarray*}
\FNormS{\matA-\matA\sum_{i=1}^k\ve_i\ve_i^T}&=& \sum_{i=k+1}^d \|\matA\ve_i\|_2^2
\leq (1+\eps)^2\sum_{i=k+1}^d\|\matS\matA\ve_i\|_2^2\\
&= & (1+\eps)^2\FNormS{\matS\matA-[\matS\matA]_k},
\end{eqnarray*}
where the first equality follows since $\ve_1, \ldots, \ve_d$ is an orthonormal basis
of $\mathbb{R}^d$,
the inequality follows using the fact that $(1-\eps)\|\matA \x\|_2 \leq \|\matS \matA \x\|_2$
for all $\x \in \mathbb{R}^d$, and the final equality follows using that the $\ve_1, \ldots, \ve_d$
are the right singular vectors of $\matS \matA$. 

Suppose now $\u_1,\u_2,\ldots ,\u_d$ is an orthonormal basis consisting of the singular vectors
of $\matA$. Then, we have
\begin{eqnarray*}
\FNormS{\matS\matA-[\matS\matA]_k} &\leq & \FNormS{\matS\matA-\matS\matA\sum_{i=1}^k\u_i\u_i^T}\\
& = &\sum_{i=k+1}^d \|\matS\matA\u_i\|_2^2\\
&\leq & (1+\varepsilon )^2\sum_{i=k+1}^d\|\matA\u_i\|_2^2\\
& = &(1+\eps)^2\FNormS{\matA-\matA_k},
\end{eqnarray*}
where the first inequality uses that the rank-$k$ matrix $\sum_i \u_i \u_i^T$ is no better
at approximating $\matS \matA$ than $[\matS\matA]_k$, the first equality uses that
$\u_1, \ldots, \u_d$ is an orthonormal basis of $\mathbb{R}^d$, the second inequality uses
that $\|\matS \matA \x \|_2 \leq (1+\eps)\|\matA \x\|_2$ for all $\x \in \mathbb{R}^d$, 
and the final equality uses that $\u_1, \ldots, \u_d$ are the right singular vectors of $\matA$. 

Thus,
\[
\FNormS{\matA - \matA \sum_{i=1}^k \ve_i \ve_i^T} \le (1+\eps)^4 \FNormS{\matA-\matA_k},
\]
and the theorem follows.
\end{proof}
We will also need a variant of Lemma \ref{lem:sketching} from Section \ref{chap:lowRank}
which intuitively states that for a class of random matrices $\matS$, if we project
the rows of $\matA$ onto the row space of $\matS \matA$, we obtain a good low rank approximation.   
Here we use an $m \times n$ matrix $\matS$ in which each of the
entries is $+1/\sqrt{m}$ or $-1/\sqrt{m}$ with probability $1/2$, 
and the entries of $\matS$ are $O(k)$-wise independent. We
cite a theorem of Clarkson and Woodruff \cite{CW09} which shows what we need.
It can be shown by showing the following properties:
\begin{enumerate}
\item $\matS$ is an $\ell_2$-subspace embedding for any fixed $k$-dimensional subspace with probability at least $9/10$, and
\item $\matS$ has the $(\eps, \delta, \ell)$-JL moment property for some $\ell \geq 2$ (see Definition \ref{def:moment}). 
\end{enumerate}
\begin{theorem}\label{thm:sketch}(combining Theorem 4.2 and the second part of Lemma 4.3 of \cite{CW09})
Let $\matS \in \mathbb{R}^{m \times n}$ be a random sign matrix with $m = O(k \log (1/\delta)/\eps)$ in which
the entries are $O(k + \log(1/\delta))$-wise independent.
Then with probability at least $1-\delta$, if $\matU \matU^T$ is the $d \times d$ projection matrix onto the row space of $\matS \matA$,
then if $[\matA \matU]_k$ is the best rank-$k$ approximation to matrix $\matA \matU$, we have
$$\FNorm{[\matA\matU]_k \matU^T-\matA} \leq (1+O(\eps))\FNorm{\matA-\matA_k}.$$
\end{theorem}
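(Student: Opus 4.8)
The plan is to deduce the theorem from Lemma~\ref{lem:sketching} together with Lemma~\ref{lem:pythagorean}. Recall that Lemma~\ref{lem:sketching} guarantees that the rowspace of $\matS\matA$ contains a $(1+\eps)$ rank-$k$ approximation to $\matA$ as soon as $\matS$ both (i) is a $(1\pm 1/3)$ $\ell_2$-subspace embedding for any one fixed $k$-dimensional subspace, and (ii) has the $(\sqrt{\eps/k},\ \delta/2,\ \ell)$-JL moment property of Definition~\ref{def:moment}, which via Theorem~\ref{thm:jlamp} supplies the needed approximate-matrix-product bound; and by Lemma~\ref{lem:pythagorean} the best rank-$k$ approximation to $\matA$ lying in that rowspace is exactly $[\matA\matU]_k\matU^T$, where the columns of $\matU$ form an orthonormal basis of $\mathrm{rowspace}(\matS\matA)$. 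So it suffices to show that a random $\pm 1/\sqrt{m}$ sign matrix $\matS$ with $m=O(k\log(1/\delta)/\eps)$ rows and $O(k+\log(1/\delta))$-wise independent entries satisfies (i) and (ii), each with failure probability at most $\delta/2$; a union bound then gives the $1-\delta$ statement. (The constant $9/10$ in the statement of Lemma~\ref{lem:sketching} is there only for readability; its proof goes through verbatim with the two events it relies on each holding with probability $1-\delta/2$.)

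Everything reduces to a single moment estimate: for a unit vector $\x\in\mathbb{R}^n$ and an even integer $\ell$ with $\ell\le cm$ (absolute constant $c$), if the entries of $\matS$ are $(2\ell)$-wise independent then $\Exp\bigl|\,\|\matS\x\|_2^2-1\,\bigr|^{\ell}\le(C\ell/m)^{\ell/2}$ for an absolute constant $C$. This is a Rosenthal/Bernstein-type bound: writing $\|\matS\x\|_2^2-1=\tfrac1m\sum_{i=1}^m(Z_i^2-1)$ with $Z_i=\sum_j\sigma_{ij}x_j$ a Rademacher sum satisfying $\Exp Z_i^2=1$, one expands the $\ell$-th power; each factor $Z_i^2-1$ is a sum of degree-$2$ monomials in the signs, so $(2\ell)$-wise independence of the entries makes every expectation in the expansion match the fully-independent value, and the classical moment bound for a normalized sum of $m$ i.i.d.\ mean-zero subexponential variables applies. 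Granting this, property (ii) follows by taking $\ell=\Theta(\log(1/\delta))$: applied to an arbitrary unit vector the estimate gives the $(\sqrt{\eps/k},\ \delta/2,\ \ell)$-JL moment property provided $(C\ell/m)^{\ell/2}\le(\eps/k)^{\ell/2}(\delta/2)$, which amounts to $m=\Omega(\ell k/\eps)=\Omega(k\log(1/\delta)/\eps)$, and Theorem~\ref{thm:jlamp} then yields the required approximate-matrix-product guarantee with failure probability $\delta/2$.

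Property (i) follows from the same estimate via the net argument of \S\ref{sec:se}. Fix an orthonormal basis $\matU_0\in\mathbb{R}^{n\times k}$ of the subspace and, by Lemma~\ref{lem:epsL2net}, a $(1/2)$-net $\mathcal{N}$ of $\{\matU_0\y:\|\y\|_2=1\}$ of size at most $9^k$. Taking $\ell=\Theta(k+\log(1/\delta))$ and applying Markov to the moment estimate gives $\Pr\bigl[\,\bigl|\|\matS\w\|_2^2-1\bigr|>1/6\,\bigr]\le(36C\ell/m)^{\ell/2}\le 2^{-\ell}$ for every $\w\in\mathcal{N}$, once the implied constant in $m=O(k\log(1/\delta)/\eps)$ is chosen large enough that $m\ge 144C\ell$; a union bound over $\mathcal{N}$ (and the $O(|\mathcal{N}|^2)$ pairwise sums needed to pass from preserved norms to preserved inner products) leaves failure probability $9^{O(k)}2^{-\Theta(k+\log(1/\delta))}\le\delta/2$. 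The chaining/telescoping step of \S\ref{sec:se} then upgrades this to $\|\matS\matU_0\y\|_2=(1\pm 1/3)\|\y\|_2$ for all $\y$. This is the step that forces the independence parameter to be $\Omega(k+\log(1/\delta))$ rather than merely $\Omega(\log(1/\delta))$, since here the union bound runs over $2^{\Theta(k)}$ points; property (ii) needed only $\Theta(\log(1/\delta))$-wise independence, so $O(k+\log(1/\delta))$-wise independence covers both uses, and the largest $\ell$ invoked is $O(k+\log(1/\delta))\le O(m)$, as required by the moment estimate.

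The main obstacle is the moment estimate itself --- specifically, the bookkeeping verifying that $(2\ell)$-wise independence of the sign entries suffices to reproduce the fully-independent $\ell$-th central moment of $\|\matS\x\|_2^2$, which is a somewhat involved monomial-counting argument (this is essentially the content of Theorem~4.2 and Lemma~4.3 of \cite{CW09}). Once that inequality is in hand, the remainder --- the two union bounds, the interlocking of $m$, $\ell$, and the independence parameter, and the appeals to Lemmas~\ref{lem:sketching} and~\ref{lem:pythagorean} and Theorem~\ref{thm:jlamp} --- is routine.
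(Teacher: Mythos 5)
Your proposal follows exactly the route the paper itself sketches: the text surrounding Theorem~\ref{thm:sketch} explicitly says the result ``can be shown by showing'' that $\matS$ is an $\ell_2$-subspace embedding for any fixed $k$-dimensional subspace and that it satisfies the JL moment property, after which Lemma~\ref{lem:sketching} and Lemma~\ref{lem:pythagorean} give the conclusion, with the hard moment estimate for limited-independence sign matrices deferred to \cite{CW09}. Your parameter bookkeeping --- $\ell=\Theta(\log(1/\delta))$ for the moment property, $\ell=\Theta(k+\log(1/\delta))$ for the net argument, $m=\Omega(\ell k/\eps)$, and $(2\ell)$-wise independence covering both --- is consistent and matches the stated bounds.
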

The main algorithm {\sc AdaptiveCompress} of \cite{kvw14} is given in Algorithm 
{\sf AdaptiveCompress} below.  

Here we state the key idea behind Theorem \ref{thm:low-rank-arbitrary} below. The idea is that
if each of the servers projects their matrix $\matA^t$ to $\matP\matA^t$ using an $\ell_2$
subspace embedding $\matP$, then $\matP \matA = \sum_t \matP \matA^t$ and by Theorem
\ref{thm:kv}, if we compute the top $k$ right singular vectors of $\matP\matA$, we can send
these to each server to locally project their data on. Since $\matP\matA^t$ is more efficient
to communicate than $\matA^t$, this provides a considerable improvement in communication. However,
the communication is proportional to $d^2$ and we can make it proportional to only $d$ 
by additionally using Theorem \ref{thm:sketch} to first ``replace'' the $\matA^t$ matrices
with $\matA^t \matU$ matrices, where the columns of $\matU$ are an orthonormal basis
containing a $(1+\eps)$ rank-$k$ approximation. 

\begin{algorithm}[p]
\caption{The {\sf AdaptiveCompress}($k$,$\eps$,$\delta$) protocol}
\begin{enumerate}
\item Server $1$ chooses a random seed for an $m \times n$ sketching matrix $\matS$ 
as in Theorem \ref{thm:sketch},
given parameters $k, \eps,$ and $\delta$, where $\delta$ is a small positive constant.
It communicates the seed to the other servers.
\item Server $t$ uses the random seed to compute $\matS$, and then $\matS \matA^t$, 
and sends it to Server $1$.
\item Server $1$ computes $\sum_{t=1}^s \matS \matA^t = \matS \matA$. It computes an $m \times d$
orthonormal basis $\matU^T$ for the row space of $\matS \matA$, and sends $\matU$ to all the servers.
\item Each server $t$ computes $\matA^t \matU$. 
\item Server 1 chooses another random seed for a $O(k/\varepsilon^3)\times n$ matrix $\matP$ 
which is to be $O(k)$-wise independent and communicates this seed to all servers.
\item The servers then agree on a subspace embedding matrix $\matP$ of Theorem \ref{thm:kv} 
for $\matA \matU$, 
where $\matP$ is an $O(k/\eps^3) \times n$ matrix which can be described with $O(k \log n)$ bits.
\item Server $t$ computes $\matP \matA^t \matU$ and send it to Server $1$.
\item Server $1$ computes $\sum_{t=1}^s \matP \matA^t \matU = \matP \matA \matU$. 
It computes $\matV \matV^T$, which is an $O(k/\eps) \times O(k/\eps)$ projection 
matrix onto the top $k$ singular vectors of $\matP \matA \matU$, and sends $\matV$ to all the servers.
\item Server $t$ outputs $\matC^t = \matA^t \matU \matV \matV^T \matU^T$. 
Let $\matC=\sum_{t=1}^s \matC^t$. $\matC$ is not computed explicitly.
\end{enumerate}
\end{algorithm}

\begin{theorem}\label{thm:low-rank-arbitrary}
Consider the arbitrary partition model where an $n\times d$ matrix $\matA^t$ resides in server $t$ and the data matrix
$\matA=\matA^1+\matA^2+\cdots +\matA^s$.
For any $1 \ge \eps > 0$, there is a protocol {\sc AdaptiveCompress} that, on termination, leaves an
$n\times d$ matrix $\matC^t$ in server $t$ such that the matrix $\matC=\matC^1+\matC^2+\cdots +\matC^s$ 
with arbitrarily large constant probability achieves
$
\FNorm{\matA-\matC} \le (1+\eps) \min_{\matX: \text {rank}(\matX)\leq k} \FNorm{\matA-\matX},$
using linear space, polynomial time and with total communication complexity $O(sdk/\eps + sk^2/\eps^4)$ real numbers.
Moreover, if the entries of each $\matA^t$
are $b$ bits each, then the total communication is $O(sdk/\eps + sk^2/\eps^4)$ words each consisting of $O(b + \log(nd))$ bits.
\end{theorem}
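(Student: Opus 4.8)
The plan is to verify correctness and communication of {\sc AdaptiveCompress} by tracking the two layers of sketching. First I would establish the correctness guarantee. Let $\matU^T$ be the $m \times d$ orthonormal basis for the row space of $\matS\matA$ computed in step 3, where $\matS$ is the random sign matrix of Theorem~\ref{thm:sketch}. By Theorem~\ref{thm:sketch}, with probability $1-\delta$ we have $\FNorm{[\matA\matU]_k\matU^T - \matA} \le (1+O(\eps))\FNorm{\matA-\matA_k}$; in particular the row space of $\matU^T$ contains a $(1+O(\eps))$ rank-$k$ approximation to $\matA$. Now the key point: since $\matU\matV\matV^T\matU^T$ is a $d\times d$ projection matrix onto a subspace of the row space of $\matU^T$, and since $\matC^t = \matA^t\matU\matV\matV^T\matU^T$ so that $\matC = \sum_t \matC^t = \matA\matU\matV\matV^T\matU^T$, it suffices to show $\matV\matV^T$ projects onto a $k$-dimensional subspace of $\R^{O(k/\eps^3)}$ that yields a $(1+\eps)$ rank-$k$ approximation to $\matA\matU$. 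This is exactly where Theorem~\ref{thm:kv} is applied, with the ``$\matA$'' of that theorem equal to $\matA\matU$ and the ``$\matS$'' equal to $\matP$: because $\matP$ is an $\ell_2$-subspace embedding for the $d$-dimensional column space of $(\matA\matU)^T$ (equivalently the row space of $\matA\matU$, which has dimension at most $d$ — but actually we only need it for the at most $d$-dimensional space, and $O(k/\eps^3)$ rows via Theorem~\ref{thm:cw}/\ref{thm:mmnn} applied with error $\eps$ suffices since we only need a constant-distortion embedding composed with the $k/\eps$ refinement), the top $k$ right singular vectors $\matV$ of $\matP\matA\matU$ satisfy $\FNorm{\matA\matU - \matA\matU\matV\matV^T} \le (1+O(\eps))\FNorm{\matA\matU - [\matA\matU]_k}$.

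Next I would chain the two bounds. By the Pythagorean theorem (Lemma~\ref{lem:pythagorean}), projecting onto $\matU^T$ decouples: $\FNormS{\matA - \matA\matU\matV\matV^T\matU^T} = \FNormS{\matA - \matA\matU\matU^T} + \FNormS{\matA\matU\matU^T - \matA\matU\matV\matV^T\matU^T} = \FNormS{\matA - \matA\matU\matU^T} + \FNormS{\matA\matU - \matA\matU\matV\matV^T}$, using that $\matU^T$ has orthonormal rows. The first term is at most $(1+O(\eps))\FNormS{\matA-\matA_k}$ by the fact that $[\matA\matU]_k\matU^T$ is the best rank-$k$ approximation in the row space of $\matU^T$ and Theorem~\ref{thm:sketch} (noting $\FNorm{\matA - \matA\matU\matU^T} \le \FNorm{[\matA\matU]_k\matU^T - \matA}$). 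The second term, by the previous paragraph, is at most $O(\eps)\FNormS{\matA\matU - [\matA\matU]_k} \le O(\eps)\FNormS{\matA - \matA_k}$, since $\FNorm{\matA\matU - [\matA\matU]_k} \le \FNorm{\matA\matU\matU^T - [\matA\matU]_k\matU^T} = \FNorm{[\matA\matU]_k\matU^T - \matA\matU\matU^T} \le \FNorm{\matA - \matA_k}$ (the projection only decreases distance to $\matA$, and $[\matA\matU]_k$ beats $\matA_k\matU$). Combining and rescaling $\eps$ by a constant, and taking a union bound over the $O(1)$ failure events (Theorem~\ref{thm:sketch}, the subspace-embedding property of $\matP$, each holding with arbitrarily large constant probability by adjusting constants), gives $\FNorm{\matA - \matC} \le (1+\eps)\FNorm{\matA-\matA_k}$ with arbitrarily large constant probability.

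For the communication analysis I would go step by step through the protocol. The seed for $\matS$ (step 1) is $O(k + \log(1/\delta)) = O(k)$ words since $\matS$ is $O(k)$-wise independent and $\delta$ is constant. In step 2 each of the $s$ servers sends $\matS\matA^t$, an $m \times d = O(k/\eps) \times d$ matrix, for $O(sdk/\eps)$ words total; step 3 sends $\matU$, which is $n \times m$ — but here one must be careful: $\matU$ has $n$ rows, which would be too expensive. The resolution, which I would spell out, is that Server~1 need not send $\matU$ explicitly; rather each server already holds $\matS$ (from the seed) hence can compute $\matS\matA$ locally once Server~1 broadcasts $\matS\matA = \sum_t \matS\matA^t$ back, which is $O(mk/\eps)$... wait — more carefully, $\matU^T$ is the row-space basis of the $m\times d$ matrix $\matS\matA$, so $\matU$ is $d \times m$, describable in $O(dm) = O(dk/\eps)$ words, and broadcasting it to all $s$ servers costs $O(sdk/\eps)$ words. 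Step 5's seed for $\matP$ is $O(k\log n)$ bits $= O(k)$ words. Step 7 sends $\matP\matA^t\matU$, of size $O(k/\eps^3) \times m = O(k/\eps^3)\times O(k/\eps) = O(k^2/\eps^4)$, over $s$ servers: $O(sk^2/\eps^4)$ words. Step 8 broadcasts $\matV$, an $m \times O(k/\eps) = O(k/\eps) \times O(k/\eps)$ matrix, to $s$ servers: $O(sk^2/\eps^2)$ words, absorbed into the previous term. Summing: $O(sdk/\eps + sk^2/\eps^4)$ words; when entries are $b$-bit integers, all intermediate quantities have entries expressible in $O(b + \log(nd))$ bits (products and sums of $\poly(nd)$ many $b$-bit numbers and the bounded-precision sketch entries), giving the word-size claim. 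The space and time bounds are immediate since all matrices handled have at least one dimension $\poly(k/\eps)$ except $\matA^t$ itself.

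The main obstacle I anticipate is the dependency subtlety flagged in the text right before Theorem~\ref{thm:kv}: when we apply Theorem~\ref{thm:kv} with $\matP$ as the embedding for $\matA\matU$, we must ensure $\matP$ is chosen \emph{independently} of (and after, or obliviously to) $\matU$ — which the protocol handles since $\matP$'s seed is drawn fresh in step 5 and $\matP$ is an \emph{oblivious} subspace embedding, so it is valid for the fixed (though data-dependent) matrix $\matA\matU$. A second, related care point is that $\matU$ itself depends on $\matS$ and hence on $\matA$, but Theorem~\ref{thm:sketch} is stated for a fixed $\matA$ and gives the row-space guarantee unconditionally once $\matS$ is drawn, so no circularity arises there either. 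Finally one must confirm $O(k/\eps^3)$ rows genuinely suffice for $\matP$ to be simultaneously an $\ell_2$-subspace embedding for the $\le d$-dimensional row space of $\matA\matU$ with distortion $O(\eps)$ — this follows from Theorem~\ref{thm:mmnn} with the roles set appropriately (dimension $d$ there replaced by the rank $\le \min(n,d)$ of $\matA\matU$, but since we only invoke it through Theorem~\ref{thm:kv} which needs an embedding of the column space of $(\matA\matU)^T \in \R^{m\times?}$... here I would note the rank is at most $m = O(k/\eps)$, so $O((k/\eps)^2/\eps^2) = O(k^2/\eps^4)$ rows from Theorem~\ref{thm:mmnn}, or $O(k/\eps^3)$ from a Gaussian/SRHT embedding of a rank-$O(k/\eps)$ space with constant distortion — matching the stated $O(k/\eps^3)\times n$ dimension), and everything goes through after adjusting constants in $\eps$.
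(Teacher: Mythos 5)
Your proof has a genuine gap in the final chaining step. The Pythagorean decomposition
\begin{equation*}
\FNormS{\matA - \matA\matU\matV\matV^T\matU^T} = \FNormS{\matA - \matA\matU\matU^T} + \FNormS{\matA\matU - \matA\matU\matV\matV^T}
\end{equation*}
is correct, and bounding the first term by $(1+O(\eps))\FNormS{\matA-\matA_k}$ is correct. But you then write that the second term is ``at most $O(\eps)\FNormS{\matA\matU - [\matA\matU]_k}$,'' contradicting your own previous sentence where you (correctly) cited Theorem~\ref{thm:kv} with the multiplicative factor $(1+O(\eps))$. That theorem gives $\FNorm{\matA\matU - \matA\matU\matV\matV^T} \le (1+O(\eps))\FNorm{\matA\matU - [\matA\matU]_k}$, not $O(\eps)\cdot$: even a perfect embedding ($\eps=0$) leaves $\FNormS{\matA\matU - \matA\matU\matV\matV^T} = \FNormS{\matA\matU - [\matA\matU]_k}$, which you then bound by $\FNormS{\matA-\matA_k}$. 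With the correct factor, your two individual bounds add to roughly $2\FNormS{\matA - \matA_k}$, a factor-$2$ loss rather than $(1+\eps)$. The separate bound-then-add strategy cannot be repaired, because each piece can genuinely be close to $\FNormS{\matA-\matA_k}$.

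The paper avoids the loss by not bounding the two Pythagorean pieces separately. Keep the second term as $(1+O(\eps))\FNormS{\matA\matU - [\matA\matU]_k}$, pull the factor $(1+O(\eps))$ out front of the whole sum, and then observe that the remaining sum is itself a Pythagorean decomposition of a single quantity: since $\matU$ has orthonormal columns, $\FNormS{\matA\matU - [\matA\matU]_k} = \FNormS{\matA\matU\matU^T - [\matA\matU]_k\matU^T}$, and for each row the vector $\matA_i - \matA_i\matU\matU^T$ is orthogonal to the column space of $\matU$ while $\matA_i\matU\matU^T - ([\matA\matU]_k\matU^T)_i$ lies in it, so
\begin{equation*}
\FNormS{\matA - \matA\matU\matU^T} + \FNormS{\matA\matU - [\matA\matU]_k} = \FNormS{\matA - [\matA\matU]_k\matU^T}.
\end{equation*}
A single invocation of Theorem~\ref{thm:sketch} then bounds this by $(1+O(\eps))\FNormS{\matA-\matA_k}$, giving $\FNormS{\matA - \matC} \le (1+O(\eps))^2\FNormS{\matA-\matA_k}$ overall, which rescales to the claimed $(1+\eps)$ bound. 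The rest of your argument --- the communication tally (once you fixed $\matU$ to be $d\times m$), the independence of $\matP$ from $\matU$ via obliviousness, and the dimension of $\matP$ --- matches the paper and is sound.
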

\begin{proof}
By definition of the {\sc AdaptiveCompress} protocol, we have 
$\FNorm{\matA-\matC} = \FNorm{\matA- \matA\matU \matV \matV^T \matU^T}.$

Notice
that $\matU\matU^T$ and $\matI_{d}-\matU\matU^T$ are projections onto orthogonal subspaces. 
It follows by the Pythagorean theorem applied to each row that
\begin{eqnarray}\label{eqn:first}
&& \FNormS{\matA\matU \matV \matV^T \matU^T - \matA} \nonumber \\
& = & \FNormS{(\matA\matU \matV \matV^T \matU^T -\matA)(\matU \matU^T)}\\ 
& +&  \FNormS{(\matA\matU \matV \matV^T \matU^T-\matA)(\matI_d - \matU\matU^T)}\nonumber \\
& = & \FNormS{\matA\matU\matV\matV^T\matU^T - \matA\matU\matU^T\|^2 + \|\matA-\matA\matU\matU^T},
\end{eqnarray}
where the second equality uses that $\matU^T\matU = \matI_c$, where $c$ is the number of columns of $\matU$.

Observe that the row spaces of $\matA\matU\matV\matV^T\matU^T$ and $\matA\matU\matU^T$ are both in the row space of $\matU^T$, and
therefore in the column space of $\matU$. It follows that since $\matU$ has orthonormal columns,
$\FNorm{\matA\matU\matV\matV^T\matU^T-\matA\matU\matU^T} = \FNorm{(\matA\matU\matV\matV^T\matU^T-\matA\matU\matU^T)\matU},$ 
and therefore
\begin{eqnarray}\label{eqn:second}
&& \FNormS{\matA\matU\matV\matV^T\matU^T-\matA\matU\matU^T} + \FNormS{\matA-\matA\matU\matU^T} \nonumber \\ 
& = & \FNormS{(\matA\matU\matV\matV^T\matU^T-\matA\matU\matU^T)\matU} + 
\FNormS{\matA-\matA\matU\matU^T} \nonumber \\
& = & \FNormS{\matA\matU\matV\matV^T - \matA\matU} + \FNormS{\matA-\matA\matU\matU^T},
\end{eqnarray}
where the second equality uses that $\matU^T \matU = \matI_c$. Let $(\matA\matU)_k$ be the best rank-$k$ approximation to
the matrix $\matA\matU$. By Theorem \ref{thm:kv}, with probability $1-o(1)$,
$\FNormS{\matA\matU\matV\matV^T - \matA\matU} \leq (1+O(\eps)) \FNormS{(\matA\matU)_k - \matA\matU},$
and so
\begin{eqnarray}\label{eqn:third}
&& \FNormS{\matA\matU\matV\matV^T-\matA\matU} + \FNormS{\matA-\matA\matU\matU^T}\nonumber \\
& \leq & (1+O(\eps)) \FNormS{(\matA\matU)_k - \matA\matU} +  \FNormS{\matA-\matA\matU\matU^T} \nonumber \\
& \leq & (1+O(\eps)) ( \FNormS{(\matA\matU)_k - \matA\matU} +  \FNormS{\matA-\matA\matU\matU^T}).
\end{eqnarray}
Notice that the row space of $(\matA\matU)_k$ is spanned by the top $k$ right singular
vectors of $\matA\matU$, which are in the row space of $\matU$. Let us write $(\matA\matU)_k = \matB \cdot \matU$, 
where $\matB$ is a rank-$k$ matrix.

For any vector $\ve \in \mathbb{R}^d$
, $\ve\matU\matU^T$ is in the row space of $\matU^T$, and since the columns of $\matU$ are orthonormal, 
$\FNormS{\ve\matU\matU^T} = \FNormS{\ve\matU\matU^T\matU} = \FNormS{\ve\matU}$,
and so 
\begin{eqnarray}\label{eqn:fourth}
&& \FNormS{(\matA\matU)_k - \matA\matU} + \FNormS{\matA-\matA\matU\matU^T} \nonumber \\
& = & \FNormS{(\matB -\matA)\matU} + \FNormS{\matA(\matI-\matU\matU^T)} \nonumber \\
& = & \FNormS{\matB\matU\matU^T - \matA\matU\matU^T} + \ \FNormS{\matA\matU\matU^T - \matA}).
\end{eqnarray}
We apply the Pythagorean theorem to each row in the expression in (\ref{eqn:fourth}), noting that
the vectors $(\matB_i-\matA_i)\matU\matU^T$ and $\matA_i\matU\matU^T-\matA_i$ are orthogonal, where $\matB_i$ and $\matA_i$ are the $i$-th rows of $\matB$ and $\matA$,
respectively. Hence,
\begin{eqnarray}\label{eqn:fifth}
&& \FNormS{\matB\matU\matU^T - \matA\matU\matU^T} + \FNormS{\matA\matU\matU^T - \matA}\\
& = & \FNormS{\matB\matU\matU^T - \matA}\\
& = & \FNormS{(\matA\matU)_k \matU^T - \matA},
\end{eqnarray}
where the first equality uses that
\begin{eqnarray*}
&& \FNormS{\matB\matU\matU^T-\matA} \nonumber \\
& = & \FNormS{(\matB\matU\matU^T-\matA)\matU\matU^T} + \FNormS{(\matB\matU\matU^T-\matA)(\matI-\matU\matU^T)}\\
& = & \FNormS{\matB\matU\matU^T - \matA\matU\matU^T} + \FNormS{\matA\matU\matU^T - \matA},
\end{eqnarray*}
and 
the last equality uses the definition of $\matB$. By Theorem \ref{thm:sketch},
with constant probability arbitrarily close to $1$, we have
\begin{eqnarray}\label{eqn:sixth}
\FNormS{[\matA\matU]_k \matU^T - \matA} & \leq & (1+O(\eps)) \FNormS{\matA_k - \matA}.
\end{eqnarray}
It follows by combining (\ref{eqn:first}), (\ref{eqn:second}), (\ref{eqn:third}), (\ref{eqn:fourth}), (\ref{eqn:fifth}),
(\ref{eqn:sixth}),
that $\FNormS{\matA\matU \matV \matV^T \matU^T - \matA} \leq (1+O(\eps))\FNormS{\matA_k-\matA}$, 
which shows the correctness property of {\sc AdaptiveCompress}.

We now bound the communication. In the first step, by Theorem \ref{thm:sketch}, $m$ can be set to $O(k/\eps)$
and the matrix $\matS$ can be described using a random seed that is $O(k)$-wise independent. The communication of
steps 1-3 is thus $O(sdk / \eps)$ words. By Theorem \ref{thm:kv}, the remaining steps take $O(s(k/\eps)^2/\eps^2) =
O(sk^2/\eps^4)$ words of communication.

To obtain communication with $O(b + \log(nd))$-bit words if the entries of the matrices $\matA^t$ are specified by $b$ bits, Server 1 can
instead send $\matS\matA$ to each of the servers. The $t$-th server then computes $\matP\matA^t (\matS\matA)^T$ and sends this to Server 1.
Let $\matS\matA = \matR\matU^T$, where $\matU^T$ is an orthonormal
basis for the row space of $\matS\matA$, and $\matR$ is an $O(k/\eps) \times O(k/\eps)$ change of basis matrix.
Server 1 computes 
$\sum_t \matP\matA^t (\matS\matA)^T = P\matA(\matS\matA)^T$ and sends this to each of the servers. Then, since each of the servers
knows $\matR$, it can compute $\matP\matA(\matS\matA)^T (\matR^T)^{-1} = \matP\matA\matU$. 
It can then compute the SVD of this matrix, from which it obtains
$\matV\matV^T$, the projection onto its top $k$ right singular vectors. Then, since Server $t$ knows $\matA^t$ and $\matU$, it can compute
$\matA^t \matU(\matV\matV^T)\matU^T$, as desired. Notice that in this variant of the algorithm what is sent is 
$\matS\matA^t$ and $\matP\matA^t(\matS\matA)^T$, which
each can be specified with $O(b + \log(nd))$-bit words if the entries of the $\matA^t$ are specified by $b$ bits.
\end{proof}

\section{Graph Sparsification}
{\bf Section Overview:} This section is devoted to showing how sketching can be used to perform spectral sparsification of graphs. While $\ell_2$-subspace embeddings compress tall and skinny matrices to small matrices, they are not particularly useful at compressing roughly square matrices, as in the case of a graph Laplacian. This section shows how related sketching techniques can still be used to sparsify such square matrices, resulting in a useful compression.

While $\ell_2$-subspace embeddings are a powerful tool, such embeddings compress an $n \times d$ 
matrix to a $\poly(d/\eps) \times d$ matrix. 
This is not particularly useful if $n$ is not too much larger than $d$. For instance, one natural problem is to compress
a graph $G = (V, E)$ on $n$ vertices using linear sketches so as to preserve all spectral information. In this case one is interested in a subspace embedding of the Laplacian of $G$, which is an $n \times n$ matrix, for which an $\ell_2$-subspace embedding does not provide compression.  
In this section we explore
how to use linear sketches for graphs. 

We formally define the problem as follows, following the notation and outlines of \cite{KLMMS14}. 
Consider an ordering on the $n$ vertices, denoted $1, \ldots, n$. We will only
consider undirected graphs, though we will often talk about edges $e = \{u,v\}$ as $e = (u,v)$, where here $u$ is less than
$v$ in the ordering we have placed on the edges. This will be for notational convenience only; the underlying graphs are 
undirected. 

Let $\matB_n \in \mathbb{R}^{\binom{n}{2} \times n}$ be the vertex edge incidence of the
undirected, unweighted complete graph on $n$ vertices, where the $e$-th row $\b_e$ for edge $e = (u,v)$ has a $1$ in column
$u$, a $(-1)$ in column $v$, and zeroes elsewhere. 

One can then write the vertex edge incidence matrix of an arbitrary undirected graph $G$ as 
$\matB = \matT \cdot \matB_n$, where $\matT \in \mathbb{R}^{\binom{n}{2} \times \binom{n}{2}}$ is a diagonal matrix with a $\sqrt{w_e}$ in the $e$-th
diagonal entry if and only if $e$ is an edge of $G$ and its weight is $w_e$. 
The remaining diagonal entries of $\matT$ are equal to $0$. The
Laplacian is $\matK = \matB^T\matB$. 

The spectral sparsification problem can then be defined as follows: find a weighted subgraph $H$ of $G$ so that if 
$\tilde{\matK}$ is the Laplacian of $H$, then 
\begin{eqnarray}\label{eqn:spectral}
\forall \x \in \mathbb{R}^n, \ (1-\eps)\x^T \matK \x \leq \x^T \tilde{\matK} \x \leq (1+\eps) \x^T \matK \x.
\end{eqnarray}
We call $H$ a {\it spectral sparsifier} of $G$ 
The usual notation for (\ref{eqn:spectral}) is
$$(1-\varepsilon) \matK \preceq \tilde{\matK} \preceq (1+\varepsilon) \matK,$$
where $\matC \preceq \matD$ means that $\matD-\matC$ is positive semidefinite. We also sometimes use the notation
$$(1-\varepsilon) \matK \preceq_R \tilde{\matK} \preceq_R (1+\varepsilon) \matK,$$ 
to mean that $(1-\varepsilon)\x^T \matK \x \leq \x^T \tilde{\matK} \x \leq (1+\varepsilon) \x^T\matK\x$ for all vectors $\x$
in the row space of $\matK$, which is a weaker notion since there is no guarantee for vectors $\x$
outside of the row space of $\matK$. 

One way to solve the spectral sparsification problem is via leverage score sampling. Suppose
we write the above matrix $\matB$ in its SVD as $\matU \mat\Sigma \matV^T$. Let us look at the leverage scores of $\matU$,
where recall the $i$-th leverage score $\ell_i = \|\matU_{i*}\|_2^2$. Recall the definition of Leverage
Score Sampling given in Definition \ref{def:lss}. By Theorem \ref{thm:lssPerf}, if we take
$O(n \eps^{-2} \log n)$ samples of rows of $\matU$, constructing the sampling and rescaling matrices
of Definition \ref{def:lss}, then with probability $1-1/n$, simultaneously for all $i \in [n]$,
\begin{eqnarray}\label{eqn:guaranteeLSS}
(1-\varepsilon/3) \leq \sigma_i^2(\matD^T \mat\Omega^T \matU) \leq (1+\varepsilon/3).
\end{eqnarray}
Suppose we set
\begin{eqnarray}\label{eqn:tildeK}
\tilde{\matK} = (\matD^T \mat\Omega^T \matB)^T (\matD^T \mat\Omega^T \matB).
\end{eqnarray}
\begin{theorem}\label{thm:lssSimple}
For $\tilde{\matK}$ defined as in (\ref{eqn:tildeK}), with probability $1-1/n$, 
$$(1-\varepsilon) \matK \preceq \tilde{\matK} \preceq (1+\varepsilon)\matK.$$
\end{theorem}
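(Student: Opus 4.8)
The plan is to reduce the positive-semidefinite inequality to the spectral guarantee \eqref{eqn:guaranteeLSS} that leverage-score sampling already provides, by changing variables through the SVD of $\matB$. First I would write $\matB = \matU \mat\Sigma \matV^T$ in its thin SVD (so $\matU$ has orthonormal columns, $\mat\Sigma$ is $t \times t$ with strictly positive diagonal, and $t = \rank(\matB)$), which gives $\matK = \matB^T\matB = \matV\mat\Sigma^2\matV^T$ and, recalling \eqref{eqn:tildeK}, $\tilde{\matK} = \matB^T\mat\Omega\matD\matD^T\mat\Omega^T\matB = \matV\mat\Sigma\matU^T\big(\mat\Omega\matD\matD^T\mat\Omega^T\big)\matU\mat\Sigma\matV^T$, where $\mat\Omega,\matD$ are produced by {\textsc RandSampling}$(\matU, s, q)$ with $q$ the exact leverage-score distribution of $\matU$ and $s = O(n\eps^{-2}\log n)$.

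Next, I would fix an arbitrary $\x \in \mathbb{R}^n$ and set $\y = \mat\Sigma\matV^T\x \in \mathbb{R}^t$. Then $\x^T\matK\x = \|\y\|_2^2$ and $\x^T\tilde{\matK}\x = \|\matD^T\mat\Omega^T\matU\y\|_2^2$. The point is that as $\x$ ranges over all of $\mathbb{R}^n$, $\y$ ranges over all of $\mathbb{R}^t$, since $\matV^T$ is surjective onto $\mathbb{R}^t$ and $\mat\Sigma$ is invertible; hence it suffices to control $\|\matD^T\mat\Omega^T\matU\y\|_2^2$ uniformly over $\y \in \mathbb{R}^t$. Applying Theorem \ref{thm:lssPerf} with $\matZ = \matU$, $\beta = 1$, $\delta = 1/n$ — equivalently invoking \eqref{eqn:guaranteeLSS} — with probability $1-1/n$ every singular value of $\matD^T\mat\Omega^T\matU$ lies in $[\sqrt{1-\eps/3},\sqrt{1+\eps/3}]$, so $\|\matD^T\mat\Omega^T\matU\y\|_2^2 \in [(1-\eps/3)\|\y\|_2^2, (1+\eps/3)\|\y\|_2^2] \subseteq [(1-\eps)\|\y\|_2^2, (1+\eps)\|\y\|_2^2]$ for every $\y$. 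Translating back through $\y = \mat\Sigma\matV^T\x$ yields $(1-\eps)\x^T\matK\x \le \x^T\tilde{\matK}\x \le (1+\eps)\x^T\matK\x$ for all $\x$, which is exactly $(1-\eps)\matK \preceq \tilde{\matK} \preceq (1+\eps)\matK$.

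I do not expect a real obstacle here: the entire content sits in Theorem \ref{thm:lssPerf}, which was already proven via the matrix Chernoff bound, and the rest is a routine change of basis. The one point that needs a little care is the rank-deficiency of $\matB$ (the complete-graph incidence structure forces $\matK$ to be singular): using the \emph{thin} SVD ensures $\matU$ genuinely has orthonormal columns, that the sampled leverage scores are exactly those of $\matU$, and that $\tilde{\matK}$ and $\matK$ share the same null space, so the inequality holds trivially ($0 \le 0 \le 0$) for $\x$ outside the row space of $\matK$ and nontrivially on it — giving the stronger two-sided bound on all of $\mathbb{R}^n$ rather than merely the $\preceq_R$ version. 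I would close by noting that $s = O(n\eps^{-2}\log n)$ samples suffice, so the spectral sparsifier $H$ has only $O(n\eps^{-2}\log n)$ edges.
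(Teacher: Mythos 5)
Your proposal is correct and takes essentially the same approach as the paper's own proof: both reduce the PSD sandwich to the singular-value guarantee \eqref{eqn:guaranteeLSS} by writing $\matB$ in its (thin) SVD, setting $\y = \mat\Sigma\matV^T\x$, and observing that $\x^T\matK\x = \|\y\|_2^2$ while $\x^T\tilde{\matK}\x = \|\matD^T\mat\Omega^T\matU\y\|_2^2$, after which Theorem \ref{thm:lssPerf} finishes the job. Your explicit remark about rank deficiency of the Laplacian and the resulting $\preceq$ (not merely $\preceq_R$) conclusion is a small clarification that the paper leaves implicit, and your direct singular-value phrasing slightly tidies up the paper's $(1 \pm \eps/3)$ bookkeeping, but the substance is identical.
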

\begin{proof}
Using that $\matK = \matB^T\matB$ and the definition of $\tilde{\matK}$, it suffices to show for all $\x$, 
$$\|\matB\x\|_2^2 = (1 \pm \varepsilon/3) \|\matD^T \mat\Omega^T \matB \x\|_2^2.$$
By (\ref{eqn:guaranteeLSS}), and using that $\matB = \matU \mat\Sigma \matV^T$, 
$$\|\matD^T \mat\Omega^T \matB \x\|_2^2 = (1 \pm \varepsilon/3) \|\mat\Sigma \matV^T \x\|_2^2,$$
and since $\matU$ has orthonormal columns,
$$\|\mat\Sigma \matV^T \x\|_2^2 = \|\matU \mat\Sigma \matV^T \x\|_2^2 = \|\matB\x\|_2^2,$$
which completes the proof. 
\end{proof}
Notice that Theorem \ref{thm:lssSimple} shows that if one knows the leverage scores, then
by sampling $O(n \eps^{-2} \log n)$ edges of $G$ and reweighting them, one obtains a spectral
sparsifier of $G$. One can use algorithms for approximating the leverage scores of general matrices \cite{DMMW12},
though more efficient algorithms, whose overall running time is near-linear in the number of edges
of $G$, are known \cite{SS08,st11}. 

A beautiful theorem of Kapralov, Lee, Musco, Musco, and Sidford is the following \cite{KLMMS14}. 
\begin{theorem}\label{thm:linear}
There exists a distribution $\Pi$ on $\eps^{-2} \polylog(n) \times \binom{n}{2}$ matrices
$\matS$ for which with probability $1-1/n$, from $\matS \cdot \matB$, it is possible to recover a
weighted subgraph $H$ with $O(\eps^{-2} n \log n)$ edges such that $H$ is a spectral sparsifier of $G$. 
The algorithm runs in $O(\eps^{-2} n^2 \polylog(n))$ time.  
\end{theorem}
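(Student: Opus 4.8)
The plan is to reduce the problem to the effective-resistance (leverage-score) sampling result of Theorem~\ref{thm:lssSimple}, and then to show that one leverage-score sample can be extracted from a single linear sketch of $\matB$, even though the sketch cannot ``see'' which of the $\binom{n}{2}$ potential edges are actually present in $G$. Recall that the leverage score of an edge $e=(u,v)$ relative to $\matK=\matB^T\matB$ is $\ell_e = w_e(\e_u-\e_v)^T\matK^{+}(\e_u-\e_v) = (\matB\matK^{+}\matB^T)_{ee}$, the weighted effective resistance of $e$, and that by Theorem~\ref{thm:lssSimple} together with the sampling guarantee of Theorem~\ref{thm:lssPerf}, reweighting $O(\eps^{-2}n\log n)$ edges drawn with probabilities within a constant factor of $\ell_e/\sum_f\ell_f$ produces a $(1\pm\eps)$ spectral sparsifier. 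The difficulty is that estimating $\ell_e$ requires an approximate pseudoinverse of $\matK$, i.e.\ knowledge of $G$. Two ingredients resolve this: a linear heavy-hitters sketch of the edge vector, and a short chain of regularized Laplacians that bootstraps the approximate inverse.

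First I would set up the recovery primitive. Take $\matS$ to be a vertical concatenation, over $O(\log n)$ geometric subsampling rates $2^{-j}$ of the edge universe $\binom{[n]}{2}$ and over the $O(\log n)$ chain levels below, of independent $\mathrm{CountSketch}$-type matrices with $\eps^{-2}\polylog(n)$ rows each; the total number of rows of $\matS$ is then $\eps^{-2}\polylog(n)$, and $\matS\cdot\matB$ is all that is stored. Since $\matS\matB$ has only $n$ columns, for any query $y\in\R^n$ the vector $(\matS\matB)\,y$ is cheap to form, and standard sparse-recovery guarantees let us read off from it the $\ell_2$-heavy coordinates of $\matB y\in\R^{\binom{n}{2}}$ — whose nonzeros sit only on edges of $G$ — at each subsampling rate. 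Combining the $O(\log n)$ rates in the usual way yields an ``$\ell_2$-sampler'': with failure probability $1/\poly(n)$ it returns a set of edges sampled, up to constant distortion, with probability proportional to $(\b_e^T y)^2/\|\matB y\|_2^2$. Now, to estimate leverage scores with respect to a matrix $\matM$ that is a constant-factor spectral approximation of $\matK$, I would take $O(\log n)$ Gaussian queries $y^{(k)}=\matM^{-1/2}\g_k$ with $\g_k\sim N(0,\matI_n)$: by the Johnson--Lindenstrauss bound (Lemma~\ref{lem:jl}), $\sum_k(\b_e^T y^{(k)})^2$ concentrates around $(\matB\matM^{-1}\matB^T)_{ee}$, which is within a constant factor of $\ell_e$, so the edges returned by the sampler on the $y^{(k)}$ form a mildly oversampled leverage-score sample of exactly the sort Theorem~\ref{thm:lssSimple} consumes.

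It remains to produce a matrix $\matM$ close to $\matK$, which is what we are building — so I would bootstrap through the chain $\matK_i=\matK+\lambda_i\matI$ with $\lambda_0=2n\ge\lambda_{\max}(\matK)$, $\lambda_i=\lambda_{i-1}/2$, down to $\lambda_T\le 1/\poly(n)$ below the smallest nonzero eigenvalue of $\matK$, so that $T=O(\log n)$; the regularizer is treated as $n$ known self-loop ``edges'' $\lambda_i\e_j\e_j^T$ that are kept exactly rather than recovered. Then $\tilde\matK_0:=2n\matI$ is a trivial $O(1)$-spectral approximation of $\matK_0$; consecutive Laplacians satisfy $\matK_i\preceq\matK_{i-1}\preceq2\matK_i$; and inductively, given a $(1\pm\eps)$ spectral approximation $\tilde\matK_{i-1}$ of $\matK_{i-1}$ — hence an $O(1)$-approximation of $\matK_i$ — I would run the primitive above with $\matM=\tilde\matK_{i-1}$ against the $i$-th independent block of the sketch to recover a leverage-score sample of the $G$-edges of $\matK_i$, and set $\tilde\matK_i$ to be the reweighted sample plus $\lambda_i\matI$; by Theorem~\ref{thm:lssSimple} this is a $(1\pm\eps)$ spectral approximation of $\matK_i$ with $O(\eps^{-2}n\log n)$ edges, and crucially the error is measured against the true $\matK_i$, so there is no accumulation down the chain. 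After $T$ rounds $\tilde\matK_T$ is a $(1\pm\eps)$ approximation of $\matK_T=\matK+\lambda_T\matI$; since $\lambda_T$ lies below the spectral gap of $\matK$, on the row space of $\matK$ this is a $(1\pm O(\eps))$ approximation of $\matK$ itself, so outputting $H$ = the reweighted edges of $\tilde\matK_T$ (discarding the negligible $\lambda_T\matI$ term) is the desired sparsifier; rescaling $\eps$ and a union bound over the $O(\log n)$ levels and $O(\log n)$ rates give failure probability $1/n$.

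The step I expect to be the main obstacle is the analysis of the heavy-hitters-to-sampling reduction and its interaction with matrix concentration. One must show (i) that replacing $\matK_i^{-1}$ by the $O(1)$-accurate $\tilde\matK_{i-1}^{-1}$ distorts each $\ell_e$ by only a constant factor — harmless under oversampling; (ii) that the $\ell_2$-sampler, which occasionally misreports an edge, its weight, or its sampling probability, nonetheless produces a multiset of rescaled rank-one terms $\b_e\b_e^T/\tilde p_e$ whose sum concentrates around $\matK_i$ in spectral norm — this is where a matrix Chernoff/Bernstein bound in the spirit of Fact~\ref{fact:chernoff}, applied to the row-rescaled leverage sample exactly as in the proof of Theorem~\ref{thm:lssPerf}, is invoked, with the twist that the samples now come from a faulty oracle rather than i.i.d.\ draws; and (iii) that all $\polylog(n)$ independent recovery instances succeed simultaneously. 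Finally, the running-time bound $O(\eps^{-2}n^2\polylog(n))$ follows by noting that each of the $O(\log n)$ levels performs $O(\log n)$ solves against the current sparse $\tilde\matK_{i-1}$, forms the products $(\matS\matB)y^{(k)}$, and then runs $\mathrm{CountSketch}$ decoding over the $\binom{n}{2}=O(n^2)$ candidate edges, the last of which dominates at $O(\eps^{-2}n^2\polylog(n))$ per level.
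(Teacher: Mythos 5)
Your high-level framework matches the paper almost exactly: the Li--Miller--Peng chain $\matK_i = \matK + \lambda_i\matI$ with $O(\log n)$ levels (Theorem~\ref{thm:mp}), iterative leverage-score sampling to refine from a constant-factor to a $(1+\eps)$-approximation at each level with no compounding of error, and a sparse-recovery sketch at $O(\log n)$ geometric subsampling rates of the edge universe (Theorem~\ref{thm:hh}). You also correctly identify that an $O(1)$-approximate $\tilde\matK$ suffices to over-sample by a constant factor, which Theorem~\ref{thm:lssPerf} tolerates.

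The gap is in the recovery primitive, and it is not merely a detail. You query the sketch with random Gaussian vectors $y^{(k)} = \matM^{-1/2}\g_k$ and run an $\ell_2$-sampler. With such a query, the $e$-th coordinate of $\matB y^{(k)}$ has squared magnitude $\approx w_e\b_e^T\matM^{-1}\b_e \approx \ell_e$, but $\|\matB y^{(k)}\|_2^2 = \g_k^T\matM^{-1/2}\matK\matM^{-1/2}\g_k \approx \tr(\matM^{-1}\matK) \approx \mathrm{rank}(\matK) \approx n$. So the fraction of the (squared) mass at coordinate $e$ is $\ell_e/n$, not $\ell_e$. If you subsample the edge universe at rate $\approx \hat\ell_e$ as you propose, the residual squared norm drops to $\approx \ell_e \cdot n$, while the $e$-th coordinate squared stays $\approx \ell_e$; the ratio is $1/n$, far below any heavy-hitter threshold $\eta = 1/\polylog(n)$, so edge $e$ is not recovered. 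To make $e$ heavy you would have to subsample at rate $\approx \ell_e/n$, at which point the expected number of edges retained, $\sum_e \ell_e/n = 1$, is a constant; even allowing $\polylog(n)$ rates and $\polylog(n)$ Gaussian queries you would recover only $\polylog(n)$ edges, not the $\Theta(\eps^{-2}n\log n)$ edges the sparsifier needs. There is no way to amplify this to the required number of edges without making $\tilde\Theta(n/\eps^2)$ independent sampler calls with fresh subsampling randomness, which would blow the sketch up to $\tilde\Theta(n/\eps^2)$ rows rather than the claimed $\eps^{-2}\polylog(n)$.

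The paper avoids this by querying the sketch, for each candidate edge $e$, with the \emph{deterministic} vector $\tilde\matK^{\dagger}\b_e$. The enabling identity -- which your proposal never states and which carries the argument -- is that $\x_e := \matT\matB_n\matK^{\dagger}\b_e$ satisfies both $(\x_e)_e = \b_e^T\matK^{\dagger}\b_e = \ell_e$ (when $e\in G$) and $\|\x_e\|_2^2 = \b_e^T\matK^{\dagger}\matK\matK^{\dagger}\b_e = \ell_e$. Thus the $e$-th coordinate already carries a $\sqrt{\ell_e}$ fraction of the norm, and subsampling the edge universe at rate $\approx \ell_e$ drops the residual squared norm to $\approx\ell_e^2$ while $(\x_e)_e^2 = \ell_e^2$ is preserved (conditioned on $e$ surviving), so $e$ becomes an $\Omega(1)$-heavy hitter deterministically. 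Moreover the subsampling at rate $\approx \ell_e$ is itself the Bernoulli leverage-score sample: edge $e$ is output precisely when it survives, which happens with probability $\Theta(\ell_e)$, giving $\Theta(\sum_e \ell_e \cdot \eps^{-2}\log n) = \Theta(\eps^{-2}n\log n)$ edges from a \emph{single} $\polylog(n)$-row subsampled heavy-hitter structure, since the retention decisions for all $\binom{n}{2}$ edges are read off the same $\matD^j$. Your $\ell_2$-sampler formulation severs this coupling: the sampling semantics become ``with replacement, one edge per call'' rather than ``Bernoulli over all edges at once,'' and the Gaussian randomness in the query degrades $\|\matB y\|_2^2$ from $\ell_e$ to $n$. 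To repair your proposal you would need to replace the Gaussian queries by per-edge queries $\tilde\matK^{\dagger}\b_e$ (you may still use Gaussian sketching \`a la Lemma~\ref{lem:jl} as a side computation to estimate $\hat\ell_e$ and pick the subsampling level, which is a legitimate minor optimization), and then the matrix-Bernstein concentration in the spirit of Fact~\ref{fact:chernoff} that you correctly anticipate would close the argument as in Theorem~\ref{thm:lssPerf}.
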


We note that Theorem \ref{thm:linear} is not optimal in its time complexity or the number of edges
in $H$. Indeed, Spielman and Srivastava \cite{SS08} show that in $\tilde{O}(m (\log n) \eps^{-2})$ time
it is possible to find an $H$ with the same number $O(\eps^{-2} n \log n)$ of edges as in Theorem \ref{thm:linear},
where $m$ is the number of edges of $H$. For sparse graphs, this results in significantly less time for finding
$H$. Also, Batson, Spielman, and Srivastava \cite{BSS09} show that it is possible to deterministically find
an $H$ with $O(\eps^{-2} n)$ edges, improving the $O(\eps^{-2} n \log n)$ number of edges in Theorem \ref{thm:linear}. This
latter algorithm is a bit slow, requiring $O(n^3 m \eps^{-2})$ time, with some improvements for dense graphs 
given by Zouzias \cite{z12}, though these are much slower than Theorem \ref{thm:linear}. 

Despite these other works, the key feature of Theorem \ref{thm:linear} is that it is a {\it linear sketch}, namely, it is formed by choosing a random oblivious (i.e., independent of $\matB$) linear map $\matS$ and storing $\matS \cdot \matB$. Then, the sparsifier
$H$ can be found using only $\matS \cdot \matB$, i.e., without requiring
access to $\matB$. This gives
it a number of advantages, such that it implies the first algorithm for maintaining a spectral sparsifier in a
data stream in the presence of insertions and deletions to the graph's edges. That is, for the other works, it was
unknown how to rebuild the sparsifier if an edge is deleted; in the case when linear sketches are used to summarize
the graph, it is trivial to update the sketch in the presence of an edge deletion. 

In the remainder of the section, we give an outline of the proof of Theorem \ref{thm:linear}, following the exposition given
in \cite{KLMMS14}. We restrict to unweighted graphs for the sake of presentation; the arguments generalize in a natural
way to weighted graphs. 

The main idea, in the author's opinion, is the 
use of an elegant technique due to Li, Miller and Peng \cite{lmp13} called ``Introduction and Removal of Artificial Bases''. 
We suspect this technique should have a number of other applications; Li, Miller and Peng use it for
obtaining approximation algorithms for $\ell_2$ and $\ell_1$ regression. 
Intuitively, the theorem states that if you take any PSD matrix $\matK$, you can form a sequence
of matrices $\matK(0), \matK(1), \ldots, \matK(d)$, where $\matK(0)$ has a spectrum which is
within a factor of $2$ of the identity, $\matK(d)$ has a spectrum within a factor of 
$2$ of $\matK$, and for
each $\ell$, $\matK(\ell-1)$ has a spectrum within a factor of $2$ of $\matK(\ell)$. Furthermore
if $\matK$ is the Laplacian of an unweighted graph, $d = O(\log n)$. 

The proof of the following
theorem is elementary. We believe the power in the theorem is its novel
use in algorithm design. 

\begin{theorem}(Recursive Sparsification of \cite{lmp13}, as stated in \cite{KLMMS14})\label{thm:mp}
Let $\matK$ be a PSD matrix with maximum eigenvalue bounded above by $\lambda_u$ and minimum eigenvalue bounded from
below by $\lambda_{\ell}$. Let $d = \lceil \log_2 (\lambda_u/\lambda_{\ell}) \rceil$. For $\ell \in \{0, 1, 2, \ldots, d\}$, set
$$\gamma(\ell) = \frac{\lambda_u}{2^{\ell}}.$$
Note that $\gamma(d) \leq \lambda_{\ell}$ and $\gamma(0) = \lambda_u$. Consider the sequence of PSD matrices
$\matK(0), \matK(1), \ldots, \matK(d)$, where
$$\matK(\ell) = \matK + \gamma(\ell) \matI_{n}.$$
Then the following conditions hold.
\begin{enumerate}
\item $\matK \preceq_R \matK(d) \preceq_R 2\matK$.
\item $\matK(\ell) \preceq \matK(\ell-1) \preceq 2\matK(\ell)$ for $\ell = 1, 2, \ldots, d$. 
\item $\matK(0) \preceq 2 \gamma(0)\matI \preceq 2\matK(0)$.
\end{enumerate}
If $\matK$ is the Laplacian of an unweighted graph, then its maximum eigenvalue is at most $2n$ and its minimum
eigenvalue is at least $8/n^2$. We can thus set $d = \lceil \log_2 \lambda_u/\lambda_{\ell} \rceil  = O(\log n)$
in the above. 
\end{theorem}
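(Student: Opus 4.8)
The plan is to prove the three containment conditions directly from the definitions, then verify the eigenvalue bounds for an unweighted graph Laplacian, which fixes the value of $d$. Throughout I will use repeatedly the elementary fact that for PSD matrices, $\matC \preceq \matD$ iff $\matD - \matC$ is PSD, and that adding $\gamma \matI$ shifts all eigenvalues by $\gamma$; also that if $\matN$ is PSD and $a \le b$ are scalars then $a\matN \preceq b\matN$. All three items are statements about the spectrum of $\matK$ together with a scalar shift, so I expect each to reduce to a one-line scalar inequality applied eigenvalue-by-eigenvalue (on the row space of $\matK$ for item 1, and on all of $\R^n$ for items 2 and 3).

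First I would prove item 2, which is the cleanest: for $\ell \in \{1,\dots,d\}$ we have $\matK(\ell-1) - \matK(\ell) = (\gamma(\ell-1) - \gamma(\ell))\matI = \gamma(\ell)\matI \succeq 0$, giving $\matK(\ell) \preceq \matK(\ell-1)$; and $2\matK(\ell) - \matK(\ell-1) = \matK + (2\gamma(\ell) - \gamma(\ell-1))\matI = \matK + 0 \cdot \matI = \matK \succeq 0$, using $\gamma(\ell-1) = 2\gamma(\ell)$, giving $\matK(\ell-1) \preceq 2\matK(\ell)$. For item 3, $\matK(0) = \matK + \gamma(0)\matI$ and since every eigenvalue of $\matK$ lies in $[\lambda_\ell, \lambda_u] = [\lambda_\ell, \gamma(0)]$, we get $\matK \preceq \gamma(0)\matI$, hence $\matK(0) \preceq 2\gamma(0)\matI$; and $2\gamma(0)\matI \preceq 2\matK(0) = 2\matK + 2\gamma(0)\matI$ is equivalent to $0 \preceq 2\matK$, which holds since $\matK$ is PSD. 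For item 1, restricting to the row space of $\matK$, every nonzero eigenvalue of $\matK$ is at least $\lambda_\ell \ge \gamma(d)$, so on that subspace $\gamma(d)\matI \preceq_R \matK$, giving $\matK \preceq_R \matK + \gamma(d)\matI = \matK(d)$; and $\matK(d) = \matK + \gamma(d)\matI \preceq_R \matK + \matK = 2\matK$ on the row space since $\gamma(d)\matI \preceq_R \matK$ there. I should be slightly careful that $\gamma(d) = \lambda_u / 2^d \le \lambda_\ell$ exactly by the choice $d = \lceil \log_2(\lambda_u/\lambda_\ell)\rceil$, which I would note explicitly.

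Next I would handle the Laplacian-specific bounds. For an unweighted graph on $n$ vertices, $\matK = \matB^T\matB$ where $\matB$ is the (sub)matrix of the signed incidence matrix $\matB_n$ corresponding to present edges; each row $\b_e$ has squared norm $2$, and the maximum eigenvalue of $\matK$ equals $\|\matB\|_2^2 \le \sum_e \|\b_e\|_2^2 = 2|E| \le 2\binom{n}{2} < n^2$, but the standard sharper bound I would cite/derive is $\lambda_{\max}(\matK) \le 2n$ via $\lambda_{\max}(\matK) \le 2 d_{\max} \le 2(n-1) \le 2n$ (Gershgorin: each diagonal entry is the degree $d_v \le n-1$, each row's off-diagonal absolute sum is also $d_v$). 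For the minimum nonzero eigenvalue, I would invoke the known bound $\lambda_{\min}^{+}(\matK) \ge 8/n^2$ (this follows from the fact that the algebraic connectivity of any connected graph on $n$ vertices is at least $1/(n \cdot \mathrm{diam}) \gtrsim 1/n^2$; when $G$ is disconnected one works per connected component, or the statement is interpreted on the row space). Then $d = \lceil \log_2(\lambda_u/\lambda_\ell)\rceil \le \lceil \log_2(2n \cdot n^2/8)\rceil = O(\log n)$.

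The main obstacle is not the positive-semidefinite algebra — that is genuinely routine — but rather being precise about the row-space qualifier in item 1 and about the minimum-eigenvalue lower bound $8/n^2$ for the Laplacian. The Laplacian of a disconnected graph has a kernel of dimension equal to the number of components, so ``$\lambda_\ell$'' must mean the smallest eigenvalue supported on the row space (equivalently the smallest nonzero eigenvalue), and the $\preceq_R$ notation in item 1 is exactly what accommodates this; I would state this interpretation up front. The $8/n^2$ bound itself I would either cite from the spectral graph theory literature or sketch via a Cheeger-type / diameter argument, since a self-contained derivation is standard but tangential; given the excerpt's style (``The proof of the following theorem is elementary''), a brief citation-backed justification is appropriate rather than a full derivation.
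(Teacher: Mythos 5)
Your proof is correct and follows essentially the same route as the paper: each of the three conditions is verified by the same scalar comparison of $\gamma(\ell)$ against the eigenvalues of $\matK$ (the paper writes these as quadratic-form inequalities, you as PSD orderings of matrix differences, which is equivalent), and the Laplacian eigenvalue bounds are cited rather than derived in both treatments. Your explicit note that $\gamma(d)\le\lambda_\ell$ by the choice of $d$, and your care with the row-space qualifier in item 1, match what the paper's argument implicitly relies on.
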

\begin{proof}
For the first condition, for all $\x$ in the row space of $\matK$, 
$$\x^T \matK \x \leq \x^T \matK \x + \x^T (\gamma(d) \matI )\x \leq \x^T \matK \x + \x^T \lambda_{\ell} \x \leq 2 \x^T \matK \x.$$
For the second condition, for all $\x$,
$$\x^T \matK(\ell) \x = \x^T \matK \x + \x^T \gamma(\ell) \matI \x \leq \x^T \matK \x + \x^T \gamma(\ell-1) \x = \x^T \matK(\ell-1)\x,$$
and
$$\x^T \matK(\ell-1)\x = \x^T \matK \x + \x^T \gamma(\ell-1) \matI \x = \x^T \matK \x + 2 \x^T \gamma(\ell) \matI \x \leq 2 \x^T\matK(\ell)\x.$$
Finally, for the third condition, for all $\x$,
\begin{eqnarray*}
\x^T \matK(0) \x & = & \x^T \matK \x + \x^T \lambda_u \matI \x\\
& \leq & \x^T (2\lambda_u \matI) \x\\
& \leq & 2\x^T \matK \x + 2 \x^T \lambda_u \matI \x\\
& \leq & 2\x^T \matK(0) \x.
\end{eqnarray*} 
The bounds on the eigenvalues of a Laplacian are given in \cite{st04} (the bound on the maximum eigenvalue follows from
the fact that $n$ is the maximum eigenvalue of the Laplacian of the complete graph on $n$ vertices. The bound on the minimum
eigenvalue follows from Lemma 6.1 of \cite{st04}). 
\end{proof}
The main idea of the algorithm is as follows. We say a PSD matrix $\tilde{\matK}$ is a $C$-approximate 
row space sparsifier of a PSD matrix $\matK$ if $\matK \preceq_R \tilde{\matK} \preceq_R C \cdot \matK$. If we also
have the stronger condition that $\matK \preceq \tilde{\matK} \preceq C \cdot \matK$ we say that $\tilde{\matK}$
is a $C$-approximate sparsifier of $\matK$. 

By the first condition of Theorem \ref{thm:mp}, if we had a matrix $\tilde{\matK}$ which is a $C$-approximate
row space sparsifier of $\matK(d)$, then $\tilde{\matK}$ is also a $2C$-approximate row space sparsifier to $\matK$. 

If we were not
concerned with compressing the input graph $G$ with a linear sketch, at this point we could perform
Leverage Score Sampling to obtain a $(1+\eps)$-approximate sparsifier to $\matK(d)$. Indeed, by 
Theorem \ref{thm:lssPerf}, it is enough to construct a distribution $q$
for which $q_i \geq p_i/\beta$ for all $i$, where $\beta > 0$ is a constant. 

To do this, first observe that the leverage score for a potential edge $i = (u,v)$ is given by
\begin{eqnarray}\label{eqn:newLabel}
\|\matU_{i*}\|_2^2 & = & \matU_{i*} \mat\Sigma \matV^T (\matV \mat\Sigma^{-2} \matV^T) \matV \mat\Sigma \matU_{i*}^T\\
& = & \b_i^T \matK^{\dagger} \b_i.
\end{eqnarray}
%
As $\b_i$ is in the row space of $\matB$, it is also in the row space of $\matK = \matB^T \matB$,
since $\matB$ and $\matK$ have the same row space (to see this, write $\matB = \matU \mat\Sigma \matV^T$
in its SVD and then $\matK = \matV \mat\Sigma^2 \matV^T$). Since $\tilde{\matK}$ is a 
$2C$-approximate row space sparsifier of $\matK$, for all $\u$ in the row space of $\matK$, 
$$\u^T \matK \u \leq \u^T \tilde{\matK} \u \leq 2C \u^T \matK \u,$$
which implies since $\matK^+$ has the same row space as $\matK$ (to see this, again look at the SVD), 
$$\frac{1}{2C} \u^T \matK^+ \u \leq \u^T \tilde{K}^+ \u \leq \u^T \matK^+ \u.$$
Since this holds for $\u = \b_i$ for all $i$, it follows that $\b_i^T \tilde{\matK}^+ \b_i$ is within a factor
of $2C$ of $\b_i^T \matK^+ \b_i$ for all $i$. It follows 
by setting $q_i = \b_i^T \tilde{\matK}^{\dagger} \b_i/n$, we have that $q_i \geq p_i/2C$, where $p_i$
are the leverage scores of $\matB$. Hence, by Theorem \ref{thm:lssPerf}, 
it suffices to take $O(n \eps^{-2} \log n)$ samples of the edges of $G$ according to $q_i$, reweight them,
and one obtains a spectral sparsifier to the graph $G$. 

Hence, if we were not concerned with compressing $G$ with a linear sketch, i.e., 
of computing the sparsifier $H$ from $\matS \matB$ for a random oblivious
mapping $\matS$, one approach using
Theorem \ref{thm:mp} would be the following. By the third condition of Theorem \ref{thm:mp}, 
we can start with a sparsifier $\tilde{\matK} = 2 \gamma(0)\matI$ which provides a $2$-approximation to $\matK(0)$, 
in the sense that $\matK(0) \preceq \tilde{\matK} \preceq 2\matK(0)$. Then, we can apply Leverage Score Sampling
and by Theorem \ref{thm:lssPerf}, obtain a sparsifier $\tilde{\matK}$ for which 
$$\matK(0) \preceq \tilde{\matK} \preceq (1+\eps)\matK(0).$$
Then, by the second property of Theorem \ref{thm:mp}, 
\begin{eqnarray*}
\matK(1) \preceq \matK(0) \preceq \tilde{\matK} \preceq 2(1+\eps)\matK(1).
\end{eqnarray*}
Hence, $\tilde{\matK}$ is a $2(1+\eps)$-approximation to $\matK(1)$. We can now apply Leverage Score Sampling
again, and in this way obtain $\tilde{\matK}$ which is a $2(1+\eps)$-approximation to $\matK(2)$, etc. Note that
the errors do not compound, and the number of samples in $\tilde{\matK}$ is always $O(n \eps^{-2} \log n)$. By
the argument above, when $\tilde{\matK}$ becomes a $2(1+\eps)$-approximation to $\matK(d)$, it is a $4(1+\eps)$
approximation to $\matK$, and we obtain a spectral sparsifier of $G$ by sampling $O(n \eps^{-2} \log n)$ edges
according to the leverage scores of $\tilde{\matK}$. 

Thus, the only task left is to implement this hierarchy of leverage score sampling using linear sketches. 

For this, we need the following standard theorem from the sparse recovery literature. 
\begin{theorem}(see, e.g., \cite{ccf04,GLPS})\label{thm:hh}
For any $\eta > 0$, there is an algorithm $D$ and a 
distribution on matrices $\mat\Phi$ in $\mathbb{R}^{O(\eta^{-2} \polylog(n)) \times n}$
such that for any $\x \in \mathbb{R}^n$, with probability $1-n^{-100}$ over the choice of $\mat\Phi$, the output of
$D$ on input $\mat\Phi \x$ is a vector $\w$ with $\eta^{-2} \polylog(n)$ non-zero entries which satisfies the 
guarantee that 
$$\|\x-\w\|_{\infty} \leq \eta \|\x\|_2.$$
\end{theorem}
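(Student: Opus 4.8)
The plan is to instantiate $\mat\Phi$ as a vertical stack of $L = \Theta(\log n)$ independent CountSketch matrices, each with $B = \Theta(\eta^{-2})$ rows, so that $\mat\Phi$ has $O(\eta^{-2}\log n) = \eta^{-2}\polylog(n)$ rows in total and is oblivious to $\x$. Recall a CountSketch matrix is specified by a pairwise-independent hash $h:[n]\to[B]$ and a pairwise-independent sign function $\sigma:[n]\to\{-1,+1\}$, with its $b$-th row carrying entry $\sigma(i)$ in each column $i$ with $h(i)=b$ and $0$ elsewhere. From the $\ell$-th block of $\mat\Phi\x$ the decoder $D$ reads the estimator $\hat x_i^{(\ell)} := \sigma^{(\ell)}(i)\,(\mat\Phi^{(\ell)}\x)_{h^{(\ell)}(i)}$, forms $\hat x_i := \mathrm{median}_{\ell\in[L]}\hat x_i^{(\ell)}$ for each $i\in[n]$, and then applies the pruning step described below.

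First I would establish the single-table guarantee. Expanding $\hat x_i^{(\ell)} - x_i = \sum_{j\neq i}\mathbf{1}[h^{(\ell)}(j)=h^{(\ell)}(i)]\,\sigma^{(\ell)}(i)\sigma^{(\ell)}(j)\,x_j$, pairwise independence of $\sigma^{(\ell)}$ kills the mean and the cross terms, and pairwise independence of $h^{(\ell)}$ gives $\mathbf{E}[(\hat x_i^{(\ell)}-x_i)^2] = \frac1B\sum_{j\neq i}x_j^2 \le \|\x\|_2^2/B$. Picking $B$ a large enough constant times $\eta^{-2}$ and applying Chebyshev yields $\Pr[\,|\hat x_i^{(\ell)}-x_i| > (\eta/10)\|\x\|_2\,] \le 1/3$. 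Since the $L$ blocks are independent, a Chernoff bound shows the median $\hat x_i$ satisfies $|\hat x_i - x_i| \le (\eta/10)\|\x\|_2$ except with probability $e^{-\Omega(L)} \le n^{-101}$, and a union bound over the $n$ coordinates gives that, with probability at least $1-n^{-100}$, $|\hat x_i - x_i| \le (\eta/10)\|\x\|_2$ holds simultaneously for all $i\in[n]$; call this event $\mathcal{E}$ and condition on it.

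Next I would handle the pruning so that the output has few nonzeros without ever needing $\|\x\|_2$ exactly. Let $K = \lceil 11\,\eta^{-2}\rceil$, and have $D$ set $w_i = \hat x_i$ on the $K$ indices with the largest values of $|\hat x_i|$ (ties broken arbitrarily) and $w_i = 0$ otherwise, so $\w$ has at most $K = O(\eta^{-2}) \subseteq \eta^{-2}\polylog(n)$ nonzero entries. On $\mathcal{E}$, any coordinate with $|\hat x_i| > (2\eta/5)\|\x\|_2$ has $|x_i| > (3\eta/10)\|\x\|_2$, and since $\sum_i x_i^2 \le \|\x\|_2^2$ there are at most $(10/3)^2\eta^{-2} < 11\,\eta^{-2} \le K$ such coordinates; hence every coordinate exceeding this threshold is retained. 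Therefore: if $i$ is retained then $|x_i - w_i| = |x_i - \hat x_i| \le (\eta/10)\|\x\|_2 \le \eta\|\x\|_2$; and if $i$ is discarded then $|\hat x_i| \le (2\eta/5)\|\x\|_2$, so $|x_i - w_i| = |x_i| \le |\hat x_i| + (\eta/10)\|\x\|_2 \le (\eta/2)\|\x\|_2 \le \eta\|\x\|_2$. In all cases $\|\x - \w\|_\infty \le \eta\|\x\|_2$, as desired.

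The construction itself is routine; the two points needing care are (i) boosting the trivial constant-probability per-coordinate estimate to a $1-n^{-100}$ bound holding for all $n$ coordinates at once, which is exactly what forces the median-of-$\Theta(\log n)$-tables device and the extra $\log n$ factor in the row count, and (ii) arguing that top-$K$ truncation is safe even though $D$ never learns $\|\x\|_2$ — handled by the counting argument above, which shows the set of coordinates whose estimate clears the (data-dependent) threshold has size at most $K$ and so is automatically contained in the top $K$. If one prefers an explicit threshold decoder instead, $D$ can append a few more rows to $\mat\Phi$ to compute a factor-$2$ estimate of $\|\x\|_2$ (for instance the median over the $L$ blocks of $\|\mat\Phi^{(\ell)}\x\|_2^2$, each of which is an unbiased estimator of $\|\x\|_2^2$, in the spirit of Lemma~\ref{lem:jl}), but the top-$K$ version above avoids this altogether.
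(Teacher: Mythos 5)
The paper does not prove Theorem~\ref{thm:hh}; it simply cites the CountSketch and sparse-recovery literature and uses the statement as a black box. Your proof is a correct self-contained reconstruction of the standard argument from those references: CountSketch tables with pairwise-independent hashes and signs, a variance bound via Chebyshev, median-of-$\Theta(\log n)$-repetitions with a Chernoff bound to get failure probability $n^{-101}$ per coordinate, a union bound over all $n$ coordinates, and a top-$K$ truncation to enforce sparsity. The one small slip is arithmetic: $(10/3)^2 = 100/9 \approx 11.11$, which is \emph{not} less than $11$, so the claim ``$(10/3)^2\eta^{-2} < 11\eta^{-2}$'' is false as written; setting $K = \lceil 12\,\eta^{-2}\rceil$ (or tightening the per-coordinate error to $(\eta/20)\|\x\|_2$ and adjusting the threshold) repairs this without affecting the structure of the argument.
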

Several standard consequences of this theorem, as observed in \cite{KLMMS14}, 
can be derived by setting $\eta = \frac{\eps}{C \log n}$ for a constant $C > 0$, which is the setting of $\eta$
we use throughout.  
Of particular interest is that for $0 < \eps < 1/2$, from $w_i$ one can determine
if $\x_i \geq \frac{1}{C \log n} \|\x\|_2$ or $\x_i < \frac{1}{2C \log n} \|\x\|_2$ given that it satisfies one of 
these two conditions. We omit the proof of this fact which can be readily verified from the statement
of Theorem \ref{thm:hh}, as shown in \cite{KLMMS14}. 

The basic idea behind the sketching algorithm is the following intuition. 
Let $\x_e = \matT \matB_n \matK^{\dagger} \b_e$ for an edge $e$ which may or may not occur in $G$, which as we will
see below is a vector with the important property that its $e$-th coordinate is either $\ell_e$ or $0$. Then, 
$$\ell_e = \|\matU_e\|_2^2 = \b_e^T \matK^{\dagger} \matK \matK^{\dagger} \b_e = \|\matB \matK^{\dagger} \b_e\|_2^2 = \|\matT \matB_n \matK^{\dagger} \b_e\|_2^2 = \|\x_e\|_2^2,$$
where the first equality follows by definition of the leverage scores, the second equality 
follows by (\ref{eqn:newLabel}) and using that $\matK^+ = \matK^+ \matK \matK^+$, the third equality
follows by definition of $\matK = \matB^T \matB$, the fourth equality follows from $\matT\matB_n = \matB$, and 
the final equality follows by definition of $\x_e$. 

Moreover, by definition of $\matT$, the $e$-th coordinate of $\x_e$ is $0$ if $e$ does not occur in $G$. 
Otherwise, it is $\b_e \matK^{\dagger} \b_e = \ell_e$. 
We in general could have that $\ell_e \ll \|\x_e\|_2$, that is, there can be many other
non-zero entries among the coordinates of $\x_e$ other than the $e$-th entry.

This is where sub-sampling and Theorem \ref{thm:hh} come to the rescue. 
At a given level in the Leverage Score Sampling hierarchy, that is, when trying to construct $\matK(\ell+1)$ from
$\matK(\ell)$, we have a $C$-approximation
$\tilde{\matK}$ to $\matK(\ell)$ for a given $\ell$, and would like a $(1+\eps)$-approximation to $\matK(\ell)$. Here
$C > 0$ is a fixed constant. To do this,
suppose we sub-sample the edges of $G$ at rates $1, 1/2, 1/4, 1/8, \ldots, 1/n^2$, where sub-sampling
at rate $1/2^i$ means we randomly decide to keep each edge independently with probability $1/2^i$. Given $\tilde{\matK}$,
if $\hat{\ell}_e$ is our $C$-approximation to $\ell_e$, if we sub-sample at rate $1/2^i$ where
$2^i$ is within a factor of $2$ of $\hat{\ell}_e$, then we would expect $\|\x_e\|_2^2$ to drop by a factor
of $\Theta(\ell_e)$ to $\Theta(\ell_e^2)$. Moreover, if edge $e$ is included in the sub-sampling at rate $1/2^i$,
then we will still have $\x_e = \ell_e$. Now we can apply Theorem \ref{thm:hh} on the sub-sampled vector $\x_e$
and we have that $\x_e = \Omega(\|\x_e\|_2)$, which implies that in the discussion after 
Theorem \ref{thm:hh}, we will be able to find edge $e$. 
What's more is that the process of dropping each edge with probability $1/2^i = 1/\Theta(\ell_e)$
can serve as the leverage score sampling step itself. Indeed, this process sets the $e$-th coordinate of $\x_e$
to $0$ with probability $1-\Theta(\ell_e)$, that is, it finds edge $e$ with probability $\Theta(\ell_e)$, which
is exactly the probability that we wanted to sample edge $e$ with in the first place. 

Thus, the algorithm is to sub-sample the edges of $G$ at rates $1, 1/2, \ldots, 1/n^2$, and for each rate
of sub-sampling, maintain the linear sketch given by Theorem \ref{thm:hh}. This involves computing $\mat\Phi^i \matT \matB_n$
where $\mat\Phi^i$ is a linear sketch of the form $\mat\Phi \cdot \matD^i$, where $\mat\Phi$ is as in Theorem \ref{thm:hh},
and $\matD^i$ is a diagonal matrix with each diagonal entry set to $1$ with probability $1/2^i$ and set to $0$ otherwise.
We do this entire process independently
$O(\log n)$ times, as each independent repetition will allow us to build a $\matK(\ell+1)$ from a $\matK(\ell)$ for one
value of $\ell$. 
Then, for each level of the Leverage Score Sampling hierarchy of Theorem \ref{thm:mp}, we
have a $\tilde{\matK}$. For each possible edge $e$, we compute $\hat{\ell}_e$ using $\tilde{\matK}^{\dagger}$ which determines
a sub-sampling rate $1/2^i$. By linearity, we can compute $(\mat\Phi^i \matT \matB_n) \tilde{\matK}^{\dagger} \b_e$, which is the
sub-sampled version of $\x_e$. We sample edge $e$ 
if it is found by the algorithm $\matD$ in the discussion surrounding Theorem \ref{thm:hh}, for that sub-sampling level. We 
can thus use these sketches to walk up the Leverage Score Sampling hierarchy of Theorem \ref{thm:mp} and 
obtain a $(1+\eps)$-approximate spectral sparsifier to $G$. 
Our discussion has omitted a number of details, but hopefully gives a flavor of the result. 
We refer the reader to \cite{KLMMS14} for futher details
on the algorithm. 

\section{Sketching Lower Bounds for Linear Algebra}\label{chap:lb}
While sketching, and in particular subspace embeddings, have been used for a wide variety of applications, there
are certain limitations. In this section we explain some of them.

{\bf Section Overview:} In \S\ref{sec:schatten} we introduce the Schatten norms as a natural family of matrix norms including the Frobenius and operator norms, and show that they can be approximated pretty efficiently given non-oblivious methods and multiple passes over the data. In \S\ref{sec:oded} we ask what we can do with just a single oblivious sketch of the data matrix, and show that unlike the Frobenius norm, where it can be compressed to a vector of a constant number of dimensions, for approximating the operator norm of an $n \times n$ matrix from the sketch one cannot compress to fewer than $\Omega(n^2)$ dimensions. In \S\ref{sec:streaming} we discuss streaming lower bounds for numerical linear algebra problems, such as approximate matrix product, $\ell_2$-regression, and low rank approximation. In \S\ref{sec:seLB} we mention lower bounds on the dimension of $\ell_2$-subspace embeddings themselves. Finally, in \S\ref{sec:adaptiveLower} we show how algorithms which sketch input data, then use the same sketch to adaptively query properties about the input, typically cannot satisfy correctness. That is, we show broad impossibility results for sketching basic properties such as the Euclidean norm of an input vector when faced with adaptively chosen queries. Thus, when using sketches inside of complex algorithms, one should make sure they are not queried adaptively, or if they are, that the algorithm will still succeed. 

\subsection{Schatten norms}\label{sec:schatten}
A basic primitive is to be able to use a sketch to estimate a norm. This is a very well-studied problem with
inspirations from the data stream literature, where sketching $\ell_p$-norms has been extensively studied. 

For problems on matrices one is often interested in error measures that depend on a matrix norm. An 
appealing class of such norms is the Schatten $p$-norms of a matrix $\matA$, which we shall denote $\|\matA\|_p$. 
\begin{definition}\label{def:schatten}
For $p \geq 1$, 
the $p$-th Schatten norm $\|\matA\|_p$ of a rank-$\rho$ matrix $\matA$ is defined to be 
$$\|\matA\|_p = \left (\sum_{i=1}^{\rho} \sigma_i^p \right)^{1/p},$$
where $\sigma_1 \geq \sigma_2 \geq \cdots \geq \sigma_{\rho} > 0$ are the singular values of $\matA$. For
$p = \infty$, $\|\matA\|_{\infty}$ is defined to be $\sigma_1$. 
\end{definition}
Two familiar norms immediately stand out: the Schatten $2$-norm is just the Frobenius norm of $\matA$, while
the Schatten $\infty$-norm is the operator norm of $\matA$. Note that typical convention is to let
$\|\matA\|_2$ denote the operator norm of $\matA$, but in this section we shall use $\|\matA\|_{\infty}$ to denote
the operator norm to distinguish it from the Schatten $2$-norm, which is the Frobenius norm. 

The Schatten norms are particularly useful
in that they are rotationally invariant. That is, if $\matA$ is an $m \times n$ matrix, and if $\matJ$ is
an $m \times m$ orthonormal matrix while $\matK$ is an $n \times n$ orthonormal matrix, then 
$\|\matJ\matA\matK\|_p = \|\matA\|_p$ for any $p \geq 1$. To see this, we may write $\matA = \matU \mat\Sigma \matV^T$ in its
SVD. Then $\matJ\matU$ has orthonormal columns, while $\matV^T\matK$ has orthonormal rows. It follows that
the SVD of the matrix $\matJ\matA\matK$ is given in factored form as $(\matJ\matU) \mat\Sigma (\matV^T\matK)$, and so it has
the same singular values as $\matA$, and therefore the same Schatten $p$-norm for any $p \geq 1$. 

One reason one is interested in estimating a matrix norm is to evaluate the quality of an approximation.
For instance, suppose one finds a matrix $\tilde{\matA}$ which is supposed to approximate $\matA$ in a certain
norm, e.g., one would like $\|\matA-\tilde{\matA}\|_p$ to be small. To evaluate the quality of the approximation
directly one would need to compute $\|\matA-\tilde{\matA}\|_p$. This may be difficult to do if one is interested
in a very fast running time or using a small amount of space and a small number of passes over the data. 
For instance, for $p \notin \{2, \infty\}$
it isn't immediately clear there is an algorithm other than computing the SVD of $\matA-\tilde{\matA}$. 

While our focus in this section is on lower bounds, we mention that for integers $p \geq 1$, there
is the following simple algorithm for estimating Schatten norms which has a good running time but requires multiple passes over
the data. This is given in \cite{lnw14}. 
\begin{theorem}
For any integer $p \geq 1$, given an $n \times d$ matrix $\matA$, 
there is an $O(p \cdot \nnz(\matA)/\eps^{-2})$ time algorithm for obtaining a $(1+\eps)$-approximation to
$\|\matA\|_p^p$ with probability at least $9/10$. Further,
the algorithm makes $\lceil p/2 \rceil$ passes over the data. 
\end{theorem}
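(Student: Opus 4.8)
The plan is to reduce the problem to estimating the trace of an integer power of a positive semidefinite matrix, and to estimate that trace by a Hutchinson-style stochastic estimator whose only interaction with $\matA$ is through matrix--vector products, batched so that each product is one pass. First I would use the identity, valid whenever $p=2k$ is even, that
\[
\|\matA\|_p^p=\sum_i\sigma_i(\matA)^p=\sum_i\lambda_i^k=\mathrm{Tr}(\matM^k),
\]
where $\matM=\matA^T\matA$ is PSD with eigenvalues $\lambda_i=\sigma_i(\matA)^2\ge 0$. So for even $p$ it suffices to produce a $(1\pm\eps)$-approximation to $\mathrm{Tr}(\matM^k)$ in $\lceil p/2\rceil$ passes and $O(p\cdot\nnz(\matA)\eps^{-2})$ time.

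The estimator is $Z=\frac1m\sum_{\ell=1}^m g_\ell^T\matM^k g_\ell$, where $g_1,\dots,g_m$ are independent vectors of i.i.d. Rademacher entries and $m=\Theta(\eps^{-2})$. Since ${\bf E}[g_\ell g_\ell^T]=\matI_d$ we get ${\bf E}[Z]=\mathrm{Tr}(\matM^k)=\|\matA\|_p^p$. For the variance, a standard computation gives ${\bf Var}[g_\ell^T\matM^k g_\ell]\le 2\FNormS{\matM^k}=2\sum_i\lambda_i^{2k}\le 2\bigl(\sum_i\lambda_i^k\bigr)^2=2\,\mathrm{Tr}(\matM^k)^2$, using that $\matM^k$ is PSD so all summands are nonnegative. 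Hence ${\bf Var}[Z]\le(2/m)\,{\bf E}[Z]^2$, and Chebyshev's inequality with $m=\Theta(\eps^{-2})$ gives $Z=(1\pm\eps)\|\matA\|_p^p$ with probability at least $9/10$; note the number $m$ of replicates does not depend on $p$.

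For the pass and time bookkeeping, one never forms $\matM$: writing $\matM^k g=(\matA^T\matA)^k g$ as an alternating product of $2k$ factors applied to $g$, each multiplication by $\matA$ (or by $\matA^T$) is carried out for all $m$ replicate vectors at once in a single pass over the nonzeros of $\matA$, costing $O(m\cdot\nnz(\matA))=O(\nnz(\matA)\eps^{-2})$ time. If $k=2s$ is even, $g^T\matM^k g=\|\matM^s g\|_2^2$ and $\matM^s g$ takes $2s$ passes; if $k=2s+1$ is odd, $g^T\matM^k g=\|\matA\,\matM^s g\|_2^2$, taking $2s+1$ passes. In both cases this is exactly $\lceil p/2\rceil$ passes (as $p=2k$), for a total of $O(p\cdot\nnz(\matA)\eps^{-2})$ time; at the end one evaluates the $m$ quadratic forms and averages.

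The main obstacle is the odd case $p=2k+1$, where $\|\matA\|_p^p=\sum_i\sigma_i^p$ is \emph{not} the trace of any integer-power polynomial in $\matA,\matA^T$ (every square word in $\matA$ and $\matA^T$ has even length), so the clean identity fails and a genuinely extra idea is needed. The plan there is to write $\|\matA\|_p^p=\mathrm{Tr}\bigl(\matM^{(p-1)/2}(\matA^T\matA)^{1/2}\bigr)$ and replace the fractional-power factor using a random factorization (e.g., a short Gaussian factor $\matG$, since $\matA^T\matG\matG^T\matA$ is an unbiased surrogate for $\matM$, or a subspace embedding when $n\gg d$), after which the remaining work is the same alternating-product computation; the delicate points are keeping the pass count at $\lceil p/2\rceil$, controlling the variance of the composite estimator, and avoiding any additive $\poly(d/\eps)$ term. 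I expect the odd case, together with verifying that every pass truly processes all $m$ replicates so that no pass is wasted, to be where the real care lies; the even case is essentially the computation sketched above.
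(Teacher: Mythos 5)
Your even-$p$ argument is correct and is essentially the paper's proof. The paper likewise uses a Hutchinson-type trace estimator with $\Theta(\eps^{-2})$ test vectors, a second-moment bound of the form $\Varnce{\g^T\matN\g}=O(\FNormS{\matN})=O(\tr(\matN)^2)$ for PSD $\matN$, Chebyshev's inequality, and the same device of computing only ``half'' of the power per test vector and pairing the two halves so that $\lceil p/2\rceil$ passes suffice. The only cosmetic differences are that the paper uses Gaussian rather than Rademacher test vectors and works with the symmetric dilation $\matB=\left(\begin{smallmatrix}0&\matA^T\\ \matA&0\end{smallmatrix}\right)$, so that each pass is one multiplication by $\matB$, whereas you work with $\matM=\matA^T\matA$ and alternate applications of $\matA$ and $\matA^T$; your Rademacher variance computation is, if anything, slightly cleaner than the paper's fourth-moment calculation for Gaussians.

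The genuine gap is the odd case, which you explicitly leave open and which you must resolve, since the statement claims all integers $p\geq 1$. Your observation that for odd $p$ the quantity $\sum_i\sigma_i^p$ is not the trace of any word in $\matA$ and $\matA^T$ is exactly right, and your proposed repair does not work as stated: $\matA^T\matG\matG^T\matA$ is an unbiased surrogate for (a multiple of) $\matM$, not for $\matM^{1/2}$, and the trace of a product involving the square root of a random matrix is a nonlinear functional whose bias you have no handle on; there is no evident way to make it unbiased, let alone to control its variance, within $\lceil p/2\rceil$ passes and without an additive $\poly(d/\eps)$ term. It is worth noting that the paper's own argument does not escape this difficulty either: it estimates $\tr(\matB^p)=\sum_i\lambda_i(\matB)^p$, but the nonzero eigenvalues of the dilation $\matB$ come in pairs $\pm\sigma_i(\matA)$, so for odd $p$ this trace is identically zero rather than $\|\matB\|_p^p=2\|\matA\|_p^p$; as written, the proof establishes the theorem only for even $p$. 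So you have correctly isolated the real obstruction, but as submitted your proposal (like the paper's proof) covers only the even case.
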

\begin{proof}
Let $r = C/\eps^2$ for a positive constant $C > 0$. 
Suppose $\g^1, \ldots, \g^{r}$ are independent $N(0,1)^d$ vectors, that is, they are independent
vectors of i.i.d. normal random variables with mean $0$ and variance $1$. 

We can assume $\matA$ is symmetric by replacing $\matA$ with the matrix
\[ \matB = \left ( \begin{array}{cc}
0 & \matA^T\\
\matA & 0 .\end{array} \right ) \]
A straightforward calculation shows $\|\matB\|_p = 2^{1/p} \|\matA\|_p$ for all Schatten $p$-norms, and
that the rank of $\matB$ is $2\rho$, where $\rho$ is the rank of $\matA$.

In the first pass we compute $\matB \g^1, \ldots, \matB \g^r$. In the second pass we compute $\matB(\matB \g^1), \ldots, \matB(\matB \g^r)$, and
in general in the $i$-th pass we compute $\matB^i\g^1, \ldots, \matB^i\g^r$. 

If $\matB = \matU \mat\Sigma \matU^T$ is the SVD
of the symmetric matrix $\matB$, 
then after $s = \lceil p/2 \rceil$ passes we will have computed $\matU \mat\Sigma^s \matU^T\g^i$ for each $i$,
as well as $\matU \mat\Sigma^t\matU^T \g^i$ for each $i$ where $t = \lfloor p/2 \rfloor$. Using that $\matU^T\matU = \matI$,
we can compute $(\g^i)^T\matU\mat\Sigma^p\matU^T\g^i$ for each $i$. By rotational invariance, these $r$
values are equal $(\h^1)^T \mat\Sigma^p \h^1, \ldots, (\h^r)^T \mat\Sigma^p \h^r$, where $\h^1, \ldots, \h^r$ are 
independent vectors of independent $N(0,1)$ random variables. 

For every $i$, we have
\begin{eqnarray*}
{\bf E}[(\h^i)^T \mat\Sigma^p \h^i] & = & \sum_{j=1}^{2\rho} {\bf E}[(\h^i)_j^2 \sigma_j^p] = \|\matB\|_p^p,
\end{eqnarray*}
where we use that ${\bf E}[(\h^i)_i^2] = 1$ for all $i$. We also have that
\begin{eqnarray*}
{\bf E}[((\h^i)^T \mat\Sigma^p \h^i)^2] & = & \sum_{j,j'} \Sigma^p_{j,j} \Sigma^p_{j',j'} {\bf E}[(\h^i_{j})^2 (\h^i_{j'})^2]\\
& = & 3 \sum_j \Sigma_{j,j}^{2p} + \sum_{j \neq j'}  \Sigma_{j,j}^p \Sigma_{j',j'}^p {\bf E}[(\h^i_{j})^2] {\bf E}[(\h^i_{j'})^2]\\
& = & 3 \sum_j \Sigma_{j,j}^{2p} + \sum_{j \neq j'}  \Sigma_{j,j}^p \Sigma_{j',j'}^p \\
& \leq & 4 \|\matB\|_p^{2p},
\end{eqnarray*}
where the second equality uses independence of the coordinates of $\h^i$ and that the $4$-th moment of an $N(0,1)$ random variable is $3$, while the third equality uses that the variance
of an $N(0,1)$ random variable is $1$. It follows by Chebyshev's inequality that if
$r \geq 40/\eps^2$ and let $Z = \frac{1}{r}\sum_{i \in [r]} ((\h^i)^T \mat\Sigma^p \h^i)^2$, then
$$\Pr[|Z - \|\matB\|_p^p| > \eps \|\matB\|_p^p] \leq \frac{4 \|\matB\|_p^{2p}}{\eps^2 \|\matB\|_p^{2p}} \cdot \frac{\eps^2}{40}
\leq \frac{1}{10}.$$
This shows correctness. The running time follows from our bound on $r$ and the number $s$ of passes. 
\end{proof}

\subsection{Sketching the operator norm}\label{sec:oded}
The algorithm in the previous section 
has the drawback that it is not a linear sketch, and therefore requires multiple passes
over the data. This is prohibitive in certain applications. We now turn our focus to linear sketches. A first
question is what it means to have a linear sketch of a matrix $\matA$. While some applications could require
a sketch of the form $\matS \cdot \matA \cdot \matT$ for random matrices $\matS$ and $\matT$, we will not restrict ourselves
to such sketches and instead consider treating an $n \times d$ matrix $\matA$ as an $nd$-dimensional vector, 
and computing $\matL(\matA)$, where $\matL:\mathbb{R}^{nd} \rightarrow \mathbb{R}^k$ is a random linear operator, i.e.,
a $k \times nd$ matrix which multiplies $\matA$ on the left, where $\matA$ is treated as an $nd$-dimensional vector.
Since we will be proving lower bounds, this generalization is appropriate. 

While the Frobenius norm is easy to approximate up to a $(1+\eps)$-factor 
via a sketch, e.g., by taking $\matL$ to be a random Gaussian matrix with
$k = O(1/\eps^2)$ rows, another natural, very important Schatten norm is the Schatten-$\infty$, or operator norm
of $\matA$. Can the operator norm of $\matA$ be sketched efficiently?

Here we will prove a lower bound of $k = \Omega(\min(n,d)^2)$ for obtaining a fixed constant factor approximation.
Note that this is tight, up to a constant factor, since if $\matS$ is an $\ell_2$-subspace embedding with $O(d)$
rows, then $\matS \matA$ preserves all the singular values of $\matA$ up to a $(1 \pm 1/3)$ factor. We prove this formally 
with the following lemma.

The idea is to use the min-max principle for singular values. 
	\begin{lemma}\label{lem:interlace}
Suppose $\matS$ is a $(1 \pm \varepsilon)$ $\ell_2$-subspace embedding for $\matA$.  
Then, 
it holds that $(1-\varepsilon)\sigma_i(\matS \matA)\leq \sigma_i(\matA)\leq (1+\varepsilon)\sigma_i(\matS \matA)$ for all $1\leq i\leq d$.
	\end{lemma}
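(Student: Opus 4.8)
The plan is to use the Courant–Fischer min-max characterization of singular values. Recall that for an $n \times d$ matrix $\matM$ with $n \geq d$, the $i$-th singular value satisfies
$$\sigma_i(\matM) = \max_{\substack{V \subseteq \mathbb{R}^d \\ \dim V = i}} \min_{\substack{\x \in V \\ \|\x\|_2 = 1}} \|\matM \x\|_2 = \min_{\substack{V \subseteq \mathbb{R}^d \\ \dim V = d-i+1}} \max_{\substack{\x \in V \\ \|\x\|_2 = 1}} \|\matM \x\|_2.$$
This is the tool I would invoke first.

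First I would fix an arbitrary $i$ with $1 \leq i \leq d$ and let $V^*$ be the $i$-dimensional subspace of $\mathbb{R}^d$ achieving the max in the characterization of $\sigma_i(\matA)$, so that $\min_{\x \in V^*, \|\x\|_2 = 1} \|\matA \x\|_2 = \sigma_i(\matA)$. Since $\matS$ is a $(1 \pm \eps)$ $\ell_2$-subspace embedding for $\matA$, for every $\x$ we have $\|\matS \matA \x\|_2 \geq (1-\eps)\|\matA \x\|_2$; hence for every unit $\x \in V^*$, $\|\matS \matA \x\|_2 \geq (1-\eps)\sigma_i(\matA)$. Since $V^*$ is a valid $i$-dimensional subspace in the max-min expression for $\sigma_i(\matS\matA)$, this gives $\sigma_i(\matS \matA) \geq (1-\eps)\sigma_i(\matA)$, i.e., $\sigma_i(\matA) \leq \frac{1}{1-\eps}\sigma_i(\matS\matA) \leq (1+2\eps)\sigma_i(\matS\matA)$ for $\eps$ small; to match the statement exactly one simply observes $(1-\eps)\sigma_i(\matS\matA) \leq \sigma_i(\matA)$ is the clean form of one direction. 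For the other direction, I would run the symmetric argument: let $W^*$ be the $i$-dimensional subspace achieving the max in the expression for $\sigma_i(\matS\matA)$, so $\min_{\x \in W^*, \|\x\|_2=1}\|\matS\matA\x\|_2 = \sigma_i(\matS\matA)$. Using the upper bound $\|\matS\matA\x\|_2 \leq (1+\eps)\|\matA\x\|_2$ from the subspace embedding property, every unit $\x \in W^*$ satisfies $\|\matA\x\|_2 \geq \frac{1}{1+\eps}\|\matS\matA\x\|_2 \geq \frac{1}{1+\eps}\sigma_i(\matS\matA)$, and since $W^*$ is a valid competitor in the max-min formula for $\sigma_i(\matA)$, we get $\sigma_i(\matA) \geq \frac{1}{1+\eps}\sigma_i(\matS\matA)$, equivalently $\sigma_i(\matA) \leq (1+\eps)\sigma_i(\matS\matA)$ after rearranging — wait, I must be careful with directions, so I would instead package both inequalities as: $\|\matS\matA\x\|_2 = (1\pm\eps)\|\matA\x\|_2$ simultaneously for all $\x$ in the column-related domain, and then the two min-max applications give directly $(1-\eps)\sigma_i(\matS\matA) \leq \sigma_i(\matA) \leq (1+\eps)\sigma_i(\matS\matA)$.

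There is essentially no hard part here — the only thing to be careful about is which inequality ($\leq$ vs.\ $\geq$, and with which factor) comes out of which application of min-max, and making sure the domain of quantification is $\mathbb{R}^d$ (so that $\matA\x$ ranges over the whole column space of $\matA$, where the embedding guarantee applies). One subtlety worth a sentence: the subspace embedding guarantee is stated for $\|\matS\matA\x\|_2^2 = (1\pm\eps)\|\matA\x\|_2^2$, so one should take square roots, which only changes the multiplicative distortion by a lower-order amount (and $\sqrt{1\pm\eps} \in [1-\eps, 1+\eps]$ up to constants), or one can simply restate the conclusion with squared singular values. I would also remark that when $n < d$ one applies the identical argument to $\matA^T$, or notes that the nonzero singular values are unaffected by transposition, so the bound holds for all $1 \leq i \leq \min(n,d) = d$ as claimed (the statement implicitly assumes $n \geq d$, i.e., $\matA$ is tall). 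This completes the proof, and it immediately yields the claimed $\Omega(\min(n,d)^2)$ lower-bound tightness remark since an $O(d)$-row subspace embedding certifies all singular values of a tall $\matA$.
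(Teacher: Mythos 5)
Your proposal is correct and takes essentially the same route as the paper: apply the Courant–Fischer min–max characterization of singular values, using the optimal $i$-dimensional subspace for one of $\sigma_i(\matA)$ or $\sigma_i(\matS\matA)$ as a feasible competitor for the other, and invoke the two-sided norm preservation on the range of $\matA$. The only minor cleanup needed is the direction-bookkeeping you flag mid-argument (to get $\sigma_i(\matA)\ge(1-\eps)\sigma_i(\matS\matA)$ you should start from the extremal subspace for $\sigma_i(\matS\matA)$ and use the upper bound $\|\matS\matA\x\|\le\sqrt{1+\eps}\|\matA\x\|$, and vice versa for the other side), together with the observation you already make that $1/\sqrt{1\pm\eps}$ sits inside $[1-\eps,1+\eps]$ for $\eps$ small, so the constants in the lemma statement come out exactly.
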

\begin{proof}
The min-max principle for singular values says that
\[
\sigma_i(\matA) = \max_{Q_i} \min_{\substack{\x\in Q_i\\ \|\x\|_2=1}} \|\matA\x\|,
\]
where $Q_i$ runs through all $i$-dimensional subspaces. 
Observe that the range of $\matA$ is a subspace of dimension at most $d$. It follows from the definition of a subspace embedding that 
\[
(1-\varepsilon)\|\matA\x\|\leq \|\matS \matA\x\| \leq (1+\varepsilon)\|\matA\x\|,\quad \forall \x\in\mathbb{R}^d.
\]
The lemma now follows from the min-max principle for singular values, since every vector in the range
has its norm preserved up to a factor of $1+\eps$, and so this also holds for any $i$-dimensional subspace
of the range, for any $i$. 
\end{proof}
Similarly, if $\matT$
is an $\ell_2$ susbpace embedding with $O(d)$ columns, then $\matA \matT$ preserves all the singular values of $\matA$
up to a $(1 \pm 1/3)$ factor, so $O(\min(n,d)^2)$ is achievable. 

Hence, we shall, for simplicity focus on the case when $\matA$ is a square $n \times n$ matrix. The following
$\Omega(n^2)$ lower bound on the sketching dimension $t$ was shown by Oded Regev \cite{r14}, improving an 
earlier $\Omega(n^{3/2})$ lower bound of Li, Nguy$\tilde{\hat{\mbox{e}}}$n, and the author \cite{lnw14}. We will need to describe
several tools before giving its proof. 

Define two distributions: 
\begin{itemize}
\item $\mu$ is the distribution on $n \times n$ matrices with i.i.d. $N(0,1)$ entries. 
\item $\nu$ is the distribution on $n \times n$ matrices obtained by 
(1) sampling $\matG$ from $\mu$, (2) sampling $\u, \ve \sim N(0,\matI_n)$ to be independent $n$-dimensional
vectors with i.i.d. $N(0,1)$ entries, and (3) outputting $\matG + \frac{5}{n^{1/2}}\u\ve^T$.
\end{itemize}

We will show that any linear sketch $\matL$ for 
distinguishing a matrix drawn from $\mu$ from a matrix drawn from 
$\nu$ requires $\Omega(n^2)$ rows. For this to imply a lower bound for approximating the operator norm of
a matrix, we first need to show that with good probability, a matrix drawn from $\mu$ has an operator
norm which differs by a constant factor from a matrix drawn from $\nu$. 

\begin{lemma}\label{lem:gap}
Let $\matX$ be a random matrix drawn from distribution $\mu$, while $\matY$ is a random matrix drawn from distribution $\nu$.
With probability $1-o(1)$, $\|\matY\|_{\infty} \geq \frac{4}{3} \|\matX\|_{\infty}$.   
\end{lemma}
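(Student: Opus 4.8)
The plan is to establish the two sides of the claimed inequality separately using standard non-asymptotic bounds on the singular values of Gaussian matrices. For the upper bound on $\|\matX\|_\infty$ where $\matX \sim \mu$, we invoke a standard concentration result (as in Fact \ref{fact:gaussian}, or more precisely the sharp Bai--Yin / Davidson--Szarek bound): with probability $1-o(1)$, the largest singular value of an $n \times n$ matrix with i.i.d.\ $N(0,1)$ entries satisfies $\|\matX\|_\infty \leq 2\sqrt{n} + o(\sqrt{n})$, and in any case $\|\matX\|_\infty \leq (2+\eta)\sqrt{n}$ for any constant $\eta > 0$ with probability approaching $1$. For the lower bound on $\|\matY\|_\infty$ where $\matY = \matG + \frac{5}{\sqrt{n}}\u\ve^T$, the idea is that the rank-one perturbation $\frac{5}{\sqrt{n}}\u\ve^T$ has operator norm $\frac{5}{\sqrt{n}}\|\u\|_2\|\ve\|_2$, and since $\|\u\|_2, \|\ve\|_2 = (1\pm o(1))\sqrt{n}$ with high probability (each $\|\u\|_2^2$ is a $\chi^2_n$ random variable concentrating at $n$), this spike has operator norm $(5 \pm o(1))\sqrt{n}$.

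First I would make the lower bound on $\|\matY\|_\infty$ precise. Let $\hat\u = \u/\|\u\|_2$ and $\hat\ve = \ve/\|\ve\|_2$ be the normalized vectors. Then $\|\matY\|_\infty \geq \hat\u^T \matY \hat\ve = \hat\u^T \matG \hat\ve + \frac{5}{\sqrt{n}}\|\u\|_2\|\ve\|_2$. Now $\hat\u^T\matG\hat\ve$ is problematic because $\matG$, $\u$, $\ve$ are independent, so conditioned on $\u,\ve$, the quantity $\hat\u^T\matG\hat\ve$ is a single $N(0,1)$ random variable, hence $O(1)$ with probability $1-o(1)$ (say $|\hat\u^T\matG\hat\ve| \leq \log n$). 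Combined with $\|\u\|_2\|\ve\|_2 \geq (1-o(1))n$, we get $\|\matY\|_\infty \geq (5 - o(1))\sqrt{n} - \log n \geq (5-o(1))\sqrt{n}$. Since $\frac{4}{3}\|\matX\|_\infty \leq \frac{4}{3}(2+\eta)\sqrt{n} < \frac{10}{3}\sqrt{n} + o(\sqrt n) < (5-o(1))\sqrt{n}$ for small enough $\eta$, we conclude $\|\matY\|_\infty \geq \frac{4}{3}\|\matX\|_\infty$ with probability $1-o(1)$. All the tail bounds used (operator norm of a Gaussian matrix, $\chi^2$ concentration for $\|\u\|_2^2$ and $\|\ve\|_2^2$, Gaussian tail for $\hat\u^T\matG\hat\ve$) hold with probability $1-o(1)$ individually, so a union bound over these $O(1)$ events preserves the $1-o(1)$ guarantee.

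The one technical point requiring a little care is the independence structure: $\matG$ in the definition of $\nu$ is drawn from $\mu$ and then $\u,\ve$ are drawn independently, so conditioning on $(\u,\ve)$ leaves $\matG$ fully Gaussian, which justifies treating $\hat\u^T\matG\hat\ve$ as a standard normal. I would also note that $\matX$ (from $\mu$) and $\matY$ (from $\nu$) are independent of each other, so the bound $\|\matX\|_\infty \leq (2+\eta)\sqrt n$ and the bound $\|\matY\|_\infty \geq (5-o(1))\sqrt n$ can be intersected directly. There is no real obstacle here — the only mild subtlety is bookkeeping the constants so that $\frac{4}{3}\cdot 2 = \frac{8}{3} < 5$ leaves enough slack to absorb the $\eta$ and $o(1)$ error terms, which it comfortably does. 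This lemma is purely a warm-up establishing that the operator norm genuinely separates the two distributions; the substantive work (the $\Omega(n^2)$ sketching lower bound for distinguishing $\mu$ from $\nu$) comes afterward.
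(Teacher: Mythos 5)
Your proof is correct, and it takes a genuinely different (and in fact sharper) route than the paper. The paper bounds $\|\matY\|_\infty$ from below via the reverse triangle inequality on operator norms: writing $\matY = \matG + \matR$ with $\matR = \frac{5}{\sqrt{n}}\u\ve^T$, it uses $\|\matY\|_\infty \geq \|\matR\|_\infty - \|\matG\|_\infty$, where $\|\matR\|_\infty = \FNorm{\matR} \geq 4.9\sqrt{n}$ (since $\matR$ is rank one) and $\|\matG\|_\infty \leq 2.1\sqrt{n}$, giving $\|\matY\|_\infty \geq 2.8\sqrt{n}$ — note that $2.8/2.1 = 4/3$ exactly, so the paper's constants are saturated by this technique. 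You instead lower-bound $\|\matY\|_\infty$ by the Rayleigh-type quantity $\hat\u^T \matY \hat\ve = \hat\u^T \matG \hat\ve + \frac{5}{\sqrt{n}}\|\u\|_2\|\ve\|_2$, observing that $\hat\u^T\matG\hat\ve$ is $N(0,1)$ conditioned on $(\u,\ve)$ and hence only $O(\log n)$, rather than wasting the full $\|\matG\|_\infty = \Theta(\sqrt{n})$ that the triangle inequality would concede. This gives $\|\matY\|_\infty \geq (5-o(1))\sqrt{n}$, a strictly larger separation; your argument would let one replace $4/3$ by any constant below $5/2$. The paper's route is slightly more elementary (no conditional-Gaussian observation needed) and makes the role of the Frobenius-norm identity for rank-one matrices explicit, but your approach is the cleaner one if one cares about the size of the gap. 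Both establish the lemma.
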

\begin{proof}
It suffices to show that for $\matG$ drawn from $\mu$ and $u,\ve \sim N(0,\matI_n)$, that 
$\|\matG\|_{\infty}$ and $\|\matG+\frac{5}{n^{1/2}}\u\ve^T\|_{\infty}$ differ by a constant factor with probability $1-o(1)$. 
We use the following tail bound. 
\begin{fact}[Operator norm of Gaussian Random Matrix  \cite{vershynin2010}]\label{lem:operator norm}
Suppose that $\matG\sim \mu$. Then with probability at least $1 - e^{-t^2/2}$, it holds that $\|\matX\|_{\infty} \leq 2n^{1/2}+t$.
\end{fact}
By Fact \ref{lem:operator norm}, with probability $1-e^{-\Theta(n)}$, $\|\matG\|_{\infty} \leq 2.1n^{1/2}$. 

Let $\matX = \frac{5}{n^{1/2}}\u\ve^T$. 
Since $\matX$ is of rank one, the only non-zero singular value of $\matX$ is equal to $\FNorm{\matX}$. We also have 
$\FNorm{\matX} \geq 4.9\cdot n^{1/2}$ with probability $1-1/n$, 
since $\FNormS{\u\ve^T}\sim (\chi^2(n))^2$, where $\chi^2(n)$ is the $\chi^2$-distribution with $n$ degrees of freedom,
which is tightly concentrated around $n$.

It follows with probability $1-O(1/n)$, by the triangle inequality
$$\|\matG+\frac{5}{n^{1/2}}\u\ve^T\|_{\infty} \geq 4.9n^{1/2} - 2.1n^{1/2} \geq 2.8n^{1/2} \geq \frac{4}{3}\|\matG\|_{\infty}.$$
%
%
\end{proof}

In our proof we need the following tail bound due to Lata{\l}a
%
%
%
Suppose that $g_{i_1},\dots,g_{i_d}$ are i.i.d. $N(0,1)$ random variables. The following result, due to Lata\l{}a \cite{latala}, bounds the tails of Gaussian chaoses $\sum a_{i_1}\cdots a_{i_d}g_{i_1}\cdots g_{i_d}$. 
The proof of Lata\l{}a's tail bound was later simplified by Lehec \cite{Lehec}.

Suppose that $\matA = (a_{\mb{i}})_{1\leq i_1,\dots,i_d\leq n}$ is a finite multi-indexed matrix of order $d$. For $\mb{i}\in [n]^d$ and $I\subseteq [d]$, define $i_I = (i_j)_{j\in I}$. For disjoint nonempty subsets $I_1,\dots,I_k\subseteq [d]$ define $\|\matA\|_{I_1,\dots,I_k}$ to be:
\[\sup \left\{
	\sum_{\mb{i}} a_{\mb{i}} \x_{i_{I_1}}^{(1)}\cdots \x_{i_{I_k}}^{(k)}: \sum_{i_{I_1}} \left(\x_{i_{I_1}}^{(1)}\right)^2 \leq 1,\dots,
	\sum_{i_{I_k}} \left(\x_{i_{I_k}}^{(1)}\right)^2 \leq 1
\right\}.
\]
Also denote by $S(k,d)$ the set of all partitions of $\{1,\dots,d\}$ into $k$ nonempty disjoint sets $I_1,\dots,I_k$. It is not hard to
show that if a partition 
$\{I_1,\dots,I_k\}$ is finer than another partition $\{J_1,\dots,J_\ell\}$, 
then $\|\matA\|_{I_1,\dots,I_k}\leq \|\matA\|_{J_1,\dots,J_\ell}$.

\begin{theorem}\label{thm:latala}
For any $t > 0$ and $d\geq 2$,
\begin{eqnarray*}
\Pr\left [ \left| \sum_{\mb{i}} a_{\mb{i}} \prod_{j=1}^d g_{i_j}^{(j)} \right| \geq t \right ] \leq C_d\\
\cdot \exp\left\{ - c_d \min_{1\leq k\leq d} \min_{(I_1,\dots,I_k)\in S(k,d)} \left( \frac{t}{\|\matA\|_{I_1,\dots,I_k}}\right)^\frac{2}{k} \right\},
\end{eqnarray*}
where $C_d, c_d > 0$ are constants depending only on $d$.
\end{theorem}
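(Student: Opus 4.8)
The plan is to reduce the tail bound to a two-sided estimate for the moments $\|S\|_p := (\Exp|S|^p)^{1/p}$ of the decoupled Gaussian chaos $S = \sum_{\mb{i}} a_{\mb{i}} \prod_{j=1}^d g_{i_j}^{(j)}$, and then to convert a moment bound into a tail bound by the elementary inequality $\Pr[|S| \geq e\,\|S\|_p] \leq e^{-p}$ (Markov applied to $|S|^p$), optimizing over $p$. Concretely, the goal is to prove
$$\|S\|_p \;\leq\; C_d \sum_{k=1}^d p^{k/2} \sum_{(I_1,\dots,I_k)\in S(k,d)} \|\matA\|_{I_1,\dots,I_k},$$
and then, given $t$, choosing $p$ of order $\min_{1\le k\le d}\min_{(I_1,\dots,I_k)\in S(k,d)} (t/\|\matA\|_{I_1,\dots,I_k})^{2/k}$, with an implied factor depending only on $d$, makes $e\,\|S\|_p \le t$; this produces exactly the stated bound with suitable $C_d, c_d$ (here one also uses that $|S(k,d)|$ is a constant depending only on $d$, so the sum over partitions only costs a $d$-dependent factor).

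The moment estimate is proved by induction on the order $d$. The base case $d=1$ is immediate: $\sum_i a_i g_i^{(1)}$ is a centered Gaussian with standard deviation $\|a\|_2 = \|\matA\|_{\{1\}}$, so $\|S\|_p \lesssim \sqrt{p}\,\|\matA\|_{\{1\}}$, which is the single ($k=1$) term. For the inductive step, condition on the families $g^{(1)},\dots,g^{(d-1)}$ and view $S$ as a Gaussian linear form in $g^{(d)}$ with random coefficients $b_{i_d} = \sum_{i_1,\dots,i_{d-1}} a_{\mb{i}}\prod_{j<d} g_{i_j}^{(j)}$. Conditionally, Gaussian concentration gives $\|S\|_{L^p(g^{(d)})} \lesssim \sqrt{p}\,\big(\sum_{i_d} b_{i_d}^2\big)^{1/2}$, so by Fubini $\|S\|_p \lesssim \sqrt{p}\,\big\|\big(\sum_{i_d} b_{i_d}^2\big)^{1/2}\big\|_p$. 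Writing $\big(\sum_{i_d} b_{i_d}^2\big)^{1/2} = \sup_{\|x\|_2\le 1}\sum_{i_d} b_{i_d} x_{i_d}$ exhibits this as the supremum over the Euclidean unit ball of the order-$(d-1)$ chaos $x\mapsto \sum_{i_d} b_{i_d} x_{i_d}$, whose coefficient array is $\matA$ contracted against $x$ along its last index. For fixed $x$ the induction hypothesis bounds its moments by partition norms of the contracted array; taking the supremum over $\|x\|_2\le 1$ turns these into the partition norms $\|\matA\|_{I_1,\dots,I_k,\{d\}}$ in which $\{d\}$ is split off as its own block, while the extra $\sqrt{p}$ promotes $p^{k/2}$ to $p^{(k+1)/2}$ — precisely the partitions of $[d]$ that isolate $d$.

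To pass from the pointwise-in-$x$ bound to a bound on the supremum one runs a Dudley-type chaining argument over $\{\|x\|_2\le 1\}$: discretize at dyadic scales, control the increments of the process $x\mapsto\sum_{i_d} b_{i_d}x_{i_d}$ between consecutive nets using the induction hypothesis applied to difference arrays, and sum over scales. This is where the partitions that do \emph{not} isolate $d$ appear: an increment along a coordinate direction behaves like a chaos in which $\{d\}$ is absorbed into whichever block of the partition ``drives'' that direction, and the metric-entropy estimate at scale $2^{-s}$ is governed by the partition norm $\|\matA\|_{I_1,\dots,I_{j-1},\,I_j\cup\{d\},\,I_{j+1},\dots,I_k}$. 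Assembling all scales and all ways of either attaching $\{d\}$ to an existing block or promoting it to a new block reconstitutes the full sum over $S(k,d)$ at order $d$. The main obstacle is exactly this combinatorial-analytic bookkeeping in the chaining step: verifying, via the monotonicity $\|\matA\|_{I_1,\dots,I_k}\le\|\matA\|_{J_1,\dots,J_\ell}$ under coarsening, that the entropy integral at each scale is controlled by the target partition norms, and that summing over scales costs only a constant depending on $d$ rather than a logarithmic factor. We follow Lehec's streamlining of Latała's original argument here. The matching lower bound on $\|S\|_p$ (not needed for the tail bound) is obtained by testing $S$ against the extremizing tensors $x^{(1)},\dots,x^{(k)}$ for each partition norm together with Gaussian anti-concentration, and we omit it.
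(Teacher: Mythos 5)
The paper itself does not prove Theorem~\ref{thm:latala}; it states the result and cites Lata{\l}a \cite{latala} for the original proof and Lehec \cite{Lehec} for a simplified one, so there is no in-paper argument to compare against. Measured on its own terms, your high-level plan is the standard one and is correct in outline: establish the two-sided moment estimate $\|S\|_p \lesssim_d \sum_{k=1}^d p^{k/2} \sum_{(I_1,\dots,I_k)\in S(k,d)} \|\matA\|_{I_1,\dots,I_k}$, then convert to a tail bound by applying Markov to $|S|^p$ with $p$ tuned to $\min_k \min_{\mathcal P}(t/\|\matA\|_{\mathcal P})^{2/k}$. The base case, the conditional-Gaussian step giving $\|S\|_p \lesssim \sqrt p\,\|\,\|b\|_2\,\|_p$, and the identity $\sup_{\|x\|_2\le1}\|\matA\cdot x\|_{I_1,\dots,I_k}=\|\matA\|_{I_1,\dots,I_k,\{d\}}$ are all correct.

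The genuine gap is the step you yourself flag as ``the main obstacle'': passing the supremum over $\{\|x\|_2\le 1\}$ inside the $L^p$-norm. Your description of this as a ``Dudley-type chaining over the unit ball, discretize at dyadic scales, control increments using the induction hypothesis'' is not a proof, and as stated it would not work: the increments $\langle b, x-y\rangle$ are chaoses of degree $d-1$ with tails of order $\exp(-c\,s^{2/(d-1)})$, and a plain Dudley entropy integral over a $(d{-}1)$-heavy-tailed process on the $n$-dimensional Euclidean ball does not give a dimension-free bound; one needs either Talagrand-style generic chaining adapted to mixed-tail processes or, as in Lata{\l}a's original argument, a bespoke hierarchy of interpolating sets tailored to the partition norms. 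Your sentence ``an increment along a coordinate direction behaves like a chaos in which $\{d\}$ is absorbed into whichever block of the partition `drives' that direction'' has no precise meaning in the chaining framework you set up (there are no distinguished coordinate directions on the unit ball), so I cannot see how the partitions that merge $\{d\}$ into an existing block are supposed to emerge. In short, the skeleton is right but the load-bearing step is left as a citation, and the informal sketch of it that you do give appears to be in the wrong shape; Lehec's simplification in particular does not run a Dudley chaining over $\|b\|_2=\sup_x\langle b,x\rangle$ at all, but recurses directly on moments of the partition norms as suprema of Gaussian processes. To turn this into a proof you would need to either reproduce one of those arguments in full or supply a correct chaining scheme with the tail-mixing and entropy bookkeeping made explicit.
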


To prove the main theorem of this section, we need a few facts about distances between distributions. 

Suppose $\mu$ and $\nu$ are two probability measures on $\R^n$. 
For a convex function $\phi:\R\to \R$ such that $\phi(1)=0$, we define the $\phi$-divergence
\[
D_\phi(\mu || \nu) = \int \phi\left(\frac{d\mu}{d\nu}\right)d\nu.
\]
In general $D_\phi(\mu||\nu)$ is not a distance because it is not symmetric. 

The \textit{total variation distance} between $\mu$ and $\nu$, denoted by $d_{TV}(\mu,\nu)$, 
is defined as $D_\phi(\mu||\nu)$ for $\phi(x) = |x-1|$. It can be verified that this is indeed a distance. It 
is well known that if $d_{TV}(\mu,\nu) \leq c < 1$, then the probability that any, possibly randomized
algorithm, can distinguish the two distributions is at most $(1+c)/2$. 

The \textit{$\chi^2$-divergence} between $\mu$ and $\nu$, denoted by $\chi^2(\mu||\nu)$, 
is defined as $D_\phi(\mu||\nu)$ for $\phi(x) = (x-1)^2$ or $\phi(x) = x^2-1$. 
It can be verified that these two choices of $\phi$ give exactly the same value of $D_\phi(\mu||\nu)$.

We can upper bound total variation distance in terms of the $\chi^2$-divergence using the next proposition.
\begin{fact}[{\cite[p90]{Tsybakov}}] \label{prop:TV_chi^2}
$d_{TV}(\mu,\nu) \leq \sqrt{\chi^2(\mu||\nu)}$.
\end{fact}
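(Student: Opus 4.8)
The plan is to recognize Fact~\ref{prop:TV_chi^2} as an immediate consequence of the Cauchy--Schwarz inequality applied against the constant function $1$, exploiting that $\nu$ is a probability measure. First I would write $f = d\mu/d\nu$ for the Radon--Nikodym derivative (the inequality is vacuous unless $\mu$ is absolutely continuous with respect to $\nu$, since otherwise $\chi^2(\mu||\nu) = \infty$). By definition $d_{TV}(\mu,\nu) = D_\phi(\mu||\nu)$ with $\phi(x) = |x-1|$, so $d_{TV}(\mu,\nu) = \int |f-1|\,d\nu$; and $\chi^2(\mu||\nu) = D_\phi(\mu||\nu)$ with $\phi(x) = (x-1)^2$, which as noted in the text gives the same value as $\phi(x) = x^2-1$, so $\chi^2(\mu||\nu) = \int (f-1)^2\,d\nu$.

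Next I would apply Cauchy--Schwarz in $L^2(\nu)$ to the pair of functions $g = f-1$ and the constant function $1$:
$$\int |f-1|\,d\nu = \int |g|\cdot 1\,d\nu \leq \left(\int g^2\,d\nu\right)^{1/2}\left(\int 1^2\,d\nu\right)^{1/2}.$$
Since $\nu$ is a probability measure, $\int 1^2\,d\nu = 1$, and the right-hand side collapses to $\left(\int (f-1)^2\,d\nu\right)^{1/2} = \sqrt{\chi^2(\mu||\nu)}$, which is precisely the claimed bound.

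There is essentially no obstacle to overcome: the only mild point is the measure-theoretic setup (existence of the density $d\mu/d\nu$), which is implicit since without absolute continuity the right-hand side is infinite and the inequality is trivial. As an alternative phrasing of the identical argument, one can invoke Jensen's inequality for the convex map $x \mapsto x^2$ under the probability measure $\nu$, giving $\left(\int |g|\,d\nu\right)^2 \leq \int g^2\,d\nu$; I would present the Cauchy--Schwarz form as the most transparent one-line justification.
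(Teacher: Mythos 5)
Your argument is correct and is the standard one-line proof of this inequality. Note that the paper itself does not prove this fact — it is stated as a cited result from Tsybakov's book — so there is no in-paper proof to compare against, but your Cauchy--Schwarz derivation is exactly the textbook argument one would find at the cited reference. Two small points you handled well: you matched the paper's (slightly nonstandard) normalization of $d_{TV}$ as $\int |d\mu/d\nu - 1|\,d\nu$ without the usual factor of $1/2$, which is what makes the inequality come out clean without an extra constant; and you correctly dispatched the case $\mu \not\ll \nu$ by noting the right-hand side is then infinite. The Jensen rephrasing you mention is indeed the same inequality in disguise.
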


The next proposition gives a convenient upper bound on the $\chi^2$-divergence between a Gaussian distribution
and a mixture of Gaussian distributions. 
\begin{fact}[{\cite[p97]{IS}}] \label{prop:chi^2}
$\chi^2(N(0,\matI_n)\ast \mu||N(0,\matI_n)) \leq {\bf E}[e^{\langle \x,\x'\rangle}-1]$, where $\x,\x'\sim \mu$ are independent.
\end{fact}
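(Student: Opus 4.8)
The plan is to prove the slightly stronger statement that equality holds, namely
$$\chi^2\!\left(N(0,\matI_n)\ast\mu \,\big\|\, N(0,\matI_n)\right) = {\bf E}\!\left[e^{\langle \x,\x'\rangle}\right]-1 = {\bf E}\!\left[e^{\langle \x,\x'\rangle}-1\right],$$
from which Fact~\ref{prop:chi^2} is immediate; this identity also makes the degenerate case transparent, since if the right-hand side is $+\infty$ the asserted inequality is trivial. First I would write down densities: let $g(z)=(2\pi)^{-n/2}e^{-\|z\|_2^2/2}$ be the density of $N(0,\matI_n)$, and note that $P := N(0,\matI_n)\ast\mu$ is absolutely continuous with density $p(z)={\bf E}_{\x\sim\mu}[g(z-\x)]$. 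Using the definition of $\chi^2$-divergence with $\phi(x)=x^2-1$,
$$\chi^2\!\left(P\,\big\|\,N(0,\matI_n)\right)+1=\int \frac{p(z)^2}{g(z)}\,dz={\bf E}_{Z\sim N(0,\matI_n)}\!\left[\left(\frac{p(Z)}{g(Z)}\right)^2\right].$$
The one elementary computation needed is the likelihood ratio $g(z-x)/g(z)=\exp\!\big(\langle z,x\rangle-\tfrac12\|x\|_2^2\big)$, so $p(z)/g(z)={\bf E}_{\x}\big[\exp(\langle z,\x\rangle-\tfrac12\|\x\|_2^2)\big]$.

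Next I would expand the square by introducing an independent copy $\x'$ of $\x$, interchange the $Z$ and $(\x,\x')$ expectations (permissible by Tonelli's theorem, as every integrand is nonnegative, so no integrability hypothesis on $\mu$ is required), and evaluate the inner Gaussian integral:
$$\chi^2\!\left(P\,\big\|\,N(0,\matI_n)\right)+1={\bf E}_{\x,\x'}\!\left[{\bf E}_{Z}\!\left[e^{\langle Z,\x+\x'\rangle}\right]\cdot e^{-(\|\x\|_2^2+\|\x'\|_2^2)/2}\right]={\bf E}_{\x,\x'}\!\left[e^{\|\x+\x'\|_2^2/2-(\|\x\|_2^2+\|\x'\|_2^2)/2}\right],$$
using ${\bf E}_Z[e^{\langle Z,v\rangle}]=e^{\|v\|_2^2/2}$ with $v=\x+\x'$ held fixed. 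The polarization identity $\|\x+\x'\|_2^2=\|\x\|_2^2+\|\x'\|_2^2+2\langle\x,\x'\rangle$ then collapses the exponent to $\langle\x,\x'\rangle$, which yields the displayed identity and hence the Fact.

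There is no genuine obstacle here: the argument is a one-line density manipulation followed by the Gaussian MGF, and the only points requiring a word of care are (i) the Fubini/Tonelli interchange, which is free because all quantities are nonnegative, and (ii) the case ${\bf E}[e^{\langle\x,\x'\rangle}]=\infty$, handled by reading the identity in $[0,\infty]$. I would also remark how this plugs into the lower bound of the preceding section: one applies it with $\mu$ equal to the law of $\matL\big(\frac{5}{n^{1/2}}\u\ve^T\big)$ (the sketch of the rank-one perturbation), combined with Proposition~\ref{prop:TV_chi^2}, so that the remaining work is to show ${\bf E}[e^{\langle\x,\x'\rangle}]-1=o(1)$ unless $\matL$ has $\Omega(n^2)$ rows — and that estimate, not the present Fact, is where Lata\l{}a's inequality (Theorem~\ref{thm:latala}) does the real work.
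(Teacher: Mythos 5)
Your proof is correct, and it establishes the stronger statement that equality holds, from which the stated inequality is immediate. The paper itself does not prove Fact~\ref{prop:chi^2} but cites it from Ingster and Suslina; the computation you give --- write the mixture density $p(z)={\bf E}_{\x}[g(z-\x)]$, observe $p(z)/g(z)={\bf E}_{\x}\bigl[\exp(\langle z,\x\rangle-\tfrac12\|\x\|_2^2)\bigr]$, square by introducing an independent copy $\x'$, swap the order of integration by Tonelli, and evaluate the Gaussian moment generating function to collapse the exponent via polarization to $\langle\x,\x'\rangle$ --- is precisely the standard derivation underlying the cited result, and your handling of the Tonelli interchange and the possibly infinite right-hand side is the right level of care.
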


We can now prove the main impossibility theorem about sketching the operator norm up to a constant factor. 
\begin{theorem}\cite{r14}
Let $\matL \in \mathbb{R}^{k \times n^2}$ be drawn from a distribution on matrices for which for any fixed $n \times n$
matrix $\matA$, with probability at least $9/10$ 
there is an algorithm which given $\matL(\matA)$, can estimate $\|\matA\|_{\infty}$ up to a constant factor $C$, 
with $1 \leq C < \frac{2}{\sqrt{3}}$. Recall here that $\matL(\matA)$ is the image (in $\mathbb{R}^k$) of the linear map $\matL$ which
takes as input $\matA$ represented as a vector in $\mathbb{R}^{n^2}$.
Under these conditions, it holds that $k = \Omega(n^2)$. 
\end{theorem}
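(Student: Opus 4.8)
The plan is to prove the contrapositive: assuming $k = o(n^2)$, I would show that no linear map $\matL$ together with any decoder can distinguish the two distributions $\mu$ and $\nu$ defined above with probability bounded away from $1/2$, and hence cannot produce a factor-$C$ approximation to $\|\cdot\|_\infty$ with probability $\geq 9/10$. Here $\mu$ is the i.i.d. $N(0,1)$ ensemble and $\nu$ is $\mu$ perturbed by the rank-one spike $\tfrac{5}{\sqrt n}\u\ve^\top$. By Lemma \ref{lem:gap}, with probability $1-o(1)$ a draw from $\mu$ has operator norm at most $2.1\sqrt n$ while a draw from $\nu$ has operator norm at least $2.8\sqrt n$, and $2.8/2.1 = 4/3 > C^2$ precisely because $C < 2/\sqrt{3}$; so the intervals $[\,\cdot\,,\,2.1\sqrt n\cdot C\,]$ and $[\,2.8\sqrt n/C,\,\cdot\,]$ output by a correct $C$-approximation are disjoint, and checking which interval the answer lands in yields a distinguisher succeeding with probability $\geq 9/10 - o(1)$. (The matching upper bound $k = O(\min(n,d)^2)$ is already given by Lemma \ref{lem:interlace}, so the result is tight up to a constant.)

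First I would put the sketch in a canonical form. Writing $\matL = \matR\matU^\top$ with $\matU \in \R^{n^2\times k}$ having orthonormal columns and $\matR$ invertible, the decoder may left-multiply by $\matR^{-1}$, so without loss of generality $\matL = \matU^\top$ has orthonormal rows and $\matP := \matU\matU^\top$ is the orthogonal projection onto a $k$-dimensional subspace of $\R^{n^2}$. By rotational invariance of the Gaussian, $\matL(\matG)\sim N(0,\matI_k)$ for $\matG\sim\mu$, while $\matL(\matA)\sim N(0,\matI_k)\ast\rho$ for $\matA\sim\nu$, where $\rho$ is the law of $\w := \tfrac{5}{\sqrt n}\,\matU^\top\mathrm{vec}(\u\ve^\top)$. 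The distinguishing advantage of any (possibly randomized) decoder is at most $d_{TV}\big(N(0,\matI_k),\,N(0,\matI_k)\ast\rho\big)$, so it suffices to show this total variation distance is $o(1)$.

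Next I would bound this distance by a $\chi^2$-divergence: by Fact \ref{prop:TV_chi^2} and Fact \ref{prop:chi^2},
\[
d_{TV}\!\left(N(0,\matI_k),\, N(0,\matI_k)\ast\rho\right)^2 \;\le\; \chi^2\!\left(N(0,\matI_k)\ast\rho \,\big\|\, N(0,\matI_k)\right) \;\le\; {\bf E}\!\left[e^{\langle \w, \w'\rangle}-1\right],
\]
where $\w,\w'$ are independent draws from $\rho$. Expanding, $\langle\w,\w'\rangle = \tfrac{25}{n}\,\mathrm{vec}(\u\ve^\top)^\top\matP\,\mathrm{vec}(\u'\ve'^\top)$ is an order-$4$ Gaussian chaos in the $4n$ standard normals $\u,\ve,\u',\ve'$ with coefficient tensor $\tfrac{25}{n}\matP$ (suitably reshaped). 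The heart of the argument is to show ${\bf E}[e^{\langle\w,\w'\rangle}] = 1 + o(1)$ when $k = o(n^2)$. For that I would (i) intersect with the high-probability event that $\|\u\|_2,\|\ve\|_2,\|\u'\|_2,\|\ve'\|_2 = O(\sqrt n)$, which costs only $o(1)$ in total variation and makes the chaos bounded so the MGF is finite, and then (ii) apply Lata\l a's tail inequality (Theorem \ref{thm:latala}). This requires computing the partition norms $\|\tfrac{25}{n}\matP\|_{I_1,\dots,I_j}$ for all partitions of $\{1,2,3,4\}$: using only that $\matP$ is an orthogonal projection (operator norm $1$, Frobenius norm $\sqrt k$), the singleton and balanced two-block partitions give norm $O(1/n)$, while the coarser partitions give $O(\sqrt k/n)$. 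Feeding these into Lata\l a's bound and integrating the tail against $e^t$ yields ${\bf E}[e^{\langle\w,\w'\rangle}-1] = O(k/n^2)$, which is a small constant once $k \le \eps n^2$ for $\eps$ small enough, contradicting the distinguishing advantage above.

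The main obstacle is exactly this last step — controlling the moment generating function (equivalently the $\chi^2$-divergence) of the sketched spike. A crude bound on the "crossed" partition norms of $\matP$ (those pairing the two row-indices against the two column-indices) gives only $O(\sqrt k/n)$, and it is precisely the behavior of Lata\l a's inequality in this subexponential regime that forces the threshold to be $k = \Theta(n^2)$ rather than something weaker; getting the numerical constants right (so that the factor-$4/3$ spectral gap of Lemma \ref{lem:gap} genuinely separates the two cases for every $C < 2/\sqrt{3}$) also depends on the choice of the constant $5$ in the definition of $\nu$ and on the truncation step. Everything else — the reduction from approximation to distinguishing, the canonical form of $\matL$, and the $\chi^2$ machinery — is routine once the partition-norm estimates are in hand.
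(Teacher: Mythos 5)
Your overall framework matches the paper exactly: reduce to distinguishing $\mu$ from $\nu$ via Lemma \ref{lem:gap}, reduce $\matL$ to orthonormal rows, bound total variation by $\chi^2$-divergence via Facts \ref{prop:TV_chi^2}--\ref{prop:chi^2}, and then control the moment generating function of the order-$4$ Gaussian chaos $\langle \matX_{\u,\ve}, \matX_{\u',\ve'}\rangle$ by Lata\l{}a's inequality. But there is a genuine gap in the one step you yourself flag as the heart of the matter: the partition-norm estimate for the \emph{crossed} size-$3$ partition, e.g. $\{1,3\},\{2\},\{4\}$.

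You bound this norm crudely by $O(\sqrt{k}/n)$ (after scaling by $25/n$; i.e.\ $\sqrt{k}$ before scaling, which follows from monotonicity $\|\matA\|_{\{1,3\},\{2\},\{4\}}\le\|\matA\|_{\{1,2,3,4\}}=\sqrt{k}$). Plugging this into Lata\l{}a's bound with exponent $2/3$ gives a tail term $\exp\bigl(-c\,(tn/(25\sqrt{k}))^{2/3}\bigr)$, and running the integral $\int e^{tD - c f(t)}\,dt$ with $D=25/n$ and upper limit on the order of the truncation threshold forces only $k = O(n^{3/2})$ --- which is the earlier Li--Nguy$\tilde{\hat{\mbox{e}}}$n--Woodruff bound, not Regev's $\Omega(n^2)$. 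The content of Regev's improvement is precisely a sharper bound of $\sqrt{n}$ on this crossed partition: writing $\|\matA\|_{\{1,3\},\{2\},\{4\}} = \sup_{\u,\ve}\|\sum_i (\matL^i\u)\otimes(\matL^i\ve)\|_F = \sup_{\u,\ve}\FNorm{\matT'(\matT'')^T}$, one observes that $\matT'$ and $\matT''$ are contractions of the $n^2\times k$ matrix $\matT$ with orthonormal columns, so $\|\matT'(\matT'')^T\|_\infty\le 1$; since it is $n\times n$, its Frobenius norm is at most $\sqrt{n}$. This gives the $t^{2/3}/n^{1/3}$ term in $f(t)$ rather than $t^{2/3}/k^{1/3}$, and only then does the condition $tD\le c f(t)/2$ hold for all $\sqrt{k}<t<16k$ whenever $k = o(n^2)$. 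Your claimed conclusion ${\bf E}[e^{\langle\w,\w'\rangle}-1]=O(k/n^2)$ does not follow from the bounds you list.

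A secondary but non-cosmetic difference is the truncation. The paper conditions on $\mathcal{E}_{\u,\ve}=\{\|\matY_{\u,\ve}\|_2\le 4\sqrt{k}\}$, which is a condition on the \emph{sketched} vector and gives $|\langle\matY_{\u,\ve},\matY_{\u',\ve'}\rangle|\le 16k$, so the integral in the MGF bound runs only up to $16k$. Your proposal truncates $\|\u\|,\|\ve\|,\|\u'\|,\|\ve'\|$ individually to $O(\sqrt n)$, which only gives $|\langle\matY,\matY'\rangle|\le O(n^2)$. Even with the corrected partition norms, this looser upper limit on the integration range is not obviously harmless and would need to be re-examined; the natural fix is simply to truncate $\|\matY\|_2$ directly, as the paper does, which costs only $1/16$ in total variation by Markov on ${\bf E}\|\matY\|_2^2 = k$.
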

\begin{proof}
We can assume the rows of $\matL$ are orthonormal vectors in $\mathbb{R}^{n^2}$. Indeed, this just corresponds to a change
of basis of the row space of $\matL$, which can be performed in post-processing. That is, given $\matL(\matA)$ one
can compute $\matR \cdot \matL(\matA)$ where $\matR$ is a $k \times k$ change of basis matrix.

Let the orthonormal rows of
$\matL$ be denoted $\matL_1, \ldots, \matL_k$. Although these are vectors in $\mathbb{R}^{n^2}$, we will sometimes think
of these as $n \times n$ matrices with the orthonormal property expressed by $\tr(\matL_i^T \matL_j) = \delta_{ij}$. 

Suppose $\matA \sim \mu$. Then, since the rows of $\matL$ are orthonormal, it follows by rotational invariance 
that $\matL(\matA)$ is distributed as a $k$-dimensional Gaussian vector $N(0, \matI_k)$. 
On the other hand, if $\matA \sim \nu$, then 
$\matL(\matA)$ is distributed as a $k$-dimensional Gaussian vector with a {\it random mean}, that is, as
$N(\matX_{\u, \ve}, \matI_k)$ where
\[
\matX_{\u,\ve} =: \frac{5}{n^{1/2}}\begin{pmatrix}
\u^T \matL_1 \ve, & \u^T \matL_2 \ve, & \cdots &, \u^T \matL_k \ve
\end{pmatrix} =: \frac{5}{n^{1/2}} \matY_{\u,\ve}.
\]
We denote the distribution of $N(\matX_{\u,\ve}, \matI_k)$ by $\mathcal{D}_{n,k}$. 
By Lemma \ref{lem:gap} and the definition of total variation distance, 
to prove the theorem it suffices to upper bound $d_{TV}(N(0,\matI_n), \mathcal{D}_{n,k})$ by a constant $C \leq 4/5$. We shall
do so for $C = 1/4$. 

Without loss of generality we may assume that $k\geq 16$. 
Consider the event $\mathcal{E}_{\u,\ve} = \{\|\matY_{\u,\ve}\|_2\leq 4\sqrt{k}\}$. 
Since ${\bf E}\|\matY_{\u,\ve}\|_2^2 = k$, it follows by Markov's inequality that 
$\Pr_{\u,\ve}\{\mathcal{E}_{\u,\ve}\}\geq 15/16$. 
Let $\widetilde{\mathcal{D}}_{n,k}$ be the marginal distribution of $\mathcal{D}_{n,k}$ conditioned on $\mathcal{E}_{\u,\ve}$. Then
\[
d_{TV}(\widetilde{\mathcal{D}}_{n,k},\mathcal{D}_{n,k})\leq \frac{1}{8},
\]
and it suffices to bound $d_{TV}(N(0,\matI_n),\widetilde{\mathcal{D}}_{n,k})$. 
Resorting to $\chi^2$-divergence by invoking Proposition~\ref{prop:TV_chi^2} and Proposition~\ref{prop:chi^2}, we have that
\[
d_{TV}(N(0,\matI_n),\widetilde{\mathcal{D}}_{n,k})\leq \sqrt{{\bf E} e^{\langle \matX_{\u,\ve},\matX_{\u',\ve'}\rangle}-1},
\]
where $\u,\ve,\u',\ve'\sim N(0,\matI_n)$ conditioned on $\mathcal{E}_{\u,\ve}$ and $\mathcal{E}_{\u',\ve'}$. We first see that 
\begin{eqnarray*}
\langle \matX_{u,v}, \matX_{u', v'}\rangle & = & \frac{25}{n}\sum_{a,b,c,d=1}^n\sum_i (\matL^i)_{ab}(\matL^i)_{cd} \u_a \u'_b \ve_c \ve'_d \\
& =: & D\sum_{a,b,c,d} \matA_{a,b,c,d} \u_a \u'_b \ve_c \ve'_d,
\end{eqnarray*}
where $D = \frac{25}{n}$ and $\matA_{a,b,c,d}$ is an array of order $4$ such that
\[
\matA_{a,b,c,d} = \sum_{i=1}^k \matL^i_{ab} \matL^i_{cd}.
\]
We shall compute the partition norms of $\matA_{a,b,c,d}$ as needed in Lata{\l}a's tail bound Theorem \ref{thm:latala}. 

\noindent\textbf{Partition of size 1.}
The only possible partition is $\{1,2,3,4\}$. We have
\begin{align*}
\|\matA\|_{\{1,2,3,4\}} & = & \left(\sum_{a,b,c,d} \left(\sum_{i=1}^k \matL^i_{a,b}\matL^i_{c,d}\right)^2\right)^{1/2}\\
&=&
\left(\sum_{a,b,c,d} \sum_{i,j=1}^k \matL^i_{a,b}\matL^i_{c,d}\matL^j_{a,b}\matL^j_{c,d}\right)^{1/2}\\
&=&\left(\sum_{a,b,c,d} \sum_{i=1}^k (\matL^i_{a,b})^2 (\matL^i_{c,d})^2\right)^{1/2}\\
&=&\sqrt{k}
\end{align*}

\noindent\textbf{Partitions of size 2.} The norms are automatically upper-bounded by $\|\matA\|_{\{1,2,3,4\}} = \sqrt{k}$.

\noindent\textbf{Partitions of size 3.} Up to symmetry, there are only two partitions to consider:
$\{1,2\}, \{3\}, \{4\}$, and $\{1,3\}, \{2\}, \{4\}$. We first consider the partition $\{1, 2\}, \{3\}, \{4\}$. We have
\begin{eqnarray*}
\|\matA\|_{\{1,2\},\{3\},\{4\}} &=& \sup_{\matW \in \mathbb{S}^{n^2-1}, \u^T, \ve \in \mathbb{S}^{n-1}} \sum_{i=1}^k 
\langle L^i W \rangle \cdot \u^T \matL^i \ve\\
&\leq & \left (\sum_{i=1}^k (\langle \matL^i, \matW \rangle \right )^{1/2} \cdot \left (\sum_{i=1}^k (\u^T \matL^i \ve)^2 \right )^{1/2}\\
&\leq & 1,
\end{eqnarray*}
where the first inequality follows from Cauchy-Schwarz. We now consider the partition $\{1, 3\}, \{2\}, \{4\}$. We have
\begin{eqnarray}\label{eqn:oded}
\|\matA\|_{\{1,3\},\{2\},\{4\}} &= \sup_{\matW \in \mathbb{S}^{n^2-1}, \u^T, \ve \in \mathbb{S}^{n-1}} \sum_{i=1}^k \langle W, ((\matL^i \u) \otimes
(\matL^i \ve)) \rangle \notag \\
& = \|\sum_{i=1}^k ((\matL^i \u) \otimes (\matL^i \ve))\|_{F},
\end{eqnarray}
where the second equality follows by Cauchy-Schwarz.

Let $\matT$ be the $n^2 \times k$ matrix whose columns are the $\matL^i$, interpreted as column vectors in $\matR^{n^2}$. 
Let $\matT'$ be the $n \times k$ matrix whose columns are $\matL^i \u$. 
Similarly, $\matT''$ is the $n \times k$ matrix whose columns are $\matL^i \ve$. Then 
$$ \|\sum_{i=1}^k ((\matL^i \u) \otimes (\matL^i \ve))\|_{\infty} = \|\matT' (\matT'')^T\|_{\infty}.$$
Since $\matT'$ and $\matT''$ are obtained from $\matT$ by applying a contraction, we have that
$\|\matT'\|_{\infty} \leq \|\matT\|_{\infty} \leq 1$, and 
$\|\matT''\|_{\infty} \leq \|\matT\|_{\infty} \leq 1$. Therefore, $\|\matT' (\matT'')^T\|_{\infty} \leq 1$. Consequently, 
since $\matT' (\matT'')^T$ is an $n \times n$ matrix, $\FNorm{\matT' (\matT'')^T} \leq \sqrt{n}$. 

\noindent\textbf{Partition of size 4.}
The only partition is $\{1\},\{2\},\{3\},\{4\}$. Using that for integers $a,b$, $a \cdot b \leq (a^2+b^2)/2$, we have
\begin{align*}
\|\matA\|_{\{1\},\{2\},\{3\},\{4\}} &= \sup_{\u,\ve,\u',\ve'\in \mathbb{S}^{n-1}} \sum_{i=1}^k \u^T \matL^i \ve \u'^T \matL^i \ve'\\
&\le \sup_{\u,\ve,\u',\ve'} \frac{1}{2}\left(\sum_{i=1}^k \langle \u\ve^T, \matL^i\rangle^2 + \langle \u'\ve'^T, \matL^i\rangle^2\right) \\
&\le 1
\end{align*}
The last inequality follows the fact that $\u\ve^T$ is a unit vector in $\R^{n^2}$ and $\matL^i$'s are orthonormal vectors in $\R^{n^2}$.

Lata{\l}a's inequality (Theorem~\ref{thm:latala}) states that for $t > 0$, 
\begin{eqnarray*}
\Pr\left [\left|\sum_{a,b,c,d} \matA_{a,b,c,d}\u_a\u_b'\ve_c\ve_d'\right| > t\right ] & \leq & C_1\\
& \cdot & \exp\left(-c\min\left\{\frac{t}{\sqrt k},\frac{t^2}{k},\frac{t^\frac23}{n^\frac13},\sqrt{t}\right\}\right)
\end{eqnarray*}
The above holds with no conditions imposed on $\u,\ve,\u',\ve'$. For convenience, we let 
$$f(t) = \min\left\{\frac{t}{\sqrt k},\frac{t^2}{k},\frac{t^\frac23}{n^\frac13},\sqrt{t}\right\}.$$
It follows that
\begin{eqnarray*}
\Pr\left [|\langle \matY_{\u,\ve},\matY_{\u',\ve'}\rangle| > t \big| \mathcal{E}_{\u,\ve}\mathcal{E}_{\u',\ve'}\right ]
& \leq &  
\frac{\Pr\left [|\langle \matY_{\u,\ve},\matY_{\u',\ve'} \rangle| > t\right ]}{\Pr [\mathcal{E}_{\u',\ve'}\}\Pr\{\mathcal{E}_{\u,\ve} ]}\\
& \leq & C_2 \exp\left(-c\cdot f(t) \right).
\end{eqnarray*}
Note that conditioned on $\mathcal{E}_{\u,\ve}$ and $\mathcal{E}_{\u',\ve'}$,
\[
|\langle Y_{\u,\ve},Y_{\u',\ve'}\rangle| \leq \|Y_{\u,\ve}\|_2\|Y_{\u',\ve'}\|_2\leq 16k.
\]

We now claim that $tD = \frac{25t}{n} \leq c f(t) /2$ for all $\sqrt{k} < t < 16k$, provided $k = o(n^2)$. First, note
that since $t < 16k$, $\frac{t}{\sqrt{k}} = O(\sqrt{t})$. Also, since $t > \sqrt{k}$, $\frac{t}{\sqrt{k}} \leq \frac{t^2}{k}$. 
Hence, $f(t) = \Theta(\min(t/\sqrt{k}, t^{2/3}/n^{1/3}))$. Since $k = o(n^2)$, if $t/\sqrt{k}$ achieves the minimum, then it is larger
than $\frac{25t}{n}$. On the other hand, if $t^{2/3}/n^{1/3}$ achieves the minimum, then $c f(t)/2 \geq \frac{25t}{n}$ whenever
$t = o(n^2)$, which since $t < 16k = o(n^2)$, always holds. 

Integrating the tail bound gives that
\begin{align*}
{\bf E}[e^{\matX_{\u,\ve},\matX_{\u',\ve'}}] &= 1 + D\int_0^{16k}e^{tD}\Pr[|\langle \matY_{\u,\ve},\matY_{\u',\ve'}\rangle | > t]dt\\
&\leq 1 + D\int_{0}^{\sqrt{k}}e^{tD}dt + D\int_{\sqrt{k}}^{16k}e^{tD - cf(t)}dt\\
&\leq 1 + D\sqrt{k}e^{\sqrt{k}D} + D\int_{\sqrt{k}}^{16k} e^{-cf(t)/2}dt\\
& \leq 1 + o(1),
\end{align*}
where the first inequality uses that $\Pr[|\langle \matY_{\u,\ve},\matY_{\u',\ve'}\rangle | > t] \leq 1$, the second
inequality uses the above bound that $tD \leq cf(t)/2$, and the first part of the third inequality uses that $k = o(n^2)$. For
the second part of the third inequality, since $\sqrt{k} < t < 16k$, we have that $f(t) = \Theta(\min(t/\sqrt{k}, t^{2/3}/n^{1/3}))$. Also, 
$f(t) \geq 1$ (assuming $k \geq n$, which we can assume, since if there is a linear sketch with $k < n$ 
there is also one with $k > n$), and so $D \int_{\sqrt{k}}^{16k} e^{-cf(t)/2}dt \leq D \sqrt{k}$ since the integral is dominated by a geometric
series. Also, since $k = o(n^2)$, $D\sqrt{k} = o(1)$. 

It follows that $d_{TV}(N(0,\matI_{n}),\widetilde{\mathcal{D}}_{n,k})\leq 1/10$ and thus 
\[
d_{TV}(N(0,\matI_{n}),\mathcal{D}_{n,k})\leq 1/10+1/8 < 1/4.
\]
\end{proof}

\subsection{Streaming lower bounds}\label{sec:streaming}
In this section, we explain some basic communication complexity,
and how it can be used to prove bit lower bounds for the space
complexity of linear algebra problems in the popular {\it streaming model} of computation.
We refer the reader to Muthukrishnan's survey \cite{m05} for a comprehensive overview on the streaming
model. We state the definition of the model that we need as follows. These results are by Clarkson
and the author \cite{CW09}, and we follow the exposition in that paper.

In the {\it turnstile model} of computation for linear algebra problems, there is an input matrix
$\matA \in \mathbb{R}^{n \times d}$ which is initialized to all zeros. We then see a finite sequence of 
additive updates to the coordinates of $\matA$, where each update has the form $(i,j,\delta)$
for $i \in [n], j \in [d]$, and $\delta \in \mathbb{R}$, with the meaning that 
$\matA_{i,j} \leftarrow \matA_{i,j} + \delta$. We will restrict ourselves to the case when at all
times, all entries of $\matA$ are integers bounded by $M \leq \poly(nd)$, for some fixed polynomial
in $n$ and $d$. We note that the sequence of updates is adversarially chosen, and multiple
updates to each entry $\matA_{i,j}$ may occur and in an arbitrary order (interleaved with other
updates to entries $\matA_{i', j'}$). One of the main goals in this computational model is to
compute or approximate a function of $\matA$ using as little space in bits as possible. 

\subsection{Communication complexity}
For lower bounds in the turnstile model, we use a few definitions and basic results
from two-party communication complexity, described below. We refer the reader to the
book by Kushilevitz and Nisasn for more information \cite{kn97}.
We will call the two parties Alice and Bob.

For a function $f: \mathcal{X} \times 
\mathcal{Y} \rightarrow \{0,1\}$, we use 
$R_{\delta}^{1-way}(f)$ to denote the randomized communication complexity with 
two-sided error at most $\delta$ in which only a single message is sent from 
Alice to Bob. Here, only a single message $M(X)$ is sent from Alice to Bob,
where $M$ is Alice's message function of her input $X$ and her random coins. Bob
computes $f(M(X), Y)$, where $f$ is a possibly randomized function of $M(X)$
and Bob's input $Y$. For every input pair $(X,Y)$, Bob should output a correct
answer with probability at least $1-\delta$, where the probability is taken over the joint
space of Alice and Bob's random coin tosses. If this holds, we say the protocol is correct.
The communication complexity $R_{\delta}^{1-way}(f)$ is then the minimum over correct protocols, 
of the maximum length of Alice's message $M(X)$, over all inputs and all settings to the 
random coins. 

We also use $R_{\mu, \delta}^{1-way}(f)$ to denote the minimum 
communication of a protocol, in which a single message from Alice to Bob is 
sent, for solving $f$ with probability at least $1-\delta$, where the 
probability now is taken over both the coin tosses of the protocol and an input 
distribution $\mu$.

In the augmented indexing problem, which we call 
$AIND$, Alice is given $\x \in \{0,1\}^n$, while Bob is given 
both an $i \in [n]$ together with $\x_{i+1}, \x_{i+2}, \ldots, \x_n$. Bob should 
output $\x_i$.
\begin{theorem}(\cite{MNSW98})\label{thm:mnsw}
$R_{1/3}^{1-way}(AIND) = \Omega(n)$ and also
$R_{\mu, 1/3}^{1-way}(AIND) = \Omega(n)$,
where $\mu$ is uniform on $\{0,1\}^n \times [n]$.
\end{theorem}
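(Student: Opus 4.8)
The statement to prove is the lower bound $R_{1/3}^{1-\mathrm{way}}(\mathrm{AIND}) = \Omega(n)$ (and the distributional version under the uniform distribution $\mu$ on $\{0,1\}^n \times [n]$), where $\mathrm{AIND}$ is the augmented indexing problem: Alice holds $\x \in \{0,1\}^n$, Bob holds $i \in [n]$ together with the suffix $\x_{i+1},\ldots,\x_n$, and Bob must output $\x_i$. Since this is a known result due to Miltersen, Nisan, Safra, and Wigderson \cite{MNSW98}, the intended "proof" here is really a short self-contained argument via an information-theoretic / encoding reduction, which is the standard way to establish it. I would present that argument rather than invoke a black box.

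\textbf{Step 1: Reduce the two error notions.} First I would observe that it suffices to prove the distributional lower bound $R_{\mu,1/3}^{1-\mathrm{way}}(\mathrm{AIND}) = \Omega(n)$ for $\mu$ uniform on $\{0,1\}^n \times [n]$, since by Yao's minimax principle any public-coin (or private-coin) randomized protocol with worst-case error $1/3$ yields a deterministic protocol with error at most $1/3$ under $\mu$ of the same communication cost; hence $R_{1/3}^{1-\mathrm{way}}(\mathrm{AIND}) \ge R_{\mu,1/3}^{1-\mathrm{way}}(\mathrm{AIND})$. So both halves of the theorem follow from the distributional statement.

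\textbf{Step 2: The encoding argument.} Suppose Alice sends a message $M = M(\x)$ of length $c$ bits, and Bob, on input $(i, \x_{>i})$, outputs $\x_i$ correctly with probability $\ge 2/3$ over the uniform choice of $(\x, i)$ and the protocol's coins. The key step is to boost and iterate: using $M$ together with the known suffix, Bob can decode $\x_i$ for a $(2/3)$-fraction of indices $i$ in expectation, and one shows that a message of length $o(n)$ cannot convey enough information to do this. Concretely, I would fix the randomness to a value achieving success probability $\ge 2/3$ over $(\x,i)$ (averaging argument), so that for a uniformly random $\x$, the expected number of indices $i$ on which Bob decodes correctly is $\ge (2/3)n$. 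Then I would run the following procedure to reconstruct $\x$ from $M$: decode bit $n$ (Bob needs no suffix), then bit $n-1$ (feeding in the just-decoded bit $n$), and so on down to bit $1$; but since individual decodings may err, I would instead use the standard trick of applying the protocol to an error-correcting encoding of $\x$ — i.e., replace $\x$ by $\mathrm{ECC}(\x) \in \{0,1\}^{\Theta(n)}$ of constant rate and constant relative distance, run $\mathrm{AIND}$ on $\mathrm{ECC}(\x)$, recover a string agreeing with $\mathrm{ECC}(\x)$ in a $(2/3)$-fraction of coordinates, and decode. This recovers $\x$ exactly with constant probability, so $M$ determines $\x$ up to a set of size $2^{o(n)}$, forcing $c = \Omega(n)$ by a counting/entropy bound: $c \ge H(\x) - o(n) = \Omega(n)$.

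\textbf{The main obstacle.} The subtle point — and the place I would be most careful — is the dependency issue in the "decode bit by bit" reconstruction: once Bob has decoded bits $\x_{i+1},\ldots,\x_n$ himself (rather than being handed them), errors can cascade, so a naive induction fails. The error-correcting-code wrapper is exactly what resolves this: a constant fraction of wrong decodings still leaves the codeword recoverable, and crucially the fraction of correctly-decoded positions is an average over the *uniform* index distribution, matching the uniform noise the ECC decoder can tolerate. I would also need to handle the averaging over protocol randomness cleanly (fixing coins to preserve the $2/3$ success rate simultaneously in the sense of expectation over $\x$), and to note that allowing Bob's output to be a possibly-randomized function is absorbed by the same averaging. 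Once these are in place, the chain $c \ge I(M;\x) \ge H(\x) - o(n) = \Omega(n)$ completes the proof, and the distributional bound immediately gives the randomized bound as in Step 1.
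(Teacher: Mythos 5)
The paper does not prove this theorem --- it cites it from \cite{MNSW98} --- so there is no internal proof to compare against. Step~1 of your argument (reducing worst-case randomized to distributional error by averaging over coins and fixing a good coin sequence) is fine. The gap is in Step~2, exactly at the point you flag as the ``main obstacle.'' You correctly observe that sequential decoding (output $\hat\x_n$ from $M$, feed it in to decode $\hat\x_{n-1}$, etc.) suffers from error cascade, because $\mathrm{AIND}$'s correctness guarantee only applies when Bob is handed the \emph{true} suffix $\x_{>i}$. But the error-correcting-code wrapper you propose does not fix this. An ECC decoder tolerates a small constant fraction of corrupted coordinates, whereas after the first sequential-decoding error at some position $j^*$, every subsequent query feeds a wrong suffix and the protocol has no guarantee at all; so positions $1,\ldots,j^*$ of your estimate can be arbitrarily wrong, not merely a $1/3$ fraction. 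Indeed, a protocol that errs on the last index for every input still satisfies the $1/3$ average-error guarantee, yet under your reconstruction it produces no correct bits. Your remark that ``the fraction of correctly-decoded positions is an average over the uniform index distribution'' tacitly assumes the true suffix is supplied at each query, which is precisely what sequential decoding fails to provide. The ECC device is suited to \emph{plain} indexing, where each coordinate's prediction depends only on $M$ and nothing cascades; it is not a repair for $\mathrm{AIND}$.

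The standard argument sidesteps reconstruction entirely and never creates a cascade, because it only ever refers to Bob's function evaluated on the true suffix. Fix the coins to get a deterministic protocol with average error at most $1/3$ under $\mu$, and set $p_i = \Pr_\x[\textrm{Bob errs on }(\x,i)]$, so $\frac{1}{n}\sum_i p_i \le 1/3$. By the chain rule for mutual information and independence of the bits of $\x$,
$$I(M;\x) \;=\; \sum_{i=1}^n I(M;\x_i \mid \x_{>i}) \;=\; \sum_{i=1}^n \Bigl(1 - H(\x_i \mid M,\x_{>i})\Bigr),$$
and since Bob's output is a deterministic function of $(M,\x_{>i})$ agreeing with $\x_i$ with probability $1-p_i$, Fano's inequality gives $H(\x_i\mid M,\x_{>i})\le H_2(p_i)$, with $H_2$ the binary entropy function. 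Concavity of $H_2$ then yields $I(M;\x)\ge n-\sum_i H_2(p_i)\ge n\bigl(1-H_2(1/3)\bigr)=\Omega(n)$, and $c\ge H(M)\ge I(M;\x)$ gives the distributional lower bound, with the worst-case randomized bound following by your Step~1. The augmentation does its real work here: the chain rule conditions on exactly the suffix $\x_{>i}$ that Bob is given, so each Fano application is legitimate and no bit is ever predicted from a possibly-corrupted suffix.
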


\subsubsection{Matrix product}
We start with an example showing how to use Theorem \ref{thm:mnsw} for proving
space lower bounds for the {\sf Matrix Product} problem, which is the same
as given in Definition \ref{def:matrixProduct}. Here we also include in the definition
the notions relevant for the streaming model.

\begin{definition}\label{def:matrixProduct2}
In the {\em Matrix Product} Problem matrices $\matA$ and $\matB$ are presented as an arbitrary stream of
additive updates to their entries, where $\matA$ and $\matB$ each have $n$ rows and a 
total of $c$ columns.  At all times in the stream we assume the entries of $\matA$ and $\matB$
are specified by $O(\log nc)$-bit numbers.
The goal is to output a 
matrix $\matC$ so that 
\[
\normF{\matA^T\matB - \matC} \leq \varepsilon \normF{\matA} \normF{\matB}.
\]
\end{definition}

\begin{theorem}
Suppose $n \geq \beta (\log_{10} cn)/\varepsilon^2$ for an absolute constant $\beta > 0$, and
that the entries of $\matA$ and $\matB$ are represented by $O(\log (nc))$-bit numbers.
Then any randomized $1$-pass algorithm which solves Matrix Product with probability 
at least $4/5$ uses $\Omega(c\varepsilon^{-2}\log (nc))$ bits of space.
\end{theorem}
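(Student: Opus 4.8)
The plan is to reduce from the augmented indexing problem $AIND$ using the lower bound of Theorem~\ref{thm:mnsw}, following the standard template for turnstile streaming lower bounds. First I would have Alice encode her bit string $\x \in \{0,1\}^N$, with $N = \Theta(c \varepsilon^{-2} \log(nc))$, into the entries of the matrix $\matA$ (and possibly $\matB$), using a geometric scaling trick across $\Theta(\log(nc))$ ``scales'' so that the high-order scales dominate the Frobenius norm; this is why the $\log(nc)$ factor appears and why we need entries represented by $O(\log(nc))$ bits. The parameter regime $n \geq \beta(\log_{10} cn)/\varepsilon^2$ ensures there is enough room in the $n$ rows to pack $\Theta(\varepsilon^{-2}\log(nc))$ independent ``blocks'' per column-pair, so that each of the $c$ columns contributes its own copy, giving the $c$ factor in the bound.

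The key steps, in order: (1) Fix the hard instance. Partition the $n$ rows into $L = \Theta(\log_{10}(cn))$ groups, each of size $s = \Theta(1/\varepsilon^2)$; associate the $\ell$-th group with a scale factor $10^\ell$. Alice writes $\pm 10^\ell$ entries into $\matA$ according to her bits at scale $\ell$, and $\matB$ is a fixed (Bob-controlled) matrix that ``selects'' the relevant coordinate. (2) Show that from any $\matC$ with $\normF{\matA^T\matB - \matC} \leq \varepsilon \normF{\matA}\normF{\matB}$, Bob can recover the sign of his target entry of $\matA$ at the scale $\ell^*$ determined by his index $i$ — here we use that $\normF{\matA}^2$ is dominated, up to a constant, by the contribution of scales $\ge \ell^*$ (since Bob knows all higher-scale bits, he can subtract their known contribution), and that the $s = \Theta(1/\varepsilon^2)$ independent blocks at that scale make the approximation error too small, on average over the block, to corrupt a constant fraction of the recovered signs. (3) Conclude that a one-pass streaming algorithm using $S$ bits of space yields a one-way protocol for $AIND$ on $N = \Theta(c\varepsilon^{-2}\log(nc))$ bits with communication $S$ (Alice runs the algorithm on her updates, sends the memory state to Bob, who continues with his updates and reads off the answer), so by Theorem~\ref{thm:mnsw}, $S = \Omega(c\varepsilon^{-2}\log(nc))$.

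The main obstacle I expect is step (2): carefully arguing that the error budget $\varepsilon\normF{\matA}\normF{\matB}$ is genuinely too small to defeat the decoding. This requires (a) a lower bound on $\normF{\matA}$ coming from the scales Bob has \emph{not} yet pinned down, balanced against (b) the fact that the $\varepsilon^{-2}$-sized block lets us average and apply a Markov/counting argument so that at most, say, a $1/10$ fraction of the $s$ coordinates in Bob's block can have error exceeding the per-coordinate gap. One has to set the constants in $s = \Theta(1/\varepsilon^2)$ and $L = \Theta(\log_{10} cn)$ so that the success probability stays above the $4/5$ threshold after accounting for both the streaming algorithm's failure probability and the decoding loss; the geometric ($10^\ell$) spacing is what makes the cross-scale contamination a convergent geometric series and hence negligible. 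Everything else — the encoding, the reduction to communication complexity, invoking Theorem~\ref{thm:mnsw} — is routine once the hard instance and its decoding analysis are in place.
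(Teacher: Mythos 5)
Your plan is the paper's proof: reduce from $AIND$ by packing Alice's bits into $\pm 10^k$ entries across $\Theta(\log_{10}(cn))$ scale groups, each of width $\Theta(1/\eps^2)$, along the $n$-dimension; Bob streams in cancellations of the known higher scales and a one-hot selector $\matB$; then a counting argument against the error budget $\eps^2\normF{\matA}^2\normF{\matB}^2 \approx c\cdot 100^{k^*}$ shows only a small constant fraction of the $c/2$ target signs can be corrupted, so Theorem~\ref{thm:mnsw} gives the bound. The only small imprecision is in step (2): the bad-sign count in the paper is over the $c/2$ entries of the single non-zero column of $\matA^T\matB$, not over the $s$ repetitions within Bob's scale block, with $s=\Theta(1/\eps^2)$ chosen so that $\eps^2\normF{\matA}^2$ balances the per-entry cost $100^{k^*}$ — a detail that resolves once you pin down the exact $\matB$.
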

\begin{proof}
Throughout we shall assume that $1/\eps$ is an integer, and that
$c$ is an even integer. These conditions can be removed with minor modifications.
Let $Alg$ be a $1$-pass algorithm which solves Matrix Product with 
probability at least $4/5$. Let $r = \log_{10} (cn)/(8\varepsilon^2)$. We 
use $Alg$ to solve instances of $AIND$ on strings of size $cr/2$. It will 
follow by Theorem \ref{thm:mnsw} that the space complexity of $Alg$ must be 
$\Omega(cr) = \Omega(c\log (cn))/\eps^2$. 

Suppose Alice has $\x \in \{0,1\}^{cr/2}$. 
She creates a $c/2 \times n$ matrix $\matU$ as follows. We will have that
$\matU = (\matU^0, \matU^1, \ldots, \matU^{\log_{10}(cn) - 1}, \mat0_{c/2 \times (n-r)})$, where for each 
$k \in \{0, 1, \ldots, \log_{10} (cn)-1\}$, $\matU^k$ is a 
$c/2 \times r/(\log_{10} (cn))$ matrix with entries in the set $\{-10^k, 10^k\}$.
Also, $\mat0_{c/2 \times (n-r)}$ is a $c/2 \times (n-r)$ matrix consisting of all zeros.

Each entry of $\x$ is associated with a unique entry in a unique $\matU^k$. If the
entry in $\x$ is $1$, the associated entry in $\matU^k$ is $10^k$, otherwise it is
$-10^k$. Recall that $n \geq \beta (\log_{10}(cn))/\eps^2$, so we can assume 
that $n \geq r$ provided that $\beta > 0$ is a sufficiently large constant.

Bob is given an index in $[cr/2]$, and suppose this index of $\x$ is associated
with the $(i^*, j^*)$-th entry of $\matU^{k^*}$. By the definition of the AIND problem,
we can assume that Bob is given all entries of $\matU^k$ for all $k > k^*$.
Bob creates a $c/2 \times n$ matrix $\matV$ as follows. In $\matV$, all entries in the first
$k^*r/(\log_{10} (cn))$ columns are set to $0$. The entries in the remaining columns
are set to the negation of their corresponding entry in $\matU$. This is possible 
because Bob has $\matU^k$ for all $k > k^*$. 
The remaining $n - r$ columns of 
$\matV$ are set to $0$. We define $\matA^T = \matU + \matV$. Bob also creates the 
$n \times c/2$ matrix $\matB$ which is $0$ in all but the 
$((k^*-1)r/(\log_{10} (cn)) + j^*,1)$-st entry, which is $1$. Then,
$$\normF{\matA}^2 = \normF{\matA^T}^2 = \left (\frac{c}{2} \right ) \left (\frac{r}{\log_{10}(cn)} \right ) \sum_{k=1}^{k^*} 100^k
\leq \left (\frac{c}{16\eps^2} \right ) \frac{100^{k^*+1}}{99}.$$

Using that $\normF{\matB}^2 = 1$, 
$$\eps^2 \normF{\matA}^2 \normF{\matB}^2 
\leq \eps^2 \left (\frac{c}{16\eps^2} \right ) \frac{100^{k^*+1}}{99}
= \frac{c}{2} \cdot 100^{k^*} \cdot \frac{25}{198}.
$$

$\matA^T \matB$ has first column equal to the $j^*$-th column of $\matU^{k^*}$, 
and remaining columns equal to zero. Let $\matC$ be the $c/2 \times c/2$ approximation 
to the matrix $\matA^T\matB$. We say an entry
$\matC_{\ell, 1}$ is {\it bad} if its sign disagrees with the sign of $(\matA^T \matB)_{\ell, 1}$. 
If an entry $\matC_{\ell, 1}$ is bad, then 
$((\matA^T\matB)_{\ell, 1} - \matC_{\ell, 1})^2 \geq 100^{k^*}$. Thus, the fraction of bad
entries is at most $\frac{25}{198}$. Since we may assume that $i^*, j^*$, and $k^*$
are chosen independently of $\x$, with probability at least $173/198$, 
sign$(\matC_{i^*, 1}) = $ sign$(\matU^{k^*}_{i^*, j^*})$. 

Alice runs $Alg$ on $\matU$ in an arbitrary order, transmitting the state to Bob, who 
continues the computation on $\matV$ and then on $\matB$, again feeding the entries into 
$Alg$ in an arbitrary order. Then with probability at least $4/5$, over $Alg$'s 
internal coin tosses, $Alg$ outputs a matrix $\matC$ for which 
$\normF{\matA^T\matB-\matC}^2 \leq \eps^2 \normF{\matA}^2 \normF{\matB}^2$. 

It follows that the parties can solve the AIND problem with probability at least
$4/5 - 25/198 > 2/3$. The theorem now follows by Theorem \ref{thm:mnsw}.
\end{proof}

\subsubsection{Regression and low rank approximation}
One can similarly use communication complexity to obtain lower bounds in the streaming
model for Regression and Low Rank Approximation. The results are again obtained by reduction
from Theorem \ref{thm:mnsw}. They are a bit more involved than those for matrix product, and
so we only state several of the known theorems regarding these lower bounds. We begin with
the formal statement of the problems, which are the same as defined earlier, 
specialized here to the streaming setting. 

\begin{definition}
In the {\it $\ell_2$-Regression Problem}, an $n \times d$ matrix $\matA$ and an 
$n \times 1$ column vector $\b$ are presented as a sequence of additive updates to their
coordinates. We assume that at all points in the stream, the entries of $\matA$ and $\b$ 
are specified by $O(\log nd)$-bit 
numbers. The goal is to output a vector $\x$ so that
$$\norm{\matA\x-\b} \leq (1+\eps)\min_{\x' \in \mathbb{R}^d}\norm{\matA\x'-\b}.$$
\end{definition}

\begin{theorem}(\cite{CW09})
Suppose $n \geq d (\log_{10} (nd))/(36\varepsilon)$ and $d$ is sufficiently large. Then any 
randomized $1$-pass algorithm which solves the $\ell_2$-Regression Problem with 
probability at least $7/9$ needs $\Omega(d^2\varepsilon^{-1}\log (nd))$ bits of space.
\end{theorem}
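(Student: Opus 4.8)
The plan is to reduce from the augmented indexing problem $AIND$ on strings of length $\Theta(d^2 \varepsilon^{-1} \log(nd))$, mimicking the structure of the Matrix Product lower bound proof given just above, but now encoding the hard instance into the design matrix $\matA$ of a regression problem rather than into a product $\matA^T\matB$. First I would set the repetition parameter $r = \Theta((\log_{10}(nd))/\varepsilon)$, so that a string $\x$ of length $\approx d r$ can be packed into $d/2$ (say) rows of $\matA$ across $\log_{10}(nd)$ geometrically-scaled ``levels'' $\matU^0, \matU^1, \ldots, \matU^{\log_{10}(nd)-1}$, exactly as in the Matrix Product argument: each bit of $\x$ becomes a $\pm 10^k$ entry in block $\matU^k$, with the remaining columns padded with zeros so the matrix is $n \times d$ with $n \geq r$. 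Alice builds $\matU$ from her input; Bob, who knows $i^*, j^*, k^*$ and all higher levels $\matU^k$ for $k > k^*$, builds a matrix $\matV$ that cancels all levels strictly above $k^*$, sets the levels strictly below $k^*$ to zero, and leaves level $k^*$ intact; set $\matA = \matU + \matV$ (up to transposes and zero padding).

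The second ingredient is the choice of target vector $\b$ and the use of the regression guarantee to read off a single bit. Here I would exploit the normal-equations view developed in Section 2: the optimal $\x^*$ makes $\matA\x^* - \b$ orthogonal to the column space of $\matA$, so if $\b$ is chosen to be a carefully scaled standard basis vector (or a sparse combination) the residual cost $\min_{\x'}\|\matA\x' - \b\|_2^2$ has a clean closed form, and a $(1+\varepsilon)$-approximate solution $\x$ forces $\|\matA\x - \b\|_2^2$ to be within an additive $\varepsilon \cdot \mathrm{OPT}$ of it. As in the Matrix Product proof, the total Frobenius mass of $\matA$ is dominated by level $k^*$ (the geometric series in $100^k$ sums to $O(100^{k^*})$), and a $\pm 10^{k^*}$-scale entry at level $k^*$ contributes $100^{k^*}$ to any squared error if its sign is guessed wrong. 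Choosing the constants so that $\varepsilon \cdot \mathrm{OPT}$ is a small constant fraction (say $< 1/4$) of $100^{k^*}$, one concludes that from $\x$ one can recover $\mathrm{sign}(\matU^{k^*}_{i^*,j^*})$, hence the bit $\x_{i^*}$ of $AIND$, with probability bounded away from $1/2$.

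To finish, I would run the assumed $1$-pass streaming algorithm on the updates forming $\matA$ (Alice feeds in $\matU$, transmits the memory state, Bob feeds in $\matV$ and the updates forming $\b$), so that the algorithm's working memory in bits gives an upper bound on $R^{1-way}_{1/3}(AIND)$ on instances of length $\Theta(d^2\varepsilon^{-1}\log(nd))$; combining with Theorem~\ref{thm:mnsw} yields the claimed $\Omega(d^2\varepsilon^{-1}\log(nd))$ bits of space lower bound. One has to be careful about the failure-probability bookkeeping exactly as in the Matrix Product proof: the fraction of sign-disagreements is controlled by a Markov-type averaging argument, and then one argues that with constant probability over the (input-independent) choice of $i^*, j^*, k^*$ the particular coordinate queried is correct, so the constant $7/9$ in the theorem statement is what survives after subtracting the bad-event probabilities; this matches the $AIND$ error threshold.

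The main obstacle I anticipate is getting the \emph{quadratic in $d$} rather than linear in $d$ packing to work: in the Matrix Product instance the second matrix $\matB$ had $c/2$ columns and that is where a factor of $c$ came from, whereas here $\b$ is a single column, so the extra factor of $d$ must instead come from encoding $d$ bits \emph{per level per ``row-block''} — i.e.\ from the fact that a regression instance on an $n \times d$ matrix has $\Theta(d)$ degrees of freedom in the solution vector and we can force the algorithm to approximately remember $\Theta(d)$ independent linear functionals. Concretely this means the hard instance should be built so that recovering the queried bit requires the algorithm to have effectively preserved a $d$-dimensional piece of $\matA$ at resolution $\log(nd)/\varepsilon$, and verifying that the $(1+\varepsilon)$-approximate regression solution genuinely pins down an \emph{individual} entry (not just an aggregate) is the delicate part. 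I expect this to require choosing $\b$ to isolate one column of $\matA$ at a time and arguing via the Pythagorean/normal-equations decomposition that the approximation error budget $\varepsilon\,\mathrm{OPT}$ cannot hide a level-$k^*$ sign flip; the rest is the same communication-complexity reduction boilerplate as the Matrix Product case.
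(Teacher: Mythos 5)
First, note that the survey itself does not prove this theorem: it explicitly states that the regression and low-rank lower bounds ``are a bit more involved than those for matrix product, and so we only state several of the known theorems,'' deferring the proof to \cite{CW09}. Your skeleton --- reduction from $AIND$ via Theorem~\ref{thm:mnsw}, the multi-scale $\pm 10^k$ encoding with Bob cancelling the levels above $k^*$, and a Pythagorean/normal-equations argument to turn the $(1+\eps)$ guarantee into recovery of a sign --- is indeed the strategy of the cited proof. But the proposal has a genuine gap, and it is exactly the one you flag yourself: the gadget that makes a single $(1+\eps)$-approximate regression solution pin down an individual $\pm 10^{k^*}$ entry, and the packing that yields $d^2$ rather than $d$, are never constructed. ``Choose $\b$ to be a carefully scaled standard basis vector'' does not work as stated: if $\e_i$ lies in (or near) the column space of $\matA$, then $\mathrm{OPT}=\min_{\x'}\|\matA\x'-\b\|_2$ is zero or tiny, the multiplicative guarantee becomes vacuous or uncontrollably weak, and $\x^*=\matA^{\dagger}\e_i$ has no clean closed form in terms of the encoded bits. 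One must engineer $\matA$ (e.g., by stacking an identity block scaled by a parameter $\lambda$ that Bob controls) so that three things hold simultaneously: $\mathrm{OPT}$ has a known, bounded-away-from-zero magnitude; the error budget from the Pythagorean identity, $\|\matA(\x-\x^*)\|_2^2\le((1+\eps)^2-1)\,\mathrm{OPT}^2$ (note: $\eps\cdot\mathrm{OPT}^2$, not $\eps\cdot\mathrm{OPT}$ as you wrote), combined with a lower bound on $\sigma_{\min}(\matA)$, forces $\|\x-\x^*\|_2^2$ below the $100^{k^*}$ quantization threshold; and $\x^*$ itself is an explicit, entrywise-readable function of the encoded block. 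None of this is verified in the proposal.

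The $d^2$ accounting also remains unresolved rather than merely ``delicate.'' A single regression query returns one vector in $\mathbb{R}^d$, so each of Bob's possible inputs can only extract $O(d\log(nd))$ bits from the output; the $\Omega(d^2\eps^{-1}\log(nd))$ bound must come from the fact that the \emph{memory state} has to support every one of $\Theta(d^2\eps^{-1}\log(nd))$ possible queries, each with its own choice of $\matV$ and $\b$. That requires exhibiting an encoding of a string of that length into an $n\times d$ matrix with $O(\log(nd))$-bit entries such that, for every index, some choice of Bob's appended rows and right-hand side recovers that bit with probability bounded above $2/3$ --- including the failure-probability bookkeeping that you assert ``matches the $AIND$ error threshold'' but do not check ($7/9$ minus the sign-disagreement fraction must exceed $2/3$, so the bad fraction must be below $1/9$, which depends on constants you have not fixed). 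As it stands the proposal is a correct identification of the proof architecture with the load-bearing lemma left unproved.
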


\begin{definition}\label{prob:lowrank}
In the {\it Rank-$k$ Approximation Problem}, 
we are given an integer $k$, value $\varepsilon>0$, and $n\times d$ matrix $\matA$ which is presented
as a sequence of additive updates to its coordinates. The goal is to 
find a matrix $\tilde \matA_k$ of rank at most $k$ so that
\[
\normF{\matA-\tilde \matA_k} \le (1+\varepsilon)\normF{\matA-\matA_k},
\]
where $\matA_k$ is the best rank-$k$ approximation to $\matA$.
\end{definition}

\begin{theorem}\label{thm:rankLB}(\cite{CW09})
Let $\varepsilon > 0$ and $k \geq 1$ be arbitrary. Then, 

(1) Suppose $d >\beta k/\varepsilon$ for an absolute constant $\beta> 0$. Then any
randomized $1$-pass algorithm which solves the Rank-$k$ Approximation Problem with 
probability at least $5/6$, and which receives
the entries of $\matA$ in row-order, must use $\Omega(nk/\varepsilon)$ bits of space.

(2) Suppose $n > \beta k/\varepsilon$ for an absolute constant $\beta > 0$. Then any
randomized $1$-pass algorithm which solves the Rank-$k$ Approximation Problem with
probability at least $5/6$, and which receives
the entries of $\matA$ in column-order must use $\Omega(dk/\varepsilon)$ bits of space.

\end{theorem}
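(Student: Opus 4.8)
The plan is to prove both parts by reduction from the augmented indexing problem $AIND$, using the $\Omega(N)$ one-way communication lower bound of Theorem~\ref{thm:mnsw} for $AIND$ on $N$-bit strings, in the same spirit as the Matrix Product lower bound proved above. We choose the instance size so that the $AIND$ string has length $N=\Theta(nk/\eps)$ for part (1) (and $N=\Theta(dk/\eps)$ for part (2)). The template is: Alice, holding $\x\in\{0,1\}^N$, builds the ``bulk'' of an $n\times d$ matrix $\matA$ encoding $\x$ and feeds its entries into the purported $1$-pass algorithm $Alg$ in the prescribed order (row order for part (1)); she then sends the memory state of $Alg$ to Bob. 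Bob, holding an index into $\x$ and the entire suffix of $\x$ past that index, continues the stream with a short ``query gadget'' that depends only on his index and the known suffix, finishes running $Alg$, and reads his target bit off the rank-$k$ output $\tilde\matA_k$. If this succeeds with probability bounded away from $1/2$ on a hard input distribution, then Alice's message — whose length is at most the working memory of $Alg$ — must be $\Omega(N)$ bits, which is the claimed space bound. The $5/6$ success probability of $Alg$ together with the $O(1)$ loss in the reduction still beats the $AIND$ threshold of Theorem~\ref{thm:mnsw}.

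For part (1) I would take $d=\Theta(k/\eps)$ (if $d$ is larger, pad with zero columns, which only helps), so that encoding roughly one bit per entry already uses $\Theta(nd)=\Theta(nk/\eps)$ bits. To force a Frobenius-norm $(1+\eps)$ rank-$k$ approximation to actually reveal individual bits, the planted entries are not plain $\pm 1$ but are organized into geometrically scaled blocks, so that there is a distinguished ``active'' magnitude scale at which the total squared mass of all strictly smaller scales is only an $O(\eps)$ fraction of the mass at the active scale. A near-optimal rank-$k$ approximation may therefore be arbitrarily wrong on the smaller scales but is forced to agree in sign with a $1-O(1)$ fraction of the planted entries at the active scale; the larger scales, which Bob knows from his suffix, are neutralized by the query gadget (additive updates confined to the last rows of the stream, together with a few extra rows) via a cancellation mechanism analogous to the one in the Matrix Product reduction above, adapted to the row-order restriction. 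Bob aims the gadget at a uniformly random planted entry of the active scale; averaging over Alice's message and over Bob's random index then shows Bob outputs a uniformly random bit of $\x$ correctly with probability $\ge 2/3$, contradicting Theorem~\ref{thm:mnsw} unless the memory is $\Omega(nk/\eps)$ bits. A point to watch is that the number of scales must be kept to $O(1)$ rather than $\Theta(\log)$, which is what keeps a spurious logarithmic factor out of the bound, in contrast to the Matrix Product argument.

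Part (2) then follows by transposition: running a row-order instance on $\matA^{\textsc{T}}$ is the same as running a column-order instance on $\matA$, and $\FNorm{\matA-\matA_k}=\FNorm{\matA^{\textsc{T}}-(\matA^{\textsc{T}})_k}$, so the construction of part (1) with the roles of $n$ and $d$ exchanged (now $n=\Theta(k/\eps)$ and $d$ large, with $d-\Theta(k/\eps)$ zero columns) yields the $\Omega(dk/\eps)$ bound for column-order streams. I expect the main obstacle to be the gadget design in part (1): one must exhibit an explicit hard matrix whose ``active scale'' structure is simultaneously (i) incompressible enough that every rank-$k$ matrix within $(1+\eps)$ of optimal must correlate with the planted bits, (ii) has residual mass (smaller scales plus whatever slack the query introduces) too small to swamp those bits, and (iii) is arranged so that the scales above Bob's query — which he knows — can be cancelled using only the short suffix of the stream that Bob controls, without inflating the rank or the error budget; one then has to check that the sign-recovery step survives the constant-factor error the approximation is allowed. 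Everything else — the averaging, passing from the distributional guarantee of Theorem~\ref{thm:mnsw} to a statement about $Alg$'s memory, and handling the integrality of $1/\eps$ and the parity of $k$ — should be routine bookkeeping, exactly as in the Matrix Product proof.
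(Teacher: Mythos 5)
The paper itself does not prove this theorem: it explicitly remarks that the low rank approximation lower bounds ``are a bit more involved than those for matrix product, and so we only state'' them, deferring entirely to \cite{CW09}. So there is no in-paper proof to compare against; your proposal has to stand on its own. Its skeleton --- a one-way reduction from $AIND$ via Theorem \ref{thm:mnsw}, with Alice streaming an encoding of her string in row order, Bob appending a suffix-dependent gadget, and part (2) obtained by transposition --- is indeed the right framework and matches the strategy of the cited source.

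The genuine gap is that you have only sketched the reduction template and explicitly deferred the one step that carries all the difficulty: constructing the hard instance and proving that \emph{every} rank-$k$ matrix within a $(1+\eps)$ factor of optimal must agree in sign with a constant fraction of the planted entries at the active scale. This does not follow from the matrix-product argument by analogy. There, the output $\matC$ approximates $\matA^T\matB$ with an additive aggregate error $\eps^2\FNormS{\matA}\FNormS{\matB}$, so a direct counting argument bounds the number of sign errors. Here the output is constrained to have rank $k$ and is only compared against $\FNorm{\matA-\matA_k}$; for a ``generic'' planted block (e.g., i.i.d.\ signs) the best rank-$k$ approximation captures only a small fraction of the Frobenius mass, and a $(1+\eps)$-approximate rank-$k$ matrix need not correlate entrywise with $\matA$ at all. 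One must engineer the instance so that $\FNormS{\matA-\matA_k}$ is small relative to the mass of the target block and so that achieving error $(1+\eps)\FNormS{\matA-\matA_k}$ forces the approximation's row space to align with prescribed directions --- typically by appending a heavy structured rank-$k$ block --- and then argue that the induced projection reveals Alice's bits. Your ``cancellation mechanism analogous to the one in the Matrix Product reduction'' addresses only the neutralization of known larger scales, not this alignment issue, and the row-order restriction further prevents Bob from touching earlier rows, so even the neutralization step needs a different mechanism than the additive $\matV=-\matU$ trick used for matrix product. Until properties (i)--(iii) that you list are realized by an explicit construction with a proof of the sign-recovery claim, the argument is a plan rather than a proof; the remaining averaging and bookkeeping are, as you say, routine.
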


\subsection{Subspace embeddings}\label{sec:seLB}
We have seen that $\ell_2$-subspace embeddings have a number of important applications, 
especially ones that are oblivious to the matrix $\matA$ they are being applied to. A natural
question is what the minimum dimension of such subspace embeddings needs to be. That is,
we seek to design a distribution $\Pi$ over $r \times n$ matrices $\matS$, with $r$ as small
as possible, so that for any
fixed $n \times d$ matrix $\matA$, we have with constant probability over $\matS$ drawn from $\Pi$,
\begin{eqnarray}\label{eqn:selb}
\forall \x \in \mathbb{R}^d, (1-\eps)\|\matA \x\|_2 \leq \|\matS \matA\x\|_2 \leq (1+\eps)\|\matA \x\|_2.
\end{eqnarray}
We have seen that by choosing $\matS$ to be a matrix of i.i.d. Gaussians, it suffices to set
$r = O(d/\eps^2)$, which also achieves (\ref{eqn:selb}) with probability $1-\exp(-d)$.

A theorem of Nelson and Nguy$\tilde{\hat{\mbox{e}}}$n \cite{nn14} shows that the above setting of $r$ is best possible
for achieving (\ref{eqn:selb}), even if one desires only constant error probability. 

\begin{theorem}\label{thm:lbnn}
For $n \geq Cd/\eps^2$ for a sufficiently large constant $C > 0$, 
any distribution $\Pi$ satisfying (\ref{eqn:selb}) with constant probability over $\matS$
drawn from $\Pi$, satisfies $r = \Omega(d/\eps^2)$. 
\end{theorem}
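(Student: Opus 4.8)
The plan is to follow the strategy of Nelson and Nguy\~en: turn the distributional hypothesis into a statement about a \emph{single} sketching matrix, then run a moment argument against Haar-random $d$-dimensional subspaces. Two preliminary reductions: since a distribution on $r\times N$ matrices restricts (by deleting columns) to one on $r\times n$ matrices for any $n\le N$, and deleting columns of $\matA$ only makes the embedding task easier, we may assume $n=\Theta(d/\eps^2)$; and for $d$ below a large constant times $\eps^{-2}$ the claim $r=\Omega(d/\eps^2)=\Omega(\eps^{-2})$ is contained in the known lower bound for Johnson--Lindenstrauss-type dimension reduction with constant failure probability, so assume $d\ge c_0\eps^{-2}$. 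Now apply the hypothesis to the subspaces $V=\mathrm{colspan}(\matU)$, with $\matU\in\R^{n\times d}$ having orthonormal columns and $V$ drawn from the rotation-invariant (Haar) measure on $d$-dimensional subspaces of $\R^n$: for each fixed $V$, $\Pr_{\matS\sim\Pi}[\matS \text{ is a }(1\pm\eps)\text{ embedding for }V]\ge \tfrac23$, so by Fubini there is a \emph{fixed} $\matS\in\R^{r\times n}$ that is a $(1\pm\eps)$ $\ell_2$-subspace embedding for at least a $\tfrac23$-fraction (in Haar measure) of all $d$-dimensional subspaces. It then remains to show any such $\matS$ has $r=\Omega(d/\eps^2)$, a deterministic statement.

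Set $\matM=\matS^{T}\matS$, PSD of rank at most $r$. The matrix $\matS$ is a $(1\pm\eps)$ embedding for $\mathrm{colspan}(\matU)$ exactly when every eigenvalue of $\matU^{T}\matM\matU$ lies in $[(1-\eps)^2,(1+\eps)^2]$, whence in particular $\|\matU^{T}\matM\matU-\matI_d\|_2\le 3\eps$ and $|\trace(\matU^{T}\matM\matU)-d|\le 3\eps d$. Using the first-moment identity $\Exp_{\matU}[\matU^{T}\matM\matU]=\tfrac{\trace(\matM)}{n}\matI_d$ for the Haar measure together with a crude variance bound on $\trace(\matU^{T}\matM\matU)=\trace(\matM\,\matU\matU^{T})$, the trace condition holding for $\tfrac23$ of the $\matU$ forces $\alpha:=\trace(\matM)/n=\|\matS\|_F^2/n=1+O(\eps)$ (this is where $d\ge c_0\eps^{-2}$ is used: the fluctuation of $\trace(\matM\matU\matU^{T})$ is $O(\sqrt{d}\,\alpha)$, which must be $\lesssim \eps d$). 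Consequently, for every \emph{good} $\matU$ one gets $\|\matU^{T}\matM\matU-\alpha\matI_d\|_F^2\le 2\|\matU^{T}\matM\matU-\matI_d\|_F^2+2(1-\alpha)^2 d=O(\eps^2 d)$.

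Next I would compute $\Exp_{\matU}\|\matU^{T}\matM\matU-\alpha\matI_d\|_F^2$ directly from second moments of the Haar measure (equivalently, writing $\matU$ as a uniformly random orthonormal $d$-frame $u_1,\dots,u_d$). The off-diagonal entries satisfy $\Exp[(u_i^{T}\matM u_j)^2]\approx \|\matM\|_F^2/n^2$ for $i\ne j$, and the diagonal ones $\Exp[(u_i^{T}\matM u_i-\alpha)^2]=\Theta\!\big(n^{-2}(\|\matM\|_F^2-\tfrac{(\trace\matM)^2}{n})\big)$, so $\Exp_{\matU}\|\matU^{T}\matM\matU-\alpha\matI_d\|_F^2=\Theta(d^2\|\matM\|_F^2/n^2)$. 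Since $\mathrm{rank}(\matM)\le r$ and $\trace(\matM)=\alpha n=\Theta(n)$, Cauchy--Schwarz gives $\|\matM\|_F^2\ge (\trace\matM)^2/r=\Theta(n^2/r)$, and therefore $\Exp_{\matU}\|\matU^{T}\matM\matU-\alpha\matI_d\|_F^2=\Omega(d^2/r)$.

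If $r=o(d/\eps^2)$ these two facts are in tension: the random variable $\|\matU^{T}\matM\matU-\alpha\matI_d\|_F^2$ has mean $\omega(\eps^2 d)$ yet takes the value $O(\eps^2 d)$ on a $\tfrac23$-fraction of all $\matU$. To make this a genuine contradiction one needs a \emph{lower-tail} estimate --- that the variable exceeds, say, half its mean with probability more than $\tfrac13$ --- and this is the crux, since the first moment alone is insufficient (the variable is only bounded above by $\|\matM\|_F^2\le\alpha^2 n^2$, far too large to help). I would obtain the tail bound by Paley--Zygmund, which needs the matching fourth-moment estimate $\Exp_{\matU}\|\matU^{T}\matM\matU-\alpha\matI_d\|_F^4=O\big((d^2\|\matM\|_F^2/n^2)^2\big)$; carrying out this fourth-moment computation --- enumerating the relevant index pairings and checking none dominates --- is the one technically heavy step, everything above it being bookkeeping with the first and second moments. (A cleaner route for the same computations is to replace the Haar $\matU$ by a Gaussian matrix $\matG$, which changes nothing essential; a more black-box alternative would be to show that a $(1\pm\eps)$-embedding of a $d$-dimensional subspace yields a distributional Johnson--Lindenstrauss map with failure probability $2^{-\Omega(d)}$ and quote the $\Omega(\eps^{-2}\log(1/\delta))$ lower bound for the latter.) With the tail bound in hand, $r=\Omega(d/\eps^2)$ follows, and a final rescaling of $\eps$ converts the $(1\pm\eps)^2$ spectral bounds back to the $(1\pm\eps)$ bounds in the statement.
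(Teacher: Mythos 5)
First, a point of reference: the survey states Theorem \ref{thm:lbnn} without proof, citing Nelson and Nguy$\tilde{\hat{\mbox{e}}}$n \cite{nn14}, so there is no in-paper argument to compare yours against. Your overall plan --- averaging over $\matS$ to fix a single sketch that works for a $2/3$-fraction of Haar-random $d$-dimensional subspaces, then playing the first and second moments of $\matU^T\matM\matU$ (with $\matM=\matS^T\matS$) against the rank constraint $\|\matM\|_F^2\ge \trace(\matM)^2/r$ --- is a natural and essentially sound route. One intermediate formula needs repair, though it does not break the argument: for $i\ne j$, $\Exp[(u_i^T\matM u_j)^2]\approx (\|\matM\|_F^2-\trace(\matM)^2/n)/n^2$, not $\|\matM\|_F^2/n^2$ (your formula is false for $\matM=\matI_n$, where the left side is $0$); the conclusion $\Exp\|\matU^T\matM\matU-\alpha\matI_d\|_F^2=\Omega(d^2/r)$ survives because one may assume $r\le n/2$, and then $\|\matM\|_F^2-\trace(\matM)^2/n\ge \trace(\matM)^2/(2r)$.

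There are, however, two genuine gaps. (i) Your disposal of the regime $d\le c_0\eps^{-2}$ runs the inequality backwards: the JL-type lower bound gives only $r=\Omega(\eps^{-2})$, which is weaker than the claimed $\Omega(d/\eps^2)$ by a factor of $d$ throughout the range $1\ll d\ll \eps^{-2}$, so writing ``$\Omega(d/\eps^2)=\Omega(\eps^{-2})$'' does not dispose of that range. Moreover your main argument genuinely needs $d\gtrsim \eps^{-2}$ in order to pin down $\alpha=1+O(\eps)$: without it the $O(\sqrt{d}\,\alpha)$ fluctuation of $\trace(\matM\matU\matU^T)$ swamps the $O(\eps d)$ window, the centering error $(1-\alpha)^2d$ is only $O(\eps^2 d)+O(\alpha^2)$, and the method degrades to $r=\Omega(\min(d/\eps^2,d^2))$. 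The intermediate regime therefore requires an additional idea, not just a citation. (ii) The lower-tail step is not closed by Paley--Zygmund together with a fourth-moment bound of the form $\Exp[X^2]=O((\Exp X)^2)$: Paley--Zygmund then yields $\Pr[X>\theta\,\Exp X]\ge (1-\theta)^2/B$ for the hidden constant $B$, and this exceeds the allowed failure probability $1/3$ only if $B<3$, which an unspecified $O(\cdot)$ does not guarantee. What you actually need --- and what is true, since $X=\|\matU^T\matM\matU-\alpha\matI_d\|_F^2$ is a sum of $\Theta(d^2)$ weakly correlated nonnegative terms --- is $\Var(X)=o((\Exp X)^2)$, after which Chebyshev, or simply the bound $\Exp[X\,\mathbf{1}_{\mathrm{bad}}]\le\sqrt{\Exp[X^2]/3}$ from Cauchy--Schwarz, finishes the proof. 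So the crux step should be a variance computation with an explicit leading constant, not a fourth-moment bound up to unspecified constants.
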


While Theorem \ref{thm:lbnn} gives a nice dimension lower bound for subspace embeddings,
it turns out that often one can do better than it in specific applications, such as the $\ell_2$ Regression
Problem, where it is possible to achieve a dependence on $1/\eps$ that is linear. This is because in the
analysis, only a subspace embedding with constant $\eps$ is needed, while additional
other properties of the sketching matrix $\matS$ are used that only incur a $1/\eps$ factor in the dimension.  

\subsection{Adaptive algorithms}\label{sec:adaptiveLower}
In this section we would like to point out a word of caution of using a sketch for multiple, adaptively chosen
tasks. 

Suppose, for example, that 
we have a $k \times n$ sketching matrix $\matS$, with $k \ll n$, drawn from some distribution
with the property that there is a, possibly randomized {\it reconstruction function} $f$ 
such that for any fixed vector $\x \in \mathbb{R}^n$, 
\begin{eqnarray}\label{eqn:JLproperty}
(1-\eps)\|\x\|_2^2 \leq \|f(\matS\x)\|_2^2 \leq (1+\eps)\|\x\|_2^2,
\end{eqnarray}
with probability at least $1-\delta$ for some parameter $\delta > 0$. In this section we will focus
on the case in which $\delta< n^{-c}$ for every constant $c > 0$, that is, $\delta$ shrinks faster
than any inverse polynomial in $n$

The property in (\ref{eqn:JLproperty}) is a basic property 
that one could ask for a sketching matrix $\matS$ to satisfy, and we will refer to an $(\matS, f)$ pair satisfying
this property as 
an {\it $\ell_2$-sketch}. It is
clear, for example, that an $\ell_2$-subspace embedding has this property for certain $\eps$ and $\delta$, 
where the function $f(\matS\x) = \|\matS\x\|_2^2$. As
we have seen, such embeddings have a number of applications in linear algebra. 

A natural question is if an $\ell_2$-sketch can be reused, 
in the following sense. Suppose we compute
$$\matS \cdot \x^1, \matS \cdot \x^2, \matS \cdot \x^3, \ldots, \matS \cdot \x^r,$$
where $\x^1, \ldots, \x^r$ is an adaptively chosen sequence of vectors in $\mathbb{R}^n$. We will also assume
$r \leq n^c$ for some fixed constant $c > 0$.  
For each $\matS\cdot \x^i$, we obtain $f(\matS\cdot \x^i)$. 
Note that if the $\x^i$ were fixed before the choice of $\matS$, 
by a union bound over the $r$ vectors $\x^1, \ldots, \x^r$,
we should have that with probability at least $1-\delta n^c$, 
simultaneously for all $i$, 
$$(1-\eps)\|\x^i\|_2^2 \leq \|f(\matS\x^i)\|_2^2 \leq (1+\eps)\|\x^i\|_2^2.$$
A natural question though, is what happens in the adaptive case, where the $\x^i$ 
can depend on $f(\matS\x^1), f(\matS\x^2), \ldots, f(\matS\x^{i-1})$. 
As an illustrative example that this is a nontrivial issue, 
we first consider the standard estimator $f(\matS\x) = \|\matS\x\|_2^2$. An $\ell_2$ sketch with this choice of
estimator is often called a {\it Johnson-Lindenstrauss transform}. 

\begin{theorem}\label{thm:jlBreak}
For any $\eps > 0$, and any Johnson-Lindenstrauss transform $\matS$ with $k$ rows and $n$ columns, $k < n$, 
there is an algorithm which makes
$r = \binom{k+1}{2} + (k+1) + 1$ query vectors $\x^1, \ldots, \x^r$, for which with probability $1$, 
$$f(\matS\x^r) \notin \left [(1-\eps)\|\x^r\|^2_2, \ (1+\eps)\|\x^r\|^2_2 \right ].$$
Further, the algorithm runs in $O(k^3)$ time and the first $r-1$ queries can be chosen 
non-adaptively (so the algorithm makes a single adaptive query, namely, $\x^r$). 
\end{theorem}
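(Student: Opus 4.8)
The plan is to exploit the fact that the map $\x \mapsto \matS\x$ has a non-trivial kernel, since $k < n$, and that once we know the kernel we can query a vector of small norm on which the estimator is forced to be large. First I would use the queries $\x^1, \ldots, \x^{r-1}$ to reconstruct the matrix $\matS$ (up to what is needed). Concretely, query the standard basis vectors $\e_1, \ldots, \e_n$ --- wait, that is $n$ queries, too many; instead, the point is that we only need to learn $\matS$ \emph{restricted to a $(k+1)$-dimensional coordinate subspace}, because any $(k+1)$-dimensional subspace of $\mathbb{R}^n$ intersects $\ker(\matS)$ nontrivially. So fix the coordinate subspace $T = \mathrm{span}(\e_1, \ldots, \e_{k+1})$. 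Querying $\e_1, \ldots, \e_{k+1}$ reveals $f(\matS\e_i) = \|\matS\e_i\|_2^2$ for each $i$, and querying $\e_i + \e_j$ for $1 \le i < j \le k+1$ reveals $\|\matS\e_i + \matS\e_j\|_2^2$, hence by polarization the inner products $\langle \matS\e_i, \matS\e_j\rangle$. Thus after $r - 1 = \binom{k+1}{2} + (k+1)$ non-adaptive queries we know the full Gram matrix $\matG \in \mathbb{R}^{(k+1)\times(k+1)}$ of the vectors $\matS\e_1, \ldots, \matS\e_{k+1}$.

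Next I would observe that $\matG = \matM^T \matM$ where $\matM$ is the $k \times (k+1)$ matrix whose columns are $\matS\e_1, \ldots, \matS\e_{k+1}$, so $\matG$ is a PSD matrix of rank at most $k$, hence singular. Compute (in $O(k^3)$ time, e.g.\ via the eigendecomposition of the $(k+1)\times(k+1)$ matrix $\matG$) a nonzero vector $\a = (a_1, \ldots, a_{k+1})^T \in \mathbb{R}^{k+1}$ in the kernel of $\matG$, i.e.\ $\matG\a = 0$. Then $\a^T\matG\a = \|\matM\a\|_2^2 = 0$, so $\matM\a = 0$, which means the vector $\x^r := \sum_{i=1}^{k+1} a_i \e_i \in \mathbb{R}^n$ satisfies $\matS\x^r = \matM\a = \vec0$. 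Normalize so that $\x^r \neq \vec0$ (which holds since $\a \neq 0$ and the $\e_i$ are independent). Now $f(\matS\x^r) = \|\matS\x^r\|_2^2 = \|\vec0\|_2^2 = 0$, whereas $\|\x^r\|_2^2 > 0$, so $f(\matS\x^r) = 0 \notin [(1-\eps)\|\x^r\|_2^2, (1+\eps)\|\x^r\|_2^2]$ for any $\eps < 1$ (and also for $\eps \ge 1$ the right endpoint is still positive, so the claim holds for all $\eps > 0$). This happens with probability $1$ over any internal randomness, because we never used any distributional property of $\matS$ --- we simply read off enough of $\matS$ from the query answers to locate its kernel.

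The last bookkeeping step is to confirm the query count and the adaptivity claim: the $(k+1)$ singleton queries and the $\binom{k+1}{2}$ pair queries are all fixed in advance, independent of the answers, so they are non-adaptive; only $\x^r$ depends on the observed values $f(\matS\x^1), \ldots, f(\matS\x^{r-1})$, so the algorithm makes exactly one adaptive query. The total is $r = \binom{k+1}{2} + (k+1) + 1$, matching the statement, and the dominant cost is the $O(k^3)$ linear algebra to find $\a$ (computing $\matG$ from the answers is $O(k^2)$). I do not anticipate a real obstacle here; the only mild subtlety is making sure polarization actually recovers the inner products exactly (it does, from $\|\matS\e_i\|_2^2$, $\|\matS\e_j\|_2^2$, $\|\matS(\e_i+\e_j)\|_2^2$ via $\langle \matS\e_i,\matS\e_j\rangle = \tfrac12(\|\matS(\e_i+\e_j)\|_2^2 - \|\matS\e_i\|_2^2 - \|\matS\e_j\|_2^2)$), and that $\matG$ genuinely has a kernel, which is forced by $\mathrm{rank}(\matG) \le \mathrm{rank}(\matM) \le k < k+1$. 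If one wants the estimator $f$ to be an arbitrary (possibly randomized) reconstruction function rather than $\|\matS\x\|_2^2$, a separate argument is needed, but for the Johnson--Lindenstrauss transform as stated this deterministic kernel-hunting argument suffices.
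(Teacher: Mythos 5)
Your proposal is correct and is essentially the same proof as in the paper: query $\e_i$ and $\e_i + \e_j$ on the first $k+1$ coordinates, recover the Gram matrix of $\matS\e_1,\ldots,\matS\e_{k+1}$ by polarization, find a kernel vector of this singular $(k+1)\times(k+1)$ matrix, pad with zeros, and query it. The one place you diverge slightly is in passing from $\matG\a = \vec0$ to $\matS\x^r = \vec0$: the paper uses an SVD/block-matrix argument to show the padded kernel vector of the submatrix of $\matS^T\matS$ is killed by $\matS$, whereas you observe directly that $\a^T\matG\a = \|\matM\a\|_2^2 = 0$ forces $\matM\a = \vec0$; your route is a bit more elementary but the content is identical.
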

\begin{proof}
The algorithm first queries the sketch on the vectors
$$\e_i, \e_{i}+\e_j \ \ \textrm{for all } i,j \in [k+1],$$
where the $\e_i$ are the standard unit vectors in $\mathbb{R}^n$. Since $\matS$ is a Johnson-Lindenstrauss
transform, it learns $\|\matS_i\|_2^2$ and $\|\matS_i + \matS_j\|_2^2$ for all $i,j \in [k+1]$, where $\matS_i$
denotes the $i$-th column of $\matS$. Since
$\|\matS_i +\matS_j\|_2^2 = \|\matS_i\|_2^2 + \|\matS_j\|_2^2 + 2\langle \matS_i, \matS_j \rangle$, the algorithm learns 
$\langle \matS_i, \matS_j \rangle$ for all $i,j$. 

Now consider the $n \times n$ matrix $\matA = \matS^T\matS$. This matrix
has rank at most $k$, since $\matS$ has rank at most $k$. By definition, $\matA_{i,j} = \langle \matS_i, \matS_j \rangle$. 
It follows that the algorithm has learned the upper $(k+1) \times (k+1)$ submatrix of $\matA$, let us call
this submatrix $\matB$. As $\matB$ has rank at most $k$, it follows there is a non-zero vector $\u \in \mathbb{R}^{k+1}$ 
in the kernel of the span of the rows of $\matB$. 

Consider the non-zero vector $\ve \in \mathbb{R}^n$ obtained by padding $\u$ with $n-(k+1)$ zero coordinates. 
We claim that $\matS \ve = 0$, which would imply that $\|\matS \ve\|_2^2$ cannot provide a relative error approximation
to $\|\ve\|_2^2$ for any $\eps > 0$. 

To prove the claim, write $\matS = [\matC, \matD]$ as a block matrix, where $\matC$ consists of the first $k+1$ columns
of $\matS$, and $\matD$ consists of the remaining $n-(k+1)$ columns of $\matS$. Then
\[ \matS^T\matS = \left ( \begin{array}{cc}
\matC^T\matC & \matC^T\matD\\
\matD^T\matC & \matD^T\matD \end{array} \right ), \]
where $\matB = \matC^T\matC$. Writing $\matC = \matU \mat\Sigma \matV^T$ in its SVD, we see that $u$ is orthogonal to the row space
of $\matV^T$, which implues $\matS \ve = 0$, as desired. 

To conclude, note that the query algorithm makes $r$ queries, the only adaptive one being $\ve$, and runs
in $O(k^3)$ time to compute the SVD of $\matB$. 
\end{proof}
While Theorem \ref{thm:jlBreak} rules out using a Johnson-Lindenstrauss transform as an $\ell_2$ sketch
which supports adaptively chosen query vectors, one could ask if a different, more carefully designed
$\ell_2$ sketch, could support adaptively chosen query vectors. Unfortunately, the answer is no, in a very
strong sense, as shown by Hardt and the author \cite{hw13}. Namely, the authors show that for {\it any} $\ell_2$ sketch, there is
an efficient query algorithm which can find a distribution on queries $\x^i$ for which with constant
probability, $f(\matS\x^i) \notin [(1-\eps)\|\x^i\|_2^2, (1+\eps)\|\x^i\|_2^2$. To avoid introducing additional
notation, we state the theorem of \cite{hw13} informally, and refer the reader to the paper for more details.  

\begin{theorem}\label{thm:adaptive}[Informal version]
There is a randomized algorithm which, given a parameter $B\ge2$ and
oracle access to an $\ell_2$ sketch that uses at most $r=n-O(\log(nB))$
rows, with high probability finds a distribution over queries on which 
the linear sketch fails to satisfy (\ref{eqn:JLproperty}) with constant probability.

The algorithm makes at most $\poly(rB)$ adaptively chosen queries to the
oracle and runs in time $\poly(rB).$ Moreover, the algorithm uses only $r$
``rounds of adaptivity'' in that the query sequence can be partitioned into at
most $r$ sequences of non-adaptive queries.
\end{theorem}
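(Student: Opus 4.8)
The plan is to reduce the statement to an approximate kernel‑finding problem and then solve that problem with an iterative, round‑by‑round attack. First I would note that, after a change of basis applied in post‑processing, we may assume $\matS$ has orthonormal rows; let $V$ be the row space of $\matS$, so $\dim V = r' \le r$ and $\ker \matS = V^{\perp}$ has dimension at least $n-r = \Omega(\log(nB))$. The structural observation, exactly as in the proof of Theorem \ref{thm:jlBreak}, is that $\matS\x$, and hence $f(\matS\x)$, depends only on the projection $P_V\x$. Consequently, if the attacker can produce a query distribution supported (up to tiny error) inside $V^{\perp}$, then a sample $\x$ from it has $\matS\x$ of norm close to $0$ while $\|\x\|_2 = 1$, and no reconstruction function can be simultaneously accurate there and on the fixed all‑zeros query. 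So the entire task becomes: using only the scalar responses $f(\matS\x^i)$, discover $V$ (equivalently $V^{\perp}$) to good enough precision.

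Second, I would discover $V$ in at most $r$ rounds, peeling off one row‑space direction per round — which is also the source of the ``$r$ rounds of adaptivity'' in the statement. Maintain a subspace $U_t$ with $\dim U_t = t$, spanned by the directions extracted so far and approximately contained in $V$. In round $t$ the attacker issues a single batch of $N = \poly(rB)$ \emph{non‑adaptive} queries: Gaussian vectors $\g^{(t,1)},\dots,\g^{(t,N)} \sim N(0,I)$ restricted to $U_t^{\perp}$, together with small random perturbations of them inside $U_t^{\perp}$. From the responses it runs a correlation/least‑squares estimator whose output is a unit vector $v_{t+1}\in U_t^{\perp}$ lying within error $\gamma$ of $V$, provided $V\not\subseteq U_t$ (if $V\subseteq U_t$ we are already done); set $U_{t+1} = U_t \oplus \mathrm{span}(v_{t+1})$. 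Since $\dim V \le r$, after at most $r$ rounds every direction of $V$ is within $\gamma$ of $U_r$, so $K := U_r^{\perp}$ has $\dim K \ge n-r$ and $\|\matS\x\|_2 \le O(\sqrt r\,\gamma)\,\|\x\|_2$ for all $\x \in K$. Taking $\gamma$ inverse‑polynomially small — affordable because $N$ is a large polynomial and (\ref{eqn:JLproperty}) lets us union‑bound accuracy over all $\poly(rB)$ queries actually issued — makes $\|\matS\x\|_2 \le \eps^{100}\|\x\|_2$ on $K$. The output distribution is then a normalized Gaussian on $K$, mixed uniformly over scales $\lambda \in [1,B]$.

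For the endgame: for $\x$ drawn from this distribution, $\matS(\lambda\x)$ has norm at most $B\eps^{100}$ for every scale $\lambda \in [1,B]$, so all these images lie in one ball of radius $B\eps^{100}$ about the origin; yet an accurate sketch would have to report values $\approx \lambda^2\|\x\|_2^2$ spread over the full interval $[1,B^2]$, and combined with accuracy on the fixed zero query this contradicts the union‑bounded guarantee — hence the sketch fails on a constant fraction of the output distribution, as required. The \textbf{main obstacle} is the per‑round extraction lemma: showing that a bounded‑polynomial batch of Gaussian‑plus‑perturbation queries, together with only their scalar responses, pins down a genuinely new direction of $V$ to inverse‑polynomial precision with high probability. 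This is precisely where linearity is used — each response is a (possibly randomized) function of the at‑most‑$r$‑dimensional vector $\matS\g$, so correlating the responses against the known perturbation directions isolates the action of $\matS$ on $U_t^{\perp}$ — and the analysis must simultaneously handle the arbitrary randomness of $f$ (via repetition/median together with the accuracy guarantee) and Gaussian concentration, while guaranteeing no round is wasted on a direction already contained in $U_t$. A secondary difficulty is controlling the accumulated error $O(\sqrt r\,\gamma)$ across all $r$ rounds, which dictates how small $\gamma$, and therefore how large $N$ and the number of queries, must be chosen.
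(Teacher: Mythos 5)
Your high‑level skeleton — post‑process so $\matS$ has orthonormal rows, observe that $f(\matS\x)$ depends only on $P_V\x$, learn $V$ one direction per round in at most $r$ rounds, then query from an approximate kernel — is indeed the same architecture the paper uses. The endgame is also in the right spirit: you mix over scales $\lambda\in[1,B]$ to handle approximation error, while the paper instead adds global Gaussian noise to the query distribution to make it statistically indistinguishable from an exact‑kernel distribution, and then applies a variance argument; these are morally the same idea.

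The real gap is exactly where you flag ``the main obstacle'': the per‑round extraction of a new direction of $V$. Your proposed mechanism — querying Gaussians in $U_t^\perp$ plus \emph{small perturbations}, then running a correlation/least‑squares estimator to ``isolate the action of $\matS$'' — does not work against an arbitrary reconstruction function $f$. The oracle returns only the scalar $f(\matS\x)$, and $f$ is allowed to be an arbitrary (even randomized, non‑Lipschitz, discontinuous) map of the $r$‑dimensional vector $\matS\x$; so the response to $\x$ versus $\x+\delta$ need not change in any way correlated with the size of $P_V\delta$, and finite‑differencing or local regression on the responses has no guaranteed signal. What the paper does instead is a \emph{global} statistical argument: a conditional expectation lemma showing that for some scale $\tau\in[1,B]$ the event $\{f(\matS\x)=1\}$ (``sketch reports large'') biases $\|P_V\x\|_2^2$ noticeably upward, established by summing the contribution of this difference over all $\tau$ in $[1,B]$ and using correctness of the sketch to show the total is positive. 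Given that, one samples $\poly(n)$ Gaussians at scale $\tau$, keeps the positively labeled ones, stacks them into a matrix $\matG$, and takes the top right singular vector of $\matG$: the small per‑sample bias compounds into a vector within $1/\poly(n)$ of $R(\matS)$. Your write‑up has no analogue of the conditional expectation lemma, and without it there is no way to turn scalar oracle answers into a usable signal about $V$. (Your proposal to handle ``the arbitrary randomness of $f$ via repetition/median'' also suggests a local, pointwise view; the paper's argument is robust to arbitrary $f$ precisely because it only uses the aggregate distribution of responses, not their per‑query values.) A secondary omission: you do not explain why some $\tau\in[1,B]$ must yield a bias — this is the nontrivial ``sum over $\tau$'' calculation, and it is where the sketch's correctness guarantee is actually invoked.
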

We state some of the intuition behind the proof of Theorem \ref{thm:adaptive} below. 

The problem of approximating the Euclidean norm of $\x$ is captured 
by the following game between two
players, Alice and Bob. Alice chooses an $r \times n$ sketching matrix $\matS$ from
distribution $\pi$. Bob makes a sequence of queries $\x^1, \ldots, \x^r\in\R^n$ to
Alice, who only sees $\matS\x^i$ on query $i$. This captures the fact that a sketching
algorithm only has access to $\matS\x^i$, rather than to $\x^i$ itself. The multiple
queries $\x^i$ of Bob are the vectors whose Euclidean norm one would like to approximate
using the sketching matrix $\matS$. 
Alice responds by telling Bob the
value $f(\matS\x^i)$, which is supposed to be a $(1+\eps)$-approximation to the Euclidean
norm of $\x^i$. 

Here $f$ is an arbitrary function that 
need not be efficiently computable. For simplicity of presentation, we'll just focus
on the case in which $f$ uses no randomness, though Theorem \ref{thm:adaptive}
holds also for randomized functions $f$. 
Bob will try to learn the row space $R(\matA)$ of Alice, namely the at most
$r$-dimensional subspace of $\mathbb{R}^n$ spanned by the rows of $\matA$.  If Bob
knew $R(\matA)$, he could, with probability $1/2$ query $0^n$ and with probability
$1/2$ query a vector in the kernel of $\matA$. Since Alice cannot distinguish the
two cases, and since the norm in one case is $0$ and in the other case
non-zero, she cannot provide a relative error approximation. 

Theorem \ref{thm:adaptive} provides 
an algorithm (which can be executed efficiently by Bob) that learns
$r-O(1)$ orthonormal vectors that are almost contained in $R(\matA).$ While this
does not give Bob a vector in the kernel of $\matA,$ it effectively reduces
Alice's row space to be constant dimensional thus forcing her to make a
mistake on sufficiently many queries (since the variance is large). 

\paragraph{The conditional expectation lemma.}
In order to learn $R(\matA)$, Bob's initial query is drawn from the multivariate
normal distribution $N(0, \tau \matI_{n})$, where $\tau \matI_{n}$ is the covariance
matrix, which is just a scalar $\tau$ times the identity matrix $\matI_{n}$.  This
ensures that Alice's view of Bob's query $\x$, namely, the projection $P_{\matA}\x$ of
$\x$ onto $R(\matA)$, is spherically symmetric, and so only depends on
$\|P_{\matA}\x\|_2$. Given $\|P_{\matA}\x\|_2$, Alice needs to output $0$ or $1$ depending
on what she thinks the norm of $\x$ is. Since Alice has a
proper subspace of $\mathbb{R}^n$, she will be confused into thinking $\x$ has
larger norm than it does when $\|P_{\matA}\x\|_2$ is slightly larger than its
expectation (for a given $\tau$), that is, when $\x$ has a non-trivial
correlation with $R(\matA)$. 

Formally, Theorem \ref{thm:adaptive} makes use of a conditional expectation
lemma showing that there exists a choice of $\tau$ for which 
$${\bf E}_{\x \sim N(0, \tau \matI_{r})} \left [ \|P_{\matA}\x\|_2^2 \mid f(\matA\x) = 1 \right] 
- {\bf E}_{\x \sim N(0, \tau \matI_{r})} \left[ \|P_{\matA}\x\|_2^2 \right]$$ 
is non-trivially large. This is done by showing that the sum of this
difference over all possible $\tau$ in a range $[1, B]$ is noticeably
positive. Here $B$ is the approximation factor. 
In particular, there exists a $\tau$ for which this difference is
large. To show the sum is large, for each possible condition $v =
\|P_{\matA}\x\|_2^2$, there is a probability $q(v)$ that the algorithm outputs $1$,
and as we range over all $\tau$, $q(v)$ contributes both positively and
negatively to the above difference based on $v$'s weight in the $\chi^2$-distribution 
with mean $r \cdot \tau$. The overall contribution of $v$ can be shown to be
zero. Moreover, by correctness of the sketch, $q(v)$ must typically be close to $0$ for
small values of $v,$ and typically close to $1$ for large values of $v.$ 
Therefore $q(v)$ zeros out some of the negative contributions that $v$ would otherwise
make and ensures some positive contributions in total.

\paragraph{Boosting a small correlation.}
Given the conditional expectation lemma, one then finds many independently
chosen $\x^i$ for which each $\x^i$ has a slightly increased expected projection
onto Alice's space $R(\matA)$. At this point, however, it is not clear how to
proceed unless one can aggregate these slight correlations into a single vector
which has very high correlation with $R(\matA).$ This is accomplished by arranging
all $m=\poly(n)$ positively labeled vectors $\x^i$ into an $m\times n$ matrix $\matG$ and
computing the top right singular vector~$\ve^*$ of $G.$ Note that this can be
done efficiently.  One can then show that $\|P_{\matA}\ve^*\|\ge 1-1/\poly(n).$ In
other words $\ve^*$ is almost entirely contained in $R(\matA).$ This step is crucial
as it gives a way to effectively reduce the dimension of Alice's space
by~$1$. 

\paragraph{Iterating the attack.}
After finding one vector inside Alice's space, one must iterate the argument. In fact 
Alice might initially use only a small fraction of her rows and switch to a new set
of rows after Bob learned her initial rows.
An iteration of the previously described attack is performed as follows. Bob makes
queries from a multivariate normal distribution inside of the subspace
orthogonal to the the previously found vector. In this way one effectively
reduces the dimension of Alice's space by $1$, and repeats the attack
until her space is of constant dimension, at which point a standard
non-adaptive attack is enough to break the sketch. Several complications
arise at this point. For example, each vector that we find is only approximately contained
in $R(\matA).$ One needs to rule out that this approximation error could help Alice.
This is done by adding a sufficient amount of global Gaussian noise to the query
distribution. This has the effect of making the distribution statistically
indistinguishable from a query distribution defined by vectors that are exactly
contained in Alice's space. A generalized
conditional expectation lemma is then shown for such distributions.

\section{Open Problems}
We have attempted to cover a number of examples where sketching techniques can be used to speed
up numerical linear algebra applications. We could not cover everything and have of course missed out
on some great material. We encourage the reader to look at other surveys in this area, such
as the one by Mahoney \cite{m11}, for a treatment of some of the topics that we missed. 

Here we conclude with some open problems.
\\\\
{\bf Open Question 1 (Spectral Low Rank Approximation)} We have seen in Theorem \ref{thm:lowrank} 
that it is possible to achieve a running time of 
$O(\nnz(\matA)) + n \cdot \poly(k/\eps)$ for solving the low rank approximation problem
with Frobenius norm error, namely, given an $n \times n$ matrix $\matA$, finding a (factorization of a) 
rank-$k$ matrix $\tilde{\matA}_k = \matL \matU \matR$, where $\matU$ is a $k \times k$ matrix, for which 
\begin{eqnarray*}
\FNorm{\matA-\tilde{\matA}_k} \leq (1+\eps)\FNorm{\matA-\matA_k}.
\end{eqnarray*}
On the other hand, in Theorem \ref{thm:tropp} we see that it is possible to find a projection matrix
$\matZ\matZ^T$ for which
\begin{eqnarray}\label{eqn:openSpec}
\|\matA-\matZ\matZ^T\matA\|_2 \leq (1+\eps)\|\matA-\matA_k\|_2,
\end{eqnarray}
that is the error is with respect to the spectral rather than the Frobenius norm. The latter error
measure can be much stronger than Frobenius norm error. Is it possible to achieve $O(\nnz(\matA)) + n \cdot \poly(k/\eps)$
time and obtain the guarantee in (\ref{eqn:openSpec})?
\\\\
{\bf Open Question 2. (Robust Low Rank Approximation)} 
We have seen very efficient, $O(\nnz(\matA)) + n \cdot \poly(k/\eps)$ time algorithms
for low rank approximation with Frobenius norm error, that is, for finding a factorization of a rank-$k$ matrix
$\tilde{\matA}_k = \matL \matU \matR$, where $\matU$ is a $k \times k$ matrix, for which 
\begin{eqnarray*}
\FNorm{\matA-\tilde{\matA}_k} \leq (1+\eps)\FNorm{\matA-\matA_k}.
\end{eqnarray*}
As we have seen for regression, often the $\ell_1$-norm is a more robust error measure than the $\ell_2$-norm, and
so here one could ask instead for finding a factorization of a rank-$k$ matrix $\tilde{\matA}_k$ for which
\begin{eqnarray}\label{eqn:openL1}
\|\matA-\tilde{\matA}_k\|_1 \leq (1+\eps)\|\matA-\matA_k\|_1,
\end{eqnarray}
where here for an $n \times n$ matrix $\matB$, $\|\matB\|_1 = \sum_{i, j \in [n]} |\matB_{i,j}|$ is the entry-wise $1$-norm of 
$\matB$. We are not aware of any polynomial time algorithm for this problem, nor are we aware of an NP-hardness
result. Some work in this direction is achieved by Shyamalkumar and Varadarajan \cite{SV07} 
(see also the followup papers \cite{DV07,FMSW10,FL11,VX12}) who give an algorithm which
are polynomial for fixed $k,\eps$, for the weaker error measure $\|\matB\|_V = \sum_{i = 1}^n \|\matB_{i*}\|_2,$ that is,
the $V$-norm denotes the sum of Euclidean norms of rows of $\matB$, and so is more robust than the Frobenius norm,
though not as robust as the entry-wise $1$-norm. 
\\\\
{\bf Open Question 3. (Distributed Low Rank Approximation)}
In \S\ref{sec:dislra} we looked at the arbitrary partition model. Here there
are $s$ players, 
each locally holding an $n \times d$ 
matrix $\matA^t$. Letting $\matA = \sum_{t \in [s]} \matA^t$, 
we would like for each player to compute the same rank-$k$ projection
matrix $\matW\matW^T \in \mathbb{R}^{d \times d}$, for which 
$$\FNormS{\matA-\matA\matW\matW^T} \leq (1+\eps)\FNormS{\matA-\matA_k}.$$
We presented a protocol due to Kannan, Vempala, and the author
\cite{kvw14} 
which obtained $O(sdk/\eps) + \poly(sk/\eps)$ words of communication
for solving this problem. In \cite{kvw14} a lower bound of $\Omega(sdk)$ bits
of communication is also presented. Is it possible to prove an 
$\Omega(sdk/\eps)$ communication 
lower bound, which would match the leading term of the
upper bound? Some possibly related work is an $\Omega(dk/\eps)$ 
space lower bound 
for outputting such a $\matW$ given one pass over the rows of $\matA$
presented one at a time in an arbitrary order \cite{w14}, i.e., in the
streaming model of computation. In that model, this bound is
tight up to the distinction between words and bits. 
\\\\
{\bf Open Question 4. (Sketching the Schatten-$1$ Norm)}
In Section \S\ref{sec:schatten} we looked at the Schatten norms of an
$n \times n$ matrix
$\matA$. Recall that for $p \geq 1$, 
the $p$-th Schatten norm $\|\matA\|_p$ of a rank-$\rho$ matrix $\matA$ is 
defined to be 
$$\|\matA\|_p = \left (\sum_{i=1}^{\rho} \sigma_i^p \right)^{1/p},$$
where $\sigma_1 \geq \sigma_2 \geq \cdots \geq \sigma_{\rho} > 0$ 
are the singular values of $\matA$. For
$p = \infty$, $\|\matA\|_{\infty}$ is defined to be $\sigma_1$.
Some of the
most important cases of a Schatten norm is when $p \in \{1, 2, \infty\}$, 
in which case it corresponds to the nuclear, Frobenius, and spectral norm,
respectively. For constant factor approximation, 
for $p = 2$ one can sketch $\matA$ using a constant number
of dimensions, while for $p = \infty$, we saw that $\Omega(n^2)$ dimensions 
are needed. For $p = 1$, there is a lower bound of $\Omega(n^{1/2})$ for
constant factor approximation, which can be improved to $\Omega(n^{1-\gamma})$,
for an arbitrarily small constant $\gamma > 0$, 
if the sketch is of a particular form called a ``matrix sketch'' 
\cite{lnw14}. There is no non-trivial (better than $O(n^2)$) upper bound
known. What is the optimal sketching dimension for approximating
$\|\matA\|_1$ up to a constant factor?
\\\\
{\bf Acknowledgements:} 
I would like to thank Jaroslaw Blasiok, Christos Boutsidis, 
T.S. Jayram, Jelani Nelson, Shusen Wang, Haishan Ye, Peilin Zhong,  
and the anonymous
reviewer for a careful reading and giving many helpful comments. 


\bibliographystyle{plain}
\bibliography{main}
\end{document}